\newtheorem{theorem}{Theorem}[section]
\newtheorem{definition}[theorem]{Definition}
\newtheorem{proposition}[theorem]{Proposition}
\newtheorem{corollary}[theorem]{Corollary}
\newtheorem{summary}[theorem]{Summary}
\DeclareMathOperator{\Tr}{Tr}
\DeclareMathOperator{\sinc}{sinc}
\DeclareMathOperator{\poly}{poly}
\newcommand{\Eshell}[1]{\Sigma_{#1}}
\newcommand{\proj}{\hat{\Pi}}
\newcommand{\cproj}{\Pi}
\newcommand{\cprob}{\pi}
\newcommand{\idop}{\hat{\mathds{1}}}
\newcommand{\diff}{\mathrm{d}}
\newcommand{\texcept}{W_{\text{ex}}}
\newcommand{\therm}{\mathrm{th}}
\newcommand{\mc}{\text{mc}}
\newcommand{\aux}{\text{aux}}
\title{Bypassing eigenstate thermalization with experimentally accessible quantum dynamics}
\author[1]{Amit Vikram}
\affil[1]{JILA and Center for Theory of Quantum Matter, Department of Physics, University of Colorado, Boulder CO 80309 USA}
\date{}
\begin{document}
\pagenumbering{gobble}
\maketitle

\abstract{Eigenstate thermalization has played a prominent role as a determiner of the validity of quantum statistical mechanics since von Neumann's early works on quantum ergodicity. However, its connection to the dynamical process of quantum thermalization relies sensitively on nondegeneracy properties of the energy spectrum, as well as detailed features of individual eigenstates that are effective only over correspondingly large timescales, rendering it generically inaccessible given practical timescales and finite experimental resources. Here, we introduce the notion of energy-band thermalization to address these limitations, by coarse-graining over energy level spacings with a finite energy resolution. We show that energy-band thermalization implies the thermalization of an observable in almost all states (in any orthonormal basis) over accessible timescales without relying on microscopic properties of the energy eigenvalues or eigenstates, and conversely, can be efficiently accessed in experiments via the dynamics of a single mixed state (for a given observable) with only polynomially many resources in the system size. This allows us to directly determine thermalization, including in the presence of conserved charges, from this state: Most strikingly, if an observable thermalizes in this initial state over a finite range of times, then it must thermalize in almost all physical initial states over all longer timescales. As applications, we derive a finite-time Mazur-Suzuki inequality for quantum transport with approximately conserved charges, and establish the thermalization of local observables over finite timescales in almost all accessible states in (generally inhomogeneous) dual-unitary quantum circuits. We also propose measurement protocols for general many-qubit systems. This work initiates a rigorous treatment of quantum thermalization in terms of experimentally accessible quantities.
}

\newpage

\setcounter{tocdepth}{2}
\tableofcontents

\newpage

\pagenumbering{arabic}

\section{Introduction: Background and motivation}
\label{sec:intro}

\subsection{Classical statistical mechanics without ergodicity}

Statistical mechanics aims to reduce the behavior of complex systems to a simple effective description in terms of statistical ensembles. The question of how to establish the validity of such a statistical description in any given system remains of foundational interest. In \textit{classical} statistical mechanics, the conventional justification --- widely invoked in introductory accounts~\cite{Tolman, Reif, ReichlStatMech}, though often with some hesitation~\cite{Tolman, LLStatMech} 
--- is the ergodic hypothesis, originally due to Boltzmann. By supposing that a classical Hamiltonian system uniformly explores some region of phase space (often a surface of constant energy) over the course of its time evolution --- in other words, shows ergodic dynamics --- one attempts to justify describing the system by a uniform statistical distribution (the microcanonical distribution) over this region. If such a description is possible, the system is said to thermalize to this distribution.

While the ergodic hypothesis has led to a rich mathematical theory of ergodic classical dynamics~\cite{HalmosErgodic, SinaiCornfeld} in sufficiently simple systems~\cite{Sinai1976}, it is an incredibly difficult problem to show ergodicity in systems with any realistic degree of complexity. Even if the ergodic hypothesis is assumed, the timescales required for the exploration of the phase space stretch far beyond the timescales at which statistical mechanics is known to be valid for typical observables of physical interest~\cite{KhinchinStatMech, GallavottiErgodic}. An alternate approach is to focus on a class of ``physical'' observables (determined according to what is feasible to measure in a given system), and examine when such observables may admit a statistical description --- without any regard for whether the system itself shows ergodic dynamics~\cite{KhinchinStatMech, GallavottiErgodic}. For example, in a system of many particles, Khinchin~\cite[page 68]{KhinchinStatMech} showed that if \textit{single-particle} observables $a_i$ with respective mean values $\overline{a_i}$ (say) almost surely have vanishing ``connected'' autocorrelators at long times $t$:
\begin{equation}
\lim_{t\to\infty} \langle [a_i (t)-\overline{a_i}] [a_i(0)-\overline{a_i}]\rangle \to 0,
\label{eq:KhinchinAutocorrelator}
\end{equation}
where the average $\langle \cdot \rangle$ is over all trajectories in the phase space, then any observable constructed as $A = \sum_i c_i a_i$ may be described by the microcanonical ensemble over long times, in almost all \textit{many-particle} initial states.
%In short, the thermalization of autocorrelators of single-particle observables determines the thermalization of the observables, as well as their linear combinations, in virtually every initial state.
%This is a stronger statement than showing that intensive observables (for which all the $c_i = 1/N$) thermalize, which more or less happens trivially for almost all states because their eigenvalues (in the quantum language) are concentrated around their microcanonical value.

This represents a considerable simplification of the problem of statistical mechanics from questions of the detailed dynamics of every conceivable many-particle observable in the phase space, as in the ergodic hypothesis, to a specific dynamical property of accessible single-particle observables. In practice, the average over \textit{all} trajectories may be estimated by considering a few representative trajectories, though this averaging remains the only obstacle for complete accessibility. Crucially, Eq.~\eqref{eq:KhinchinAutocorrelator} may hold even if the system has several conserved quantities (such as total momentum for a gas of pairwise-interacting particles) and fails to be formally ergodic on, e.g., an energy surface.
%While there is still some element of inaccessibility in Eq.~\eqref{eq:KhinchinAutocorrelator}, in that verifying the decay of autocorrelators over the entire phase space requires a large ensemble of states, but this may be usually be estimated by using a finite set of representative points from the ensemble of states.

The primary goal of this work is to develop a similar approach to address thermalization in quantum systems entirely in terms of the accessible properties of few-particle observables. In fact, we will show that interference effects and entanglement can be leveraged to achieve an even higher simplification than Eq.~\eqref{eq:KhinchinAutocorrelator} in terms of the time and the number of initial states required (i.e., we can avoid a limit of infinite times and an explicit average over all possible trajectories): we only need the dynamics of a few-body observable over some \textit{finite} time interval in a \textit{single} easily-prepared (mixed) initial state corresponding to that observable.

\subsection{Eigenstate thermalization in quantum statistical mechanics}

The conventional approach for describing the thermalization of Hamiltonian quantum systems is based on the eigenstate thermalization hypothesis (ETH)~\cite{vonNeumannThermalization, JensenShankarETH, deutsch1991eth, Deutsch2025Supplement, srednicki1994eth, srednicki1999eth, rigol2008eth, DAlessio2016, MoriETHreview, deutsch2018eth}. Here, it is important to refer to the eigenstates $\lvert E_n\rangle$ corresponding to the energy levels $E_n$ of the Hamiltonian $\hat{H}$. In the simplest version of ETH, formulated by von Neumann~\cite{vonNeumannThermalization}, an observable $\hat{A}$ thermalizes to some thermal value $A_{\therm}$ in every state in an infinite time average, provided that its expectation value thermalizes in \textit{every} energy eigenstate:
\begin{equation}
\langle E_n\rvert \hat{A}\lvert E_n\rangle \approx A_{\therm},
\label{eq:ET_informal}
\end{equation}
\textit{and} the energy levels $E_n$ are non-degenerate ($E_n \neq E_m$ if $n \neq m$). We will refer to this phenomenon specifically as eigenstate thermalization for the observable $\hat{A}$.

The more general statement of ETH~\cite{srednicki1999eth, rigol2008eth, DAlessio2016, MoriETHreview, deutsch2018eth} in wide present-day use, which is usually motivated~\cite{DAlessio2016} by comparison with random matrices --- regarded as prototypes of complex quantum systems~\cite{Mehta} --- posits the following structure of the matrix elements of $\hat{A}$ in the energy eigenbasis:
\begin{equation}
\langle E_n\rvert \hat{A}\lvert E_m\rangle \approx A_{\therm}(E_n) \delta_{nm} + e^{-S(\overline{E})/2} f_A(E_n, E_m) R_{nm}.
\label{eq:ETH}
\end{equation}
Here, $A_{\therm}(E)$ and $f_A(E_1,E_2)$ are smooth functions of energy, $R_{nm}$ is an appropriate standard random matrix with $O(1)$ matrix elements (at least to an initial approximation that neglects correlations), and $\exp(S(\overline{E}))$ is the (smooth) density of states around energy $\overline{E} = (E_n+E_m)/2$, which is expected to scale with the Hilbert space dimension around this energy. It can be shown~\cite{srednicki1999eth} that if the energy levels $E_n$ have non-degenerate spacings, then Eq.~\eqref{eq:ETH} implies the thermalization of $\hat{A}$ to the thermal value $A_{\therm}(E)$, at almost all times, for initial states supported near the energy $E$. Thus, in addition to eigenstate thermalization that is associated with infinite time averages, ETH also accounts for energy-dependence and thermal equilibrium. The \textit{hypothesis} in ETH is that physically accessible observables in sufficiently complex systems satisfy Eq.~\eqref{eq:ETH}.

But which systems are we to count as ``sufficiently'' complex? Again, motivated by the idea of random matrices as prototypical complex systems, one approach is to assume that the appropriate systems are those whose energy level statistics resembles the eigenvalue statistics of random matrices~\cite{DAlessio2016}. However, even in such systems, it is straightforward to construct observables (such as projectors onto the energy eigenstates $\lvert E_n\rangle \langle E_n\rvert$) that do not satisfy Eq.~\eqref{eq:ETH}, though these are expected to be generally inaccessible in experiments. The statistics of the energy levels themselves, being observable-independent, may instead be shown to be directly related to a form of ergodic dynamics in the Hilbert space~\cite{dynamicalqergodicity}, reminiscent of the ergodic hypothesis. In particular, this notion is related to the ability to construct a Hermitian operator that \textit{approximately} measures increments of time in the Hilbert space\footnote{This is related to the intuitive notion of ergodicity as follows: in a classical system satisfying the ergodic hypothesis, phase space coordinates may be loosely used to measure the time that would have elapsed since the system started in the intial state.}. Much like the ergodic hypothesis, we do not expect this property of \textit{dynamical} quantum ergodicity to be easily accessible in sufficiently complex quantum systems that are thermodynamically large, nor of immediate relevance to accessible few-particle observables. For example, measurements of energy level statistics have been proposed~\cite{SFFmeas, pSFF, AdwaySantos} and realized~\cite{pSFFexpt1} only in systems with a handful ($\lesssim 10$) of qubits. We also note that some of the aforementioned measurements~\cite{pSFF, AdwaySantos, pSFFexpt1} as well as fluctuation-dissipation relations~\cite{SchuckertETH} may yield \textit{indirect} signatures of ETH, but not a conclusive experimental determination. In contrast, macroscopic properties of the energy levels unrelated to the ergodic hypothesis, corresponding to a finite energy resolution in experiments, are accessible~\cite{SFFmeas, pSFF} and set direct constraints on the dynamical emergence of statistical mechanics~\cite{dynamicalqspeedlimit, dynamicalqfastscrambling}. The takeaway lesson that we wish to emphasize from these observations is as follows: for questions of experimental relevance to quantum statistical mechanics, it is crucial to work with a finite energy resolution rather than the microscopic properties of individual energy levels.

Separately, from a more theoretical perspective, we have argued in Ref.~\cite{dynamicalqergodicity} that (observable-independent) dynamical ergodicity and (observable-dependent) thermalization should be understood as fundamentally different notions for quantum systems, and their connection --- if any --- lies in additional, as yet unclear, physical restrictions on the class of interesting observables. Recent numerical studies suggest that natural restrictions such as locality are not sufficient to force a connection~\cite{MaganWu}. Moreover, one of the earliest numerical studies of eigenstate thermalization~\cite{JensenShankarETH} --- which partly considers the question of how quantum thermalization may occur even in a system with \textit{accessible} conserved quantities (intuitively ``non-ergodic'') --- observes eigenstate thermalization even in systems without the appropriate energy level distribution for dynamical ergodicity. For these reasons, although ergodicity as determined by spectral statistics remains an interesting dynamical property of quantum systems for \textit{other} fundamental reasons, such as for understanding nonperturbative quantum dynamics~\cite{bakeranomalieserg, JTreconstruction} or for preparing highly entangled states for quantum teleportation in smaller complex systems~\cite{dynamicalqentanglement}, we believe that an accessible approach to quantum statistical mechanics should be independent of dynamical ergodicity (and therefore, the statistical properties of energy levels), and closer in spirit to Eq.~\eqref{eq:KhinchinAutocorrelator}.

We will discuss questions of accessibility using the following rules of thumb, which are conventional for quantum many-body measurements\footnote{We will also use the formal asymptotic notation~\cite{knuth1976asymptotic} where relevant. In the $x\to\infty$ limit, $y=o(x)$ represents $y < cx$ for any constant $c$, $y=O(x)$ represents $y \leq c x$ for some constant $c$, $y=\Theta(x)$ amounts to $c_1 x \leq y \leq c_2 x$ for constants $c_1,c_2$, $y = \Omega(x)$ represents $y \geq c x$ for some $c$ and $y = \omega(x)$ represents $y > c x$ for any $c$.}. For an $N$-particle system, we consider any quantity that requires a number of measurements, resolution or a time interval (relative to some experimentally natural timescale) that scales at most as $\poly(N)$ (representing some polynomial in $N$) to be accessible. The lower the degree of the polynomial, the more accessible the quantity. Somewhat more formally, if a quantity with value $Q\sim 1$ can be measured with a finite number of resources (which fixes the normalization of $Q$ that is relevant for questions of accessibility), then we expect that the following range of values of $Q$ are accessible:
\begin{equation}
\frac{1}{cN^\alpha} \leq Q \leq c N^\alpha, \text{ for some constants } c, \alpha > 0.
\label{eq:accessible_def}
\end{equation}
However, the timescales required to construct the energy eigenstates or dynamically explore the Hilbert space grows at least linearly with the Hilbert space dimension $D$, which scales exponentially $D \sim \exp(N)$ in the particle number $N$. The primary reason for their inaccessibility is loosely the energy-time uncertainty principle: sensitivity to an energy eigenstate (zero energy uncertainty) requires an infinite amount of time (infinite time uncertainty). This quantifies why we do not expect ergodicity or the properties of the energy eigenstates to be accessible in large many-particle systems.

In contrast to the ergodic hypothesis, ETH in its very formulation addresses accessible observables. However, much like the ergodic hypothesis, it refers to the behavior of these observables in a large family of (typically) inaccessible states: the energy eigenstates and their simple superpositions (to determine the off-diagonal matrix elements). Correspondingly, ETH is also considered very hard to prove in generic cases, except in sufficiently simple models~\cite{MoriETHreview}, especially in systems with certain special symmetry properties such as translation invariance~\cite{BiroliWeakETH, MoriWeakETH}. Even in many of these simpler models, the requirements of nondegeneracy of the energy levels and their spacings is a major obstacle for rigorously inferring thermalization from ETH: while nondegeneracy is believed to be generic~\cite{srednicki1999eth}, it corresponds again to very fine properties of the energy levels that may be difficult to show definitively in any given system. Moreover, in an experiment, the resolution required to establish nondegeneracy is significantly higher than that required for the statistics of energy levels~\cite{pSFF}. Finally, even in the absence of degeneracies, as is believed to be the case for ``typical'' systems, the time after which ETH can guarantee thermalization tends to be inaccessibly large. Similar conclusions apply to other eigenstate-based approaches to thermalization~\cite{Borgonovi2016}.

%Missed opportunity not to call the ergodic hypothesis EH, so that one can talk about EH vs. ETH, but not abbreviating is probably better for clarity.

Let us now illustrate the magnitude of the problem with infinite time averages in ETH. Even assuming that somehow --- despite their practical inaccessibility --- we have a fairly good idea that an observable satisfies Eq.~\eqref{eq:ETH} and the system possesses nondegenerate energy level spacings, the standard constraint implied by ETH on the deviation of the expectation value $\langle \hat{A}(t)\rangle_{\rho}$ in an arbitrary state $\hat{\rho}$ from its thermal value $A_{\therm}$ in the appropriate range of energies is (e.g., \cite{DAlessio2016}):
\begin{equation}
\lim_{T\to\infty}\frac{1}{2T}\int_{-T}^{T}\diff t\ \left(\langle \hat{A}(t)\rangle_{\rho}-A_{\therm}\right)^2 \leq \max_{n \neq m} \left\lvert\langle E_n\rvert \hat{A}\lvert E_m\rangle\right\rvert^2 \sim \exp[-S(\overline{E})].
\end{equation}
Crucially, the integral converges to values satisfying the above inequality for times larger than any level spacing in the system, typically $T \gg 2\pi/(\min_{n \neq m} \lvert E_n - E_m\rvert) \sim \exp(2 S(\overline{E}))$ (see also the discussion around Eq.~\eqref{eq:levelspacingscale}). As we expect $e^{S(\overline{E})}$ to scale with the Hilbert space dimension around $\overline{E}$, we must have $S(\overline{E}) \sim N$. This means that the overall duration of time $t_{\text{ex}}$ during which $\hat{A}(t)$ fails to attain thermal equilibrium in $t\in [-T,T]$, i.e., deviates from $A_{\therm}$ by some $O(1)$ value, satisfies [up to $O(1)$ constants]:
\begin{equation}
\frac{t_{\text{ex}}}{2T} \lesssim \exp[-S(\overline{E})]\;\;\ \implies\;\;\ t_{\text{ex}} \lesssim \exp[S(\overline{E})] \sim \exp(N).
\label{eq:ETH_eqb_timescale}
\end{equation}
In other words, ETH cannot (rigorously) constrain thermal equilibrium until the same timescale $e^{S(\overline{E})}$ associated with dynamical ergodicity. Even in practice, one can roughly estimate~\cite{srednicki1999eth} the time scales of thermalization using the function $f(E_1,E_2)$, but only coupled with the properties of the initial state~\cite{DAlessio2016}, which makes these estimates somewhat less ``universal'' than one may hope for. In terms of exact state-independent results, even for an observable that is known to satisfy the full statement of ETH, a guarantee of thermal equilibrium does not exist for accessible timescales due to Eq.~\eqref{eq:ETH_eqb_timescale}. To our understanding, a similar shortcoming is present even in results on equilibration (without necessarily involving thermalization i.e. settling to the time averaged expectation value at almost all times, whether thermal or not)~\cite{ReimannRealistic, ShortEqb} that rely on infinite time averages and nondegenerate spectra. For known equilibration results over finite times that account for degeneracies in the spectrum and a natural family of initial states~\cite{ShortDegenerate}, the timescale required to establish equilibration is generically $T \sim D \ln D \sim N \exp(N)$, and is therefore inaccessible in the thermodynamic limit.

From this viewpoint, ETH has successfully moved away from the ergodic hypothesis and towards accessible observables, but continues to be expressed in terms of inaccessible properties of these observables as well as the energy levels over inaccessible timescales, and in that sense remains somewhat similar to the ergodic hypothesis. Our primary goal in this context is to find more suitable observable-dependent formulations for quantum thermalization that can be accessed over \textit{finite} timescales (but recovers eigenstate thermalization in the limit of infinite times) without requiring \textit{any} refined knowledge of the energy levels or their properties.

\subsection{Organization of this paper}

The remaining Sections are organized as follows. Sec.~\ref{sec:summary} summarizes all of our main results at a physical level, with minimal formal details. This section primarily focuses on the introduction of a coarse-grained version of eigenstate thermalization that we call ``energy-band thermalization'', its implications for time-averaged thermalization as well as thermal equilibrium with accessible ranges of parameters, and accessibility through two point correlation functions such as autocorrelators.

Secs.~\ref{sec:classical_ET} and \ref{sec:quantum_degeneracies} primarily review largely previously known, but in our estimation not widely known, material concerning classical and quantum (eigenstate) thermalization from a perspective that will motivate our quantum results. The subsequent sections \ref{sec:thermalization_finitetimes}-\ref{sec:exp_protocols} are concerned with our results on energy-band thermalization, its connection to the thermalization of observables in physical states, and accessibility in experiments.
%Throughout these sections, we will label formal statements as follows: we use Theorems (and their Corollaries) to indicate what we believe to be significantly new results, Propositions to label results that are either known or are simpler extensions of (previously) known results, and Summaries to describe how these results connect together.

Sec.~\ref{sec:classical_ET} develops the connection between Eq.~\eqref{eq:KhinchinAutocorrelator} and thermalization, through a slightly indirect route that considers a classical analogue of eigenstate thermalization. Our purpose in doing so is to show that one may attempt to characterize classical thermalization via this classical ``eigenstate'' thermalization, such that it becomes immaterial whether the system is ergodic or not, but this property is best accessed through autocorrelators. This allows one to bypass classical eigenstate thermalization, and directly address the thermalization of observables without reference to ergodicity or energy surfaces (Summary~\ref{sum:cl_equivalence}). We will use this line of reasoning to motivate our quantum developments, to move beyond quantum eigenstate thermalization.

Sec.~\ref{sec:quantum_degeneracies} reviews the conventional quantum statement of eigenstate thermalization with infinite time averages, but also our version (Definition~\ref{def:eigenspace_thermalization}) of a stronger statement for thermalization in degenerate eigenspaces that seems not to be widely known at all, whose variants have appeared before in Refs.~\cite{Sunada, VenutiLiu, TumulkaEsT}. Despite its relevance only for infinite timescales, we will use this notion to motivate energy-band thermalization by regarding energy bands as ``approximate'' degenerate eigenspaces over small timescales that are not sensitive to their resolution.

In Sec.~\ref{sec:thermalization_finitetimes}, we finally begin our treatment of thermalization over finite timescales, formally define energy-band thermalization (Definition~\ref{def:energybandthermalization}), and show how it rigorously implies the thermalization of physical states over finite \textit{and} longer timescales (Theorems~\ref{thm:energyband_implies_thermalization} and \ref{thm:energyband_implies_physicalthermalization}). We will then show in Sec.~\ref{sec:globalthermal} that energy-band thermalization can be accessed directly from autocorrelators (Proposition~\ref{prop:autocorrelators_to_energyband}), reminiscent of Eq.~\eqref{eq:KhinchinAutocorrelator}, provided the thermal value of the observable is constant throughout the Hilbert space, as in Eq.~\eqref{eq:ET_informal}. This allows us to directly connect autocorrelators to physical thermalization without reference to eigenstates (Summary~\ref{sum:q_global_therm}). The generalization to energy-dependent thermal values, for a version of Eq.~\eqref{eq:ETH} that is nontrivial for finite timescales, requires quantum interference measures beyond autocorrelators and is developed in Sec.~\ref{sec:shellthermalechoes} (Theorems~\ref{thm:energyband_shell_from_echoes}, \ref{thm:echoes_from_energyband} and Summary~\ref{sum:q_shell_therm}).

While Sections \ref{sec:classical_ET}-\ref{sec:shellthermalechoes} mainly focus on time-averaged thermalization (which we refer to as ``thermalization on average'' and abbreviate as ``thermalization o.a.'') in the bulk of their technical statements for simplicity, we will show that our results generalize to thermal equilibrium in Sec.~\ref{sec:attackoftheclones}, through a trick of ``cloning'' the operator and taking time averages in a doubled Hilbert space (such as in Corollary~\ref{cor:cloned_energyband_to_physical_thermalization}). In Sec.~\ref{sec:exp_protocols} we sketch ways in which these notions can be efficiently measured in experiments, that in place of indirect signatures, allow the conclusive determination of quantum thermalization using an accessible number of resources. We conclude in Sec.~\ref{sec:conclusion} with some general statements and future directions.

For the Appendices, Appendix~\ref{app:mazursuzuki} derives a finite-time Mazur-Suzuki inequality for a finite dimensional quantum system, rigorously bounding completely positive time averages of autocorrelators in terms of approximately conserved charges. Appendix~\ref{app:dualunitarycircuits} illustrates an application of our results to show the thermalization of local observables in almost all initial states (in a physically accessible sense) of dual-unitary quantum circuits. Appendix~\ref{app:energybandeigenspace} analyzes the connection between energy-band thermalization and eigenspace thermalization, and concludes that eigenspace thermalization is not always accessible even given energy-band thermalization; consequently, attempting to bypass the energy levels and eigenstates entirely may be in our best interest for an experimentally accessible formulation of quantum statistical mechanics. Finally, as all our proofs are relatively straightforward (and repeated) applications of the triangle and Cauchy-Schwarz inequalities~\cite{ByronFuller}, they are relegated to Appendix~\ref{app:proofs}, although what we perceive to be nontrivial physical statements are derived or explained in the course of the main text or the relevant Appendices.

\section{Summary of results}
\label{sec:summary}

In this section, we will summarize our results at an intuitive level, with notation that slightly differs at times from the rest of the text for easier readability. Each of the following three subsections respectively summarizes Sec.~\ref{sec:thermalization_finitetimes}, Sec.~\ref{sec:globalthermal} and Sec.~\ref{sec:shellthermalechoes}, while also incorporating the contents of Secs.~\ref{sec:attackoftheclones} and \ref{sec:exp_protocols}. For specific technical details, we refer to the technical Summaries~\ref{sum:cl_equivalence}, \ref{sum:q_global_therm} and \ref{sum:q_shell_therm} for time-averaged thermalization, as well as Sec.~\ref{sec:attackoftheclones} for thermal equilibrium and Sec.~\ref{sec:exp_protocols} for measurement protocols.

\subsection{Quantum thermalization with finite energy resolution}
\label{sec:summarythermalization}

\begin{figure}[!ht]
\centering
\subcaptionbox{Rare/small degeneracies.\label{subfig:raredegen}}[0.4\textwidth]{\includegraphics[width=0.4\textwidth]{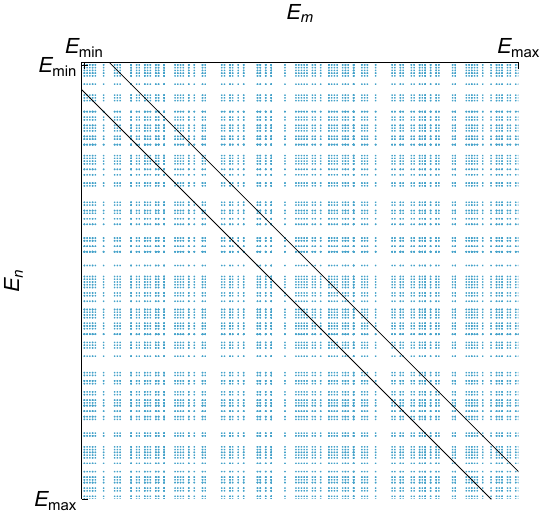}} \qquad \qquad
\subcaptionbox{Frequent/large degeneracies.\label{subfig:freqdegen}}[0.4\textwidth]{\includegraphics[width=0.4\textwidth]{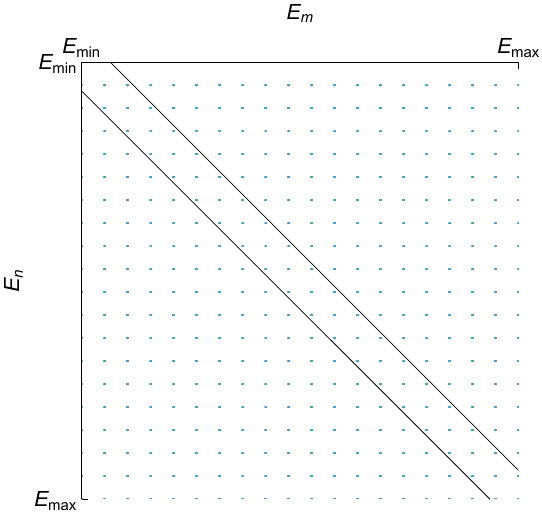}}
\caption{Energy-band thermalization depicted in terms of the relevant pairs of energy levels $(E_n, E_m)$ for two spectra with the same number of energy levels in the same range $[E_{\min}, E_{\max}]$, but with (\ref{subfig:raredegen}) having only occasional degeneracies, while (\ref{subfig:freqdegen}) is highly degenerate. The solid diagonal lines represent the energy band of interest around $E_n = E_m$ with a width $\Delta E$ [the orientation of the axes has been chosen to evoke matrix elements, which however would be a plot of $(n,m)$ instead of $(E_n, E_m)$; see also Fig.~\ref{fig:energybandmatrix}]. In general, it may not be possible to accessibly distinguish the two spectra (\ref{subfig:raredegen}) and (\ref{subfig:freqdegen}) in an experiment, and the structure of energy level pairs inside the energy band can be completely different in each case. However, it is sufficient to know that whichever pairs of energy levels are in the band satisfy energy band thermalization, which is an experimentally accessible question, to conclusively determine thermalization in almost all initial states at all times larger than the timescale corresponding to the inverse width of the energy band.}
\label{fig:energybandscatter}
\end{figure}

Rather than connecting ETH to observable-independent ergodic dynamics or the statistics of energy levels as suggested by the traditional viewpoint~\cite{DAlessio2016}, we will directly develop a connection between the experimentally accessible dynamics of quantum observables, similar to Eq.~\eqref{eq:KhinchinAutocorrelator}, and their thermalization in arbitrary physical basis states over experimentally accessible timescales. It is nevertheless convenient to express this structure in terms of the matrix elements of observables in the energy eigenstates (which form a preferred set of states uniquely identified by the dynamics of a system). For this purpose, we will introduce the notion of energy-band thermalization (whose structure is partly suggested by semiclassical proofs of eigenstate thermalization~\cite{ZelditchTransition, Sunada}) as a property of observables that does not directly depend on the energy levels of the system, but only on our (or an experimenter's) choice of an energy band resolution $\Delta E$. We will also show that this determines thermalization over finite times $T \gg 2\pi/\Delta E$, without requiring infinite time averages (summarizing Sec.~\ref{sec:thermalization_finitetimes}).

Let us provide some context for why this should work. In the mathematical literature on semiclassical chaos~\cite{Shnirelman, CdV, ZelditchOG, ZelditchTransition, Sunada, ZelditchMixing, Zelditch, Anantharaman}, it has been possible to prove eigenstate thermalization in the sense of Eq.~\eqref{eq:ET_informal} for classically accessible observables in certain quantum systems with a classical limit that satisfies the ergodic hypothesis, together with a (mild) suppression of off-diagonal elements in a shrinking energy band $\Delta E\to 0$ around the diagonal matrix elements. The collection of these two statements further implies classical ergodicity in the $T\to \infty$ limit. From our perspective, the fact that these statements successfully connect quantum matrix elements to purely classical dynamics strongly suggests\footnote{Why this is suggested by classical theorems is roughly as follows: in these theorems, the classical limit $\hbar \to 0$, which is very similar to the thermodynamic limit $N\to\infty$ in that both increase the Hilbert space dimension to infinity, is taken \textit{before} the $t\to\infty$ limit in these proofs. The classical infinite time average does not correspond to timescales that diverge as $\hbar \to 0$ (which are already set to $\infty$ when taking the classical limit), and in fact necessarily converges over $O(\hbar^0)$ timescales, therefore being quite different from the quantum infinite time average. This means that the timescales referred to as ``infinite'' in the classical limit are actually well below the timescales needed to resolve the energy levels (which grow as a power of $1/\hbar$ by the uncertainty principle for quantized classical systems, and as $\exp(N)$ for fully quantum many-body systems).} that these are appropriate notions for describing quantum thermalization even when the quantization of the energy levels is inaccessible, as in the $\hbar \to 0$ limit. As we aim to describe finite dimensional quantum systems, we do not have an explicit analogue of Planck's constant $\hbar$ available to us to safely take limits such as $T\to\infty$ without crossing the resolution of energy levels (we prefer not to take the thermodynamic limit of $N\to\infty$ particles in our results to retain applicability to finite but large systems as may be accessed in experiments); therefore, we must formulate our approach strictly for finite energy bands $\Delta E$ and finite time scales $T$.

For our initial summary of energy-band thermalization, we will specialize to situations where the thermal value $A_{\therm}$ does not have any dependence on energies (i.e., corresponding to Eq.~\eqref{eq:ET_informal} rather than Eq.~\eqref{eq:ETH}), but note that the generalization to energy-dependent thermalization is straightforward: we simply restrict our Hilbert space to energy shells with approximately constant $A_{\therm}(E_n) \approx A_{\therm}$. Our description here will largely be intuitive, and we refer to the details in subsequent sections for the full technical results.

\subsubsection{Time-averaged thermalization}
%The previous version missed emphasizing the connection with eigenstate thermalization in the intro, so adding those now.

For some observable $\hat{A}$, instead of imposing Eq.~\eqref{eq:ET_informal} at the level of individual eigenstates, we consider its average behavior over the $D$ energy levels in the spectrum, when connecting pairs of energy levels whose spacing is within an energy band of width $\Delta E$:
\begin{equation}
[\hat{A}]_{\Delta E} \equiv \sum_{\substack{n,m:\\ \lvert E_n - E_m\rvert < \Delta E}}  \lvert E_n\rangle \langle E_n\rvert \hat{A}\lvert E_m\rangle \langle E_m\rvert.
\label{eq:energyband_notation_def}
\end{equation}
The relevant pairs of energy levels are depicted in Fig.~\ref{fig:energybandscatter}. Now, we require that this restriction of $\hat{A}$ to the energy band thermalizes to $A_{\therm}$ (defined formally in Sec.~\ref{sec:thermalization_finitetimes}, via Definition~\ref{def:energybandthermalization}):
\begin{equation}
[\hat{A}]_{\Delta E} \approx A_{\therm} \idop.
\label{eq:intuitive_energyband}
\end{equation}
Here, by $\hat{X} \approx \hat{Y}$, we mean that $D^{-1} \Tr[(\hat{X}-\hat{Y})^2] < \epsilon^2$ for some chosen accuracy $\epsilon \ll 1$. Eq.~\eqref{eq:intuitive_energyband} should intuitively be understood as a version of eigenstate thermalization, Eq.~\eqref{eq:ET_informal}, that lacks sufficient energy resolution $\Delta E$ to necessarily identify individual eigenstates (which would be achieved if $\Delta E \to 0$) but, as we will shortly see, nevertheless implies Eq.~\eqref{eq:ET_informal} for almost all eigenstates.
Crucially, such a notion is fully insensitive to the question of how many collections of the energy levels of the system populate the energy band $\Delta E$, and in this way ensures a complete logical separation of energy-band thermalization from the energy levels.

More nontrivially, we can address thermalization dynamics in arbtirary sets of initial states. Given that Eq.~\eqref{eq:intuitive_energyband} holds, we will show that in any complete orthonormal basis of states $\lvert \psi_k\rangle$ in the Hilbert space, $\hat{A}$ thermalizes in almost all basis states for any time average larger than $2\pi/\Delta E$ (formally, Theorems~\ref{thm:energyband_implies_thermalization} and \ref{thm:energyband_implies_physicalthermalization} in Sec.~\ref{sec:thermalization_finitetimes}):
\begin{equation}
\frac{1}{2T}\int_{t_0-T}^{t_0+T} \diff t\ \langle \psi_k(t)\rvert\ \hat{A}\ \lvert \psi_k(t)\rangle \approx A_{\therm},\;\ \text{ for almost all } k, \text{ for any } T \gg \frac{2\pi}{\Delta E}.
\label{eq:intuitive_finitetime_therm}
\end{equation}
Here, $x \gg y$ should be rigorously understood as $x > cy$ for some large constant $c > 1$, while $t_0$ is an arbitrarily chosen reference time (which could be any real number). For example, if $\hat{A}$ is a projector, then this statement implies that projective measurements of $\hat{A}$ (which has outcomes $0$ or $1$ for a projector), collected over arbitrary times in a sufficiently long interval with the same initial state for each run, acquire thermal statistics for almost all basis states. Moreover, if the system is initialized to any given $\lvert \psi_k\rangle$ and the observable $\hat{A}$ is weakly coupled to an external system $M$ over the time interval $[t_0-T,t_0+T]$ (e.g. via an interaction of the form $g \hat{A} \otimes \hat{X}_M$), then observables in the external system are directly sensitive to the time-average of $\hat{A}$ in Eq.~\eqref{eq:intuitive_finitetime_therm} according to first-order perturbation theory~\cite{ShankarQM}, making this a naturally observable form of thermalization.

We emphasize that there is no mathematical restriction on the basis $\lvert \psi_k\rangle$ to which this result applies. For example, one could take the basis to be that of the energy eigenstates $\lvert E_k\rangle$, in which case Eq.~\eqref{eq:intuitive_finitetime_therm} automatically implies eigenstate thermalization in the sense of Eq.~\eqref{eq:ET_informal} for \textit{almost all} energy eigenstates:
\begin{equation}
    \langle E_n\rvert\ \hat{A}\ \lvert E_n\rangle \approx A_{\therm}, \text{ for almost all } n.
    \label{eq:energyband_implies_ET_almostall}
\end{equation}
Unlike Eq.~\eqref{eq:ETH}, we note that the deviations of the diagonal matrix elements from $A_{\therm}$ are constrained only up to an $O(1)$ resolution $\epsilon$, which is a significant factor behind our statements being restricted to ``almost all'' basis states.
%This time-averaged statement may itself be quite relevant, because projective measurements only measure eigenvalues, not expectation values, in a single instance. So one can consider making repeated projective measurements at arbitrary times for example, and then the statistics should be thermal.

By ``almost all'' basis states, we mean that the fraction of basis states (in any basis) to which the above statement [Eq.~\eqref{eq:intuitive_finitetime_therm}] applies approaches $1$ as our experimental resolution and timescale of observation improve (i.e., $\epsilon \to 0$, $T \Delta E \to \infty$). We describe this result as applying to ``almost all'' basis states as this is the statement we expect to be accessible in experiments for large systems; however, with sufficient resolution sensitive to the Hilbert space dimension $D$, our results actually allow us to constrain thermalization in every single state in the Hilbert space as well (which we expect may be possible for sufficiently small systems). Eq.~\eqref{eq:intuitive_finitetime_therm} applies independently of whether the energy levels are degenerate or show any kind of statistical correlations. Most significantly, limits such as $T\to\infty$ or $\Delta E \to 0$ are not \textit{required} for this result, but may be taken if desired and allowed by the specific context. Our main physical emphasis here is that knowing the properties of observables even at a low resolution corresponding to an energy band $\Delta E$ is sufficient to establish thermalization over finite or infinite time averages in an arbitrary orthonormal basis, without requiring any higher resolution knowledge of the energy levels \textit{a priori} such as in Eq.~\eqref{eq:ET_informal}.

We will refer to such basis states in \textit{any} orthonormal basis as ``physical states'' in the bulk of the manuscript, mainly to emphasize that they can be experimentally prepared as outcomes of a projective measurement~\cite{NielsenChuang} of any complete set of commuting observables that one may have access to. For example, this could be the set of computational basis states in an $N$-qubit system [as in Eq.~\eqref{eq:computationalbasis_example2}],
\begin{equation}
    \lvert 0_1 0_2\ldots 0_N\rangle,\ \lvert 0_1 0_2\ldots 1_N\rangle,\ \ldots,\ \lvert 1_1 1_2 \ldots 0_N\rangle,\ \lvert 1_1 1_2\ldots 1_N\rangle,
\end{equation}
which may be efficiently accessed by measuring the computational state $\lbrace 0,1\rbrace$ of each qubit in parallel. We also use this terminology of ``physical states'' to emphasize the contrast with ``typical states'' (chosen according to the Haar measure~\cite{Mehta}) in the Hilbert space, which have been shown~\cite{tumulka_CT, CanonicalTypicalityPSW} to trivially thermalize all local observables with probability $1$ in the thermodynamic limit. However, such ``typicality'' results have no implications for thermalization in any given orthonormal basis of states accessible in an experiment (such as computational basis states): any countable set of states forms a measure zero set according to the Haar measure, due to which no given basis state is necessarily a ``typical'' state. Eq.~\eqref{eq:intuitive_finitetime_therm} is therefore significantly stronger than typicality results in applying to almost all states in any \textit{finite} set of sufficient size, and to observables with specific properties.

We can additionally make a statement about \textit{instantaneous} thermal equilibrium from Eq.~\eqref{eq:intuitive_energyband} that, in a sense, extends Eq.~\eqref{eq:energyband_implies_ET_almostall} beyond individual energy eigenstates to arbitrary states of narrow energy width. Given any small energy shell $\mathcal{E}_{\Delta E}$ of width less than $\Delta E$ anywhere in the spectrum, and a basis of states $\lvert \psi_q(\mathcal{E}_{\Delta E})\rangle$ supported only within this energy shell [which we expect to be $\Theta(D)$ in number for accessible $\Delta E$], then almost all states in the basis have thermal expectation values by default (Proposition~\ref{prop:instantaneousthermalequilibrium}):
\begin{equation}
    \langle \psi_q(\mathcal{E}_{\Delta E})\rvert\ \hat{A}\ \lvert \psi_q(\mathcal{E}_{\Delta E})\rangle \approx A_{\therm},\;\ \text{ for almost all } q.
    \label{eq:instantaneousthermalequilibrium}
\end{equation}
It therefore follows that questions of thermalization dynamics for an observable satisfying energy-band thermalization generally arise only for states with energy support over a range larger than the corresponding energy-band width $\Delta E$.

\subsubsection{Thermal equilibrium}

In addition to time averages, one is also interested in showing that the observable attains thermal equilibrium at almost all times in a finite time range $t \in [t_0-T,t_0+T]$ centered at some time $t_0$, in a more general family of initial states than Eq.~\eqref{eq:instantaneousthermalequilibrium}. Unlike the distinction between eigenstate thermalization in Eq.~\eqref{eq:ET_informal} and the full statement of ETH in Eq.~\eqref{eq:ETH}, where ETH assumes a separate off-diagonal structure for thermal equilibrium, we will show that the same notion of energy-band thermalization is also sufficient to describe thermal equilibrium, though with a slightly weaker specification of initial states than for time-averaged thermalization in Eq.~\eqref{eq:intuitive_finitetime_therm}. Specifically, we will require (in Sec.~\ref{sec:attackoftheclones}) that a cloned version of $\hat{A}$ in a doubled Hilbert space (each with dynamics generated by $\hat{H}$, so that the doubled Hamiltonian is $\hat{H} \otimes \idop + \idop \otimes \hat{H}$) also satisfies energy-band thermalization for the doubled Hamiltonian:
\begin{equation}
[(\hat{A}-A_{\therm}\idop) \otimes (\hat{A}-A_{\therm}\idop) ]_{\Delta E} \approx 0.
\label{eq:intuitive_cloned_energyband}
\end{equation}
%made more precise, because now the intro is generally trying to be more precise and less intuitive with all the relevant details summarized here.
We note that this cloning of the operator is a mere formal device, and such properties can be experimentally established entirely within a single copy of the system. This ``cloning'' strategy may be regarded as a quantum version of a theorem in classical ergodic theory~\cite{SinaiCornfeld}, which states that a dynamical system is weakly mixing (which loosely corresponds to thermal equilibrium) if and only if its ``cloned'' version with two copies of the system is ergodic (which loosely corresponds to time-averaged thermalization). This is especially convenient for us, as it allows us to use the same conceptual notion of energy-band thermalization (but applied to the observable vs. its cloned version) to address both time-averaged thermalization and instantaneous thermal equilibrium over finite timescales.

First, let us consider constraints on matrix elements in the energy eigenbasis. For any observable $\hat{A}$ with finite eigenvalues (as $D\to\infty$), the squared diagonal fluctuations around $A_{\therm}$ as well as squared off-diagonal matrix elements (which we will collectively just call ``squared matrix element fluctuations'') must show the $O(D^{-1}) \sim \exp(-S(\overline{E}))$ scaling in ETH [Eq.~\eqref{eq:ETH}] \textit{on average}:
\begin{equation}
    \frac{1}{D^2}\left[\sum_n (\langle E_n\rvert\ \hat{A}\ \lvert E_n\rangle-A_{\therm})^2+\sum_{n \neq m} \left\lvert \langle E_n\rvert\ \hat{A}\ \lvert E_m\rangle\right\rvert^2\right] = D^{-1} \left(\frac{1}{D}\Tr\left[(\hat{A}-A_{\therm}\idop)^2\right]\right) = O(D^{-1}).
\end{equation}
More nontrivial statements should constrain the \textit{distribution} of these squared matrix element fluctuations given such an average, e.g., their variance can in principle be as large as a finite constant as $D\to\infty$. With this context, Eq.~\eqref{eq:intuitive_cloned_energyband} implies that
\begin{equation}
    \frac{1}{D^2}\left[\sum_n (\langle E_n\rvert\ \hat{A}\ \lvert E_n\rangle-A_{\therm})^4+\sum_{n \neq m} \left\lvert \langle E_n\rvert\ \hat{A}\ \lvert E_m\rangle \right\rvert^4\right] \approx 0,
    \label{eq:clonedenergyband_implies_offdiagETH_almostall}
\end{equation}
which is a nontrivial statement that the variance of the squared matrix element fluctuations is smaller than it could be, at least with $O(1)$ resolution (therefore, potentially still parametrically larger than the mean). That this is a statement without any restriction to an energy band is crucial for thermal equilibrium: it allows us to effectively constrain dynamics without time averaging (as instantaneous expectation values at arbitrary times involve all matrix elements of the observable). However, this statement also lacks the $O(D^{-1})$ resolution of (squared) off-diagonal ETH [Eq.~\eqref{eq:ETH}] (if one insists on accessible values of $\epsilon$), due to which it will turn out that we will need a more coarse-grained notion of ``almost all'' states to discuss the (accessible) dynamics of thermal equilibrium.

If we restrict ourselves to statements that can be made with an experimentally accessible resolution, then Eq.~\eqref{eq:intuitive_cloned_energyband} is sufficient for the observable to attain thermal equilibrium in two different ways, both involving sets of initial states. A direct variant of Eq.~\eqref{eq:intuitive_finitetime_therm} is obtained by replacing $(\hat{A} - A_{\therm}\idop)$ with its cloned version $(\hat{A}-A_{\therm}\idop) \otimes (\hat{A}-A_{\therm}\idop)$ and noting that the set of pairs of basis states $\lvert \psi_k\rangle \otimes \lvert \psi_{\ell}\rangle$ forms a complete orthonormal basis for the doubled Hilbert space. This is the statement that for almost all $\textit{pairs}$ of initial states $(\lvert \psi_k\rangle, \lvert \psi_{\ell}\rangle)$ in any basis, the fluctuations of the expectation value of $\hat{A}$ around $A_{\therm}$ have negligible correlations over a sufficiently large time interval (formally, Corollary~\ref{cor:clonedenergyband_implies_physicaldecorrelation}):
\begin{align}
\frac{1}{2T} \int_{t_0-T}^{t_0+T} \diff t\ \left(\langle \psi_k(t)\rvert\ \hat{A}\ \lvert \psi_k(t)\rangle - A_{\therm}\right) &\left(\langle \psi_\ell(t)\rvert\ \hat{A}\ \lvert \psi_\ell(t)\rangle - A_{\therm} \right) \approx 0, \nonumber \\
&\text{ for almost all } (k,\ell), \text{ for any } T \gg \frac{2\pi}{\Delta E}.
\label{eq:intuitive_finitetime_weakmixing}
\end{align}
This lack of correlation between basis states can occur in two notably extreme ways (among a continuum of possibilities): (1) $\langle \psi_k(t)\rvert\ \hat{A}\ \lvert \psi_k(t)\rangle \approx A_{\therm}$ at almost all times for almost all $k$ (corresponding to thermal equilibrium in almost all basis states), or (2) $\langle \psi_k(t)\rvert\ \hat{A}\ \lvert \psi_k(t)\rangle$ and $\langle \psi_\ell(t)\rvert\ \hat{A}\ \lvert \psi_\ell(t)\rangle$ have sufficiently independent dynamics for $k\neq \ell$ (e.g., independent erratic fluctuations). However, we need an inaccessibly high resolution to differentiate these two possibilities, and therefore cannot sharpen this result to completely imply thermal equilibrium for a many-body system while insisting on accessibility (without, possibly, additional assumptions that we haven't found to be straightforward to rigorously formulate so far).

Another statement directly concerning thermal equilibrium that follows from cloned energy-band thermalization pertains to \textit{sufficiently large} statistical ensembles of basis states. For an ensemble $\hat{\rho}(t) = p_k\lvert\psi_k(t)\rangle\langle\psi_k(t)\rvert$, where $p_k \geq 0$ is the probability of the basis state $\lvert \psi_k\rangle$ in the ensemble (normalized to $\sum_k p_k = 1$), we can measure the ``size'' $\mu(\rho)$ of the ensemble via
\begin{equation}
    \mu(\rho) \equiv \frac{1}{D \sum_k p_k^2} \in \left[0,1\right],
\end{equation}
which is an estimate of the fraction of basis states whose probability $p_k$ is comparable to that of a ``typical'' state. Then, as long as $\mu(\rho) = \Theta(1) > 0$ [i.e. $\mu(\rho) > c_{\mu}$, for some small constant $c_\mu$ determined by our experimental resolution that remains fixed even if $D \to \infty$], which quantifies what we mean by a sufficiently large ensemble, Eq.~\eqref{eq:intuitive_cloned_energyband} implies that thermal equilibrium is attained at almost all times (formally, Corollary~\ref{cor:cloned_energyband_to_physical_thermalization}),
\begin{equation}
    \sum_k p_k \langle \psi_k(t)\rvert\ \hat{A}\ \lvert \psi_k(t)\rangle \approx A_{\therm}, \text{ for almost all } t \in [t_0-T, t_0+T], \text{ for any } T \gg 2\pi/\Delta E.
\label{eq:intuitive_cloned_finitetime_therm}
\end{equation}
Here, ``almost all $t$'' should be interpreted as ``all $t$ except a subset of $[t_0-T,t_0+T]$ whose length is smaller than $T/c_T$ for some large constant $c_T > 1$''. Further, we note that the set of all mixed states with $\mu(\rho) > c_{\mu}$ is, in an analogous sense, the set of ``almost all'' mixed states in the Hilbert space (especially as we can take $c_\mu \to 0$ after $D \to \infty$ in a thermodynamic limit). In this sense, Eq.~\eqref{eq:intuitive_cloned_finitetime_therm} is a statement about thermal equilibrium in ``almost all'' states that, though weaker in the admissible class of physical states than Eq.~\eqref{eq:intuitive_finitetime_therm} that applies to pure states, is considerably stronger than typicality results in applying to \textit{all} mixed states of sufficient $\mu(\rho)$ (in any basis) without exception.

Once again, Eq.~\eqref{eq:intuitive_cloned_finitetime_therm} appears to be the strongest statement on thermal equilibrium that we can make with an \textit{accessible} experimental resolution for thermodynamically large systems. For comparison, it is directly analogous to weak-mixing in classical ergodic theory~\cite{HalmosErgodic, SinaiCornfeld, Sinai1976}, which only applies to ensembles of states; many natural observables in classical systems cannot attain thermal equilibrium in any individual state even with the strongest forms of mixing, as discussed in Sec.~\ref{sec:attackoftheclones}. We note the interesting tradeoff\footnote{Precisely due to accounting for classical systems in which equilibration is not universally possible in individual states at specific times, it does not seem straightforward to do away with both the time and state averages altogether without additional assumptions (which must then be violated by suitable observables in classical systems).} in these ``accessible'' statements between time-averaged thermalization in almost all states in Eq.~\eqref{eq:intuitive_finitetime_therm}, and state-averaged thermalization at almost all times in Eq.~\eqref{eq:intuitive_cloned_finitetime_therm}. However, we also emphasize again that with sufficiently high resolution (sensitive to the Hilbert space dimension $D$) as may be possible in small systems, Eq.~\eqref{eq:intuitive_cloned_finitetime_therm} can also be applied at the level of \textit{individual} pure states $\lvert \psi_k\rangle$ of the system.

In the context of many-body systems, Eq.~\eqref{eq:intuitive_cloned_finitetime_therm} has particularly relevant implications for the problem of thermalization of a subsystem in contact with a thermal bath~\cite{HaydenPreskill, ETHNecessity, dynamicalqspeedlimit, dynamicalqfastscrambling}. Specifically, consider splitting an $N$ particle system into an initially nonthermal ``core'' subsystem $C$ with $N_C$ particles in an arbitrary pure state $\lvert \psi\rangle_C$, with the remaining $N_B = N - N_C$ particles functioning as a ``statistical bath'', being initialized to a (possibly, but not necessarily, thermal) statistical ensemble of states via a mixed state density operator $\hat{\rho}_{B}$. The initial state of the overall system is:
\begin{equation}
    \hat{\rho}_{\psi}(0) = \lvert \psi\rangle_C\langle \psi\rvert \otimes \hat{\rho}_{B}.
\end{equation}
Then, Eq.~\eqref{eq:intuitive_cloned_finitetime_therm} implies that for \textit{any} such partition into subsystems $(C,B)$ and \textit{every} pure initial state $\lvert \psi\rangle_C$ in $C$, the observable $\hat{A}$ attains thermal equilibrium at almost all times in this state,
\begin{equation}
    \Tr[\hat{\rho}_{\psi}(t)\ \hat{A}] \approx A_{\therm}, \text{ for \textit{all} } \lvert \psi\rangle_C, \text{ for almost all } t \in [t_0-T, t_0+T], \text{ for any } T \gg 2\pi/\Delta E,
\label{eq:intuitive_cloned_finitetime_thermal_bath}
\end{equation}
as long as $\hat{\rho}_{B}$ is sufficiently closed to a maximally mixed state, $\Tr_B[\hat{\rho}_{B}^2] \leq c/D$ for some large but accessible constant $c \gg 1$ (in the $D\to\infty$ limit), which also requires that $\exp(N_C)$ is accessibly small, i.e., $N_C \leq c_3(\log N)^\alpha$ for some suitable constants $c_3 \geq 0$, $\alpha \geq 0$ as in Eq.~\eqref{eq:accessible_def} [as $\Tr_B[\hat{\rho}_{B}^2] \geq 1/D_B$, where $D_B$ is the dimension of the $B$ subsystem]. We note that the observable $\hat{A}$ may be an arbitrary observable in the Hilbert space; but if it happens to be an observable in $C$, then Eq.~\eqref{eq:intuitive_cloned_finitetime_thermal_bath} describes the thermalization of observables in the initially nonthermal core $C$ in contact with a statistical bath $B$, due to the closed system dynamics generated by the Hamiltonian $\hat{H}$.

Collectively, Eqs.~\eqref{eq:intuitive_energyband}, \eqref{eq:intuitive_finitetime_therm}, \eqref{eq:intuitive_cloned_energyband}, \eqref{eq:intuitive_finitetime_weakmixing}, and \eqref{eq:intuitive_cloned_finitetime_therm} establish a fully quantum connection between properties of observables with a finite energy resolution and their thermalization over finite time scales. Moreover, Eqs.~\eqref{eq:energyband_implies_ET_almostall} and \eqref{eq:clonedenergyband_implies_offdiagETH_almostall} show that direct connections to eigenstate thermalization [Eq.~\eqref{eq:ET_informal}] and (to a lesser extent) off-diagonal ETH fluctuations [Eq.~\eqref{eq:ETH}] may be established even with such finite resolution.

\subsection{Bypassing energy levels with finite-time autocorrelators}
\label{sec:summaryautocorrelator}

We will now describe how energy-band thermalization [Eqs.~\eqref{eq:intuitive_energyband}, \eqref{eq:intuitive_cloned_energyband}] is advantageous compared to conventional statements of ETH [Eqs.~\eqref{eq:ET_informal}, \eqref{eq:ETH}] not only in the ability to access finite time scales, but also because it can be directly accessed from an experimental measurement of the dynamics of a single mixed initial state (corresponding to Sec.~\ref{sec:globalthermal}).

To motivate this relation, let us first consider a system with nondegenerate energy levels $E_n$, and some observable of interest $A$, whose thermal value we again initially assume to be constant throughout the spectrum (i.e., ``global thermalization''). The \textit{connected} autocorrelator of $\hat{A}$, which is given by the autocorrelator of $\widehat{\delta A} = (\hat{A} - A_{\therm}\idop)$, can be averaged over infinite time to give
\begin{equation}
\lim_{T\to\infty} \frac{1}{2T} \int_{-T}^{T}\diff t\ \Tr[\widehat{\delta A}(t) \widehat{\delta A}(0)]  = \sum_{n} \left\lvert \langle E_n\rvert \hat{A}\lvert  E_n\rangle - A_{\therm}\right\rvert^2.
\label{eq:autocorrelator_infinitetimeavg}
\end{equation}
This is a version of the Mazur-Suzuki equality~\cite{MazurPhysica, SuzukiPhysica} for finite-dimensional quantum systems, which has also been implicitly used for ETH in e.g.~\cite{VenutiLiu, KawamotoStrongETH}. In particular, the (approximate) vanishing of the time-averaged autocorrelator (up to some normalization that we will presently ignore) implies eigenstate thermalization in the sense of Eq.~\eqref{eq:ET_informal}, and the reverse implication also holds. From our point of view, however, the infinite time average is a major obstacle due to its inaccessibility (for reasons similar to those connecting ETH to thermalization, see the discussion around Eq.~\eqref{eq:ETH_eqb_timescale}). Quantitatively, for the $T\to\infty$ limit to begin approaching the right hand side, we need
\begin{equation}
T \gg \frac{2\pi}{\min\limits_{n \neq m}\ \lvert E_n - E_m\rvert},
\label{eq:levelspacingscale}
\end{equation}
which is typically quadratic $T \gg D^2$ in the number of energy levels $D$, and therefore exponentially long $T \gg \exp(N)$ in the number of particles $N$. This is much greater than the time required to establish even, e.g., dynamical ergodicity ($T \sim D$). It would be desirable, and in fact, essential for practical purposes, to constrain thermalization with finite time averages.

Once again, in the previously mentioned mathematical literature on semiclassical chaos~\cite{Zelditch, Sunada}, the connection obtained between classical ergodicity and eigenstate thermalization in quantized classical systems proceeds through \textit{classical} autocorrelators as in Eq.~\eqref{eq:KhinchinAutocorrelator}. As these are insensitive to the quantization of the spectrum, the existence of these results again strongly suggests that accessible, finite-time autocorrelators should be able to constrain thermalization in fully quantum systems.

To motivate our approach, let us consider how to constrain energy-band thermalization in the sense of Eq.~\eqref{eq:intuitive_energyband}. One of the key challenges here is to account for a large number of phase factors at finite times:
\begin{equation}
\Tr[\widehat{\delta A}(t) \widehat{\delta A}(0)] = \sum_{n,m} e^{i(E_n - E_m)t}\left\lvert \langle E_n\rvert \widehat{\delta A}\lvert E_m\rangle\right\rvert^2.
\label{eq:traditional_q_autocorrelator}
\end{equation}
In general, the vanishing of an autocorrelator of this form does not necessarily imply the vanishing of any of the individual terms on the right hand side, because most of the terms add up out of phase and could individually be quite large while their sum remains small. It would seem that quantum interference effects pose a problem that must be tackled to achieve our goal.

Our solution, which generalizes a more specific technique in the semiclassical literature~\cite{Zelditch}, is to consider time-averages of autocorrelators weighted by non-negative functions $w_+(t) \geq 0$ whose Fourier transforms $\widetilde{w}_+(\delta E) \geq 0$ are also non-negative\footnote{The semiclassical approach, e.g.~\cite{Zelditch}, is to integrate $\Tr[\widehat{\delta A}(t_1) \widehat{\delta A}(t_2)]$ over $t_1,t_2 \in [-T,T]$, which is contained as a special case of our approach, and then take the classical limit $\hbar \to 0$, and \textit{then} $T\to\infty$.}. For convenience, we call these functions ``completely positive'' for want of better terminology (which is quite distinct from completely positive maps on density operators~\cite{NielsenChuang}). As an aside, such functions also play an important role in the fast scrambling problem, and more generally in the question of obtaining many-body speed limits related to the energy-time uncertainty principle for different classes of systems~\cite{dynamicalqspeedlimit, dynamicalqfastscrambling}.

We expect our more general formulation of these results (compared to semiclassical results) in terms of completely positive functions to be crucial for utilizing these methods in experiments, as described in Sec.~\ref{sec:exp_protocols}. In particular, this allows us to better access specific energy bands and energy shells in a continuous time system by \textit{erratically} sampling a discrete set of times with random spacings (we note that continuous time as in the semiclassical case is inaccessible in experiments, while a regular sampling of discrete times would lead to rapid periodicity in the energy domain, potentially including contributions from additional regions of the spectrum where it may not be clear that the autocorrelator should decay).

If we consider completely positive time averages, we obtain, in place of the traditional autocorrelators in Eq.~\eqref{eq:traditional_q_autocorrelator}, the weighted autocorrelator:
\begin{equation}
\int\diff t\ w_+(t) \Tr[\widehat{\delta A}(t) \widehat{\delta A}(0)] = \sum_{n,m} \widetilde{w}_+(E_n-E_m)\left\lvert \langle E_n\rvert \widehat{\delta A}\lvert E_m\rangle\right\rvert^2.
\label{eq:intro_weighted_autocorrelator}
\end{equation}
Here, all terms on the right hand side are non-negative, so a decay of the autocorrelator must imply the smallness of nonvanishing terms. If the autocorrelator decays over a timescale $T_w$, i.e., when $w_+(t)$ has support on a range of times\footnote{Here, we note that due to the $\widetilde{w}_+(\delta E) \geq 0$ requirement, it is not possible in general to shift the time average of the \textit{autocorrelator} by a reference time $t_0$ to $[t_0-T_w, t_0+T_w]$, which would introduce additional phase factors $\widetilde{w}_+(\delta E) \to \widetilde{w}_+(\delta E) e^{-i \delta E t_0}$ unless $t_0 = 0$. In practice, this means that nonthermal behavior near $t=0$ always contributes.} $t \in [-T_w, T_w]$, then we show that $\hat{A}$ satisfies energy-band thermalization over all corresponding inverse energy scales (Proposition~\ref{prop:autocorrelators_to_energyband} in Sec.~\ref{sec:globalthermal}):
\begin{equation}
[\hat{A}]_{\Delta E} \approx A_{\therm} \idop, \text{ for all } \Delta E \ll \frac{2\pi}{T_w},
\end{equation}
which, by Eq.~\eqref{eq:intuitive_finitetime_therm}, also implies that time-averaged thermalization is necessarily achieved for almost all basis states over longer but comparable timescales to the decay of the autocorrelator (Corollary~\ref{cor:autocorrelator_to_physical_states}),
\begin{equation}
\frac{1}{2T}\int_{t_0-T}^{t_0+T} \diff t\ \langle \psi_k(t)\rvert\ \hat{A}\ \lvert \psi_k(t)\rangle \approx A_{\therm},\;\;\ \text{ for almost all } k, \text{ for any } T \gg T_w.
\label{eq:intro_timedomain_autocorrelator_implies_therm}
\end{equation}

At this stage we no longer need to refer to the energy-band: it follows that the decay of the connected autocorrelator Eq.~\eqref{eq:intro_weighted_autocorrelator} over a timescale $T_w$ implies the time-averaged thermalization of the observable over almost all physical states, and vice versa (see Summary~\ref{sum:q_global_therm}). A similar result applies, via the cloning strategy (Sec.~\ref{sec:attackoftheclones}), to thermal equilibrium as in Eqs.~\eqref{eq:intuitive_finitetime_weakmixing} and \eqref{eq:intuitive_cloned_finitetime_therm}. This achieves our complete bypass of the finer properties of the energy levels, allowing us to express quantum thermalization in terms of autocorrelators similar to Eq.~\eqref{eq:KhinchinAutocorrelator}, with a slight but entirely tractable complication of completely positive time averages to account for quantum interference effects in the traditional autocorrelator.

In most of the text, we restrict ourselves to autocorrelators of projectors $\proj_A$ as the ``simplest'' observables which can be most directly accessed in experiments via projective measurements~\cite{NielsenChuang}, noting that other more complicated observables can be expanded in terms of projectors to their eigenbasis (e.g. $\hat{A} = \sum_a A_a \proj_a$) and are usually inferred from these projective measurements. For projectors, it suffices to prepare a single initial state $\hat{\rho}_A \propto \proj_A$ to determine their thermalization in (almost) all possible initial states. A key takeaway from this approach is then that the dynamics of a projector observable in a single quantum state fully determines whether it thermalizes in arbitrary ``physical'' initial states.

In general, these results imply that provided we have a way to exactly compute or measure autocorrelators in some system even over finite time scales, it is possible to definitively establish thermalization to some global, energy-independent value $A_{\therm}$ for any observable over all longer timescales. This requires that the autocorrelators thermalize within experimentally accessible timescales; in systems in which autocorrelator thermalization takes significantly longer, we can not conclude thermalization in physical states from this approach without waiting for the relevant timescales, however inaccessible. In the Appendices, we illustrate two applications of this approach.

First, the Mazur-Suzuki equality given by Eq.~\eqref{eq:autocorrelator_infinitetimeavg} with $\widehat{\delta A}$ replaced by $\hat{A}$ is of considerable interest in quantum transport, for connecting conserved quantities to autocorrelators (via inequalities). We show in App.~\ref{app:mazursuzuki} how Eq.~\eqref{eq:intro_weighted_autocorrelator} may be used to derive a finite-time inequality for approximately conserved charges for arbitrary finite-dimensional quantum systems, which has been a significant open problem~\cite{ProsenMazur, DharMazur} to our understanding. Schematically, given an orthogonal (with respect to the trace inner product of operators) but not necessarily complete set of approximately conserved quantities $\hat{Q}_k$, with each dynamically fluctuating around its $t=0$ value by $\langle \delta Q_k^2\rangle_{T_w}$ on average, over the timescale $\lvert t\rvert \lesssim T_w$ corresponding to the window of averaging $w_+(t)$, our inequality is:
\begin{align}
\int\diff t\ w_+(t) \Tr[\hat{A}(t)\hat{A}(0)] \gtrsim \sum_k \frac{1}{\Tr[\hat{Q}_k^2]}\left\lvert \left\lvert\Tr\left[\hat{A} \hat{Q}_k \right]\right\rvert - \sqrt{\langle \delta Q_k^2\rangle_{T_w} \Tr[\hat{A}^2]}\right\rvert^2 \label{eq:intro_mazursuzuki} \\
\text{[see Eq.~\eqref{eq:QuantumMazurSuzukiFiniteTimes} for a precise statement]}. \nonumber
\end{align}
If the right hand side is sufficiently large, this shows that even a small set of known approximate and accessible conserved charges can prevent the thermalization of the autocorrelator to $A_{\therm}^2$ (for global thermalization). In contrast, the above bound is weaker if the dynamical fluctuations $\langle \delta Q_k^2\rangle_{T_w}$ are large (i.e., the charge is not close to being conserved) and can potentially allow thermalization.

Second, we illustrate in App.~\ref{app:dualunitarycircuits} how our results may be used to show the thermalization of local observables in almost all basis states for any choice of basis in dual-unitary quantum circuits, in which autocorrelators can be computed exactly~\cite{ProsenErgodic, ClaeysErgodicCircuits, ArulCircuit}, with thermal values that are uniform across all states and do not depend on any conserved quantities such as energy.

\subsection{Interference effects for energy shells and conserved charges}
\label{sec:summaryechoes}

There is yet another challenge that requires a nontrivial modification to this approach with no classical or semiclassical counterpart. In the above intuitive discussion, we have assumed that $A_{\therm}(E_n) \approx A_{\therm}$ is constant throughout the spectrum (except for the Mazur-Suzuki inequality, which is formally independent of the thermal value). We have already stated that energy-band thermalization generalizes in a trivial way if $A_{\therm}(E_n)$ is some smooth non-constant function of energy --- by restricting the Hilbert space to the part of the spectrum where $A_{\therm}(E_n)$ is approximately constant, which is fully accounted for in Sec.~\ref{sec:thermalization_finitetimes}. However, connected autocorrelators such as Eq.~\eqref{eq:intro_weighted_autocorrelator} cannot be rigorously restricted to some region of the spectrum in general, because there is no single thermal value $A_{\therm}$ in terms of which we may define $\widehat{\delta A}$. Given that we want $\hat{A}$ to be an experimental observable, it is not at all obvious that its restriction to an energy shell can be directly implemented experimentally.

Further, the conventional theoretical approach of regularizing, e.g., with a thermal density operator to focus on a narrow energy window is unsuitable for our purposes, because it would require showing properties of this external thermal state that are unlikely to be accessible (even for a rigorously justified regularization strategy in Ref.~\cite{dynamicalqfastscrambling}). To illustrate the problem, let us say that $\hat{A} = \hat{a}_1 \otimes \idop_{(N-1)}$ is a single-particle observable $\hat{a}_1$ on particle $1$. One strategy for restricting to some narrow range of energies, often useful in rigorous statements on thermalization~\cite{dynamicalqfastscrambling, ETHNecessity}, is to initialize the remaining $(N-1)$ particles in some state $\hat{\rho} \propto \idop_1 \otimes \hat{\rho}_{N-1}$ with narrow energy support and measure the autocorrelator of $\widehat{\delta a}_1 = \hat{a}_1 - a_{\therm}\idop_1$ in this state. In attempting to derive expressions analogous to Eq.~\eqref{eq:intro_weighted_autocorrelator}, we get:
\begin{equation}
\int\diff t\ w_+(t) \Tr\left[\hat{\rho} \left(\widehat{\delta a}_1(t) \widehat{\delta a}_1(0) \otimes \idop_{N-1}\right)\right] = \sum_{n,m,r} \widetilde{w}_+(E_n-E_m)\langle E_n\rvert \widehat{\delta a}_1\lvert E_m\rangle \langle E_m \rvert \hat{\rho}\lvert E_r\rangle \langle E_r\rvert \widehat{\delta a}_1\lvert E_n\rangle,
\label{eq:finite_temp_autocorrelator}
\end{equation}
for example (one can get more symmetric expressions by factorizing $\hat{\rho} = (\hat{\rho}^{1/2})^2$, but this actually becomes more intractable here). Unless $\hat{\rho}$ is diagonal in the energy eigenbasis (which is typically hard to ensure or establish), it is not generally possible to show that the right hand side consists of non-negative terms (due to involving the product of at least three matrices, while Eq.~\eqref{eq:intro_weighted_autocorrelator} involved only two identical Hermitian matrices) without a detailed knowledge of the off-diagonal elements of $\hat{\rho}$ in relation to $\hat{a}_1$. Further, since $\hat{\rho}$ is generally supported on a \textit{large} subsystem, even its diagonal elements may be highly fluctuating in the energy eigenbasis of the full system, and it may not be clear how to focus unambiguously on a specific energy shell of interest. We therefore lose the rigorous advantages of Eq.~\eqref{eq:intro_weighted_autocorrelator} in the conventional setting of thermalization in a narrow energy range.

The strategy we find most feasible, described in Sec.~\ref{sec:shellthermalechoes}, is to take full advantage of quantum interference effects by moving beyond autocorrelators and the standard setting of the thermalization process, while remaining in an effectively ``infinite temperature'' problem where the initial state is allowed to be supported on the entire spectrum. First, let us define
\begin{equation}
\widehat{\delta A}_{\mathcal{E}} \equiv \hat{A} - A_{\therm}(\mathcal{E})\idop,
\label{eq:intro_thermshifted_deltaA}
\end{equation}
where $A_{\therm}(\mathcal{E})$ is the thermal value in an energy shell of range $\mathcal{E}$ (assumed constant within the energy shell). Then we specifically show that a class of interference measures, which we generically call ``quantum dynamical echoes'', related to Loschmidt echoes~\cite{LoschmidtEcho1, LoschmidtEcho2, LoschmidtEcho3},
\begin{equation}
\Tr[e^{-i\hat{H} t_1}\widehat{\delta A}_{\mathcal{E}} e^{i\hat{H} t_2} \widehat{\delta A}_{\mathcal{E}}]
\end{equation}
can be time-averaged with completely positive weight functions $w_+((t_1+t_2)/2)$ and $v_+(t_1-t_2)$ of linear combinations of $t_1$ and $t_2$ to constrain energy-band thermalization within the energy shell spanning $\mathcal{E}$, with any desired energy-band width $\Delta E$. While $w_+(t)$ plays a similar role to its counterpart in Eq.~\eqref{eq:intro_weighted_autocorrelator}, the role of $v_+(t)$ is to allow (even retroactively) choosing a specific energy shell. This rigorously connects the properties of the energy eigenstates to certain Loschmidt echoes, which has so far remained one of the missing links between two key objects of interest that fall under the umbrella of ``quantum chaos''~\cite{Haake}, which we take to refer to a collection of loosely related notions with clear physical distinctions that are often used to study complex quantum systems (see, e.g., \cite{webofquantumchaos} for a summary of some links). We also show that the microscopic properties of energy levels may once again be bypassed completely, by directly relating the decay of these echoes to the thermalization of any physical basis states within the energy shell $\mathcal{E}$, as described for time-averaged thermalization in Summary~\ref{sum:q_shell_therm}, with its generalization to thermal equilibrium in Sec.~\ref{sec:attackoftheclones}.

Moreover, we show in Sec.~\ref{sec:exp_protocols} that these echoes can be efficiently measured in experiments without any direct measurement of $A_{\therm}(\mathcal{E})$ or a prior choice of the energy shell (which may be obtained or enforced via classical post-processing), by measuring the following more straightforward ``quantum dynamical echoes'' (defined schematically here; see Secs.~\ref{sec:shellthermalechoes} and \ref{sec:exp_protocols} for the precise version):
\begin{equation}
L_{ab}(t_1,t_2) = \Tr[e^{-i\hat{H}t_1} \hat{\mathcal{A}}_a e^{i\hat{H} t_2} \hat{\mathcal{A}}_b],
\end{equation}
where $a,b \in \lbrace \varnothing, A\rbrace$, and $\hat{\mathcal{A}}_\varnothing \equiv 1$, $\hat{\mathcal{A}}_A \equiv \hat{A}$.  We propose an efficient measurement protocol for these quantities involving projectors (as well as the autocorrelator used in Eq.~\eqref{eq:intro_weighted_autocorrelator} for global thermalization, which is simpler to measure directly) by a modification of a quantum interference measurement protocol for the spectral form factor~\cite{SFFmeas}, which uses an auxiliary qubit to implement controlled dynamical evolution by the Hamiltonian $\hat{H}$ over different times.

This strategy can also account for macroscopic conserved charges\footnote{As all Hamiltonians conserve linear combinations of the energy projectors $\lvert E_n\rangle \langle E_n\rvert$ (including all projectors in degenerate eigenspaces), by a conserved charge we strictly just mean some observable $\hat{Q}$ formed by such a linear combination --- which commutes with the Hamiltonian $[\hat{H}, \hat{Q}] = 0$ --- on which the thermal value of our observable $\hat{A}$ of interest may strongly depend. By ``macroscopic'', we mean a conserved charge with an inaccessibly large number of eigenvalues. For discrete conserved charges (with an accessible number of eigenvalues), it should be straightforward to project onto a desired eigenspace of the charge.}: in the presence of a single conserved charge $\hat{Q}$ (for example) such that the thermal value $A_{\therm}(\mathcal{E}, \mathcal{Q})$ depends~\cite{DAlessio2016, NonabelianETH} on $Q$, being effectively constant over a range of eigenvalues $Q \in \mathcal{Q}$, we can update Eq.~\eqref{eq:intro_thermshifted_deltaA} to the form:
\begin{equation}
\widehat{\delta A}_{\mathcal{E}, \mathcal{Q}} \equiv \hat{A} - A_{\therm}(\mathcal{E}, \mathcal{Q})\idop.
\label{eq:intro_thermchargeshifted_deltaA}
\end{equation}
Provided that we have the ability to implement the symmetry transformation generated by $\hat{Q}$, our results also imply that thermalization in the presence of accessible conserved quantities can be addressed through echoes of the form:
\begin{equation}
\Tr[e^{-i\hat{H} t_1} e^{-i \hat{Q} s_1} \widehat{\delta A}_{\mathcal{E}, \mathcal{Q}} e^{i \hat{Q} s_2} e^{i\hat{H} t_2} \widehat{\delta A}_{\mathcal{E},\mathcal{Q}}].
\end{equation}
Once again, these echoes can be accessed through measurements that do not require prior knowledge of $A_{\therm}(\mathcal{E}, \mathcal{Q})$:
\begin{equation}
L_{ab}(t_1,t_2; s_1, s_2) = \Tr[e^{-i\hat{H}t_1} e^{-i \hat{Q} s_1} \hat{\mathcal{A}}_a e^{i\hat{Q} s_2} e^{i\hat{H} t_2} \hat{\mathcal{A}}_b].
\end{equation}
Here, it is not essential for the charge $\hat{Q}_{\text{exp}}$ that may be accessible in experiments to be the exact conserved charge $\hat{Q}$, but just for the transformation $\exp(-i\hat{Q}_{\text{exp}} s)$ generated by it for accessible values of $s$ to be sufficiently close to the true symmetry transformation $\exp(-i\hat{Q}s)$ to allow a determination of the above echo to within experimental errors (see also \cite{RobustnessNearThermalDynamics}).

In summary, we develop a conceptual framework for quantum statistical mechanics that relies entirely on the measurable dynamical properties of a single state (per observable), as illustrated through explicit measurement protocols. This parallels what we find to be the simplest dynamical justification for classical statistical mechanics~\cite{KhinchinStatMech} (Eq.~\eqref{eq:KhinchinAutocorrelator}), but with additional interesting consequences of interference effects in the quantum case that allow us to explicitly address finite time intervals, a finite energy resolution, and the presence of accessible conserved charges in an exact manner.

\section{Classical ``eigenstate'' thermalization without ergodicity}
\label{sec:classical_ET}

Now, with a primary view of motivating our quantum results, we will describe one possible approach to the classical autocorrelator, Eq.~\eqref{eq:KhinchinAutocorrelator}, as the determiner of thermalization in classical dynamical systems. Rather than take a more direct route~\cite{KhinchinStatMech}, we will follow a somewhat contrived path by first considering a classical notion of eigenstate thermalization as a possible explanation for thermalization. While classical eigenstate thermalization has been discussed before in, e.g., Ref.~\cite{VenutiLiu} \textit{assuming} the ergodic hypothesis, it is crucial for our purposes to formulate this notion for more general non-ergodic systems to resemble the quantum situation and demonstrate independence from ergodicity.

After showing that classical eigenstate thermalization determines the thermalization of observables, we will arrive at the autocorrelator of Eq.~\eqref{eq:KhinchinAutocorrelator} by asking how classical eigenstate thermalization may in turn be determined by the dynamics of the system. We will arrive at an equivalence between the thermalization of an observable in autocorrelators, eigenstates, and typical initial states --- which allows us to directly connect the thermalization of autocorrelators to that in other states, without relying on eigenstate thermalization. This mirrors the route we will follow in quantum mechanics, but in a simpler context.

As our purpose in this section is to set up intuition rather than focus on the technical details, we will make a number of artificial simplifying assumptions and restrict ourselves to thermalization over (classically) infinite time averages.

\subsection{Setup: classical ergodicity}

Let us consider a classical system with phase space $\mathcal{P}$, typically having several degrees of freedom. It is impractical to precisely measure the state of the system with the accuracy of a single point $x\in \mathcal{P}$ (specifying both coordinates $q$ and momenta $p$); instead, one is more often concerned with whether the system is in a larger \textit{collection} of states $A \subseteq \mathcal{P}$, say corresponding to a specific range of values of a few-body observable, or of collective thermodynamic quantities. To probe such questions, one defines the observable
\begin{equation}
\cproj_A(x) = \begin{dcases}
1,\ &\text{ if } x \in A, \\
0,\ &\text{ if } x \notin A.
\end{dcases}
\end{equation}
When integrated against a distribution $\rho(x, t)$ (evolving with time $t$), $\cproj_A(x)$ measures the probability $\cprob_A[\rho(x,t)]$ of the outcome $A$ in the distribution:
\begin{equation}
\cprob_A[\rho(x,t)] = \int_{x\in\mathcal{P}}\diff \mu(x)\ \rho(x,t)\cproj_A(x).
\end{equation}
Here, $\mu$ measures volumes in phase space, given e.g. by the Liouville measure $\int\diff^f q \diff^f p$ in Hamiltonian mechanics with $f$ degrees of freedom. It is convenient to focus on ``bulky'' initial distributions, in which any zero measure region has zero probability (i.e., for which the probability density is not singular):
\begin{equation}
\cprob_B[\rho(x,t)] = 0\;\;\ \text{ if }\;\;\ \mu(B) = 0.
\end{equation}
This anticipates quantum density operators by enforcing some uncertainty (nonzero phase space volume) in regions of nonzero probability.

In an ergodic system (or if $\mathcal{P}$ is chosen to be a region of phase space in which the system is ergodic), where almost all trajectories explore every region of $\mathcal{P}$, the time-average of the probability $\cprob_A$ in any bulky distribution is proportional to the fraction of $\mathcal{P}$ occupied by $A$:
\begin{equation}
\int\overline{\diff t}\ \cprob_A[\rho(x,t)] = \frac{\mu(A)}{\mu(\mathcal{P})} \cprob_{\mathcal{P}}[\rho],
\label{eq:thermonavg}
\end{equation}
where $\int\overline{\diff t}$ is shorthand for time averaging, for which we may choose, e.g.: 
\begin{equation}
\int\overline{\diff t}\ f(t) \equiv \lim_{T\to\infty}\frac{1}{T}\int_0^{T} \diff t\ f(t)\ \text{ or } \lim_{T\to\infty}\frac{1}{2T}\int_{-T}^{T} \diff t\ f(t),
\end{equation}
and $\cprob_{\mathcal{P}}[\rho]$ represents the total probability in the initial distribution:
\begin{equation}
\cprob_{\mathcal{P}}[\rho] \equiv \int_{x\in\mathcal{P}}\diff \mu(x)\ \rho(x,0).
\end{equation}
That the (time-averaged) probability becomes independent of the initial state is \textit{formally} a powerful guarantee for statistical mechanics, that allows us to use the microcanonical ensemble on $\mathcal{P}$, $\rho_{\mc}(x) = 1/\mu(\mathcal{P})$ (for time averages) in place of specific initial states. We will call this property \textit{thermalization-on-average}, or thermalization \textit{o.a.} for short. In an ergodic system, thermalization \textit{o.a.} holds for every bulky initial distribution $\rho(x,0)$ and every region $A$, no matter how refined or difficult to measure.

\subsection{``Eigenstate'' thermalization in non-ergodic systems}

\begin{figure}[!ht]
\centering
\includegraphics[width=0.5\textwidth]{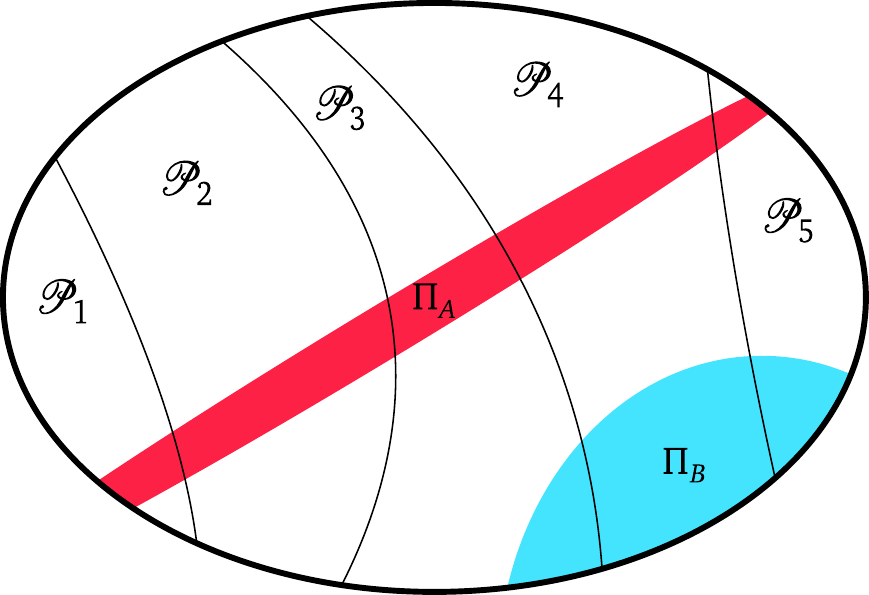}
\caption{Schematic depiction of thermalization without ergodicity in a classical phase space $\mathcal{P}$ with $5$ ergodic subsets $\mathcal{P}_k$. The observable $\cproj_A$ is considered to be equally distributed over these ergodic subsets (i.e., shows eigenstate thermalization), and must therefore thermalize in almost all initial states. The observable $\cproj_B$ is restricted to a few of these subsets, and cannot thermalize almost everywhere. Separately, this figure also provides schematic intuition for the Mazur-Suzuki inequality [Eq.~\eqref{eq:intro_mazursuzuki} and App.~\ref{app:mazursuzuki}], where the conserved quantities are regarded to be projectors onto the ergodic subspaces $Q_k = \cproj_{\mathcal{P}_k}$, among which we have access to, say, $k=3,4,5$. Here, $\cproj_{B}$ has a large overlap with these conserved quantities and its autocorrelator may fail to decay to its thermal value by the inequality, while $\cproj_A$ has sufficiently low overlap with all these conserved quantities to still allow autocorrelator thermalization (which in turn would imply eigenstate thermalization as well as thermalization).}
\label{fig:classical_ET}
\end{figure}

Ergodicity is a difficult property to either prove or verify in any sufficiently large phase space, not least because of the sensitivity required to ensure that every infinitesimal region of $\mathcal{P}$ is visited by generic trajectories (in a $3$D gas of $N\to\infty$ particles, this would require having precise knowledge of the $3N$ positions and momenta of every particle). Instead of relying on ergodicity, let us ask when Eq.~\eqref{eq:thermonavg} can be true for a given measurement $A$ in a non-ergodic system. This is of extreme physical relevance: all nontrivial Hamiltonian systems are non-ergodic due to the conservation of energy, as well as the frequent presence of various other conservation laws (such as total momentum or angular momentum, depending on the system  under consideration and nature of interactions). Further, experimental uncertainties ensure that any initial state is statistically distributed over a range of energies $\Delta E$. Moreover, as ergodicity in the dynamical sense is difficult (in most cases, near-impossible) to establish for a sufficiently complex system, so a given system may as well be nonergodic for all practical purposes. In this case, the phase space can be decomposed into an unknown number of subsets $\mathcal{P}_k$:
\begin{equation}
\mathcal{P} = \bigcup_k \mathcal{P}_k,
\end{equation}
each of which is ergodic.
%, with nonzero measure\footnote{Any subsets of zero measure can be trivially absorbed into any nonzero measure subset without affecting its ergodicity.} $\mu(\mathcal{P}_k) > 0$.
These are typically surfaces of fixed energy, as well as fixed values of other conserved quantities (if present). Through a slight misuse of terminology that anticipates quantum mechanics, we will call the sets $\mathcal{P}_k$ --- or more precisely, uniform distributions supported entirely on one of these sets --- the ``energy eigenstates'' of the classical Hamiltonian (for example, because they correspond to the classical limit of the quantum energy eigenstates, and satisfy similar properties such as having a definite value of energy and being invariant under time evolution\footnote{This parallel can be taken further in the Koopman-von Neumann framework~\cite{HalmosErgodic} of classical Hilbert spaces comprised of functions on $\mathcal{P}$, in which classical dynamics has well-defined eigenstates.}). An identification between constant energy surfaces and eigenstates was proposed in Ref.~\cite{VenutiLiu} in a related context, \textit{assuming} ergodicity on each energy surface; for our perspective, it is instead crucial that we identify eigenstates with the subsets $\mathcal{P}_k$ especially if the system is non-ergodic.

For simplicity, we will assume that there is a finite number $M$ of $\mathcal{P}_k$, each with nonzero measure\footnote{Any subsets of zero measure can be trivially absorbed into any nonzero measure subset without affecting its ergodicity.} $\mu(\mathcal{P}_k) > 0$; formally, this simplifies the technical details of several arguments, and the case of a continuum of energy surfaces (typical in classical systems) can be recovered via a continuum limit $M\to\infty$ without altering any of our conclusions. This assumption allows us to focus on the physics of the problem, rather than the formal definition of induced measures on subsets and their integration. It also further anticipates quantum mechanics, in which there is a minimum phase space volume~\cite{StechelMeasure, StechelHeller} connected to the inverse purity of a pure state. With this simplification, due to the ergodicity of each subset, any observable $\cproj_A$ satisfies
\begin{equation}
\int\overline{\diff t}\ \cprob_{A \cap \mathcal{P}_k}[\rho(x,t)] = \frac{\mu(A \cap \mathcal{P}_k)}{\mu(\mathcal{P}_k)} \cprob_{\mathcal{P}_k}[\rho],
\label{eq:ergodicity_subset}
\end{equation}
in any bulky distribution $\rho(x,0)$.

While such non-ergodic systems are not conventionally associated with thermalization, let us suppose that $\cproj_A$ does somehow thermalize o.a., so that $\cproj_A$ is subject to (microcanonical) statistical mechanics despite the non-ergodicity of the system. The question we are interested in is the following\footnote{For simplicity, we consider exact thermalization. In more practical considerations, one must deal with approximate thermalization, i.e. being within some distance $\epsilon$ of the thermal value. It is straightforward to extend the classical statements in the section to this case, and we will unavoidably return to approximate thermalization in the quantum context.}: ``which observables $\cproj_A$ thermalize o.a. in a possibly non-ergodic system?''

This is answered by:
\begin{proposition}[Classical eigenstate thermalization implies thermalization o.a.]
\label{prop:classicalET}
An observable $\cproj_A$ thermalizes o.a. in every bulky initial distribution $\rho(x,0)$ [as in Eq.~\eqref{eq:thermonavg}] \textit{if} the region $A$ is proportionally distributed in the ergodic subsets $\mathcal{P}_k$ according to their volume:
\begin{equation}
\frac{\mu(A \cap \mathcal{P}_k)}{\mu(\mathcal{P}_k)} = \frac{\mu(A)}{\mu(\mathcal{P})}.
\label{eq:classical_ET}
\end{equation}
\end{proposition}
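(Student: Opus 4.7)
The plan is to decompose the region $A$ along the partition of phase space into ergodic subsets, reduce the computation of the time-averaged probability to a sum of ergodic contributions governed by Eq.~\eqref{eq:ergodicity_subset}, and then substitute the hypothesized proportionality Eq.~\eqref{eq:classical_ET} to factor out the ratio $\mu(A)/\mu(\mathcal{P})$.

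Concretely, first I would write $A = \bigsqcup_k (A \cap \mathcal{P}_k)$, which is a disjoint union (up to measure-zero overlaps that are invisible to bulky distributions by assumption) because $\mathcal{P} = \bigcup_k \mathcal{P}_k$ is a partition into ergodic components. This immediately gives the pointwise identity $\cproj_A(x) = \sum_k \cproj_{A \cap \mathcal{P}_k}(x)$, and hence, by linearity of the probability functional in the observable,
\begin{equation}
\cprob_A[\rho(x,t)] = \sum_k \cprob_{A \cap \mathcal{P}_k}[\rho(x,t)].
\end{equation}

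Next, I would take the time average term-by-term (the sum is finite by the simplifying assumption that there are $M < \infty$ ergodic components, so exchanging the sum and the limit defining $\int \overline{\diff t}$ is trivial) and apply Eq.~\eqref{eq:ergodicity_subset} to each summand:
\begin{equation}
\int \overline{\diff t}\ \cprob_A[\rho(x,t)] = \sum_k \frac{\mu(A \cap \mathcal{P}_k)}{\mu(\mathcal{P}_k)}\, \cprob_{\mathcal{P}_k}[\rho].
\end{equation}
Invoking the hypothesis Eq.~\eqref{eq:classical_ET}, the ratio $\mu(A \cap \mathcal{P}_k)/\mu(\mathcal{P}_k)$ is independent of $k$ and equals $\mu(A)/\mu(\mathcal{P})$, so it pulls out of the sum, leaving $\sum_k \cprob_{\mathcal{P}_k}[\rho]$. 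Since the $\mathcal{P}_k$ partition $\mathcal{P}$ this last sum is exactly $\cprob_{\mathcal{P}}[\rho]$, which recovers Eq.~\eqref{eq:thermonavg}.

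I do not anticipate a substantive obstacle: the argument is an arithmetic rearrangement backed by the ergodicity of each $\mathcal{P}_k$ (already packaged in Eq.~\eqref{eq:ergodicity_subset}) and the partition property. The only mild subtlety worth flagging explicitly is why possible overlaps of $\mathcal{P}_k$ on sets of $\mu$-measure zero are harmless; this is precisely the role of restricting attention to \emph{bulky} distributions, for which zero-measure sets carry zero probability, so the decomposition of $\cproj_A$ holds almost everywhere with respect to $\rho(x,t)$ and therefore exactly in integrated form.
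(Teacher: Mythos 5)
Your proof is correct and follows essentially the same route as the paper: decompose $\cprob_A$ over the ergodic subsets, apply Eq.~\eqref{eq:ergodicity_subset} termwise under the time average, then substitute Eq.~\eqref{eq:classical_ET} to factor out $\mu(A)/\mu(\mathcal{P})$ and resum. The extra remarks on finiteness of $M$ and on measure-zero overlaps being harmless for bulky distributions are sensible bookkeeping but don't change the argument.
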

\begin{proof}
See App.~\ref{proof:classicalET}.
\end{proof}
If Eq.~\eqref{eq:classical_ET} is satisfied, we say that the observable $\cproj_A$ shows \textit{eigenstate thermalization}: its expectation value in every ergodic subset $\mathcal{P}_k$ is the thermal value $\mu(A)/\mu(\mathcal{P})$.
%Though classically such a notion is not particularly useful, because it tends to be the case that a typical accessible observable $A$ attains its thermal value in almost all states $x\in \mathcal{P}$ in the phase space (and therefore trivially for the energy eigenstates)~\cite{KhinchinStatMech, GallavottiErgodic}, we will continue to formally entertain this notion in anticipation of later sections. EDIT: this only holds for intensive observables, few-body observables cannot conceivably thermalize in almost all states, see Sec.~\ref{sec:attackoftheclones} for a more detailed discussion.
Much like the quantum statement of eigenstate thermalization, classical eigenstate thermalization simplifies the problem of thermalization in arbitrary (bulky) initial states to the problem of thermalization in $M$ specific initial states (the classical ``energy eigenstates''), each given by  uniform distribution supported entirely on one of the $\mathcal{P}_k$. But this does not make the problem accessible: we eventually want to take $M\to\infty$ for typical classical systems, with the number of initial states/measurements and the energy resolution required to access these states becoming insurmountable.

Fortunately, we can show that a further simplification is possible --- we only need a single initial distribution to determine the behavior of all $M$ eigenstates [assuming that $\mu(A) > 0$]:

\begin{proposition}[A single initial distribution determines classical eigenstate thermalization]
\label{prop:singlestate}
For classical eigenstate thermalization [Eq.~\eqref{eq:classical_ET}] to hold for an observable $\cproj_A$, it is sufficient that $\cproj_A$ thermalizes o.a.:
\begin{equation}
\int\overline{\diff t}\ \cprob_A[\rho_A(x,t)] = \frac{\mu(A)}{\mu(\mathcal{P})},
\label{eq:cl_singlestate_therm}
\end{equation}
in a single initial distribution given by (with $\cprob_{\mathcal{P}}(\rho_A) = 1$):
\begin{equation}
\rho_A(x,0) \equiv \frac{1}{\mu(A)}\cproj_A(x).
\label{eq:classical_singlestate}
\end{equation}
%Thus, the thermalization of $\cproj_A$ is completely determined by the initial state $\rho_A(x,0)$.
\end{proposition}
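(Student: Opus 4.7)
The plan is to leverage the ergodic decomposition together with a Cauchy--Schwarz-type rigidity argument to force the ratios $\mu(A \cap \mathcal{P}_k)/\mu(\mathcal{P}_k)$ to be independent of $k$. First I would verify that the proposed initial state $\rho_A$ is itself a bulky distribution: any zero-measure set $B$ must have $\int_B \diff\mu\, \cproj_A/\mu(A) = 0$ since $\mu(A)>0$ by assumption. This is needed to apply the subset-ergodicity relation in Eq.~\eqref{eq:ergodicity_subset} to $\rho_A$.

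Next, I would write $\cproj_A = \sum_k \cproj_{A \cap \mathcal{P}_k}$ using the partition of $\mathcal{P}$ into ergodic subsets, and apply Eq.~\eqref{eq:ergodicity_subset} in each term to obtain, for any bulky $\rho$,
\begin{equation}
\int\overline{\diff t}\ \cprob_A[\rho(x,t)] = \sum_k \alpha_k\, \cprob_{\mathcal{P}_k}[\rho], \qquad \alpha_k \equiv \frac{\mu(A \cap \mathcal{P}_k)}{\mu(\mathcal{P}_k)}.
\end{equation}
Specializing to $\rho = \rho_A$, a direct computation of $\cprob_{\mathcal{P}_k}[\rho_A] = \mu(A \cap \mathcal{P}_k)/\mu(A) = \alpha_k\,\mu(\mathcal{P}_k)/\mu(A)$, combined with the hypothesized thermalization o.a.\ in Eq.~\eqref{eq:cl_singlestate_therm}, yields the scalar identity
\begin{equation}
\sum_k \alpha_k^2\, \mu(\mathcal{P}_k) = \frac{\mu(A)^2}{\mu(\mathcal{P})}.
\end{equation}

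The key step is then to recognize that the right-hand side is exactly the Cauchy--Schwarz lower bound. Using $\sum_k \alpha_k\, \mu(\mathcal{P}_k) = \sum_k \mu(A\cap\mathcal{P}_k) = \mu(A)$ together with Cauchy--Schwarz applied to the vectors $\alpha_k \sqrt{\mu(\mathcal{P}_k)}$ and $\sqrt{\mu(\mathcal{P}_k)}$, one always has $\sum_k \alpha_k^2\, \mu(\mathcal{P}_k) \geq \mu(A)^2/\mu(\mathcal{P})$, with equality if and only if $\alpha_k$ is constant in $k$. The scalar identity above saturates this bound, so $\alpha_k = \mu(A)/\mu(\mathcal{P})$ for every $k$, which is exactly the classical eigenstate thermalization condition Eq.~\eqref{eq:classical_ET}.

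The main obstacle I anticipate is not analytic but notational: ensuring that the decomposition of $\cproj_A$ across the $\mathcal{P}_k$ and the time-averaged application of Eq.~\eqref{eq:ergodicity_subset} are rigorously justified (i.e., the time average and sum can be exchanged, and $\rho_A$ genuinely satisfies the bulkiness assumption needed for that equation). Once those bookkeeping points are handled, the Cauchy--Schwarz rigidity is the clean and essentially one-line core of the argument, and it immediately generalizes to the continuum limit $M \to \infty$ that the preceding discussion anticipates.
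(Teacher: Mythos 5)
Your proof is correct and follows essentially the same route as the paper's. Both use the ergodic decomposition via Eq.~\eqref{eq:ergodicity_subset}, compute $\cprob_{\mathcal{P}_k}[\rho_A] = \mu(A\cap\mathcal{P}_k)/\mu(A)$, and arrive at the same scalar identity; the paper then completes the square to exhibit a vanishing weighted variance, while you observe that the identity is exactly the equality case of Cauchy--Schwarz for the vectors $\alpha_k\sqrt{\mu(\mathcal{P}_k)}$ and $\sqrt{\mu(\mathcal{P}_k)}$ --- these are two presentations of the same rigidity argument, and your preliminary check that $\rho_A$ is bulky is a small but legitimate bookkeeping point that the paper leaves implicit.
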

\begin{proof}
See App.~\ref{proof:singlestate}. The intuition behind the proof (which is reminiscent of a proof of semiclassical wavefunction ergodicity theorems~\cite{ZelditchTransition, Sunada, Zelditch}, which we will revisit for quantum thermalization) is as follows. The difference between the left and right hand sides of Eq.~\eqref{eq:cl_singlestate_therm},
\begin{equation}
\left[\int\overline{\diff t}\ \cprob_A[\rho_A(x,t)] - \frac{\mu(A)}{\mu(\mathcal{P})}\right]
\end{equation}
measures a weighted variance of the set of thermal values $\mu(A \cap \mathcal{P}_k)/\mu(\mathcal{P}_k)$ in the different eigenstates, while $\mu(A)/\mu(\mathcal{P})$ is their mean with the same weights. Thus, thermalization implies a vanishing of the weighted variance, and all eigenstates' thermal values must then be equal to their weighted mean, giving Eq.~\eqref{eq:classical_ET}.
\end{proof}

Taken together, Propositions~\ref{prop:classicalET} and \ref{prop:singlestate} imply that the thermalization o.a. of the observable $\cproj_A$ in an arbitrary (bulky) initial state is completely determined by its thermalization in a single initial state $\rho_A$ --- a situation in which we can hope to rigorously access thermalization problems with a manageable supply of (theoretical or experimental) resources, at least in principle. See Fig.~\ref{fig:classical_ET} for a schematic depiction. Stated formally,
\begin{summary}[A single state determines if an observable thermalizes o.a. in all states]
\label{sum:cl_equivalence}
The following statements are equivalent:
\begin{enumerate}
\item The observable $\cproj_A$ thermalizes o.a. in the initial state $\rho_A$.
\item The observable $\cproj_A$ shows eigenstate thermalization in all classical energy eigenstates $\mathcal{P}_k$.
\item The observable $\cproj_A$ thermalizes o.a. in all bulky initial states $\rho(x,0)$.
\end{enumerate}
\end{summary}

For all practical purposes, this allows us to directly connect thermalization in $\rho_A$ to thermalization in bulky initial states, without ever addressing the energy eigenstates, given that the autocorrelator is a much more accessible quantity than thermalization in eigenstates. We note that being a result about bulky initial states, there may always exist a measure zero subset of states (e.g., periodic orbits) that fail to thermalize. This ``bypassing'' of the eigenstates to directly connect some measurable dynamical feature of the observable to its thermalization in typical states will be the main theme underlying our quantum results.

\subsection{Crossover remarks}

We will now make a few remarks that will set up the crossover into quantum systems in the following sections. First, we emphasize that none of the results in this section had any dependence on the values of energy $E_k = H(x \in \mathcal{P}_k)$ on the ergodic subsets. This suggests that any quantum counterpart should be independent of the properties of energy levels, including degeneracies. This is complicated by interference effects in quantum mechanics, but we will show that such a formulation can be achieved in the course of this paper, including without the $T\to\infty$ limit.

Second, let us note that there is no difficulty whatsoever for, e.g. finite temperature thermalization, in which the thermal value of $\cproj_A$ may differ in different (dynamically closed) regions of the phase space (e.g., as a function of energy) instead of being uniformly $\mu(A)/\mu(\mathcal{P})$ everywhere. Here, we merely restrict our considerations to a subset of the phase space in which the thermal value is roughly constant. Such a restriction is more difficult in quantum mechanics, and we will have to adopt a strategy based on interference effects outlined in Sec.~\ref{sec:shellthermalechoes}.

Lastly, let us consider how one might prepare the state in Eq.~\eqref{eq:classical_singlestate} for a measurement to directly probe the thermalization of $\cproj_A$. The simplest strategy is to take a representative set of points within the set $A$, and hope for a convergence of an average over this set of points to the actual average over all of $A$. This would usually entail showing the \textit{typicality} of the behavior of $\cproj_A$ within $A$ for these points to represent the full ensemble. We will see that in the quantum counterpart of these measurements, Sec.~\ref{sec:exp_protocols}, this typicality is automatically guaranteed across different measurement outcomes.

%For example, in an ideal gas of $N$ particles, one may let $A$ correspond to the region where the total pressure of the gas is in an interval $p \in [p_A-\delta p_A, p_A+\delta p_A]$.

\section{Quantum eigenspace thermalization with degeneracies}
\label{sec:quantum_degeneracies}

Now, we turn to the problem of quantum thermalization in a $D$-dimensional Hilbert space $\mathcal{H}$. As has long been recognized~\cite{vonNeumannThermalization, DAlessio2016, deutsch2018eth}, it is impossible to have every conceivable observable thermalize (even o.a.) under Hamiltonian quantum dynamics with a complete orthonormal basis of energy levels $\lvert E_n\rangle$ (which we assume are indexed in ascending order, $E_{n+1} \geq E_n$). This is partly because a Hamiltonian quantum system is intrinsically non-ergodic, in the classical sense, in the Hilbert space~\cite{vonNeumannThermalization} due to conserving the overlaps $\lvert \langle E_n\vert \psi(t)\rangle \rvert^2$ of a state $\lvert \psi(t)\rangle$ with the energy eigenstates $\lvert E_n\rangle$.

Let us therefore ask a question analogous to what we used to motivate Proposition~\ref{prop:classicalET}: ``which observables $\proj_A$ thermalize o.a. in a quantum system?'' Here, $\proj_A$ denotes a projector ($\proj_A^2 = \proj_A$) that may project onto a specific set of measurement outcomes of an observable $\hat{A}$, for example. As before, thermalization o.a. refers to thermalization ``on average'' over a range of times; we postpone a consideration of thermal equilibrium at individual times to Sec.~\ref{sec:attackoftheclones}. We will also generally consider mixed initial states described by density operators $\hat{\rho}$, whose evolution is given by:
\begin{equation}
\hat{\rho}(t) = e^{-i\hat{H} t} \hat{\rho} e^{i\hat{H} t}.
\end{equation}

We will review two kinds of thermalization criteria for the observable for infinite time intervals, while setting up notation for the subsequent sections. One is the well known notion of eigenstate thermalization in the sense of Eq.~\eqref{eq:ET_informal}, which implies thermalization o.a. given nondegenerate spectra (Sec.~\ref{sec:quantum_ET}). Another criterion --- which appears to be widely unknown (by our accounting, it has been discussed at least thrice before~\cite{Sunada, VenutiLiu, TumulkaEsT}, but is not usually invoked in discussions of eigenstate thermalization in which nondegeneracy still plays a key role~\cite{DAlessio2016, MoriETHreview, deutsch2018eth, KawamotoStrongETH}) --- is that in systems with degenerate spectra, one can impose thermalization in degenerate \textit{eigenspaces}, from which thermalization in other states follows (Sec.~\ref{sec:quantum_EsT}). Our version is somewhat stricter than Refs.~\cite{VenutiLiu, TumulkaEsT}, which will prove useful later. We discuss this phenomenon under the separate name of \textit{eigenspace thermalization} to emphasize a conceptual distinction: in degenerate systems, an observable could satisfy eigenstate thermalization in a specific eigenbasis without eigenspace thermalization, and thereby fail to thermalize in most states. After discussing these notions, we will use eigenspace thermalization to qualitatively motivate energy-band thermalization in Sec.~\ref{sec:eigenspace_intuition} (postponing a quantitative treatment to later sections), by arguing that energy levels within an energy band of resolution $\Delta E$ may be regarded as approximately degenerate.

\subsection{Eigenstate thermalization \texorpdfstring{$+$}{and} non-degeneracy \texorpdfstring{$\implies$}{implies} idealized thermalization}
\label{sec:quantum_ET}

\begin{figure}[!ht]
\centering
\includegraphics[width=0.4\textwidth]{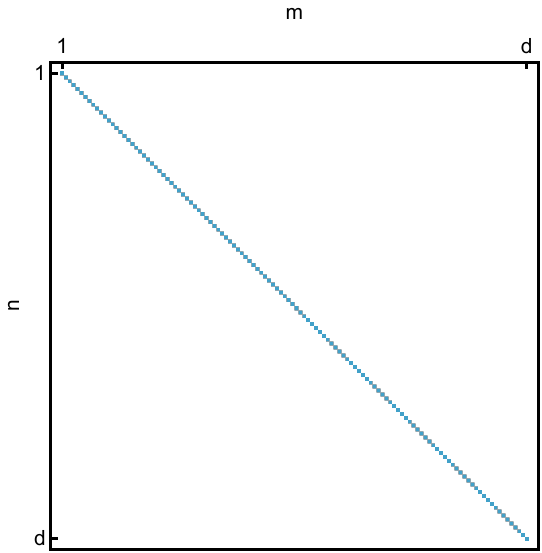}
\caption{Schematic depiction of eigenstate thermalization in terms of the relevant matrix elements between pairs $(n,m)$ of the energy levels $(E_n,E_m)$; essentially, only the $n=m$ pairs are relevant. This structure is formally simple and convenient, and looks the same for any system in terms of matrix elements. However, in terms of energies, this is only a system-dependent subset of pairs having an energy gap of $\Delta E = 0$ if there are degeneracies, and cannot be completely specified in terms of an energy gap. Further, $\Delta E = 0$ is not accessible in many-body systems, nor is it sufficient to conclude thermalization in physically relevant situations.}
\label{fig:eigenstatematrix}
\end{figure}

To discuss eigenstate thermalization as a criterion for thermalization at a more technical level~\cite{vonNeumannThermalization, JensenShankarETH, deutsch1991eth, Deutsch2025Supplement, srednicki1994eth, srednicki1999eth, rigol2008eth, DAlessio2016, MoriETHreview, deutsch2018eth}, we will find it convenient to restrict ourselves to a subspace $\Eshell{d} \subseteq \mathcal{H}$ of $d$ energy levels to which initial states are assumed to belong, that we will formally call an energy shell (physically, $\Eshell{d}$ may represent a narrow range of energies $E_n \in [E_{\min}, E_{\max}]$, and accounts for initial states with support in this range\footnote{For thermalization in a more general class of initial states with a narrow spread but with asymptotic tails outside a finite range, see \cite{srednicki1999eth}; for the behavior of states without a narrow spread, see also \cite{GenericETH}. We do not consider these more realistic cases for simplicity, and expect that the extension to such cases is straightforward along the lines of Ref.~\cite{srednicki1999eth}, by expanding around energy shells.}; it may also exclude some energy levels if desired, for example by restricting to an eigenspace of a conserved charge $\hat{Q}$ within this range). For energy shells of experimentally accessible energy widths $\Delta E$, we expect that $d \sim D/c_{\Sigma}$ where $c_{\Sigma}$ is some ``accessible'' constant [i.e., in the range of $Q$ in Eq.~\eqref{eq:accessible_def}; see also Eq.~\eqref{eq:spectrum_width_typical}]. Thus, $d$ is usually an inaccessibly large quantity, comparable to $D$.

The (microcanonical) thermal value of $\proj_A$ in $\Eshell{d}$, i.e., its expectation value in the uniform distribution / maximally mixed state in $\Eshell{d}$, is
\begin{equation}
\langle \proj_A\rangle_{\Eshell{d}} = \frac{1}{d}\Tr[\proj_A \proj_{\Eshell{d}}],
\label{eq:microcanonicalthermalvalue}
\end{equation}
where $\proj_{\Eshell{d}}$ projects onto the energy shell.
The formal statement connecting eigenstate thermalization to thermalization is then (depicted in Fig.~\ref{fig:eigenstatematrix}):
\begin{proposition}[Quantum eigenstate thermalization implies thermalization o.a. given a nondegenerate spectrum~\cite{vonNeumannThermalization, srednicki1999eth}]
\label{prop:quantum_ET}
If
\begin{enumerate}
\item $\proj_A$ satisfies eigenstate thermalization to accuracy $\epsilon$ within $\Eshell{d}$, i.e., for some small $\epsilon>0$,
\begin{equation}
\left\lvert\langle E_n\rvert \proj_A\lvert E_n\rangle - \langle \proj_A\rangle_{\Eshell{d}}\right\rvert < \epsilon,\;\;\ \text{ for all } n:\ \lvert E_n\rangle \in \Eshell{d},
\label{eq:quantum_ET}
\end{equation}
and \item the energy levels $E_n$ within $\Eshell{d}$ are nondegenerate ($E_n \neq E_m$ for $n \neq m$),
\end{enumerate}
then $\proj_A$ thermalizes o.a. to $\langle \proj_A\rangle_{\Eshell{d}}$, to accuracy $\epsilon$, for any initial density operator $\hat{\rho}:\Eshell{d}\to \Eshell{d}$ in $\Eshell{d}$:
\begin{equation}
\left\lvert \int\overline{\diff t}\ \Tr[\hat{\rho}(t) \proj_A] - \langle \proj_A\rangle_{\Eshell{d}} \right\rvert < \epsilon.
\label{eq:quantum_thermalizationoa}
\end{equation} 
\end{proposition}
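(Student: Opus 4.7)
The plan is a direct calculation via the energy eigenbasis, exploiting nondegeneracy to kill off-diagonal phase factors in the time average and then applying eigenstate thermalization to the surviving diagonal terms.

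First I would expand the initial density operator in the energy eigenbasis of $\Eshell{d}$, writing $\hat{\rho} = \sum_{n,m} \rho_{nm} \lvert E_n\rangle\langle E_m\rvert$ with $\rho_{nm} = \langle E_n\rvert \hat{\rho}\lvert E_m\rangle$, where the sums run over eigenstates in $\Eshell{d}$ (this is where the hypothesis $\hat{\rho}:\Eshell{d}\to\Eshell{d}$ is used). Then
\begin{equation}
\Tr[\hat{\rho}(t)\proj_A] = \sum_{n,m} \rho_{nm}\, e^{-i(E_n-E_m)t}\, \langle E_m\rvert \proj_A\lvert E_n\rangle.
\end{equation}
Taking the time average $\int\overline{\diff t}$ and using the standard fact that $\int\overline{\diff t}\, e^{-i\omega t} = \delta_{\omega,0}$, the nondegeneracy hypothesis $E_n \neq E_m$ for $n\neq m$ ensures that only the diagonal $n=m$ terms survive:
\begin{equation}
\int\overline{\diff t}\ \Tr[\hat{\rho}(t)\proj_A] = \sum_n \rho_{nn}\, \langle E_n\rvert \proj_A\lvert E_n\rangle.
\end{equation}

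Next I would insert the eigenstate thermalization hypothesis by writing $\langle E_n\rvert\proj_A\lvert E_n\rangle = \langle\proj_A\rangle_{\Eshell{d}} + \delta_n$ with $\lvert\delta_n\rvert < \epsilon$, and use the normalization $\sum_n \rho_{nn} = \Tr\hat{\rho} = 1$ (together with $\rho_{nn}\geq 0$ from the positivity of $\hat{\rho}$) to factor out the thermal value:
\begin{equation}
\int\overline{\diff t}\ \Tr[\hat{\rho}(t)\proj_A] - \langle\proj_A\rangle_{\Eshell{d}} = \sum_n \rho_{nn}\, \delta_n.
\end{equation}

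Finally I would bound the residual by the triangle inequality: $\bigl\lvert\sum_n \rho_{nn}\delta_n\bigr\rvert \leq \sum_n \rho_{nn}\lvert\delta_n\rvert < \epsilon \sum_n \rho_{nn} = \epsilon$, which is exactly Eq.~\eqref{eq:quantum_thermalizationoa}. There is really no serious obstacle here — the whole argument is a one-line computation once the phase-killing step is isolated. The only delicate point to flag is that nondegeneracy is used in the strict sense $E_n \neq E_m$ (rather than the weaker nondegenerate-gaps condition that would suffice for, e.g., the off-diagonal squared sum to reduce to the diagonal); this strict form is exactly what is needed because we are tracking the expectation value itself rather than its variance, and it is the feature that later motivates the stronger eigenspace-thermalization criterion when degeneracies are allowed.
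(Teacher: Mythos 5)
Your proof is correct and follows essentially the same route as the paper's: expand in the energy eigenbasis, use nondegeneracy so the infinite time average kills off-diagonal terms, and then bound the diagonal residual by the triangle inequality together with $\sum_n \rho_{nn}=1$. The only cosmetic difference is that the paper pulls out $\max_n\lvert\langle E_n\rvert\proj_A\lvert E_n\rangle-\langle\proj_A\rangle_{\Eshell{d}}\rvert$ rather than your term-by-term bound on $\lvert\delta_n\rvert$; both give the same $\epsilon$.
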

\begin{proof}
Though well known~\cite{vonNeumannThermalization, srednicki1999eth, DAlessio2016, MoriETHreview, deutsch2018eth}, we briefly review the proof in App.~\ref{proof:quantum_ET}, as we will use similar arguments on other occasions in this paper. It is based on the following relation: If the energy levels $E_n$ within $\Eshell{d}$ are nondegenerate, then for any observable $\proj_A: \mathcal{H} \to \mathcal{H}$ in the full system, and any  density operator $\hat{\rho}:\Eshell{d}\to \Eshell{d}$ in the energy shell,
\begin{equation}
\int\overline{\diff t}\ \Tr[\hat{\rho}(t) \proj_A] = \sum_{n: \lvert E_n\rangle \in \Eshell{d}} \langle E_n\rvert \hat{\rho}\lvert E_n\rangle \langle E_n\rvert \proj_A \lvert E_n\rangle.
\label{eq:infytimeaverage}
\end{equation}
\end{proof}

While to all appearances, Proposition~\ref{prop:quantum_ET} is the direct quantization of Proposition~\ref{prop:classicalET} (classical eigenstate thermalization implies thermalization o.a.), there are two subtle difficulties with this statement:
\begin{enumerate}
\item The nondegeneracy condition on the energy levels, while typical for most systems, is usually difficult to impose or verify with finite resources. Such fine details of the spectrum tend to be inconsequential for most dynamics at accessible time scales. No similar constraint on the energy levels (values of energy on the $\mathcal{P}_k$) occurs in the classical proposition~\ref{prop:classicalET}.
\item Correspondingly, the time average $\int\overline{\diff t} = \lim_{T\to\infty}$ in Eq.~\eqref{eq:quantum_thermalizationoa} (via Eq.~\eqref{eq:infytimeaverage}) explicitly requires taking the limit over timescales larger than the smallest nearest neighbor spacings, $T \gg 2\pi/\lvert \min (E_{n+1}-E_{n})\rvert$. This approaches $T\sim D^2$ and becomes inaccessible in the thermodynamic limit --- for instance, Eq.~\eqref{eq:infytimeaverage} would not hold if one takes $D\to\infty$ first and only then takes $T\to\infty$. On the other hand, the $T\to\infty$ limit in the classical statement is through accessible $O(1)$ timescales, even in the limit of an infinite number of degrees of freedom.
\end{enumerate}

\subsection{Eigenspace thermalization \texorpdfstring{$\implies$}{implies} idealized thermalization}
\label{sec:quantum_EsT}

\begin{figure}[!ht]
\centering
\subcaptionbox{Rare/small degeneracies.\label{subfig:raredegenEs}}[0.4\textwidth]{\includegraphics[width=0.4\textwidth]{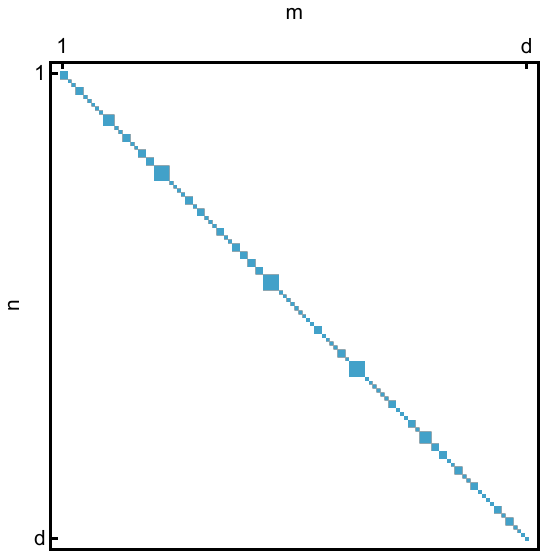}} \qquad \qquad
\subcaptionbox{Frequent/large degeneracies.\label{subfig:freqdegenEs}}[0.4\textwidth]{\includegraphics[width=0.4\textwidth]{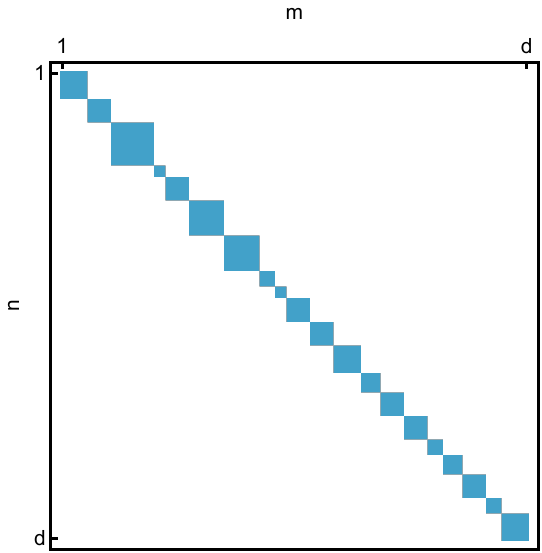}}
\caption{Schematic depiction of eigenspace thermalization in terms of the relevant matrix elements between pairs $(n,m)$ of the energy levels $(E_n,E_m)$; essentially, only the $E_n=E_m$ pairs are relevant. This matrix element structure can appear different between a spectrum with rare/small degeneracies (\ref{subfig:raredegenEs}) and with frequent/large degeneracies (\ref{subfig:freqdegenEs}), and therefore more complicated than eigenstate thermalization. However, in terms of energy differences it corresponds to the extremely natural $\Delta E = 0$ independent of the spectral details, and can imply thermalization without any knowledge of the spectrum. Even in this case, the set of pairs with $\Delta E = 0$ is still not accessible, and as we will argue in Sec.~\ref{sec:energyband_intro}, eigenspace thermalization is not necessary for accessible thermalization especially in highly degenerate systems.}
\label{fig:eigenspacematrix}
\end{figure}

Let us now consider how to modify Proposition~\ref{prop:quantum_ET} to avoid these issues with accessibility. We initially focus on a single potentially degenerate energy level $E$, with eigenspace $\mathcal{H}(E)$ spanned by one or more orthonormal eigenstates\footnote{When we use notation such as $\lvert E_n\rangle$ for eigenstates, we will always assume that this labels an eigenstate within a complete orthonormal basis, as opposed to considering all available eigenstates in degenerate subspaces.} $\lvert E_n\rangle$ with $E_n = E$; also let $\proj(E)$ be the projector onto $\mathcal{H}(E)$, which can be written as
\begin{equation}
\proj(E) = \sum_{n: E_n = E} \lvert E_n\rangle \langle E_n\rvert.
\end{equation}
Every single state in $\mathcal{H}(E)$ is an eigenstate of the Hamiltonian with eigenvalue $E$. Correspondingly, there is no dynamics within $\mathcal{H}(E)$, up to a trivial overall phase that does not change the state:
\begin{equation}
\lvert \psi(t)\rangle = e^{-i E t} \lvert \psi\rangle,\;\;\ \text{ for all } \lvert \psi\rangle \in \mathcal{H}(E).
\end{equation}
It follows that every such state $\lvert \psi\rangle$ is a separate ergodic subset of $\mathcal{H}(E)$.

Classically, we had required eigenstate thermalization to refer to thermalization in every ergodic subset of the phase space of interest. Appealing to the classical case for intuition, a natural way to impose eigenstate thermalization in a degenerate subspace is then to enforce thermalization for every ergodic subset (equivalently, eigenstate of the Hamiltonian), which in this case is every state $\lvert \psi\rangle \in \mathcal{H}(E)$. To be concrete, for an observable $\proj_A$ with thermal value $\langle \proj_A\rangle_{\Eshell{d}}$ (in some suitable energy shell $\Eshell{d}$ which may be as small as just $\mathcal{H}(E)$), we require
\begin{equation}
\left\lvert \langle \psi\rvert \proj_A \lvert \psi\rangle - \langle \proj_A\rangle_{\Eshell{d}}\right\rvert < \epsilon,
\label{eq:eigenspacetherm_1}
\end{equation}
for some small $\epsilon$. Such a relation has been used to define thermalization in eigenspaces in, e.g., Ref.~\cite{VenutiLiu}.

While the $\lvert \psi\rangle$ are dynamically independent, they are not all linearly independent; it is therefore desirable to re-express Eq.~\eqref{eq:eigenspacetherm_1} in terms of a simpler, more tractable criterion in $\mathcal{H}(E)$. For this purpose, we note that the Cauchy-Schwarz inequality (applied to the trace inner product $\left\lvert\Tr[\hat{P}^\dagger \hat{Q}]\right\rvert \leq \sqrt{\Tr[\hat{P}^\dagger \hat{P}] \Tr[\hat{Q}^\dagger \hat{Q}]}$ for $\hat{P} = \lvert \psi\rangle \langle \psi\rvert$ and $\hat{Q} = \proj(E)\left(\proj_A - \langle \proj_A\rangle_{\Eshell{d}}\idop \right)\proj(E)$) implies:
\begin{equation}
\left\lvert \langle \psi\rvert \proj_A \lvert \psi\rangle - \langle \proj_A\rangle_{\Eshell{d}}\right\rvert \leq \sqrt{\Tr\left[\left\lbrace \left(\proj_A - \langle \proj_A\rangle_{\Eshell{d}} \idop\right) \proj(E)\right\rbrace^2\right]}.
\end{equation}
The right-hand side is a tractable, invariant measure of distance between $\proj_A$ and its thermal value $\langle \proj_A\rangle_{\Eshell{d}} \idop$ within $\mathcal{H}(E)$. It is therefore convenient to use this measure to \textit{define} eigenspace thermalization in degenerate subspaces, now focusing on energy shells with multiple energy levels:
\begin{definition}[Eigenspace thermalization]
\label{def:eigenspace_thermalization}
Let $E$ denote the different eigenvalues of a Hamiltonian, each with its eigenspace $\mathcal{H}(E)$ that may or may not be degenerate ($\dim \mathcal{H}(E) \geq 1$). We say that an observable $\proj_A$ shows eigenspace thermalization to accuracy $\epsilon>0$ in an energy shell $\Eshell{d} = \bigcup_{E \in \mathcal{E}} \mathcal{H}(E)$ over a set of energy eigenvalues $\mathcal{E}$ if:
\begin{equation}
\Tr\left[\left\lbrace \left(\proj_A - \langle \proj_A\rangle_{\Eshell{d}} \idop\right) \proj(E)\right\rbrace^2\right] \equiv \sum_{\substack{n: \\ E_n = E}} \sum_{\substack{m: \\ E_m = E}} \left\lvert\langle E_n\rvert \proj_A\lvert E_m\rangle - \langle \proj_A\rangle_{\Eshell{d}} \delta_{nm}\right\rvert^2 < \epsilon^2\;\;\ \text{ for all } \mathcal{H}(E) \subseteq \Eshell{d}.
\label{eq:eigenspacetherm_def}
\end{equation}
\end{definition}

When there are no degeneracies, i.e., $\dim \mathcal{H}(E) = 1$, Eq.~\eqref{eq:eigenspacetherm_def} reduces to the conventional statement of eigenstate thermalization, Eq.~\eqref{eq:quantum_ET}; it also implies Eq.~\eqref{eq:eigenspacetherm_1} in the presence of degeneracies. As depicted in Fig.~\ref{fig:eigenspacematrix}, the mild difference between Eq.~\eqref{eq:eigenspacetherm_def} and conventional eigenstate thermalization is the implied constraint on \textit{off-diagonal} matrix elements within degenerate subspaces\footnote{Even with $\epsilon = \Theta(1)$, such a constraint can at times be more stringent than the full statement of the eigenstate thermalization hypothesis (ETH), in which the off-diagonal matrix elements of operators are suppressed by factor $e^{-S(E)/2} \sim 1/\sqrt{d}$ in suitable energy shells. In conventional ETH, the left hand side of Eq.~\eqref{eq:eigenspacetherm_def} may be as large as $(\dim \mathcal{H}(E))^2 e^{-S(E)} = \Theta(d)$ (when $\dim \mathcal{H}(E) = \Theta(d)$), rather than being bounded by some small $\epsilon^2$. For our purposes, it is also useful to make a distinction between the conventional statement of ETH and Definition~\ref{def:eigenspace_thermalization} to carefully keep track of which variants of eigenstate thermalization contribute to which dynamical processes of thermalization.}:
\begin{equation}
\sum_{\substack{n: \\ E_n = E}} \sum_{\substack{m: m\neq n \\ E_m = E}} \left\lvert\langle E_n\rvert \proj_A\lvert E_m\rangle - \langle \proj_A\rangle_{\Eshell{d}} \delta_{nm}\right\rvert^2 < \epsilon^2.
\end{equation}

Now, let us see how this more specific form of eigenstate thermalization implies thermalization, by generalizing Proposition~\ref{prop:quantum_ET}:
\begin{proposition}[Quantum eigenspace thermalization implies thermalization o.a.]
\label{prop:quantum_eigenspace}
If $\proj_A$ satisfies eigenspace thermalization to accuracy $\epsilon$, i.e., Eq.~\eqref{eq:eigenspacetherm_def}, within an energy shell $\Eshell{d} = \bigcup_{E \in \mathcal{E}} \mathcal{H}(E)$, then $\proj_A$ thermalizes o.a. to $\langle \proj_A\rangle_{\Eshell{d}}$, to accuracy $\epsilon$, for any initial density operator $\hat{\rho}:\Eshell{d}\to \Eshell{d}$ in $\Eshell{d}$:
\begin{equation}
\left\lvert \int\overline{\diff t}\ \Tr[\hat{\rho}(t) \proj_A] - \langle \proj_A\rangle_{\Eshell{d}} \right\rvert < \epsilon.
\label{eq:quantum_thermalizationoa_degenerate}
\end{equation}
\end{proposition}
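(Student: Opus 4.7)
The plan is to parallel the proof of Proposition~\ref{prop:quantum_ET}, but drop the nondegeneracy assumption so that the time average keeps the full block-diagonal structure on eigenspaces, and then close with a single Cauchy--Schwarz estimate per eigenspace. First, I would generalize the key identity Eq.~\eqref{eq:infytimeaverage}: writing $\hat{\rho}(t) = \sum_{E,E'} \proj(E)\hat{\rho}\proj(E') e^{-i(E-E')t}$ with the sums running over \emph{distinct} eigenvalues, and time-averaging $e^{-i(E-E')t}$ to $\delta_{E,E'}$, one obtains
\[
\overline{\hat{\rho}} \equiv \int\overline{\diff t}\ \hat{\rho}(t) = \sum_{E \in \mathcal{E}} \proj(E)\,\hat{\rho}\,\proj(E),
\]
a density operator that is block-diagonal in the energy eigenspaces. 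This is what degeneracies allow to survive the time average beyond the diagonal-in-eigenstate contribution of Eq.~\eqref{eq:infytimeaverage}.

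Since $\hat{\rho}$ is supported in $\Eshell{d} = \bigcup_{E \in \mathcal{E}}\mathcal{H}(E)$ we have $\sum_{E \in \mathcal{E}} \Tr[\proj(E)\hat{\rho}\proj(E)] = \Tr[\hat{\rho}] = 1$, which lets me absorb the $\langle \proj_A\rangle_{\Eshell{d}}$ subtraction inside the sum; using $\proj(E)^2 = \proj(E)$ and cyclicity, the deviation becomes
\[
\int\overline{\diff t}\ \Tr[\hat{\rho}(t)\proj_A] - \langle \proj_A\rangle_{\Eshell{d}} = \sum_{E \in \mathcal{E}} \Tr[\hat{X}_E \hat{Y}_E],
\]
with Hermitian blocks $\hat{X}_E \equiv \proj(E)\hat{\rho}\proj(E)$ and $\hat{Y}_E \equiv \proj(E)(\proj_A - \langle\proj_A\rangle_{\Eshell{d}}\idop)\proj(E)$.

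The main step is then Cauchy--Schwarz in the trace inner product, $\lvert\Tr[\hat{X}_E\hat{Y}_E]\rvert \leq \sqrt{\Tr[\hat{X}_E^2]\,\Tr[\hat{Y}_E^2]}$. Expanding the hypothesis Eq.~\eqref{eq:eigenspacetherm_def} and using $\proj(E)^2 = \proj(E)$ gives $\Tr[\hat{Y}_E^2] < \epsilon^2$ directly; on the state side, positivity of $\hat{X}_E$ yields $\Tr[\hat{X}_E^2] \leq (\Tr[\hat{X}_E])^2 = p_E^2$, with $p_E \equiv \Tr[\proj(E)\hat{\rho}]$ the probability that $\hat{\rho}$ sits in $\mathcal{H}(E)$. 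Summing and using the triangle inequality together with $\sum_{E \in \mathcal{E}} p_E = 1$,
\[
\left\lvert \int\overline{\diff t}\ \Tr[\hat{\rho}(t)\proj_A] - \langle\proj_A\rangle_{\Eshell{d}}\right\rvert \leq \sum_{E \in \mathcal{E}} p_E\,\epsilon = \epsilon,
\]
which is Eq.~\eqref{eq:quantum_thermalizationoa_degenerate}.

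I do not foresee a genuine obstacle, since Definition~\ref{def:eigenspace_thermalization} is already phrased precisely as a Hilbert--Schmidt bound on each eigenspace --- exactly the shape Cauchy--Schwarz needs. The one subtle step to get right is $\Tr[\hat{X}_E^2] \leq p_E^2$: one must invoke positivity of $\proj(E)\hat{\rho}\proj(E)$ to bound its Hilbert--Schmidt norm by its trace, rather than by a quantity scaling with $\dim\mathcal{H}(E)$, which would otherwise introduce factors proportional to the sizes of the degenerate subspaces and spoil state-independence. This uniformity in $\hat{\rho}$ is what converts the per-eigenspace hypothesis into a global thermalization o.a. statement, trading the nondegeneracy of Proposition~\ref{prop:quantum_ET} for a Hilbert--Schmidt-strength bound within each eigenspace.
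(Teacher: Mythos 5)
Your proof is correct and follows essentially the same route as the paper's: time-average to the block-diagonal $\sum_E\proj(E)\hat{\rho}\proj(E)$, apply Cauchy--Schwarz in the trace inner product per eigenspace, invoke the Hilbert--Schmidt hypothesis to bound the observable block, and bound the state block by $p_E$ before summing. The only cosmetic difference is in bounding $\Tr[\hat{X}_E^2]\le p_E^2$ --- you use positivity of the PSD block $\proj(E)\hat{\rho}\proj(E)$ directly, while the paper phrases the same fact via the $\hat{\rho}$-weighted Cauchy--Schwarz inequality on matrix elements; both are equivalent one-line arguments.
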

\begin{proof}
In the presence of degeneracies, given a complete orthonormal basis $\lbrace \lvert E_n\rangle\rbrace_{n=0}^{D-1}$ of energy eigenstates, the analogue of Eq.~\eqref{eq:infytimeaverage} for the infinite time average is:
\begin{equation}
\int\overline{\diff t}\ \Tr[\hat{\rho}(t) \proj_A] = \sum_{E \in \mathcal{E}}\left[ \sum_{\lvert E_n\rangle, \lvert E_m\rangle \in \mathcal{H}(E)} \langle E_n\rvert \hat{\rho}\lvert E_m\rangle \langle E_m\rvert \proj_A \lvert E_n\rangle\right]. \label{eq:infytimeaverage_degenerate}
\end{equation}
Applying the Cauchy-Schwarz inequality to Eq.~\eqref{eq:infytimeaverage_degenerate} in different ways (together with the triangle inequality $|x+y| \leq |x| + |y|$) gives Eq.~\eqref{eq:quantum_thermalizationoa_degenerate}, as described in App.~\ref{proof:quantum_eigenspace}.
\end{proof}

A key consequence of Proposition~\ref{prop:quantum_eigenspace} is that we are able to describe the phenomenon of thermalization without direct reference to the energy eigenvalues themselves, in particular without even knowing the degree of degeneracy of each eigenvalue. The only input we need is that the observable $\proj_A$ satisfies eigenspace thermalization --- Eq.~\eqref{eq:eigenspacetherm_def} --- in every \textit{eigenspace} $\mathcal{H}(E)$, given only that they exist in some energy shell of interest, irrespective of the number or nature of these eigenspaces.

\subsection{Preview: Eigenspace thermalization and ``approximate'' degeneracies}
\label{sec:eigenspace_intuition}

Now, let us briefly consider thermalization o.a. over finite times, as a qualitative preview of the more rigorous results to follow. At an intuitive level, we can tackle this case by appealing to the notion of eigenspace thermalization --- not just for degenerate subspaces, but even for distinct energy levels in a small range of energies--- as follows. At any time $T$, an initial density operator $\hat{\rho}$ evolves into
\begin{equation}
\hat{\rho}(T) = \sum_{n,m} e^{-i(E_n - E_m)T} \lvert E_n\rangle \langle E_n\rvert \hat{\rho}\lvert E_m\rangle \langle E_m\rvert.
\label{eq:timeevolution}
\end{equation}
For the purposes of dynamics at times $\lvert t\rvert < T$, one can regard all sets of energy levels satisfying
\begin{equation}
\lvert E_n - E_m\rvert \ll \frac{2\pi}{T}
\label{eq:approximatedegeneracy}
\end{equation}
as approximately degenerate, as the corresponding phase factors in Eq.~\eqref{eq:timeevolution} are $e^{-i(E_n-E_m)t} \approx 1$. By intuitively extending Proposition~\ref{prop:quantum_eigenspace}, we should expect that an observable $\proj_A$ thermalizes o.a. to $\langle \proj_A\rangle$ in any state $\hat{\rho}$ (note that we have now assumed a single thermal value $\langle \proj_A\rangle$ over the entire Hilbert space) provided it satisfies eigenspace thermalization in such approximately degenerate blocks:
\begin{equation}
\sum_{\substack{n,m: \\ (E_n-E_m) \ll 2\pi/T}} \left\lvert \langle E_n\rvert \left(\proj_A-\langle \proj_A\rangle\idop\right)\lvert E_m\rangle \right\rvert^2 \approx 0.
\label{eq:approxeigenspacetherm}
\end{equation}
We will develop this argument rigorously in the following section, but we note for now that this intuition is most directly captured by Eq.~\eqref{eq:instantaneousthermalequilibrium} in Sec.~\ref{sec:summary} or Proposition~\ref{prop:instantaneousthermalequilibrium} at a more technical level.

\section{Energy-band thermalization for accessible timescales}
\label{sec:thermalization_finitetimes}

We have noted that eigenspace thermalization guarantees thermalization o.a. over infinitely long timescales even in a degenerate spectrum. In this section, we are concerned with adapting this notion to practically accessible time scales, in the form of energy-band thermalization. We will approach this by first defining a weighted time average, which will underlie most of our subsequent results. We will show that such weighted time averages thermalize given that an observable satisfies energy-band thermalization. While we will motivate our definition of this property to some extent in this section (in addition to the qualitative arguments in Sec.~\ref{sec:eigenspace_intuition}), its ultimate justification (and utility) lies in the fact that it is the property that can be most directly accessed via the dynamics of a single state (Secs.~\ref{sec:globalthermal} and \ref{sec:shellthermalechoes}), and therefore directly lends itself to experimental measurements (Sec.~\ref{sec:exp_protocols}).

Much of our discussion in this section will rely on an energy bandwidth $\Delta E > 0$, such that we are only interested in pairs of energy levels $(E_n, E_m)$ satisfying $\lvert E_n-E_m\rvert < \Delta E$. We emphasize that the bandwidth $\Delta E$ itself is not a property of the energy spectrum, but may be chosen by hand (such as by an experimenter) without any knowledge of the spectrum (e.g., in a fully degenerate spectrum within the energy shell $\Eshell{d}$, any choice of $\Delta E > 0$ would allow all pairs of energy levels, while a non-degenerate spectrum will only allow a restricted set of pairs, and we do not need to know which of these is the case for a given system). Our results, which do not rely on whether any given pair of energy levels belong to this bandwidth, therefore retain their independence from the energy eigenvalues $E_n$.

\subsection{Weighted time averages}
\label{sec:time_average_weights}
At a technical level, it will be convenient to work with weighted time averages of some observable $\proj_A$ in some state $\hat{\rho}$:
\begin{equation}
\int\diff t\ w(t) \Tr[\hat{\rho}(t) \proj_A] = \sum_{n,m} \widetilde{w}(E_n - E_m) \langle E_n\rvert \hat{\rho}\lvert E_m\rangle \langle E_m\rvert \proj_A \lvert E_n\rangle.
\label{eq:weightedtimeaverage}
\end{equation}
Here,
\begin{equation}
\widetilde{w}(\delta E) \equiv \int\diff t\ w(t) e^{-i\delta E t},
\end{equation}
is the Fourier transform of $w(t)$. With different choices of the weight $w(t)$, we can get different time-domain quantities of interest: for example, $w(t) = 1/T$ in $t\in [0,T]$ or $w(t) = 1/(2T)$ in $t\in [-T,T]$ gives the conventional time average (which can be shifted to, e.g., $[t_0-T,t_0+T]$ for a time average centered around $t_0$), while $w(t) = \delta(t-t_0)$ gives the instantaneous expectation value of $A$ at $t=t_0$. In all these cases, we will assume that the time integral of $w(t)$ is normalized to
\begin{equation}
\widetilde{w}(0) = \int\diff t\ w(t) = 1.
\label{eq:w_normalization}
\end{equation}
The reason behind considering more general $w(t)$ than the conventional time average is that, as we will see in Sec.~\ref{sec:globalthermal}, the conventional time average does not allow rigorous results of the kind we are seeking due to interference effects. While these effects are not an obstacle in the present section, we introduce $w(t)$ here for greater generality.

We will also take $w(t) \geq 0$. Among other things, through a relation between non-negativity conditions and Fourier transforms~\cite{FTpositivityConvex, FTpositivity2014}, this implies that
\begin{equation}
\lvert \widetilde{w}(\delta E)\rvert \leq \lvert\widetilde{w}(0)\rvert = 1.
\label{eq:fourier_positivity_ridge}
\end{equation}

Finally, we need to formalize the notion of when $w(t)$ represents a ``long-time average'', relative to the timescale set by the bandwidth $\Delta E$ For this, we want $w(t)$ to have appreciable magnitude at least over some long time interval $T \sim 2\pi/\Delta E$, such that $\widetilde{w}(\delta E)$ is appreciable only within $\lvert \delta E\rvert < \Delta E$. Specifically, for some small cutoff $0 < w_0 < 1$, we require that $\widetilde{w}(\delta E)$ satisfies:
\begin{equation}
\lvert \widetilde{w}(\delta E)\rvert \leq w_0,\;\; \text{ for all } \delta E \geq \Delta E.
\label{eq:longtimeavg_w}
\end{equation}
We emphasize that such a condition may be guaranteed merely by our choice of $\Delta E$ and function $w(t)$, without any knowledge of the energy spectrum of the system (we do not require, for example, that any energy level spacings with $\delta E \geq \Delta E$ even exist in the system).

\subsection{Thermalization in energy bands}
\label{sec:energyband_intro}

\begin{figure}[!ht]
\centering
\subcaptionbox{Rare/small degeneracies.\label{subfig:raredegenEB}}[0.4\textwidth]{\includegraphics[width=0.4\textwidth]{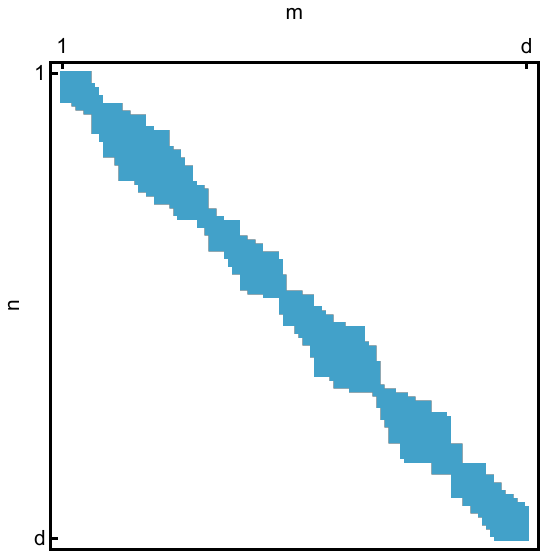}} \qquad \qquad
\subcaptionbox{Frequent/large degeneracies.\label{subfig:freqdegenEB}}[0.4\textwidth]{\includegraphics[width=0.4\textwidth]{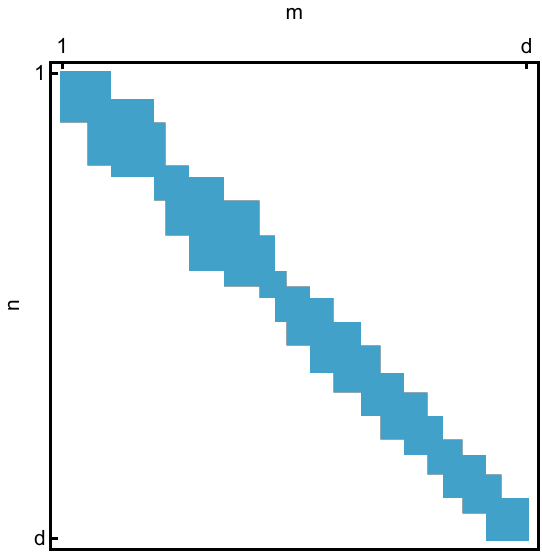}}
\caption{Schematic depiction of energy-band thermalization in terms of the relevant matrix elements between pairs $(n,m)$ of the energy levels $(E_n,E_m)$. While the matrix element structure can appear to be very erratic and strongly dependent on spectral fluctuations in the differences $E_n-E_m$, such as between spectra with rare (\ref{subfig:raredegenEB}) as opposed to very frequent (\ref{subfig:freqdegenEB}) degeneracies, that the energy band is defined naturally in terms of the physically relevant energies $\lvert E_n - E_m\rvert < \Delta E$ makes this the most accessible criterion for thermalization compared to eigenstate or eigenspace thermalization. These matrix elements are to be contrasted with Fig.~\ref{fig:energybandscatter} [respectively, (\ref{subfig:raredegenEB} with (\ref{subfig:raredegen}) and (\ref{subfig:freqdegenEB}) with (\ref{subfig:freqdegen})] which depicts how energy-band thermalization is independent of the details of the spectrum.}
\label{fig:energybandmatrix}
\end{figure}

For $w(t)$ satisfying Eq.~\eqref{eq:longtimeavg_w}, we should expect the thermal behavior of weighted time averages to be determined by the full structure of $\proj_A$ within the bandwidth $\Delta E$. Using the intuition of ``approximate eigenspace thermalization'' developed in Sec.~\ref{sec:eigenspace_intuition}, we should expect that the appropriate condition on $\proj_A$ is that
\begin{equation}
\Tr\left[\left\lbrace \left(\proj_A-\langle \proj_A\rangle_{\Eshell{d}}\idop\right) \proj(E_0, \Delta E)\right\rbrace^2\right]
\label{eq:energybandmotivator1}
\end{equation}
be small, where $\proj(E_0, \Delta E)$ projects onto an ``approximate eigenspace'' i.e. energy window of width $\Delta E$ centered at $E_0$ within an energy shell $\Eshell{d}$.

However, merely dividing the spectrum into such separate energy windows neglects interference between the different energy windows, and is not technically sound for the following reason: we also want to constrain the matrix elements of $\proj_A$ that connect energies within a separation $\Delta E$ that may belong to different (consecutive) energy windows. Any specific choice of ``approximate eigenspaces'' would necessarily assign some sets of levels closer than $\Delta E$ to different approximate eigenspaces. One solution is to take an ``averaged'' variant over all possible choices of these energy windows in the energy shell $\Eshell{d}$, in other words averaging over $E_0$, effectively obtaining an ``energy-band selector'' of bandwidth $\Delta E$, over a basis of eigenstates $\lvert E_n\rangle$:
\begin{equation}
\sum_{n: E_n \in \Eshell{d}} \lvert E_n\rangle \langle E_n\rvert \otimes \left[ \sum_{\substack{m: E_m \in \Eshell{d} \\ \lvert E_n-E_m\rvert < \Delta E}} \lvert E_m\rangle \langle E_m\rvert\right],
\label{eq:energybandselector}
\end{equation}
into which two operators may be inserted by respectively connecting $\lvert E_m\rangle$ to $\langle E_n\rvert$ and $\langle E_m\rvert$ to $\lvert E_n\rangle$ in a trace. This allows us to define operators restricted to an energy band as in Eq.~\eqref{eq:energyband_notation_def}, in which we now formally include the energy shell:
\begin{equation}
[\hat{A}]_{\Eshell{d},\Delta E} \equiv \sum_{\substack{n,m: \\ \lvert E_n\rangle, \lvert E_m\rangle \in \Eshell{d} \\ \lvert E_n - E_m\rvert < \Delta E}}  \lvert E_n\rangle \langle E_n\rvert \hat{A}\lvert E_m\rangle \langle E_m\rvert.
\label{eq:energyband_shell_notation_def}
\end{equation}
We emphasize again that these sums are to be taken over any complete orthonormal basis of eigenstates $\lvert E_n\rangle$ within $\Eshell{d}$, and not all the eigenstates that may constitute degenerate subspaces. The relevant matrix elements are pictorially depicted in Fig.~\ref{fig:energybandmatrix}.

We can now define ``energy-band thermalization''\footnote{We note that this structure also occurs if one combines semiclassical results~\cite{ZelditchTransition, Sunada} on diagonal and off-diagonal matrix elements connected to classically ergodic systems, and formally removes the $\Delta E \to 0$ limit.} by analogy with eigenspace thermalization [Definition \ref{def:eigenspace_thermalization}], with the pair of eigenspace projectors $\proj(E)$ replaced by the energy-band selector in Eq.~\eqref{eq:energybandselector}:
\begin{definition}[Energy-band thermalization]
\label{def:energybandthermalization}
Given $\Delta E > 0$, we say that an observable $\proj_A$ shows energy-band thermalization with bandwidth $\Delta E$ in an energy shell $\Eshell{d}$, to accuracy $\epsilon$, if
\begin{equation}
\frac{1}{d} \Tr\left\lbrace \left[\proj_A-\langle \proj_A\rangle_{\Eshell{d}}\idop\right]^2_{(\Eshell{d},\Delta E)} \right\rbrace \equiv \frac{1}{d}\sum_{n: E_n \in \Eshell{d}} \sum_{\substack{m: E_m \in \Eshell{d} \\ \lvert E_n-E_m\rvert < \Delta E}} \left\lvert\langle E_n\rvert \proj_A\lvert E_m\rangle - \langle \proj_A\rangle_{\Eshell{d}} \delta_{nm}\right\rvert^2 < \epsilon^2.
\label{eq:energybandtherm_def}
\end{equation}
\end{definition}

The normalization by $d = \dim \Eshell{d}$ on the left hand side of Eq.~\eqref{eq:energybandtherm_def} (which is essentially a rescaling of the accuracy parameter $\epsilon$ relative to eigenspace thermalization) will prove convenient later; intuitively, one can think of the normalization as representing the aforementioned average over all choices of energy windows in the $d$-dimensional energy shell, weighted proportional to their dimension. More importantly, at a physical level, we have chosen this normalization because we typically expect $\epsilon = O(1)$ (meaning $\epsilon$ is at most a finite number in the $d\to\infty$ limit) with this choice (whose justification will become apparent in Sec.~\ref{sec:globalthermautocorrelators}, but as a quick justification, we note that $\Tr[\hat{A}]/d = O(1)$ when $\hat{A}$ has eigenvalues of $O(1)$ magnitude, and therefore should be ``accessible'').

Before examining its implications for accessible time scales, let us consider the connection between energy-band thermalization and eigenspace thermalization. Eigenspaces are characterized by an energy difference of $\delta E = 0$, and are therefore completely contained in the energy band: Eq.~\eqref{eq:energybandtherm_def} includes all the terms relevant for eigenspace thermalization in $\Eshell{d}$, with the only complication being the $1/d$ normalization. This normalization suggests that given energy-band thermalization to accuracy $\epsilon$:
\begin{enumerate}
\item Every eigenspace $\mathcal{H}(E) \subseteq \Eshell{d}$ thermalizes to accuracy $\epsilon \sqrt{d}$:
\begin{equation}
\Tr\left[\left\lbrace \left(\proj_A - \langle \proj_A\rangle_{\Eshell{d}} \idop\right) \proj(E)\right\rbrace^2\right] < \epsilon^2 d,
\label{eq:energybandtoeigenspace1}
\end{equation}
as all the other terms (i.e., contributions not solely from $\mathcal{H}(E)$) in Eq.~\eqref{eq:energybandtherm_def} are non-negative.
\item \textit{On average} (with each eigenspace weighted according to its dimension), the eigenspaces within $\Eshell{d}$ thermalize to the same accuracy $\epsilon$:
\begin{equation}
\frac{1}{d}\sum_{\mathcal{H}(E) \subseteq \Eshell{d}} \Tr\left[\left\lbrace \left(\proj_A - \langle \proj_A\rangle_{\Eshell{d}} \idop\right) \proj(E)\right\rbrace^2\right] < \epsilon^2,
\label{eq:energybandtoeigenspaceavg1}
\end{equation}
again by dropping the nonnegative contributions that connect distinct energies $(E_n - E_m) \neq 0$.
\end{enumerate}
Eq.~\eqref{eq:energybandtoeigenspace1} suffices to conclude eigenspace thermalization if $\epsilon \ll 1/\sqrt{d}$, but for larger $\epsilon$, we will need to work with Eq.~\eqref{eq:energybandtoeigenspaceavg1}. As we discuss in App.~\ref{app:energybandeigenspace}, Eq.~\eqref{eq:energybandtoeigenspaceavg1} can constrain the ``size'' of the subspace in $\Eshell{d}$ in which all eigenspaces violate eigenspace thermalization to accuracy $\lambda$, as measured by the total number $n_{\lambda}$ of eigenspaces $\mathcal{H}(E)$ that intersect this subspace:
\begin{equation}
n_{\lambda} < \frac{\epsilon^2}{\lambda^2} d.
\label{eq:weakeigenspacecriterion}
\end{equation}
However, for this to be nontrivial with ``accessible'' values of $\epsilon$, the number of distinct energy levels in the energy shell should be close to its dimension, $n \gg \epsilon^2 d/\lambda^2$, so that $n_{\lambda} \ll n$ by the above constraint. In that case, we can claim ``weak eigenspace thermalization'' ($n_{\lambda}/n \ll 1$), in analogy with the notion of ``weak ETH''~\cite{BiroliWeakETH, MoriWeakETH, MoriETHreview} where almost all (but not necessarily all) eigenstates thermalize; more discussion on the physical relevance of such ``weak'' notions will follow in the Conclusion, Sec.~\ref{sec:conclusion}.

However, there is a complication: the number of eigenspaces is guaranteed to be accessible in an experiment only if $n$ itself is accessibly small\footnote{This is because in a system with $n$ eigenspaces, the spectral form factor~\cite{Haake} (see also Sec.~\ref{sec:exp_protocols}), used to measure energy level statistics~\cite{SFFmeas, pSFF}, may oscillate at late times around values as small as $1/n$.}, such as $n < cN^\alpha$ for an $N$-particle system, which requires an unusually high level of degeneracy (close to $\Theta(d)$) in most energy levels. It is therefore difficult to establish that a system has the requisite high number of eigenspaces in most cases.

In connecting energy-band thermalization to eigenspace thermalization, we seem to have arrived at a tradeoff:
\begin{enumerate}
\item If one wants to keep to ``accessible'' values of $\epsilon$, one can only show weak eigenspace thermalization, that too in a (formally) restricted class of systems where the number of levels still scales as nearly $d$ (i.e. where a high degree of degeneracy is not typical), though this may cover several cases of interest (e.g., such as systems with a time reversal symmetry showing Kramers' twofold degeneracy~\cite{Haake}, but otherwise assumed to be nondegenerate).
\item To fully constrain eigenspace thermalization, one must work with ``inaccessibly'' small values of $\epsilon$, but this works for any Hamiltonian system.
\end{enumerate}
To remedy this situation, we must bypass eigenspace thermalization (and therefore, eigenstate thermalization) completely; this is what we turn to next. We will see that energy-band thermalization can directly constrain thermalization dynamics in any given basis of physical states, with accessible values of $\epsilon$ and without any dependence on the energy spectrum. More significantly, even in systems known to be nondegenerate, it allows us to establish thermalization over finite timescales.

\subsection{Energy-band thermalization \texorpdfstring{$\implies$}{implies} almost all physical states thermalize}

To see the impact of energy-band thermalization on thermalization dynamics in an initial state $\hat{\rho}$ for a long-time average $w(t)$ satisfying Eq.~\eqref{eq:longtimeavg_w}, i.e., with $\widetilde{w}(\delta E)$ being appreciable only inside the bandwidth $\Delta E$, we turn to the expression in Eq.~\eqref{eq:weightedtimeaverage} that expresses this time average in the energy eigenbasis. From this expression, we obtain:
\begin{theorem}[Energy-band thermalization implies thermalization o.a. over accessible timescales]
\label{thm:energyband_implies_thermalization}
If $\proj_A$ satisfies energy-band thermalization with bandwidth $\Delta E > 0$ and accuracy $\epsilon > 0$ within an energy shell $\Eshell{d}$ [Eq.~\eqref{eq:energybandtherm_def}], and one considers time-averaging with a weighting function $w(t)$ with $w_0$-bandwidth smaller than $\Delta E$, i.e. that satisfies [Eq.~\eqref{eq:longtimeavg_w}]:
\begin{equation*}
\lvert \widetilde{w}(\delta E)\rvert \leq w_0,\;\; \text{ for all } \delta E \geq \Delta E,
\end{equation*}
then $\proj_A$ thermalizes o.a. to $\langle \proj_A\rangle_{\Eshell{d}}$ in any initial state $\hat{\rho}: \Eshell{d} \to \Eshell{d}$ in the energy shell, with accuracy:
\begin{equation}
\left\lvert \int\diff t\ w(t) \Tr[\hat{\rho}(t) \proj_A] - \langle \proj_A\rangle_{\Eshell{d}}\right\rvert < \left(\epsilon + w_0 \sqrt{\langle \proj_A\rangle_{\Eshell{d}}}\right) \sqrt{d \Tr[\hat{\rho}^2]}.
\label{eq:energyband_implies_thermalization}
\end{equation}
\end{theorem}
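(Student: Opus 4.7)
The plan is to start from the spectral representation of the weighted time average in Eq.~\eqref{eq:weightedtimeaverage} and split the double sum into an ``in-band'' part (pairs with $|E_n-E_m|<\Delta E$) and an ``out-of-band'' part (pairs with $|E_n-E_m|\geq\Delta E$), then apply Cauchy--Schwarz to each. First, using $\widetilde{w}(0)=1$ together with $\Tr[\hat{\rho}]=1$ to rewrite $\langle\proj_A\rangle_{\Eshell{d}}=\sum_{n}\widetilde{w}(0)\langle E_n|\hat{\rho}|E_n\rangle\langle\proj_A\rangle_{\Eshell{d}}$, the deviation from the thermal value becomes
\begin{equation*}
\int\diff t\, w(t)\,\Tr[\hat{\rho}(t)\proj_A] - \langle\proj_A\rangle_{\Eshell{d}} = \sum_{n,m}\widetilde{w}(E_n-E_m)\,\langle E_n|\hat{\rho}|E_m\rangle\bigl[\langle E_m|\proj_A|E_n\rangle - \langle\proj_A\rangle_{\Eshell{d}}\delta_{nm}\bigr],
\end{equation*}
with only indices satisfying $|E_n\rangle,|E_m\rangle\in\Eshell{d}$ contributing, since $\hat{\rho}$ is supported in the energy shell. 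The crucial bookkeeping move is to keep the $\delta_{nm}$ counterterm entirely within the in-band piece $S_{\mathrm{in}}$ (valid since $n=m$ automatically satisfies $|E_n-E_m|<\Delta E$), so that $S_{\mathrm{in}}$ involves matrix elements of $\proj_A-\langle\proj_A\rangle_{\Eshell{d}}\idop$ while the out-of-band piece $S_{\mathrm{out}}$ involves only matrix elements of $\proj_A$.

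For $S_{\mathrm{in}}$, Cauchy--Schwarz on the pair index $(n,m)$, combined with the universal bound $|\widetilde{w}(\delta E)|\leq\widetilde{w}(0)=1$ from Eq.~\eqref{eq:fourier_positivity_ridge}, gives
\begin{equation*}
|S_{\mathrm{in}}| \leq \sqrt{\Tr[\hat{\rho}^2]}\;\sqrt{\Tr\bigl\{[\proj_A-\langle\proj_A\rangle_{\Eshell{d}}\idop]^2_{(\Eshell{d},\Delta E)}\bigr\}} \leq \epsilon\sqrt{d\,\Tr[\hat{\rho}^2]},
\end{equation*}
where the first radical arises from extending the in-band sum to all index pairs in $\Eshell{d}$, and the second is exactly the quantity controlled by the energy-band thermalization hypothesis of Eq.~\eqref{eq:energybandtherm_def}. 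For $S_{\mathrm{out}}$, the same Cauchy--Schwarz step with the bandwidth bound $|\widetilde{w}(\delta E)|\leq w_0$ yields a factor $w_0\sqrt{\Tr[\hat{\rho}^2]}$, while the matrix-element sum is bounded as
\begin{equation*}
\sum_{n,m\in\Eshell{d}}|\langle E_m|\proj_A|E_n\rangle|^2 = \Tr[\proj_A\proj_{\Eshell{d}}\proj_A\proj_{\Eshell{d}}] \leq \Tr[\proj_A^2\proj_{\Eshell{d}}] = d\,\langle\proj_A\rangle_{\Eshell{d}},
\end{equation*}
using the projector identity $\proj_A^2=\proj_A$ together with $\proj_{\Eshell{d}}\leq\idop$. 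This gives $|S_{\mathrm{out}}|\leq w_0\sqrt{\langle\proj_A\rangle_{\Eshell{d}}}\,\sqrt{d\,\Tr[\hat{\rho}^2]}$.

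The triangle inequality $|S_{\mathrm{in}}+S_{\mathrm{out}}|\leq|S_{\mathrm{in}}|+|S_{\mathrm{out}}|$ then yields Eq.~\eqref{eq:energyband_implies_thermalization}. Since each step is a direct application of a standard inequality, there is no serious obstacle; the only real subtlety is recognizing that the thermal counterterm $\langle\proj_A\rangle_{\Eshell{d}}\delta_{nm}$ must be absorbed into the in-band piece, so that the out-of-band estimate can exploit $\proj_A^2=\proj_A$ to produce the clean $\sqrt{\langle\proj_A\rangle_{\Eshell{d}}}$ prefactor, rather than a cruder operator-norm bound on $\proj_A-\langle\proj_A\rangle_{\Eshell{d}}\idop$ (which would in general scale as $\sqrt{d}$ and degrade the result).
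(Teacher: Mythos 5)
Your proposal is correct and follows essentially the same route as the paper's proof in Appendix~\ref{proof:energyband_implies_thermalization}: write the deviation in the energy eigenbasis via Eq.~\eqref{eq:weightedtimeaverage}, absorb the $\langle\proj_A\rangle_{\Eshell{d}}\delta_{nm}$ counterterm into the in-band sum, Cauchy--Schwarz each piece against $\hat{\rho}$ using $|\widetilde{w}|\leq 1$ (in-band) and $|\widetilde{w}|\leq w_0$ (out-of-band), and bound the out-of-band $\proj_A$ sum by $\Tr[\{\proj_A\proj_{\Eshell{d}}\}^2]\leq\Tr[\proj_A\proj_{\Eshell{d}}]=d\langle\proj_A\rangle_{\Eshell{d}}$. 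You correctly identify the one non-mechanical step---keeping the thermal counterterm in the in-band piece so the out-of-band piece can exploit $\proj_A^2=\proj_A$---which is exactly what the paper does.
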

\begin{proof}
The proof is similar to that of Proposition~\ref{prop:quantum_eigenspace}, except that one now separates the contributions within the bandwidth $\Delta E$ from those outside, and applies the triangle and Cauchy-Schwarz inequalities separately to each set of contributions. Each contribution now has an explicit $\hat{\rho}$ dependence, necessitated by our normalization of energy band thermalization (this would have also been relevant for eigenspace thermalization in Proposition~\ref{prop:quantum_eigenspace} if we had chosen a different normalization); the outside-$\Delta E$ contribution has a dependence on $w_0$ in addition to $\hat{\rho}$. See App.~\ref{proof:energyband_implies_thermalization} for details.
\end{proof}

In Eq.~\eqref{eq:energyband_implies_thermalization}, we finally observe the quantitative effect of choosing accessible timescales via the $w_0$ term. But that is not all: The $\sqrt{d \Tr[\hat{\rho}^2]}$ factor is quite nontrivial, and will ensure that we cannot in general show the thermalization of \textit{every} physical state, but only for almost all physical states.

To see this, first let us consider a pure initial state $\hat{\rho}$, with $\Tr[\hat{\rho}^2] = 1$. For energy-band thermalization to imply thermalization o.a. to accuracy $\lambda > 0$,
\begin{equation}
\left\lvert \int\diff t\ w(t) \Tr[\hat{\rho}(t) \proj_A] - \langle \proj_A\rangle_{\Eshell{d}}\right\rvert < \lambda,
\end{equation}
in such a pure state, one must satisfy both
\begin{equation}
\epsilon < \frac{\lambda}{\sqrt{d}}\;\;\ \text{ AND }\;\;\ w_0 < \frac{\lambda}{\sqrt{d \langle \proj_A\rangle_{\Eshell{d}}}}.
\label{eq:purestate_thermalization_criterion}
\end{equation}
In general, $\langle \proj_A\rangle_{\Eshell{d}} < 1$ (though still some accessible value by assumption, i.e. we assume that thermalization to this value can be detected), so the condition on $w_0$ is somewhat less stringent than the one on $\epsilon$. But for thermalization o.a. in an individual pure state, we seem to require an inaccessibly small accuracy $\epsilon$ as well as an inaccessibly long time average in $w(t)$. We will see, however, that we can still constrain pure state thermalization with accessible resources if we allow a small fraction of pure states to fail to thermalize.

Now, consider mixed states, for which
\begin{equation}
\Tr[\hat{\rho}^2] = \frac{1}{\mu(\rho) d},
\label{eq:entanglementequalsphasespacevolume}
\end{equation}
for some parameter $\mu(\rho) \leq 1$. For such states, it is sufficient that
\begin{equation}
\epsilon < \lambda\ \sqrt{\mu(\rho)}\;\;\ \text{ AND }\;\;\ w_0 <  \lambda\ \sqrt{\frac{\mu(\rho)}{\langle \proj_A\rangle_{\Eshell{d}} } }
\end{equation}
for energy-band thermalization to guarantee thermalization with accuracy $\lambda$, provided that $\mu(\rho)$ is ``accessible''. This is in fact anticipated by classical mechanics: such states are precisely the quantum analogue of the ``bulky'' states in Sec.~\ref{sec:classical_ET}. This analogy is quite strong\footnote{We have sometimes been tempted to refer to this (old and well established) semiclassical relation as ``Entanglement = Phase space volume'', with the qualifier that entanglement here means the exponential of the second R\'{e}nyi entropy, following a set of conjectures in quantum gravity~\cite{SusskindCV} relating quantum information measures to more geometric (semi-)classical ones.}: if a classical limit exists, the parameter $\mu(\rho)$ can be directly identified with the phase space volume of a classical region in which (say) $\rho$ has uniform support (if it is associated with such a classical state)~\cite{StechelMeasure, StechelHeller}.

We also recall that classically, ergodicity in initial states of nonzero measure $\mu(\rho) > 0$ is interpreted as ergodicity ``almost everywhere'' in phase space. To formulate an analogous statement in quantum mechanics, let us consider any complete orthonormal basis for the energy shell $\Eshell{d}$:
\begin{equation}
\mathcal{B} = \lbrace \lvert k\rangle\rbrace_{k=0}^{d-1}\;\;\ \subset \Eshell{d}.
\end{equation}
This basis will be taken to represent a collection of ``physical states'' in the system. For example, in the full Hilbert space ($\Eshell{d} \to \mathcal{H}$) of an $N$ qubit system, $\mathcal{B}$ could be the set of computational basis states:
\begin{equation}
\mathcal{B} = \left\lbrace \lvert s_1,\ldots,s_N\rangle: s_j \in \lbrace{0,1}\rbrace \right\rbrace,
\label{eq:computationalbasis_example2}
\end{equation}
or bases generated by any accessible set of simple unitary gates applied to these states.

For the dynamics of each of these basis states, with respective initial density operators given by:
\begin{equation}
\hat{\rho}_k(0) = \lvert k\rangle \langle k\rvert,
\end{equation}
we consider the deviation of the weighted time average of $\proj_A$ from its thermal value in the energy shell: 
\begin{equation}
\lambda_k[w] \equiv \int\diff t\ w(t) \Tr[\hat{\rho}_k(t) \proj_A] - \langle \proj_A\rangle_{\Eshell{d}}.
\label{eq:lambda_kw_def}
\end{equation}
Energy-band thermalization constrains these deviations as follows:
\begin{theorem}[Energy-band thermalization implies almost all physical basis states thermalize o.a.]
\label{thm:energyband_implies_physicalthermalization}
Let $\proj_A$ satisfy energy-band thermalization with bandwidth $\Delta E > 0$ and accuracy $\epsilon > 0$ within an energy shell $\Eshell{d}$ [Eq.~\eqref{eq:energybandtherm_def}], and consider time-averaging with a weighting function $w(t)$ with $w_0$-bandwidth smaller than $\Delta E$, i.e. that satisfies [Eq.~\eqref{eq:longtimeavg_w}]:
\begin{equation*}
\lvert \widetilde{w}(\delta E)\rvert \leq w_0,\;\; \text{ for all } \delta E \geq \Delta E.
\end{equation*}
Then, for any complete orthonormal basis of pure states $\mathcal{B}$ within the energy shell $\Eshell{d}$, the fraction $f_{\lambda}[\mathcal{B}]$ of states that fail to thermalize o.a. to accuracy $\lambda$, i.e. [with $\overline{\Theta}(x) = 1$ if $x$ is true and $0$ otherwise]
\begin{equation}
f_{\lambda}[\mathcal{B}] \equiv \frac{1}{d} \sum_{\lvert k\rangle \in \mathcal{B}} \overline{\Theta}\left(\left\lvert\lambda_k[w]\right\rvert \geq \lambda\right)
\end{equation}
is constrained by:
\begin{equation}
f_{\lambda}[\mathcal{B}] < \frac{\sqrt{2}}{\lambda}\left(\epsilon + w_0 \sqrt{\langle \proj_A\rangle_{\Eshell{d}}}\right).
\label{eq:physicalstates_nonthermalfractionbound}
\end{equation}
\end{theorem}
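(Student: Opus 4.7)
The plan is to reduce the non-thermalizing fraction bound to an operator-level bound via a Markov-type inequality, and then apply Theorem~\ref{thm:energyband_implies_thermalization} not to individual pure basis states (which would give a hopelessly weak $\sqrt{d}$) but to two maximally mixed states built from the basis according to the sign of $\lambda_k[w]$. Specifically, since $\overline{\Theta}(|\lambda_k[w]| \geq \lambda) \leq |\lambda_k[w]|/\lambda$, the first step is the elementary bound
\begin{equation*}
f_{\lambda}[\mathcal{B}] \;\leq\; \frac{1}{d\lambda}\sum_{k\in\mathcal{B}}|\lambda_k[w]|,
\end{equation*}
so the whole problem is reduced to controlling $\sum_k |\lambda_k[w]|$ at scale $O(d)$.

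Next, I would undo the absolute values by splitting the basis by sign. Let $s_k = \sgn(\lambda_k[w])\in\{-1,0,+1\}$, set $\mathcal{B}_\pm = \{k:\,s_k=\pm1\}$, $d_\pm=|\mathcal{B}_\pm|$, and define the projectors $D_\pm = \sum_{k\in\mathcal{B}_\pm}\lvert k\rangle\langle k\rvert$; note $d_+ + d_- \leq d$. Writing $\overline{A}_w \equiv \int\!\diff t\,w(t)\,e^{i\hat{H}t}\proj_A e^{-i\hat{H}t}$ and using $\lvert k\rangle\in\Eshell{d}$ so that $\lambda_k[w] = \langle k\rvert \overline{A}_w\lvert k\rangle - \langle\proj_A\rangle_{\Eshell{d}}$, we have
\begin{equation*}
\sum_{k\in\mathcal{B}}|\lambda_k[w]| \;=\; \Tr\!\left[D_+\overline{A}_w\right] - d_+\langle\proj_A\rangle_{\Eshell{d}} \;-\; \Tr\!\left[D_-\overline{A}_w\right] + d_-\langle\proj_A\rangle_{\Eshell{d}}.
\end{equation*}

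Now the sign-restricted density operators $\hat{\rho}_\pm = D_\pm/d_\pm$ are bona fide states supported in $\Eshell{d}$ with the key property $\Tr[\hat{\rho}_\pm^2] = 1/d_\pm$. Applying Theorem~\ref{thm:energyband_implies_thermalization} to each $\hat{\rho}_\pm$ and multiplying through by $d_\pm$ yields
\begin{equation*}
\bigl\lvert\Tr[D_\pm\overline{A}_w] - d_\pm\langle\proj_A\rangle_{\Eshell{d}}\bigr\rvert \;<\; \bigl(\epsilon + w_0\sqrt{\langle\proj_A\rangle_{\Eshell{d}}}\bigr)\sqrt{d\,d_\pm}.
\end{equation*}
Combining the two sign contributions by the triangle inequality and using the elementary estimate $\sqrt{d_+}+\sqrt{d_-}\leq\sqrt{2(d_++d_-)}\leq\sqrt{2d}$ gives
\begin{equation*}
\sum_{k\in\mathcal{B}}|\lambda_k[w]| \;<\; \sqrt{2}\,d\,\bigl(\epsilon + w_0\sqrt{\langle\proj_A\rangle_{\Eshell{d}}}\bigr),
\end{equation*}
and dividing by $d\lambda$ delivers exactly the stated inequality.

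There is no genuine obstacle here; the only conceptual point worth highlighting is why this reshuffling beats the naive approach of invoking Theorem~\ref{thm:energyband_implies_thermalization} on each pure state separately (which wastes a factor $\sqrt{d}$ because $\Tr[\hat{\rho}^2]=1$ for pure states). The sign decomposition effectively ``bulkifies'' the collection of pure basis states into two mixed states whose purities $1/d_\pm$ absorb the problematic $\sqrt{d}$ into a $\sqrt{d/d_\pm}$, so that Theorem~\ref{thm:energyband_implies_thermalization} can be applied at the advantageous mixed-state endpoint. This is the quantum counterpart of the classical observation in Sec.~\ref{sec:classical_ET} that ``bulky'' initial distributions thermalize under much weaker hypotheses than point-like ones.
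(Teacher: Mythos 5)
Your proposal is correct and takes essentially the same route as the paper's proof in App.~\ref{proof:energyband_implies_physicalthermalization}: split the basis by the sign of $\lambda_k[w]$ into two ``bulky'' mixed states $\hat{\rho}_\pm$ of purity $1/d_\pm$, apply Theorem~\ref{thm:energyband_implies_thermalization} to each, recombine using $\sqrt{d_+}+\sqrt{d_-}\leq\sqrt{2d}$, and finish with a Markov-type bound. The only difference is cosmetic ordering (Markov first vs.\ Markov last).
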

\begin{proof}
This can be proved by a quantitative version, described in App.~\ref{proof:energyband_implies_physicalthermalization}, of the following argument that the fraction is small, $f_{\lambda}[\mathcal{B}] \ll 1$. The ensemble of basis states with $\lambda_k[w] \geq 0$ and the ensemble of basis states with $\lambda_k[w] < 0$ must together contain all $d$ basis states, so each of them contains either $\Theta(d)$ states or a vanishing fraction of states in $\mathcal{B}$. If the former is true for one ensemble, the corresponding ensemble is a ``bulky'' state and almost all constituent pure states must thermalize o.a., as the deviation of $\proj_A$ from its thermal value in the bulky state averages its deviations in the pure states with the same sign. If the latter is true, then the ensemble is a vanishing fraction of states in any case, and may be included in $f_{\lambda}[\mathcal{B}]$ in a negligible overestimation.
\end{proof}

As long as $\epsilon, w_0 \ll \lambda$, the fraction of \textit{basis} states with non-thermal time averages to accuracy $\lambda$ (weighted by a given $w(t)$) is vanishingly small. But we cannot force $f_{\lambda}[\mathcal{B}] < 1/d$ (which would ensure that all states thermalize) with accessible values of $\epsilon$ and $w_0$: there appears to be no accessible way to rule out a small fraction of non-thermal \textit{physical} states in any accessible time average. These are a quantum analogue of measure-zero exceptions, such as periodic orbits in a classical system showing measure-theoretic ergodicity. As noted in Eq.~\eqref{eq:purestate_thermalization_criterion}, it is possible to rule out these features with ``inaccessible'' values of $\epsilon$ and $w_0$ by directly considering the thermalization of pure states given energy-band thermalization, provided that such small values of $\epsilon$ are allowed for $\proj_A$ by the system.

Another interesting question is whether the same set of states that thermalize o.a. for a given $w(t)$ will continue to do so for a different weight function. We do not attempt a nontrivial answer to this question here, but note that given any finite set of $M_w$ weight functions $\lbrace w_k(t)\rbrace_{k=1}^{M_w}$ each with a fraction of at most $f_{\lambda}$ states that fail to thermalize o.a., a fraction $1-M_w f_{\lambda}$ of basis states must thermalize over all these choices of weight functions, which may be regarded as ``almost all'' if $f_{\lambda} \ll 1/M_w$. For all practical purposes then, if one envisions making measurements with a finite number $M_w$ of weight functions, it is sufficient to constrain the fraction of nonthermal states to an accuracy determined by $M_w$ for ``almost all'' physical states to thermalize.

In a very similar way, we can show additional results concerning instantaneous thermal equilibrium in small energy shells. For this purpose, let us consider a shell $\Eshell{\Delta E} \subset \Eshell{d}$ of $d_{\Delta E}$ energy levels whose level spacings are entirely within the energy band:
\begin{equation}
    \lvert E_q-E_r\rvert < \Delta E,\;\ \text{ for all } E_q,E_r \in \Eshell{\Delta E}.
\end{equation}
Then, the following statements hold:
\begin{proposition}[Instantaneous thermal equilibrium in small energy shells]
\label{prop:instantaneousthermalequilibrium}
Let $\proj_A$ satisfy energy-band thermalization with bandwidth $\Delta E > 0$ and accuracy $\epsilon > 0$ within an energy shell $\Eshell{d}$. Then,
\begin{enumerate}
    \item $\proj_A$ thermalizes to $\langle \proj_A\rangle_{\Eshell{d}}$ in any state $\hat{\rho}: \Eshell{d} \to \Eshell{d}$ in the energy shell, with accuracy:
    \begin{equation}
        \left\lvert\Tr[\hat{\rho} \proj_A] - \langle \proj_A\rangle_{\Eshell{d}}\right\rvert < \epsilon \sqrt{d \Tr[\hat{\rho}^2]}
    \end{equation}
    \item If $\mathcal{B}_{\Delta E}$ is an orthonormal basis for $\Eshell{\Delta E}$ with $d_{\Delta E}$ states, the fraction $f_{\lambda}[\mathcal{B}_{\Delta E}]$ of basis states that fail to thermalize to accuracy $\lambda$, i.e.
    \begin{equation}
    f_{\lambda}[\mathcal{B}_{\Delta E}] \equiv \frac{1}{d_{\Delta E}} \sum_{\lvert k\rangle \in \mathcal{B}_{\Delta E}} \overline{\Theta}\left(\left\lvert\langle k\rvert\proj_A\lvert k\rangle - \langle \proj_A\rangle_{\Eshell{d}}\right\rvert \geq \lambda\right)
    \end{equation}
    is constrained by:
    \begin{equation}
    f_{\lambda}[\mathcal{B}_{\Delta E}] < \frac{\epsilon}{\lambda}\sqrt{\frac{2d}{d_{\Delta E}}}.
    \label{eq:instantaneousthermalequilibrium_fraction}
    \end{equation}
\end{enumerate}
\end{proposition}
\begin{proof}
    The proof is a straightforward adaptation of those of Theorems~\ref{thm:energyband_implies_thermalization} and \ref{thm:energyband_implies_physicalthermalization}, but here $w_0$ does not occur due to a lack of time averaging and the fact that there are no contributions from outside the bandwidth $\Delta E$.
\end{proof}
We expect that the right hand side of Eq.~\eqref{eq:instantaneousthermalequilibrium_fraction} can be made small with sufficiently small $\epsilon$ if $d_{\Delta E}$ is not too much smaller than $d$ (as should be the case for accessible values of $\Delta E$). This implies, for example, that the dynamics of thermalization tends to be relevant (or at least nontrivial) only for initial states whose support is larger than $\Delta E$, if an observable is known to satisfy energy-band thermalization with this bandwidth.

In summary, Theorems~\ref{thm:energyband_implies_thermalization} and \ref{thm:energyband_implies_physicalthermalization} are some of our key conceptual results, and complete one half of our present effort to describe the thermalization of states without any specific reference to eigenstates. In particular, we have shown that energy-band thermalization directly implies the thermalization (o.a.) of (almost all) physical states in any basis that one might be able to access in an experiment, over accessible timescales. The remaining half entails establishing energy-band thermalization through accessible quantum dynamical processes, which is the goal of the next section.

\section{Global thermalization from a single initial state}
\label{sec:globalthermal}

For the purpose of conclusively establishing energy-band thermalization, having already implicated the autocorrelator
\begin{equation*}
\Tr[e^{-i \hat{H} t} \proj_A e^{i \hat{H} t} \proj_A]
\end{equation*}
in Sec.~\ref{sec:summaryautocorrelator}, all that remains is to rigorously connect the behavior of autocorrelators over \textit{accessible} time scales to energy-band thermalization. Here, it becomes necessary to separate two different possibilities for the operator $\proj_A$:
\begin{enumerate}
\item Global thermalization: The thermal value $\langle \proj_A\rangle_{\Eshell{d}} = \langle \proj_A\rangle$ is uniform for any choice of energy shell $\Eshell{d}$ of interest, matching its thermal value $\langle \proj_A\rangle \equiv  \Tr[\proj_A]/D$ in the full Hilbert space $\mathcal{H}$.
\item Energy shell thermalization: The thermal value $\langle \proj_A\rangle_{\Eshell{d}}$ differs between energy shells, varying (usually) smoothly as a function of energy.
\end{enumerate}
The reason for separating these cases is as follows: if $\proj_A$ could be restricted to any energy shell $\Eshell{d}$ of interest, then one could simply shrink the global Hilbert space to the energy shell $\mathcal{H} \to \Eshell{d}$ and treat the case of energy-dependent thermalization as if it were global thermalization. However, we do not expect that observables of relevance in experiments can be rigorously restricted to such energy shells while maintaining certain properties that we require (e.g., coupling to a finite temperature state in an external bath would require accounting for some inaccessible properties of the finite temperature state), necessitating an alternate strategy (involving interference effects) to access energy-band thermalization in this case.

To connect autocorrelators to global energy-band thermalization, we only need a straightforward adaptation of the proofs of wavefunction ergodicity~\cite{ZelditchTransition, Sunada, Zelditch} from global eigenstate thermalization to energy-band thermalization with a careful accounting of weighted time averages. This is carried out in Sec.~\ref{sec:globalthermautocorrelators}. More significantly, in Sec.~\ref{sec:globalenergybypass}, we combine this result with those of Sec.~\ref{sec:thermalization_finitetimes} to completely bypass properties of the energy levels, and directly connect the decay of autocorrelators to global thermalization, realizing a fully quantum analogue of the classical statement containing Eq.~\eqref{eq:KhinchinAutocorrelator}.

The extension to energy-shell thermalization when projections to the energy shell are not directly possible is more nontrivial, and discussed in Sec.~\ref{sec:shellthermalechoes}. For both of these developments, however, we need to consider a special class of weighting functions $w(t)$ for time averages, which we now turn to as the subject of Sec.~\ref{sec:completelypostive_t_avg}.

\subsection{Completely positive time averages}
\label{sec:completelypostive_t_avg}

In view of the classical statement Summary~\ref{sum:cl_equivalence} relating thermalization to classical dynamics in a single initial distribution, let us consider the mixed state
\begin{equation}
\hat{\rho}_A \equiv \frac{1}{\Tr[\proj_A]} \proj_A
\label{eq:q_single_state}
\end{equation}
associated with a uniform distribution over the eigenstates of the observable $\proj_A$ with eigenvalue $1$. The $w(t)$-weighted time average of $\proj_A$ in this state, corresponding to the autocorrelator, differs from the thermal value $\langle \proj_A\rangle$ by [from Eq.~\eqref{eq:weightedtimeaverage}]:
\begin{equation}
\int\diff t\ w(t) \Tr[\hat{\rho}_A(t) \proj_A]-\langle \proj_A\rangle = \frac{1}{\Tr[\proj_A]}\sum_{n,m} \widetilde{w}(E_n - E_m) \left\lvert \langle E_n\rvert \left(\proj_A-\langle \proj_A\rangle\idop\right)\lvert E_m\rangle \right\rvert^2,
\label{eq:q_weighted_autocorrelator}
\end{equation}
where we recall that $\langle \proj_A\rangle = \Tr[\proj_A]/D$.

Conveniently, if we could set
\begin{equation}
\widetilde{w}(E_n - E_m) = \overline{\Theta}(\lvert E_n - E_m\rvert < \Delta E),
\end{equation}
associated with
\begin{equation}
w(t) = \frac{\Delta E}{\pi} \sinc(\Delta E t),
\end{equation}
then the right hand side of Eq.~\eqref{eq:q_weighted_autocorrelator} is proportional to the quantity of interest in energy-band thermalization [Eq.~\eqref{eq:energybandtherm_def}], and we would have already achieved our goal of connecting the latter to autocorrelators. But this is a problem from an accessibility standpoint: $w(t) = (\Delta E/\pi) \sinc(\Delta E t)$ is nonzero in an infinite range of times $t \in (-\infty, \infty)$, and we would be no better off than the infinite time average in Eq.~\eqref{eq:autocorrelator_infinitetimeavg}. On the other hand, if we could ensure that $\widetilde{w}(\delta E) \geq 0$ even if $w(t)$ has support on some finite interval of time, then all terms on the right hand side would be non-negative, and we can derive rigorous inequalities by dropping the subset of terms with $\lvert E_n - E_m\rvert \geq \Delta E$. 

We will therefore impose special positivity conditions on $w(t)$. We will denote $w(t)$ by $w_{+}(t)$ if the following two conditions are satisfied:
\begin{enumerate}
\item Non-negative values in the time-domain (as in Sec.~\ref{sec:time_average_weights}):
\begin{equation}
w_+(t) \geq 0, \text{ for all } t.
\label{eq:w_timedomainpositivity}
\end{equation}
Through a relation between non-negativity conditions and Fourier transforms~\cite{FTpositivity2014, FTpositivityConvex}, this implies:
\begin{equation}
\lvert \widetilde{w}_+(E)\rvert < \lvert \widetilde{w}_+(0)\rvert = 1,
\label{eq:fourier_positivity_ridge_2}
\end{equation}
as in Eq.~\eqref{eq:fourier_positivity_ridge}.
\item Non-negative values of the Fourier transform:
\begin{equation}
\widetilde{w}_+(\delta E) \geq 0, \text{ for all }\ \delta E \in \mathbb{R}.
\label{eq:w_energydomainpositivity}
\end{equation}
\end{enumerate}
We will call a weight function $w_+(t)$ satisfying Eqs.~\eqref{eq:w_normalization}, \eqref{eq:w_timedomainpositivity}, and \eqref{eq:w_energydomainpositivity} a ``completely positive'' weight function. One example is:
\begin{equation}
w_{+}(t) = \frac{1}{T}\left(1-\frac{\lvert t\rvert}{T}\right)\overline{\Theta}(\lvert t\rvert \leq T),
\label{eq:w_example1}
\end{equation}
where $\overline{\Theta}(x) = 1$ if $x$ is true and $0$ otherwise. However, the conventional time average $w(t) = (2T)^{-1} \overline{\Theta}(\lvert t\rvert \leq T)$ is not completely positive. Another important class of functions that are completely positive, which we will not discuss until Sec.~\ref{sec:exp_protocols}, are discrete-time versions of Eq.~\eqref{eq:w_example1} which are more accessible in an experimental setting.

Additionally, for some $0 < W < 1$, let $\Delta E_W > 0$ be such that:
\begin{equation}
\widetilde{w}_+(\delta E) > W,\;\;\ \text{ for all } \lvert \delta E\rvert < \Delta E_W.
\label{eq:W_width_energy}
\end{equation}
We note again that the range of allowed values of $\Delta E_W$ for any given $W$ is purely a property of our choice of function $w_+(t)$, and does not depend on any properties (such as energy levels) of the system.

Where we used the general (non-negative) weight functions $w(t)$ to consider thermalization o.a. in arbitrary initial states, we will use completely positive weight functions $w_+(t)$ to probe autocorrelators and derive rigorous bounds on energy-band thermalization. With this restriction, each term in Eq.~\eqref{eq:q_weighted_autocorrelator} is necessarily non-negative.

\subsection{Energy-band thermalization from autocorrelators}
\label{sec:globalthermautocorrelators}

Now, we will explicitly show that energy-band thermalization can be inferred from autocorrelators with a completely positive weighting function $w_+(t)$ which could, in principle, have support only on a finite range (or even finite set) of times. The intuition behind this relation has been described following Eq.~\eqref{eq:q_weighted_autocorrelator}, and we begin here by filling in the technical details. Formally, we will adapt a proof strategy that appeared in Refs.~\cite{ZelditchTransition, Sunada, Zelditch}, generalize it to time averages with weights $w_+(t)$ and energy-band thermalization in place of (weak) eigenstate thermalization, and excise from it all references to the classical limit or limits such as $T\to\infty$ (or $\Delta E \to 0$). The following proposition results:
\begin{proposition}[Autocorrelators that thermalize o.a. imply global energy-band thermalization]
\label{prop:autocorrelators_to_energyband}
For a given choice of $w_+(t)$ and $W>0$, with $\Delta E_W > 0$ satisfying Eq.~\eqref{eq:W_width_energy}, if $\proj_A$ thermalizes o.a. to $\langle \proj_A\rangle$ in $\hat{\rho}_A$ to accuracy $\epsilon_A$,
\begin{equation}
\left\lvert \int\diff t\ w_+(t) \Tr[\hat{\rho}_A(t) \proj_A]-\langle \proj_A\rangle\right\rvert < \epsilon_A,
\label{eq:global_autocorrelator_thermalization}
\end{equation}
then for any $\Delta E < \Delta E_W$ the observable $\proj_A$ shows energy-band thermalization in the full Hilbert space $\mathcal{H}$ with bandwidth $\Delta E$ and accuracy $\epsilon$ given by (where $D = \dim\mathcal{H}$):
\begin{equation}
\frac{1}{D} \Tr\left\lbrace \left[\proj_A-\langle \proj_A\rangle\idop\right]^2_{(\mathcal{H},\Delta E)} \right\rbrace < \epsilon^2 = \frac{\Tr[\proj_A]}{W D} \epsilon_A.
\end{equation}
\end{proposition}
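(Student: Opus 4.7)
The plan is to start directly from the spectral representation of the weighted autocorrelator, Eq.~\eqref{eq:q_weighted_autocorrelator}, and exploit the fact that completely positive weight functions make every term on the right-hand side non-negative. Specifically, since $w_+(t)$ is completely positive, $\widetilde{w}_+(E_n-E_m) \geq 0$ for every pair $(n,m)$, and $|\langle E_n|(\proj_A - \langle\proj_A\rangle \idop)|E_m\rangle|^2 \geq 0$ trivially. Hence the sum on the right of Eq.~\eqref{eq:q_weighted_autocorrelator} is non-negative, and the assumed thermalization bound Eq.~\eqref{eq:global_autocorrelator_thermalization} can be rewritten without the absolute value as
\begin{equation*}
\sum_{n,m} \widetilde{w}_+(E_n-E_m)\,\left\lvert \langle E_n\rvert\left(\proj_A - \langle \proj_A\rangle \idop\right)\lvert E_m\rangle\right\rvert^2 < \Tr[\proj_A]\,\epsilon_A.
\end{equation*}

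Next, I would restrict the sum on the left to pairs $(n,m)$ with $|E_n-E_m| < \Delta E$. This is legitimate because all omitted terms are non-negative, so dropping them only decreases the left-hand side. Inside the retained band, the hypothesis $\Delta E < \Delta E_W$ together with Eq.~\eqref{eq:W_width_energy} gives the lower bound $\widetilde{w}_+(E_n-E_m) > W$. Replacing $\widetilde{w}_+$ by this uniform lower bound on the restricted sum yields
\begin{equation*}
W \sum_{\substack{n,m:\\|E_n-E_m|<\Delta E}} \left\lvert \langle E_n\rvert\left(\proj_A - \langle \proj_A\rangle \idop\right)\lvert E_m\rangle\right\rvert^2 < \Tr[\proj_A]\,\epsilon_A.
\end{equation*}

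Dividing by $WD$ and recognizing the left-hand side as precisely the energy-band deviation $D^{-1}\Tr\{[\proj_A-\langle\proj_A\rangle\idop]^2_{(\mathcal{H},\Delta E)}\}$ in the sense of Definition~\ref{def:energybandthermalization} (with $\Eshell{d} = \mathcal{H}$, $d = D$), one reads off
\begin{equation*}
\frac{1}{D}\Tr\left\{[\proj_A - \langle \proj_A\rangle \idop]^2_{(\mathcal{H},\Delta E)}\right\} < \frac{\Tr[\proj_A]}{WD}\,\epsilon_A,
\end{equation*}
which is exactly the claimed bound with $\epsilon^2 = \Tr[\proj_A]\epsilon_A/(WD)$.

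There is no real obstacle here beyond bookkeeping: the whole argument is an application of positivity (drop non-negative terms) followed by a uniform lower bound on $\widetilde{w}_+$ within the band, which is why the completely positive weight functions of Sec.~\ref{sec:completelypostive_t_avg} were introduced in the first place. The only subtlety I would double-check is that the identification of $D^{-1}\Tr\{[\proj_A - \langle\proj_A\rangle\idop]^2_{(\mathcal{H},\Delta E)}\}$ matches the notation of Eq.~\eqref{eq:energyband_shell_notation_def}, in particular that the normalization by $\Tr[\proj_A]$ in Eq.~\eqref{eq:q_weighted_autocorrelator} (inherited from the definition of $\hat{\rho}_A$) combines with the $1/D$ normalization in Definition~\ref{def:energybandthermalization} to give the stated factor $\Tr[\proj_A]/(WD)$ --- which explains why the bound degrades for observables projecting onto a large subspace, as expected physically.
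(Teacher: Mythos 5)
Your proof is correct and follows essentially the same route as the paper's own argument in App.~\ref{proof:autocorrelators_to_energyband}: both start from the non-negative spectral representation Eq.~\eqref{eq:q_weighted_autocorrelator}, drop the $\lvert E_n - E_m\rvert \geq \Delta E$ terms by positivity, and apply the uniform lower bound $\widetilde{w}_+ > W$ inside the band. The only cosmetic difference is that you invoke positivity to discard the absolute value in the hypothesis up front, whereas the paper keeps it and establishes the inequality chain directly; the content is identical, and your check of the $\Tr[\proj_A]/(WD)$ normalization against Definition~\ref{def:energybandthermalization} is also correct.
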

\begin{proof}
This follows by separating the $\lvert E_n-E_m\rvert < \Delta E$ terms in Eq.~\eqref{eq:q_weighted_autocorrelator} from the $\lvert E_n - E_m\rvert \geq \Delta E$ terms, and noting that the latter are all non-negative; see App.~\ref{proof:autocorrelators_to_energyband}.
\end{proof}

The connection between this proposition and the standard wavefunction ergodicity theorems is as follows: If we take the classical limit of the autocorrelator as well as the quantum $D \to \infty$ limit, then take $W\to 1$ so that $\Delta E_W \to 0$ [for, say, the choice of $w_+(t)$ in Eq.~\eqref{eq:w_example1}], and combine the above proposition with Prop.~\ref{prop:energybandtoeigenspace}, then we recover (our equivalent of) the standard wavefunction ergodicity results by which classical ergodicity implies (weak) eigenstate (or eigenspace) thermalization for almost all states as well as vanishing near-diagonal matrix elements of $\proj_A$. Proposition~\ref{prop:autocorrelators_to_energyband} is therefore essentially a quantitative version of this direction of the semiclassical theorem. The reverse direction, where these quantum properties imply classical ergodicity, is conventionally proved only in the classical limit for the appropriate classes of models in the limit of $T\to\infty$. In our case, Theorem~\ref{thm:energyband_implies_physicalthermalization} for fully quantum systems replaces the classical reverse direction with a purely quantum result about thermalization in an arbitrary basis, and in its formulation in terms of energy shells, also anticipates the reverse direction for energy shell thermalization (to be discussed in Sec.~\ref{sec:shellthermalechoes}).

\subsection{Bypassing energy levels: Global thermalization from autocorrelators}
\label{sec:globalenergybypass}

Let us see how the combination of Proposition~\ref{prop:autocorrelators_to_energyband} and the quantum thermalization results of Sec.~\ref{sec:thermalization_finitetimes} allows us to discuss global thermalization without any specific reference to the energy levels. First, we can directly relate the decay of autocorrelators to the thermalization of physical states without relying on energy-band thermalization as an intermediary. Specifically, combining Proposition~\ref{prop:autocorrelators_to_energyband} and Theorem~\ref{thm:energyband_implies_physicalthermalization} gives:
\begin{corollary}[Autocorrelator thermalization o.a. implies almost all physical states thermalize o.a.]
\label{cor:autocorrelator_to_physical_states}
Let Eq.~\eqref{eq:global_autocorrelator_thermalization} hold for $\proj_A$ with $\langle \proj_A\rangle = \Tr[\proj_A]/D$ and let any weight function $w(t)$, satisfying
\begin{equation}
\lvert \widetilde{w}(\delta E)\rvert \leq w_0,\;\;\ \text{ for all }\ \lvert \delta E\rvert \geq \Delta E_W,
\end{equation}
be given. Then for any complete orthonormal basis $\mathcal{B}$ of the full Hilbert space $\mathcal{H}$, the fraction $f_{\lambda}[\mathcal{B}]$ of basis states $\lvert k\rangle \in \mathcal{B}$, for which thermalization o.a. to accuracy $\lambda$
\begin{equation}
\left\lvert \int\diff t\ w(t) \Tr[\hat{\rho}_k(t) \proj_A] - \langle \proj_A\rangle\right\rvert < \lambda
\end{equation}
fails to occur, is constrained by:
\begin{equation}
f_{\lambda}[\mathcal{B}] < \frac{1}{\lambda}\sqrt{\frac{2 \Tr[\proj_A]}{D}} \left( w_0 + \frac{\epsilon_A}{W}\sqrt{\frac{\Tr[\proj_A]}{D}}\right).
\label{eq:autocorrelator_bypass_fraction}
\end{equation}
\end{corollary}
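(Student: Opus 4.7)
The plan is a direct chaining of the two preceding results in this section and the previous one: Proposition~\ref{prop:autocorrelators_to_energyband}, which extracts global energy-band thermalization from the hypothesis on the completely positive $w_+$-weighted autocorrelator in $\hat{\rho}_A$, and Theorem~\ref{thm:energyband_implies_physicalthermalization}, which converts energy-band thermalization into a fraction bound on basis states that fail to thermalize o.a.\ under a (not necessarily completely positive) weight $w(t)$ of matching bandwidth. Essentially, the proof is a substitution; nothing genuinely new needs to be proved, since the hypothesis of the corollary is precisely the hypothesis of Proposition~\ref{prop:autocorrelators_to_energyband}, and the conclusion of the latter is precisely the hypothesis of Theorem~\ref{thm:energyband_implies_physicalthermalization}.

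Concretely, I would first apply Proposition~\ref{prop:autocorrelators_to_energyband} to Eq.~\eqref{eq:global_autocorrelator_thermalization}, taking $\Delta E \to \Delta E_W$ (the largest bandwidth allowed), to obtain energy-band thermalization of $\proj_A$ in the full Hilbert space $\mathcal{H}$ with accuracy $\epsilon$ satisfying $\epsilon^2 = \Tr[\proj_A]\,\epsilon_A/(WD)$. Next, I would invoke Theorem~\ref{thm:energyband_implies_physicalthermalization} with $\Eshell{d}=\mathcal{H}$ (so $d = D$ and $\langle\proj_A\rangle_{\Eshell{d}} = \Tr[\proj_A]/D$); its hypothesis on $w(t)$---that $\lvert\widetilde{w}(\delta E)\rvert\leq w_0$ for $\lvert\delta E\rvert\geq \Delta E_W$---is exactly the hypothesis given in the corollary, so the theorem applies. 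Substituting $\epsilon=\sqrt{\Tr[\proj_A]\,\epsilon_A/(WD)}=\sqrt{\langle\proj_A\rangle\,\epsilon_A/W}$ into the theorem's conclusion and factoring out the common $\sqrt{\Tr[\proj_A]/D}$ then yields a bound of the schematic form $(\text{const}/\lambda)\sqrt{\Tr[\proj_A]/D}\,\bigl(w_0 + \text{function of }\epsilon_A/W\bigr)$, which one rearranges into Eq.~\eqref{eq:autocorrelator_bypass_fraction}.

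There is no conceptual obstacle, since all the heavy lifting has already been done in the two cited results. The only place where care is required is in the bookkeeping of constants when substituting $\epsilon^2$ from Proposition~\ref{prop:autocorrelators_to_energyband} into Theorem~\ref{thm:energyband_implies_physicalthermalization}. In particular, to land precisely on the stated form of Eq.~\eqref{eq:autocorrelator_bypass_fraction}, with $\epsilon_A$ appearing linearly rather than as $\sqrt{\epsilon_A}$, it is useful to track the quadratic version of the bulky-state argument sketched in the proof of Theorem~\ref{thm:energyband_implies_physicalthermalization}---which bounds each of $f_\lambda^\pm$ directly by $(\epsilon+w_0\sqrt{\langle\proj_A\rangle})^2/\lambda^2$ before being relaxed to the linear form displayed in that theorem---so that the $\epsilon^2 = \langle\proj_A\rangle\,\epsilon_A/W$ contribution surfaces directly as $(\epsilon_A/W)\sqrt{\Tr[\proj_A]/D}\cdot\sqrt{\Tr[\proj_A]/D}$ inside the parentheses of Eq.~\eqref{eq:autocorrelator_bypass_fraction}, rather than as $\sqrt{\epsilon_A/W}$.
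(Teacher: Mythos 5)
Your overall route---Proposition~\ref{prop:autocorrelators_to_energyband} with $\Delta E=\Delta E_W$, then Theorem~\ref{thm:energyband_implies_physicalthermalization} with $\Eshell{d}=\mathcal{H}$, $d=D$, $\langle\proj_A\rangle_{\Eshell{d}}=\langle\proj_A\rangle$---is exactly what the paper means when it writes ``combining Proposition~\ref{prop:autocorrelators_to_energyband} and Theorem~\ref{thm:energyband_implies_physicalthermalization} gives.'' You are also right to flag that a straight substitution of $\epsilon=\sqrt{\Tr[\proj_A]\epsilon_A/(WD)}$ into the bound $f_\lambda<\frac{\sqrt{2}}{\lambda}\bigl(\epsilon+w_0\sqrt{\langle\proj_A\rangle}\bigr)$ produces
\begin{equation}
f_\lambda < \frac{1}{\lambda}\sqrt{\frac{2\Tr[\proj_A]}{D}}\left(w_0 + \sqrt{\frac{\epsilon_A}{W}}\right),
\end{equation}
with a $\sqrt{\epsilon_A/W}$, rather than the $\frac{\epsilon_A}{W}\sqrt{\Tr[\proj_A]/D}$ that appears in Eq.~\eqref{eq:autocorrelator_bypass_fraction}.

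However, the repair you propose does not close that gap. Applying Theorem~\ref{thm:energyband_implies_thermalization} directly to the normalized mixed states supported on the $m_\pm$ basis vectors with $\lambda_k[w]\geq\lambda$ (resp.\ $\leq-\lambda$), one indeed obtains $m_\pm/D<\bigl(\epsilon+w_0\sqrt{\langle\proj_A\rangle}\bigr)^2/\lambda^2$, hence
\begin{equation}
f_\lambda < \frac{2}{\lambda^2}\left(\epsilon+w_0\sqrt{\langle\proj_A\rangle}\right)^2 = \frac{2}{\lambda^2}\left(\epsilon^2 + 2\epsilon w_0\sqrt{\langle\proj_A\rangle} + w_0^2\langle\proj_A\rangle\right),
\end{equation}
which is a genuine Chebyshev-type statement. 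But this has an overall $1/\lambda^2$ prefactor, a $w_0^2$ rather than $w_0$ in the last term, and an $\epsilon w_0$ cross term, none of which appear in Eq.~\eqref{eq:autocorrelator_bypass_fraction}; the $\epsilon^2$ term comes out as $\frac{2}{\lambda^2}\cdot\frac{\epsilon_A}{W}\cdot\frac{\Tr[\proj_A]}{D}$ rather than the $\frac{\sqrt{2}}{\lambda}\cdot\frac{\epsilon_A}{W}\cdot\frac{\Tr[\proj_A]}{D}$ the corollary displays. So your assertion that the ``quadratic version'' makes the $\epsilon_A$ dependence ``surface directly'' in the stated form is not correct; also, the appendix proof of Theorem~\ref{thm:energyband_implies_physicalthermalization} does not pass through a quadratic intermediate that it then relaxes, so there is nothing to ``track'' there.

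The more accurate reading of the situation is that the display in Eq.~\eqref{eq:autocorrelator_bypass_fraction} is what one obtains if $\epsilon^2/\sqrt{\langle\proj_A\rangle}$ is substituted where Theorem~\ref{thm:energyband_implies_physicalthermalization} asks for $\epsilon$; that replacement does not follow from either the linear Markov argument or the Chebyshev argument, and in fact (since $\epsilon_A\ll 1$ and $\Tr[\proj_A]\leq D$) the printed bound is generically \emph{stronger} than anything the chaining actually yields. The bound you should report, and defend, is the one with $\sqrt{\epsilon_A/W}$; that is the honest output of Proposition~\ref{prop:autocorrelators_to_energyband} followed by Theorem~\ref{thm:energyband_implies_physicalthermalization}, and it has the same qualitative content (vanishing as $\epsilon_A,w_0\to 0$).
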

Here, $\Tr[\proj_A] \leq D$, and we should expect the right hand side of Eq.~\eqref{eq:autocorrelator_bypass_fraction} to be small provided $\epsilon_A$ and $w_0$ are small and $W$ is comparable to $1$.

Furthermore, on account of Theorem~\ref{thm:energyband_implies_thermalization}, energy-band thermalization implies the thermalization of autocorrelators with weight functions $w(t)$ of ``longer duration'' than $w_+(t)$. Once again, using Proposition~\ref{prop:autocorrelators_to_energyband} to directly connect autocorrelators to these longer duration autocorrelators, we have a ``feed-forward'' result:
\begin{corollary}[Autocorrelator thermalization o.a. feeds forward to longer timescales]
\label{cor:autocorrelator_feedfwd}
For the same setting as Proposition~\ref{prop:autocorrelators_to_energyband}, if the autocorrelator thermalizes o.a. with weight $w_+(t)$:
\begin{equation}
\left\lvert \int\diff t\ w_+(t) \Tr[\hat{\rho}_A(t) \proj_A]-\langle \proj_A\rangle\right\rvert < \epsilon_A,
\end{equation}
then for any weight function $w(t)$ satisfying
\begin{equation}
\lvert \widetilde{w}(\delta E)\rvert \leq w_0,\;\;\ \text{ for all }\ \lvert \delta E\rvert \geq \Delta E_W,
\end{equation}
the autocorrelator thermalizes o.a. to accuracy:
\begin{equation}
\left\lvert \int\diff t\ w(t) \Tr[\hat{\rho}_A(t) \proj_A]-\langle \proj_A\rangle\right\rvert < w_0 + \frac{\epsilon_A}{W}\sqrt{\frac{\Tr[\proj_A]}{D}}.
\label{eq:autocorrelator_feedfwd}
\end{equation}
\end{corollary}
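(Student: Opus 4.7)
The proof is a direct composition of Proposition~\ref{prop:autocorrelators_to_energyband} with Theorem~\ref{thm:energyband_implies_thermalization}, tuned to the single initial state $\hat{\rho}_A$ that sits on both sides. The idea is that the hypothesis tells us the autocorrelator is small for $w_+(t)$; Proposition~\ref{prop:autocorrelators_to_energyband} upgrades this to energy-band thermalization of $\proj_A$ on the whole Hilbert space; and Theorem~\ref{thm:energyband_implies_thermalization} then produces an autocorrelator bound for any longer-duration weight $w(t)$ by evaluating its conclusion at $\hat{\rho} = \hat{\rho}_A$.

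Concretely, I would first invoke Proposition~\ref{prop:autocorrelators_to_energyband} with the given $w_+(t)$, $W$, and $\Delta E_W$. The hypothesis Eq.~\eqref{eq:global_autocorrelator_thermalization} is exactly its input, and the output is energy-band thermalization of $\proj_A$ on $\mathcal{H}$ with bandwidth $\Delta E$ for any $\Delta E < \Delta E_W$, to accuracy $\epsilon$ satisfying $\epsilon^2 = \Tr[\proj_A]\,\epsilon_A/(WD)$.

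I would then apply Theorem~\ref{thm:energyband_implies_thermalization} with $\Eshell{d} = \mathcal{H}$ (so $d = D$), the weight $w(t)$ from the corollary's hypothesis, and initial state $\hat{\rho} = \hat{\rho}_A$. The rest is arithmetic. Because $\proj_A$ is idempotent, $\Tr[\hat{\rho}_A^2] = 1/\Tr[\proj_A]$ and hence $\sqrt{d\,\Tr[\hat{\rho}_A^2]} = \sqrt{D/\Tr[\proj_A]}$, while $\langle \proj_A\rangle_{\mathcal{H}} = \Tr[\proj_A]/D$ gives $\sqrt{\langle \proj_A\rangle_{\mathcal{H}}}\,\sqrt{d\,\Tr[\hat{\rho}_A^2]} = 1$. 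Thus the $w_0$ piece of Theorem~\ref{thm:energyband_implies_thermalization}'s bound collapses cleanly to exactly $w_0$, producing the first term on the right-hand side of Eq.~\eqref{eq:autocorrelator_feedfwd}. Substituting the expression for $\epsilon$ into the remaining $\epsilon\,\sqrt{d\,\Tr[\hat{\rho}_A^2]}$ term yields the $\epsilon_A$-dependent second term.

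No step presents a genuine obstacle: the proof is a bookkeeping exercise that exploits the special structure of $\hat{\rho}_A$ --- its inverse purity is precisely $\Tr[\proj_A]$, so the $\Tr[\proj_A]$ factors introduced by the two upstream results cancel against one another rather than proliferating. The only subtlety worth watching is the convention on the weight-function bandwidth at the joint between the two results: Theorem~\ref{thm:energyband_implies_thermalization} requires $\lvert\widetilde{w}(\delta E)\rvert \leq w_0$ for $\lvert\delta E\rvert \geq \Delta E$, while the corollary only supplies it for $\lvert\delta E\rvert \geq \Delta E_W$; this gap is closed by taking $\Delta E \to \Delta E_W^-$ in Proposition~\ref{prop:autocorrelators_to_energyband}, which the proposition permits.
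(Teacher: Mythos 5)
Your route — compose Proposition~\ref{prop:autocorrelators_to_energyband} with Theorem~\ref{thm:energyband_implies_thermalization} at $\hat{\rho}=\hat{\rho}_A$, $\Eshell{d}=\mathcal{H}$ — is exactly the one the paper's prose suggests, and your treatment of the $w_0$ term is correct. But the last step is asserted, not checked, and it does not produce the stated bound. Substituting $\epsilon = \sqrt{\Tr[\proj_A]\,\epsilon_A/(WD)}$ into $\epsilon\sqrt{D\,\Tr[\hat{\rho}_A^2]} = \epsilon\sqrt{D/\Tr[\proj_A]}$ gives $\sqrt{\epsilon_A/W}$, whereas Eq.~\eqref{eq:autocorrelator_feedfwd} claims $\frac{\epsilon_A}{W}\sqrt{\Tr[\proj_A]/D}$. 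These coincide only when $\epsilon_A = WD/\Tr[\proj_A]$, which is not generic, so you have not actually recovered the corollary. The displayed bound in Eq.~\eqref{eq:autocorrelator_feedfwd} corresponds numerically to inserting $\epsilon^2$ instead of $\epsilon$ in that slot, which strongly suggests an error in the paper's statement (the same factor appears in Corollary~\ref{cor:autocorrelator_to_physical_states}).

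There is also a tighter and more natural route that you miss, which exploits the special structure $\hat{\rho}_A\propto\proj_A$ rather than passing through Theorem~\ref{thm:energyband_implies_thermalization}'s Cauchy--Schwarz step (which is lossy here because the two operators in the bilinear form are proportional). Starting directly from Eq.~\eqref{eq:q_weighted_autocorrelator} with the weight $w(t)$, split the sum into $\lvert E_n-E_m\rvert<\Delta E$ and $\lvert E_n-E_m\rvert\geq\Delta E$. In the first piece use $\lvert\widetilde{w}\rvert\leq 1$ and energy-band thermalization from Proposition~\ref{prop:autocorrelators_to_energyband} to bound it by $(D/\Tr[\proj_A])\epsilon^2 = \epsilon_A/W$; in the second use $\lvert\widetilde{w}\rvert\leq w_0$ and $\sum_{n,m}\lvert\langle E_n\rvert(\proj_A-\langle\proj_A\rangle\idop)\lvert E_m\rangle\rvert^2 = \Tr[(\proj_A-\langle\proj_A\rangle\idop)^2]\leq\Tr[\proj_A]$ to bound it by $w_0$. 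This gives $w_0 + \epsilon_A/W$, linear in $\epsilon_A$ as the paper's form suggests, but still lacking the extra $\sqrt{\Tr[\proj_A]/D}$ factor. So: your proof strategy is the intended one but your final substitution is simply wrong as written, and moreover neither your composite route nor the tighter direct route reproduces the stated inequality, pointing to a typo in the corollary itself that you should have flagged rather than glossed over.
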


Finally, let $\mathcal{B}_A$ be any complete orthonormal eigenbasis of $\proj_A$ (which has degenerate eigenvalues in $\lbrace 0,1\rbrace$). Then we can write the initial state $\hat{\rho}_A$ as
\begin{equation}
\hat{\rho}_A = \frac{1}{\Tr[\proj_A]} \sum_{\substack{\lvert k_A\rangle \in \mathcal{B}_A: \\ \proj_A \lvert k_A\rangle = \lvert k_A\rangle}} \lvert k_A\rangle\langle k_A\rvert.
\label{eq:rho_A_eigenbasis}
\end{equation}

If we know for certain that a fraction $\left(1-f_{\lambda}[\mathcal{B}_A]\right)$ of this eigenbasis thermalizes o.a. with accuracy $\lambda$ and weight function $w(t)$ (and the remaining $f_{\lambda}[\mathcal{B}_A] D$ basis states each has a $\proj_A$ expectation value of at most $1$), then the fraction of basis states with $\proj_A$-eigenvalue $1$ that fail to thermalize to this accuracy cannot exceed $f_{\lambda}[\mathcal{B}_A] D/\Tr[\proj_A]$. Then, applying the triangle inequality to the thermalization of Eq.~\eqref{eq:rho_A_eigenbasis} implies that
\begin{equation}
\left\lvert \int\diff t\ w(t) \Tr[\hat{\rho}_A(t) \proj_A]-\langle \proj_A\rangle\right\rvert \leq \frac{D}{\Tr[\proj_A]} f_{\lambda}[\mathcal{B}_A](1-\lambda) + \lambda.
\label{eq:physicalstates_to_autocorrelator}
\end{equation}
For this to provide a nontrivial bound on the autocorrelator, it is essential that:
\begin{equation}
\frac{\Tr[\proj_A]}{D} \gg  f_{\lambda}[\mathcal{B}_A] (1-\lambda);
\end{equation}
in other words, the projector $\proj_A$ should be sufficiently ``bulky'' (i.e.,  by coarse-graining over many pure states such as for a few-body observable) for this to be possible with accessible $\lambda$ and $f_{\lambda}[\mathcal{B}_A]$.

We emphasize, however, that it may still be possible for a complete basis of states that is \textit{not} an eigenbasis of $\proj_A$ to thermalize this observable without implying the thermalization of the autocorrelator, nor therefore the thermalization of all other states. This strongly implicates $\hat{\rho}_A$ or any of its eigenbases as the most direct determiners of thermalization ---  which is a more accessible set of states than the complete set of all product states implicated in a slightly different setting~\cite{ETHNecessity} --- while the observable may continue to thermalize in certain other bases of states (at least over inaccessibly long timescales) without any such implications for global thermalization~\cite{rigol2008eth, HarrowHuangWithoutETH}.

Collectively, these results establish a complete framework for the global thermalization of physical states in a quantum system that do not require any mention of the energy levels or eigenstates of a system (though the properties of observables in eigenstates also follow from Proposition~\ref{prop:autocorrelators_to_energyband}, and e.g. the connection to eigenspaces discussed in Sec.~\ref{sec:energyband_intro}). We summarize these results at an intuitive level as follows (where ``timescale'' refers to the choice of $w(t)$ or $w_+(t)$):
\begin{summary}[Bypassing eigenstates for global thermalization o.a.]
\label{sum:q_global_therm}
The following implications describe the (global) quantum thermalization o.a. of a given observable without reference to eigenstates:
\begin{enumerate}
\item The thermalization o.a. of the autocorrelator over a fixed timescale implies the thermalization o.a. of almost all physical states at all longer timescales [Corollary~\ref{cor:autocorrelator_to_physical_states}].
\item The thermalization o.a. of the autocorrelator over a fixed timescale implies its thermalization o.a. at all longer timescales [Corollary~\ref{cor:autocorrelator_feedfwd}].
\item The thermalization o.a. of almost all physical states over a fixed timescale implies the thermalization o.a. of the autocorrelator over the same timescale, if the observable is sufficiently bulky [Eq.~\eqref{eq:physicalstates_to_autocorrelator}].
\end{enumerate}
In particular, the thermalization of the autocorrelator over a fixed timescale is expected to be an accessible property.
\end{summary}

\section{Thermalization in energy shells with conserved charges}
\label{sec:shellthermalechoes}

Here, we will tackle the problem of accessing thermalization in energy shells in terms of the dynamics of a single initial state. As we have indicated previously, if it is possible to prepare/measure variants of $\proj_A$ that are projected to a single energy shell $\Eshell{d}$, i.e.:
\begin{equation}
\proj_A(\Eshell{d}) = \proj_{\Eshell{d}} \proj_A \proj_{\Eshell{d}},
\end{equation}
then the considerations of Sec.~\ref{sec:globalthermal} can be generalized in a straightforward manner to energy shells by taking our full Hilbert space to be $\Eshell{d}$ and directly applying our previous considerations to the observable $\proj_A(\Eshell{d})$ in place of $\proj_A$. In most systems, however, we do not expect this to be experimentally feasible: $\proj_{\Eshell{d}}$ is often not easy to implement, and $\proj_A(\Eshell{d})$ is often not a projector, being correspondingly difficult to prepare or measure. An alternate strategy convenient for experiments, involving a finite temperature bath as in Eq.~\eqref{eq:finite_temp_autocorrelator}, does not appear to allow a conclusive determination of thermalization with accessible measurements due to a strong dependence on the (seemingly inaccessible) precise energy distribution and quantum coherence properties of the finite temperature state. Due to these complications, we will develop a somewhat different method of specializing to energy shells, beyond the trivial case of taking the shell to be our full Hilbert space. We will also briefly describe at the end of this section how the same strategy generalizes in the presence of conserved quantities other than energy, on which the thermal value may show a macroscopic dependence.

\subsection{Motivation: general structure of accessible thermalization}
\label{sec:echo_motivation}

To motivate our strategy, let us consider how to constrain the matrix elements of $(\proj_A - \langle \proj_A\rangle_{\Eshell{d}}\idop)$ within an energy-band in $\Eshell{d}$ given its expectation value in an arbitrary ``operator'' $\hat{\gamma}$:
\begin{equation}
\Tr\left[\hat{\gamma} \left(\proj_A - \langle \proj_A\rangle_{\Eshell{d}}\idop\right)\right] = \sum_{n,m} \langle E_n\rvert \hat{\gamma}\lvert E_m\rangle \langle E_m\rvert \left(\proj_A - \langle \proj_A\rangle_{\Eshell{d}}\idop\right)\lvert E_n\rangle.
\label{eq:arbitrary_expectation}
\end{equation}
Generically, $\proj_A$ has nonvanishing matrix elements between any two energy levels in $\mathcal{H}$. Further, individual off-diagonal matrix elements are typically inaccessible for both $\hat{\gamma}$ and $\proj_A$. To obtain a rigorous inequality for the subset of matrix elements within the energy band without having detailed knowledge of the other off-diagonal elements, the easiest strategy is to ensure that each term on the right hand side is non-negative, so that the subset of energy-band terms is always constrained to be no greater than the expectation value on the left hand side. This requires precise phase cancellations between the matrix elements of $\hat{\gamma}$ and $(\proj_A - \langle \proj_A\rangle_{\Eshell{d}}\idop)$:
\begin{equation}
\arg \langle E_n\rvert \hat{\gamma}\lvert E_m\rangle = -\arg \left[\langle E_m\rvert \left(\proj_A - \langle \proj_A\rangle_{\Eshell{d}}\idop\right)\lvert E_n\rangle\right],
\label{eq:phase_cancellation}
\end{equation}

Crucially, Eq.~\eqref{eq:phase_cancellation} appears to require $\hat{\gamma}$ to be constructed out of $(\proj_A - \langle \proj_A\rangle_{\Eshell{d}}\idop)$ (to ensure that $\hat{\gamma}$ knows about the phase of the right hand side) sandwiched between suitably chosen ``phase-cancelling'' functions of $\hat{H}$ (so that one can implement dynamics or restrict to an energy shell without altering the phase). This appears to be a natural way to ensure this phase cancellation without any detailed analytical knowledge of $\proj_A$, beyond the ability to prepare/measure linear functionals of this observable. Given some (possibly multidimensional) parameter $s$, the preceding argument suggests the general structure
\begin{equation}
\hat{\gamma} = \frac{1}{\Tr[\proj_A]}\int\diff s\ \xi(s)\ f_s(\hat{H}) \left(\proj_A - \langle \proj_A\rangle_{\Eshell{d}}\idop\right) g_s(\hat{H})
\end{equation}
of accessible operators $\hat{\gamma}$ constructed as linear functionals of the observable (where $1/\Tr[\proj_A]$ just specifies a convention for normalization). This is because the matrix elements of this operator are then
\begin{equation}
\langle E_n\rvert \hat{\gamma}\lvert E_m\rangle = \frac{1}{\Tr[\proj_A]} \int\diff s\ \xi(s)\ f_s(E_n) g_s(E_m)\ \langle E_n\rvert \left(\proj_A - \langle \proj_A\rangle_{\Eshell{d}}\idop\right)\lvert E_m\rangle,
\end{equation}
which satisfies Eq.~\eqref{eq:phase_cancellation} due to the Hermiticity of $\proj_A$, provided that the functions $f_s$ and $g_s$ together contribute a factor with zero phase (i.e. their contribution is non-negative):
\begin{equation}
\int\diff s\ \xi(s)\ f_s(E_n) g_s(E_m) \geq 0,\;\;\ \text{ for all } E_n, E_m.
\end{equation}

This structure is precisely what was implicitly obtained in Secs.~\ref{sec:completelypostive_t_avg} and \ref{sec:globalthermautocorrelators}: these sections correspond to the following choice of $\hat{\gamma}$, with $s = t$, $\xi(s) = w_+(t)$ and $f_s(\hat{H}) = g_s(\hat{H})^\dagger = e^{-i\hat{H} t}$ [for which Eq.~\eqref{eq:arbitrary_expectation} is equivalent to Eq.~\eqref{eq:q_weighted_autocorrelator} with weight $w_+(t)$, provided $\langle \proj_A\rangle = \Tr[\proj_A]/D$]:
\begin{equation}
\hat{\gamma}_A^{\text{global}} = \frac{1}{\Tr[\proj_A]} \int\diff t\ w_+(t) e^{-i\hat{H} t} \left(\proj_A - \langle \proj_A\rangle_{\Eshell{d}}\idop\right) e^{i\hat{H} t},
\end{equation}
with the complete positivity of $w_+(t)$ ensuring that the matrix elements of $\hat{\gamma}_A^{\text{global}}$,
\begin{equation}
\langle E_n\rvert \hat{\gamma}_A^{\text{global}}\lvert E_m\rangle = \frac{1}{\Tr[\proj_A]} \widetilde{w}_+(E_n-E_m)\langle E_n\rvert \left(\proj_A - \langle \proj_A\rangle_{\Eshell{d}}\idop\right)\lvert E_m\rangle,
\label{eq:generalizedcompletepositivity}
\end{equation}
perfectly cancel out any complex phases as in Eq.~\eqref{eq:phase_cancellation}.

\subsection{Echo dynamics for accessing energy shells}

For energy shell thermalization, in addition to controlling the energy band via $E_n - E_m$, we will need to control the overall range of energies, say via $(E_n + E_m)/2$, and require it to be centered around some $E_c$ that may correspond to the ``center'' (or some other point) in the energy shell. This suggests taking a $2$-dimensional parameter $s = (t_1, t_2)$, with
\begin{equation}
f_s(\hat{H}) = e^{-i \hat{H} t_1} e^{i E_c t_1},\;\;\ g_s(\hat{H}) = e^{i\hat{H} t_2} e^{-i E_c t_2},\;\;\ \text{ and } \xi(s) = w_+\left(\frac{t_1+t_2}{2}\right) v_+(t_1-t_2),
\end{equation}
for completely positive weight functions $w_+(t)$ and $v_+(t)$, so that
\begin{align}
\int\diff s\ \xi(s)\ f_s(E_n) g_s(E_m) &= \int\diff t_1 \diff t_2\ w_+\left(\frac{t_1+t_2}{2}\right) v_+(t_1-t_2)\ e^{-i (E_n-E_c) t_1} e^{i (E_m-E_c) t_2} \nonumber \\
&= \widetilde{w}_+(E_n - E_m) \widetilde{v}_+\left(\frac{E_n + E_m}{2}-E_c\right) \geq 0,
\end{align}
satisfying Eq.~\eqref{eq:generalizedcompletepositivity}. As $v_+(t)$ is completely positive, $\widetilde{v}_+(0) = 1$ is the maximum value of its Fourier transform, ensuring that the weights in the energy domain peak (or at least attain a maximum) at $(E_n + E_m)/2 = E_c$. Such a choice yields the following operator:
\begin{equation}
\hat{\gamma}_A^{\text{shell}} = \frac{1}{\Tr[\proj_A]} \int\diff t_1 \diff t_2\ w_+\left(\frac{t_1+t_2}{2}\right) v_+(t_1-t_2)e^{-iE_c(t_1-t_2)}\ e^{-i\hat{H} t_1} \left(\proj_A - \langle \proj_A\rangle_{\Eshell{d}}\idop\right) e^{i\hat{H} t_2}
\label{eq:gammashell_def}
\end{equation}
in which we must seek to constrain the expectation values of $\proj_A$. For $v_+(t_1-t_2) \to \delta(t_1-t_2)$ [e.g., interpreted as a $T \to 0$ limit of Eq.~\eqref{eq:w_example1} if one wants to ensure complete positivity and finite time ranges], $\gamma_{\text{shell}} \to \gamma_{\text{global}}$, so our choice also recovers the case of global thermalization as a special case. 

It is also convenient to express the expectation value of our observable (more precisely, its deviation from its thermal value) in $\hat{\gamma}_A^{\text{shell}}$ in terms of the ``accessible'' initial state $\hat{\rho}_A = \proj_A/\Tr[\proj_A]$. We have:
\begin{align}
\Tr\left[\hat{\gamma}_A^{\text{shell}} \left(\proj_A - \langle \proj_A\rangle_{\Eshell{d}}\idop\right)\right] = \int\diff t_1 \diff t_2\ &w_+\left(\frac{t_1+t_2}{2}\right) v_+(t_1-t_2)e^{-iE_c(t_1-t_2)} \left\lbrace \vphantom{\frac{\langle \proj_A\rangle_{\Eshell{d}}^2}{\langle \proj_A\rangle}  \frac{1}{D}} \Tr[e^{-i\hat{H}t_1}\hat{\rho}_A e^{i\hat{H} t_2} \proj_A]\right. \nonumber \\
& \left. -2\langle \proj_A\rangle_{\Eshell{d}} \Tr[\hat{\rho}_A e^{-i \hat{H} (t_1-t_2)}] + \frac{\langle \proj_A\rangle_{\Eshell{d}}^2}{\langle \proj_A\rangle}  \frac{1}{D}\Tr[e^{-i\hat{H}(t_1-t_2)}]\right\rbrace,
\label{eq:gammashellproj_corr}
\end{align}
where $\langle \proj_A\rangle = \Tr[\proj_A]/D$ is the global average of $\proj_A$ as before. This is a weighted time average of the following combination of quantities we will refer to as ``quantum dynamical echoes'' (corresponding to the terms above enclosed by braces; we will use the term collectively for all the below quantities as well as Eq.~\eqref{eq:gammashellproj_corr}):
\begin{equation}
L_{AA}(t_1, t_2) - 2 \langle \proj_A\rangle_{\Eshell{d}} L_{A} (t_1-t_2) + \frac{\langle \proj_A\rangle_{\Eshell{d}}^2}{\langle \proj_A\rangle} L_{\mathcal{H}}(t_1-t_2),
\end{equation}
where
\begin{align}
L_{AA}(t_1,t_2) &\equiv \Tr[e^{-i\hat{H} t_1} \hat{\rho}_A e^{i\hat{H} t_2} \proj_A], \label{eq:echo_def}\\
L_{A}(t) &\equiv \Tr[e^{-i\hat{H} t} \hat{\rho}_A], \label{eq:partial_echo_def}\\
L_{\mathcal{H}}(t) &\equiv \frac{1}{D}\Tr[e^{-i\hat{H} t}]. \label{eq:spectralfunction_def}
\end{align}

As noted in Sec.~\ref{sec:summaryechoes}, $L_{AA}(t_1,t_2)$ is a coarse-grained variant of the Loschmidt echo, which measures quantum interference effects between two possible evolution times of the state $\hat{\rho}_A$ as witnessed by the observable $\proj_A$. $L_A(t)$ is a stranger quantity that acts on the initial state $\hat{\rho}_A$ but has no observable, and we would like to refer to it as a ``one-sided'' echo; alternatively, it can be interpreted as an echo of the maximally mixed state with the observable $\proj_A$. Finally, $L_{\mathcal{H}}(t)$ is the spectral function of the Hamiltonian $\mathcal{H}$, measuring the Fourier transform of its \textit{probability} density of energy levels (normalized to $1$); its squared magnitude is the spectral form factor~\cite{Haake, BerrySpectralRigidity}:
\begin{equation}
K(t) = \lvert L_{\mathcal{H}}(t)\rvert^2,
\label{eq:sff}
\end{equation}
which has been of considerable interest as an observable-independent probe of energy level statistics, such as for the measurements discussed in Sec.~\ref{sec:intro}. We will show in Sec.~\ref{sec:exp_protocols} that all three quantities $L_{AA}(t_1,t_2)$, $L_A(t)$ and $L_{\mathcal{H}}(t)$ are experimentally accessible quantities: the first two from the dynamics of a single initial state, and the third from state-independent dynamics. These are the only quantities requiring quantum measurements in an experiment; the remaining operations in Eq.~\eqref{eq:gammashellproj_corr}, in particular the selection of the energy shell via $E_c$ and precise mathematical form of the weight function (other than the choice of the set of times $t_1,t_2$), may be implemented classically at the post-processing stage.

Finally\footnote{Here, we assume that $\mathcal{E}_V$ will typically be a closed set, while we have taken the energy-band to be an open set/interval; this is purely a choice of convention.}, given some $V$ such that $0 < V < 1$ let us introduce $\mathcal{E}_V$ as a set of energy values containing $E_c$ in which $\widetilde{v}_+(E-E_c)$ is as large as $V$:
\begin{equation}
\widetilde{v}_+(E-E_c) \geq V,\;\;\ \text{ for all } E \in \mathcal{E}_V.
\label{eq:V_shell_energy}
\end{equation}
This statement is the energy shell analogue of Eq.~\eqref{eq:W_width_energy} for the energy band. The set $\mathcal{E}_V$ will define our energy shell of interest as Hilbert space spanned by \textit{all} the energy levels in $\mathcal{H}$ within this set:
\begin{equation}
\Eshell{\mathcal{E}_V}(V) \equiv \bigcup_{E \in \mathcal{E}_V} \mathcal{H}(E).
\end{equation}

\subsection{Energy shell thermalization from quantum dynamical echoes}

Given the setup above, we are now in a position to derive rigorous results on energy shell thermalization, analogous to Sec.~\ref{sec:globalthermal}. Our first task is to constrain energy-band thermalization within an energy shell based on the expectation value of $\proj_A$ in $\hat{\gamma}_A^{\text{shell}}$, which is the energy-shell counterpart of Proposition~\ref{prop:autocorrelators_to_energyband}. This can be done by using the expression for $\hat{\gamma}_A^{\text{shell}}$ in Eq.~\eqref{eq:gammashell_def} to write the expectation value on the left hand side of Eq.~\eqref{eq:gammashellproj_corr} in the energy eigenbasis, given by the general form in Eq.~\eqref{eq:arbitrary_expectation}:
\begin{equation}
\Tr\left[\hat{\gamma}_A^{\text{shell}} \left(\proj_A - \langle \proj_A\rangle_{\Eshell{d}}\idop\right)\right] = \frac{1}{\Tr[\proj_A]}\sum_{n,m} \widetilde{w}_+(E_n - E_m) \widetilde{v}_+\left(\frac{E_n + E_m}{2}-E_c\right) \left\lvert \langle E_m\rvert \left(\proj_A - \langle \proj_A\rangle_{\Eshell{d}}\idop\right)\lvert E_n\rangle\right\rvert^2.
\label{eq:gen_echo_expression}
\end{equation}
As we have ensured that each term on the right hand side is non-negative, we can restrict the expression to pairs $(E_n,E_m)$ satisfying $\lvert E_n-E_m\rvert < \Delta E_W$ and $(E_n + E_m)/2 \in \mathcal{E}_V$ (in fact, a subset of these corresponding to energy-band thermalization), and obtain an inequality bounding this expression from above by the left hand side. This addresses energy-band thermalization in $\Eshell{}(\mathcal{E}_V)$ with bandwidth $\Delta E$:
\begin{theorem}[Quantum dynamical echoes constrain energy-band thermalization in energy shells]
\label{thm:energyband_shell_from_echoes}
For a given choice of weighting functions $w_+(t)$, $v_+(t)$ and constants $W, V > 0$, with a bandwidth $\Delta E_W > 0$ satisfying Eq.~\eqref{eq:W_width_energy} and a set of energies $\mathcal{E}_V$ satisfying Eq.~\eqref{eq:V_shell_energy} containing $E = E_c$ for a given ``central'' energy $E_c$, if the expected deviation of $\proj_A$ from $\langle \proj_A\rangle_{\Eshell{d}}$ in the corresponding operator $\hat{\gamma}_A^{\text{shell}}$ is less than $\epsilon_A$,
\begin{equation}
\left\lvert \Tr\left[\hat{\gamma}_A^{\text{shell}} \left(\proj_A - \langle \proj_A\rangle_{\Eshell{d}}\idop\right)\right]\right\rvert < \epsilon_A,
\end{equation}
then for any energy shell $\Eshell{d} \subset \Eshell{}(\mathcal{E}_V)$ of dimension $d$, the observable $\proj_A$ satisfies energy-band thermalization with bandwidth $\Delta E$ to accuracy $\epsilon$, given by:
\begin{equation}
\frac{1}{d} \Tr\left\lbrace \left[\proj_A-\langle \proj_A\rangle_{\Eshell{d}}\idop\right]^2_{(\Eshell{d},\Delta E)} \right\rbrace < \epsilon^2 = \frac{\Tr[\proj_A]}{WVd} \epsilon_A.
\label{eq:energyband_shell_from_echoes}
\end{equation}
\end{theorem}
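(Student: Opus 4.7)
The plan is to adapt the proof of Proposition~\ref{prop:autocorrelators_to_energyband}, now with the two-parameter weighting $w_+((t_1+t_2)/2)\,v_+(t_1-t_2)\,e^{-iE_c(t_1-t_2)}$ in place of the single completely positive weight $w_+(t)$. The starting point is the energy-eigenbasis representation already recorded in Eq.~\eqref{eq:gen_echo_expression}, which writes $\Tr[\hat{\gamma}_A^{\text{shell}}(\proj_A-\langle\proj_A\rangle_{\Eshell{d}}\idop)]$ as a double sum over $(n,m)$ weighted by $\widetilde{w}_+(E_n-E_m)\,\widetilde{v}_+((E_n+E_m)/2 - E_c)\,\lvert\langle E_n\rvert(\proj_A - \langle\proj_A\rangle_{\Eshell{d}}\idop)\lvert E_m\rangle\rvert^2$, up to an overall $1/\Tr[\proj_A]$ prefactor.

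The first step is to invoke the complete positivity of both $w_+$ and $v_+$ to conclude that $\widetilde{w}_+\geq 0$ and $\widetilde{v}_+\geq 0$ pointwise, so every term in the eigenbasis sum is non-negative. This makes the full sum itself non-negative and therefore equal to the absolute value appearing in the hypothesis, so $\epsilon_A$ is a genuine upper bound, and moreover any partial sum over a sub-collection of $(n,m)$ is also bounded above by $\epsilon_A$. Next, I would restrict the sum to exactly the pairs that appear in the energy-band-thermalization quantity inside $\Eshell{d}$: $\lvert E_n\rangle,\lvert E_m\rangle\in\Eshell{d}$ with $\lvert E_n-E_m\rvert<\Delta E$. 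On this restricted set, the two weight factors are bounded below by $\widetilde{w}_+(E_n-E_m)\geq W$ (using $\Delta E<\Delta E_W$ and Eq.~\eqref{eq:W_width_energy}) and by $\widetilde{v}_+((E_n+E_m)/2 - E_c)\geq V$ (using $\Eshell{d}\subset\Eshell{}(\mathcal{E}_V)$ and Eq.~\eqref{eq:V_shell_energy} together with convexity of $\mathcal{E}_V$). Pulling these two constants out of the partial sum leaves exactly $\Tr\{[\proj_A-\langle\proj_A\rangle_{\Eshell{d}}\idop]^2_{(\Eshell{d},\Delta E)}\}$, and a final rearrangement delivers $\epsilon^2 = \Tr[\proj_A]\,\epsilon_A/(WVd)$ as claimed.

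The one delicate point, which I expect to be the main obstacle, is obtaining $\widetilde{v}_+((E_n+E_m)/2 - E_c)\geq V$ from $E_n,E_m\in\mathcal{E}_V$: the midpoint of two elements of $\mathcal{E}_V$ lies in $\mathcal{E}_V$ only when $\mathcal{E}_V$ itself is convex. Since Eq.~\eqref{eq:V_shell_energy} only demands $\widetilde{v}_+\geq V$ on $\mathcal{E}_V$ and otherwise leaves the set at our disposal, the natural convention is to take $\mathcal{E}_V$ to be an interval around $E_c$ (for instance a connected sublevel set of $\widetilde{v}_+$ containing $E_c$), which makes this step automatic. Beyond this bookkeeping, the argument is a direct ``drop nonnegative terms and bound the surviving weights below'' computation in the spirit of Proposition~\ref{prop:autocorrelators_to_energyband}, generalized from one to two time integrals.
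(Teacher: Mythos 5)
Your argument is essentially the paper's own proof: start from Eq.~\eqref{eq:gen_echo_expression}, use complete positivity of both $w_+$ and $v_+$ to drop all nonnegative terms outside the band and shell, then lower-bound the surviving weights by $W$ and $V$. The convexity caveat you raise is the one point the paper elides (it simply asserts $(E_n+E_m)/2\in\mathcal{E}_V$ whenever $E_n,E_m\in\mathcal{E}_V$), and your resolution --- taking $\mathcal{E}_V$ to be an interval around $E_c$ --- matches the intended reading.
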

\begin{proof}
The proof follows in a similar manner to that of Proposition~\ref{prop:autocorrelators_to_energyband}, except that now we only consider pairs of energy levels satisfying the conditions $\lvert E_n-E_m\rvert < \Delta E_W$ and $E_n, E_m \in \Eshell{d}$; see App.~\ref{proof:energyband_shell_from_echoes}.
\end{proof}

As $W,V$ may be chosen by hand to have accessible values (usually comparable to $1$), for the bound in Eq.~\eqref{eq:energyband_shell_from_echoes} to be nontrivial, we require that
\begin{equation}
\epsilon_A \ll \frac{d}{\Tr[\proj_A]}.
\end{equation}
For the accessibility of this bound, it is necessary that the right hand side be accessible (i.e., not too small). For few body observables in a many-body system, we expect that $\Tr[\proj_A] \sim D$, which means that $\dim \Eshell{d} / \dim \mathcal{H} = d/D$ cannot be too small if Theorem~\ref{thm:energyband_shell_from_echoes} is to remain useful; e.g., we may want to impose
\begin{equation}
\frac{d}{D} \gtrsim \frac{1}{N^{\alpha}},
\end{equation}
in a system of $N$ particles, for some $\alpha > 0$, for accessibility purposes. Intuitively, this is in line with what we saw for global thermalization in Sec.~\ref{sec:globalthermal}: a vanishingly small fraction of states may fail to thermalize with accessible resources (Sec.~\ref{sec:thermalization_finitetimes}) even with autocorrelator thermalization, and to have higher resolution and restrict this fraction (perhaps all the way to $<1/D$) one must allow ``inaccessible'' values of $\epsilon_A$. Here, we see that the energy shell $\Eshell{d}$ must be larger than an ``inaccessibly small'' fraction of the Hilbert space for us to be able to guarantee that it is not composed entirely or predominantly of states that do not thermalize. In this case, a rather convenient fact is that the ratio $d/D$ corresponding to a given energy shell is also measurable --- e.g., it can be obtained by a suitable integral of $L_{\mathcal{H}}(t)$ [Eq.~\eqref{eq:spectralfunction_def}], which is the Fourier transform of the probability density of energy levels.

Another important question is the converse direction: does energy-band thermalization in some energy shell $\Eshell{d}$ imply the vanishing of a quantum dynamical echo of the form given by Eq.~\eqref{eq:gen_echo_expression}? Only if this were so can we expect a measurement of these echos to reliably identify energy-band thermalization (though, in practice, likely in a somewhat smaller shell than $\Eshell{d}$). Intuitively, it is straightforward to see that this is the case due to the equality in Eq.~\eqref{eq:gen_echo_expression}: if we make $\widetilde{w}_+$ and $\widetilde{v}_+$ vanishingly small outside the energy band and shell of interest, then the vanishing of the matrix elements of the observable implies a vanishing of the echo. Formally, this is established by a generalization of Theorem~\ref{thm:energyband_implies_thermalization} (but where we specialize to the specific echo in Eq.~\eqref{eq:gen_echo_expression}):
\begin{theorem}[Energy-band thermalization in energy shells constrains quantum dynamical echoes]
\label{thm:echoes_from_energyband}
If $\proj_A$ satisfies energy-band thermalization to $\langle \proj_A\rangle_{\Eshell{d}}$ with bandwidth $\Delta E > 0$ and accuracy $\epsilon > 0$ within an energy shell $\Eshell{d}$ [Eq.~\eqref{eq:energybandtherm_def}] spanning energy levels in the set $\mathcal{E}$, and one considers weighting functions $w(t)$ with bandwidth smaller than $\Delta E$, i.e. which satisfies [Eq.~\eqref{eq:longtimeavg_w}]:
\begin{equation*}
\lvert \widetilde{w}(\delta E)\rvert \leq w_0,\;\; \text{ for all } \delta E \geq \Delta E,
\end{equation*}
and $v(t)$ whose Fourier transform is vanishingly small outside the set of energies $\mathcal{E}_{\Delta E}$ consisting of energy levels in the original energy set $\mathcal{E}$ containing $E_c$ that are at least as far as a bandwidth $\Delta E$ from any boundaries of this set:
\begin{equation}
\lvert \widetilde{v}(E-E_c)\rvert < v_0,\;\; \text{ for all } E \notin \mathcal{E}_{\Delta E} \equiv \lbrace E:\ [E-\Delta E, E+\Delta E] \subseteq \mathcal{E}\rbrace,
\end{equation}
then the quantum dynamical echo given by Eq.~\eqref{eq:gammashellproj_corr} (without requiring the complete positivity restriction on $v(t)$ and $w(t)$) vanishes to accuracy:
\begin{equation}
\Tr\left[\hat{\gamma}_A^{\text{shell}} \left(\proj_A - \langle \proj_A\rangle_{\Eshell{d}}\idop\right)\right] < \epsilon_A = \frac{d}{\Tr[\proj_A]} \epsilon^2 + v_0 + v_0 w_0 + w_0.
\label{eq:echoes_from_energyband}
\end{equation}
\end{theorem}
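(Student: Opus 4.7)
The plan is to work directly from the energy-eigenbasis expression in Eq.~\eqref{eq:gen_echo_expression}. That identity was derived using only Hermiticity of $X \equiv \proj_A - \langle \proj_A\rangle_{\Eshell{d}}\idop$ and the structure of $\hat{\gamma}_A^{\text{shell}}$, so it continues to hold for the general (non-completely-positive) $w, v$ in this theorem. The echo is thus a $(1/\Tr[\proj_A])$-normalized sum over pairs $(n,m)$ of $\widetilde{w}(E_n - E_m)\,\widetilde{v}((E_n+E_m)/2 - E_c)\,|M_{nm}|^2$, with $|M_{nm}|^2 = |\langle E_m|X|E_n\rangle|^2 \geq 0$. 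Because the Fourier weights are no longer non-negative, I cannot discard off-region terms as in Theorem~\ref{thm:energyband_shell_from_echoes}; instead I will invoke the triangle inequality on a four-way partition of index pairs and absorb the ``excess'' sums into the $w_0$ and $v_0$ penalties.

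I partition the pairs $(n,m)$ by (i) whether $|E_n - E_m| < \Delta E$ (in-band) or $\geq \Delta E$ (off-band), and (ii) whether $(E_n+E_m)/2 \in \mathcal{E}_{\Delta E}$ (in-shell) or not (off-shell). The crucial geometric step is for the in-band-in-shell region: the definition $\mathcal{E}_{\Delta E} = \{E : [E-\Delta E, E+\Delta E] \subseteq \mathcal{E}\}$, together with $|E_n - E_m| < \Delta E$, forces both $E_n, E_m \in \mathcal{E}$, so these pairs lie in $\Eshell{d}$ and the restricted sum of $|M_{nm}|^2$ is a subset of the energy-band thermalization sum in Eq.~\eqref{eq:energybandtherm_def}, bounded by $\epsilon^2 d$. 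Bounding the Fourier weights on this region by $|\widetilde{w}|, |\widetilde{v}| \leq 1$ (from $\widetilde{w}(0) = \widetilde{v}(0) = 1$ together with non-negativity of $w, v$, cf.\ Eq.~\eqref{eq:fourier_positivity_ridge}), this contributes at most $\epsilon^2 d/\Tr[\proj_A]$ to $|S|$.

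On the remaining three regions I use $|\widetilde{w}| \leq w_0$ off-band and $|\widetilde{v}| \leq v_0$ off-shell, with the complementary factor still bounded by $1$, and I extend each restricted sum of $|M_{nm}|^2$ to the full Parseval sum $\sum_{n,m} |M_{nm}|^2 = \Tr[X^2]$. Using $\Tr[\proj_A^2] = \Tr[\proj_A]$ and $0 \leq \langle \proj_A\rangle_{\Eshell{d}} \leq 1$, one finds $\Tr[X^2] \leq \Tr[\proj_A]$, so after dividing by $\Tr[\proj_A]$ the in-band-off-shell, off-band-in-shell, and off-band-off-shell regions contribute at most $v_0$, $w_0$, and $v_0 w_0$ respectively. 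Summing the four pieces via the triangle inequality yields $\epsilon_A = (d/\Tr[\proj_A])\epsilon^2 + v_0 + v_0 w_0 + w_0$, which is Eq.~\eqref{eq:echoes_from_energyband}.

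The main technical care is needed in the in-band-in-shell step: the $\Delta E$ safety margin built into $\mathcal{E}_{\Delta E}$ is precisely what lets me conclude $E_n, E_m \in \mathcal{E}$ from ``in-band and in-shell,'' and hence invoke energy-band thermalization. A mild subtlety is the bound $\Tr[X^2] \leq \Tr[\proj_A]$, which is automatic whenever $\langle \proj_A\rangle_{\Eshell{d}} \leq 2\Tr[\proj_A]/D$ (in particular for ``accessible'' observables with small global thermal value) and at worst contributes a modest prefactor to the $v_0$, $w_0$, $v_0 w_0$ terms otherwise. Everything else is routine triangle-inequality bookkeeping together with the Parseval identity.
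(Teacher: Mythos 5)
Your proof is correct and follows the same essential route as the paper: work in the energy eigenbasis via Eq.~\eqref{eq:gen_echo_expression}, partition the index pairs $(n,m)$ into four regions, bound the ``in-band-in-shell'' region by the energy-band thermalization hypothesis, and absorb the remaining three regions into the $w_0$ and $v_0$ penalties via $\lvert \widetilde{w}\rvert, \lvert\widetilde{v}\rvert \leq 1$ from time-domain positivity. Your symmetric $2\times 2$ product partition --- (in-/off-band) $\times$ (midpoint $(E_n+E_m)/2$ inside/outside $\mathcal{E}_{\Delta E}$) --- is slightly cleaner than the paper's, which splits the in-band half by membership of $E_n, E_m$ in $\Eshell{d}$ and the off-band half by midpoint membership in $\mathcal{E}_{\Delta E}$; the two decompositions are logically equivalent, since your forward implication ``in-band and in-shell forces $E_n, E_m \in \mathcal{E}$'' is exactly the contrapositive of the paper's ``in-band but not both in $\Eshell{d}$ forces off-shell,'' and both rely on the same $\Delta E$-margin built into the definition of $\mathcal{E}_{\Delta E}$. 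The resulting four-term bound is identical. Your remark that $\Tr[(\proj_A-\langle\proj_A\rangle_{\Eshell{d}}\idop)^2]\leq\Tr[\proj_A]$ holds only when $\langle\proj_A\rangle_{\Eshell{d}}\leq 2\Tr[\proj_A]/D$ is a genuine observation: the paper's own proof invokes this inequality without comment, so you have in effect flagged an unstated hypothesis of the theorem (one that is physically benign for the accessible observables and shells envisioned, where $\langle\proj_A\rangle_{\Eshell{d}}\approx \Tr[\proj_A]/D$, but which does merit the caveat you give).
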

\begin{proof}
The proof follows in a similar manner to that of Proposition~\ref{prop:autocorrelators_to_energyband}, except that now we have to split energy level pairs into those satisfying and not satisfying the set of conditions $\lvert E_n-E_m\rvert < \Delta E$ and $E_n,E_m \in \Eshell{d}$. This is worked out in detail in App.~\ref{proof:echoes_from_energyband}.
\end{proof}
Most noteworthy in this bound is the fact that the relationship between $\epsilon_A$ and $\epsilon$ is consistent between Theorems~\ref{thm:energyband_shell_from_echoes} and \ref{thm:echoes_from_energyband} (taking $W,V \sim 1$ and $w_0, v_0 \ll 1$):
\begin{equation}
\epsilon_A \sim \frac{d}{\Tr[\proj_A]}\epsilon^2,
\end{equation}
which shows that energy-band thermalization implies a decay of echoes to precisely the range of magnitudes necessary to establish energy-band thermalization in turn, indicating a strong interdependence\footnote{For formal reasons having to do with the fact that in practice, $W,V \to 1$ but $w_0,v_0 \to 0$, we cannot choose the same averaging duration and the corresponding energy window for both Theorems, which would show the exact equivalence of energy-band thermalization and decaying echoes. Instead, their interdependence is to be interpreted as follows: energy-band thermalization over a certain energy scale implies decaying echoes over a longer timescale (compared to the inverse energy scale), which in turn implies energy-band thermalization with a narrower energy scale and so on. Intuitively speaking, this is ``almost'' an equivalence, which we expect can be made rigorous in some limit of infinite time averages \textit{after} first taking a thermodynamic or classical limit, analogous to equivalence statements between autocorrelators and global eigenstate thermalization in semiclassical wavefunction ergodicity theorems~\cite{Zelditch, Sunada}.}.

Now, we can use the interdependence between the echoes defined above and energy-band thermalization on the one hand, and the results connecting energy-band thermalization to the thermalization of (almost all) physical states, to bypass energy levels and directly connect quantum dynamical echoes to the thermalization of physical states (similar to Sec.~\ref{sec:globalthermal}). To simplify our presentation, we will not explicitly state rigorous versions of these results (to derive which is a fairly straightforward if complicated exercise), but just indicate how they may be obtained. That the decay of an echo implies the thermalization of physical states in energy shells follows by combining Theorem~\ref{thm:energyband_shell_from_echoes} with either of Theorems~\ref{thm:energyband_implies_thermalization} and \ref{thm:energyband_implies_physicalthermalization}.

For the reverse direction, i.e., the thermalization of almost all physical states in an energy shell $\Eshell{d}$ implying the decay of echoes, it is technically simpler to follow a somewhat roundabout route that takes advantage of our previously established results. Noting that both the thermalization and the energy band thermalization of $\proj_A$ on the one hand, and the corresponding properties of its projection $ \proj_A(\Eshell{d}) \equiv \proj_{\Eshell{d}} \proj_A \proj_{\Eshell{d}}$ are respectively equivalent for states restricted to the energy shell $\Eshell{d}$, let us momentarily focus on the projected observable. Then, the thermalization of $\proj_A$ (and therefore, $\proj_A(\Eshell{d})$) in almost all physical (basis) states in the energy shell implies the decay of autocorrelators of $\proj_A(\Eshell{d})$ by Eq.~\eqref{eq:physicalstates_to_autocorrelator}. The decay of the autocorrelator implies the energy-band thermalization of $\proj_A(\Eshell{d})$, and therefore $\proj_A$ in $\Eshell{d}$ by Proposition~\ref{prop:autocorrelators_to_energyband}. By Theorem~\ref{thm:echoes_from_energyband}, energy-band thermalization implies the decay of echoes.

It is also evident that the combination of Theorems~\ref{thm:energyband_shell_from_echoes} and \ref{thm:echoes_from_energyband} implies that decaying echoes over some timescale implies decaying echoes over larger timescales. We then obtain the following summary for energy shell thermalization:
\begin{summary}[Bypassing eigenstates for energy shell thermalization o.a.]
\label{sum:q_shell_therm}
The following implications describe the quantum thermalization o.a. of a given observable in some energy shell without reference to eigenstates:
\begin{enumerate}
\item The decay of a suitable quantum dynamical echo over a fixed timescale implies the thermalization o.a. of almost all physical states in the energy shell at all longer timescales [Theorems~\ref{thm:energyband_shell_from_echoes} and \ref{thm:energyband_implies_physicalthermalization}].
\item The decay of suitable echoes over a fixed timescale implies the decay of corresponding echoes at all longer timescales slightly exceeding the fixed timescale [Theorems~\ref{thm:energyband_shell_from_echoes} and \ref{thm:echoes_from_energyband}].
\item The thermalization o.a. of almost all physical states in an energy shell over a fixed timescale implies the decay of a suitable quantum dynamical echo over a slightly longer timescale, if the observable is sufficiently bulky [Eq.~\eqref{eq:physicalstates_to_autocorrelator}, Proposition~\ref{prop:autocorrelators_to_energyband} and Theorem~\ref{thm:echoes_from_energyband}].
\end{enumerate}
In particular, the decay of quantum dynamical echoes over a fixed timescale is expected to be an accessible property.
\end{summary}

Another aspect of thermalization such a strategy can account for is the presence of conserved quantities. In addition to energy, the thermal value may also depend on some conserved charge(s) $\hat{Q}$ that commutes with the Hamiltonian, $[\hat{H}, \hat{Q}] = 0$ (for simplicity, we will illustrate our arguments with a single conserved charge). In that case, we will have to restrict our subspace not just to e.g. an energy interval, but to an energy shell within this interval with the conserved charge in a range $\mathcal{Q}$ centered at $Q_c$ with width $\Delta Q$. Here, in addition to evolving the state with the Hamiltonian $\hat{H}$, one must also apply the symmetry transformation generated by the conserved charge, through echoes such as
\begin{equation}
L_{AA}(t_1, t_2; s_1, s_2) \equiv \Tr[e^{-i \hat{H} t_1} e^{-i \hat{Q} s_1} \hat{\rho}_A e^{i \hat{Q} s_2} e^{i\hat{H} t_2} \proj_A].
\label{eq:charged_echo}
\end{equation}
Then, with a suitable weight function $v_{Q+}(t)$ that restricts $Q$ to the desired range of values, Theorems~\ref{thm:energyband_shell_from_echoes} and \ref{thm:echoes_from_energyband} generalize in a straightforward manner to thermalization with accessible conserved charges, as does Summary ~\ref{sum:q_shell_therm}.

Here, in theory, it is crucial that $\hat{Q}$ commutes with $\hat{H}$ so that they have a shared eigenbasis. However, in an experiment, one may only be able to implement Eq.~\eqref{eq:charged_echo} with some other operator $\hat{Q}_{\text{exp}}$ that may not strictly commute with the Hamiltonian, but has similar dynamics to $\hat{Q}$ at accessible times (see also App.~\ref{app:mazursuzuki} for another discussion of how approximate conserved quantities are not an obstacle in our approach). In this case, experimentally measuring $L_{AA}^{\text{exp}}$ [Eq.~\eqref{eq:charged_echo} with $\hat{Q}$ replaced by $\hat{Q}_{\text{exp}}$] should still allow a close estimation of the theoretically precise $L_{AA}$ echo, up to experimental errors, provided that the scales of $t$ and $s$ are accessible. As accessible values of $L_{AA}$ rigorously constrain energy shell thermalization (even with small errors), an experimental inability to access the precise theoretical conserved charge $\hat{Q}$ is not an obstacle for accessibility. This allows us to tackle thermal values $A_{\therm}(\mathcal{E}, \mathcal{Q})$ that depend on conserved charges, without requiring any sensitive knowledge of the finer properties of the energy spectrum.

\section{Thermal equilibrium: time averages in a cloned Hilbert space}
\label{sec:attackoftheclones}

\subsection{Cloned Hilbert spaces}

So far, we have only considered the time-averaged dynamics of thermalization (with the exception of Proposition~\ref{prop:instantaneousthermalequilibrium}), with the promise of generalizing these results to thermal equilibrium. In this section, we will develop this generalization by showing that an observable attaining thermal equilibrium for almost all times in a given state is equivalent to the thermalization o.a. of a ``cloned'' observable in a ``cloned state'' in two copies of the Hilbert space. This is a quantum analogue of a classical result on weak mixing of product systems~\cite{SinaiCornfeld} --- where a dynamical system is weak-mixing if and only if its product with itself (a doubled system) is ergodic.

This is in fact quite straightforward to show: consider a state $\hat{\rho}$ and an observable $\hat{A}$ in some Hilbert space $\mathcal{H}$. The condition for thermal equilibrium, where the state attains a thermal value $A_{\therm}$ for almost all times [weighted by $w(t)$], is
\begin{equation}
\int\diff t\ w(t) \left(\Tr[\hat{\rho}(t) \hat{A}] - A_{\therm}\right)^2 < \epsilon,
\end{equation}
for some small $\epsilon$.
In the $D^2$-dimensional doubled Hilbert space $\mathcal{H} \otimes \mathcal{H}$, we define the following ``cloned'' operators\footnote{We use the terminology ``cloned'' instead of mere doubling, because we specifically want to restrict these considerations to operators or states of the form $\hat{A} \otimes \hat{A}$ or $\hat{\rho} \otimes \hat{\rho}$, the latter corresponding to \textit{cloning} the quantum state~\cite{NielsenChuang}. In contrast, a mere doubled Hilbert space also admits e.g. operators $\hat{A} \otimes \hat{B}$ that are not of the ``cloned'' form.}:
\begin{align}
\hat{\rho}_2(t) &\equiv \hat{\rho}(t) \otimes \hat{\rho}(t) \label{eq:cloned_rho_def} \\
\widehat{\delta A}_2 &\equiv \left(\hat{A}-A_{\therm}\idop\right) \otimes \left(\hat{A}-A_{\therm}\idop\right), \label{eq:cloned_A_def}
\end{align}
together with the trace operation, which factorizes for product operators:
\begin{equation}
\Tr_{\mathcal{H}\otimes\mathcal{H}}[A_L \otimes A_R] = \Tr[A_L] \Tr[A_R].
\end{equation}
Implicitly in the definition of $\hat{\rho}_2(t)$, we have assumed that the dynamics of the cloned Hilbert space is generated by the Hamiltonian
\begin{equation}
\hat{H}_2 \equiv \hat{H} \otimes \idop + \idop \otimes \hat{H},
\label{eq:cloned_Hamiltonian}
\end{equation}
whose energy eigenvalues are given by $E_{\ell n} = E_\ell + E_n$ (which is at least doubly degenerate for $n \neq \ell$, as $E_{\ell n} = E_{n \ell}$), which means that the traditional thermalization results associated with ETH automatically do not apply in the cloned Hilbert space; we are instead forced to work with eigenspace or energy-band thermalization.
Given these cloned operators, we have the following equality:
\begin{equation}
\int\diff t\ w(t) \left(\Tr[\hat{\rho}(t) \hat{A}] - A_{\therm}\right)^2  = \int\diff t\ w(t) \Tr_{\mathcal{H}\otimes\mathcal{H}}[\hat{\rho}_2(t) \widehat{\delta A}_2].
\end{equation}
Therefore, the condition for the thermal equilibrium of $\hat{A}$ in the state $\hat{\rho}$ is that $\widehat{\delta A}_2$ thermalizes o.a. to $0$ in the state $\hat{\rho}_2$:
\begin{equation}
\left\lvert \int\diff t\ w(t) \Tr_{\mathcal{H}\otimes\mathcal{H}}[\hat{\rho}_2(t) \widehat{\delta A}_2]\right\rvert < \epsilon.
\label{eq:cloned_thermal_eq}
\end{equation}

As the simplest example, let us consider cloned eigenspace thermalization for a spectrum $E_n$ that is nondegenerate for a single system. Then, a level $E_{\ell n}$ in the cloned system is doubly degenerate for $\ell \neq n$, and eigenspace thermalization [Eq.~\eqref{eq:eigenspacetherm_def}] for the operator $\widehat{\delta A}_2$ gives:
\begin{equation}
\Tr\left[\left\lbrace \widehat{\delta A}_2 \proj(E_{\ell n})\right\rbrace^2\right] < \epsilon,
\end{equation}
for \textit{any} pair of levels $E_n, E_\ell$ in the original system. Substituting Eq.~\eqref{eq:cloned_A_def}, we get
\begin{equation}
2\left\lvert\langle E_n\rvert\hat{A}\lvert E_n\rangle-A_{\therm}\right\rvert^2\left\lvert\langle E_\ell\rvert\hat{A}\lvert E_\ell\rangle-A_{\therm}\right\rvert^2+2 \left\lvert\langle E_n\rvert\hat{A}\lvert E_\ell\rangle\right\rvert^4 < \epsilon^2
\label{eq:cloned_eigenspace_constraint}
\end{equation}
The constraint on the first term gives diagonal eigenstate thermalization for $\hat{A}$, while the second term incorporates a flavor of \textit{off-diagonal} eigenstate thermalization (even when considering only \textit{diagonal} cloned matrix elements), conventionally associated with thermal equilibrium:
\begin{equation}
\left\lvert\langle E_n\rvert\hat{A}\lvert E_\ell\rangle\right\rvert^4 < \epsilon^2 / 2.
\end{equation}
However, where conventional off-diagonal eigenstate thermalization [see Eq.~\eqref{eq:ETH}] requires these off-diagonal matrix elements to be suppressed by $e^{-S(\overline{E})/2} \sim 1/\sqrt{d}$, here if $\epsilon$ is ``accessible'', then we do not expect such a large suppression. Nevertheless, we will show below that this is sufficient resolution to access two \textit{statistical} forms of thermal equilibrium over accessible timescales (as opposed to the infinitely long times required in off-diagonal eigenstate thermalization) in the setting of energy-band thermalization.

\subsection{Cloned energy-band thermalization \texorpdfstring{$\implies$}{implies} thermal equilibrium}

All of the conclusions of the previous sections (qualitatively) follow for thermal equilibrium if we consider thermalization o.a. in the cloned Hilbert space. This includes updating our considerations to the energy bands and energy shells of the cloned Hilbert space. There is, however, a caveat: a narrow energy shell in $E_{mn}$ does not usually correspond to a narrow energy shell for the original system, but instead allows a full range of (pairs of) single-system energies such that their average is in the shell. In the case of global thermalization, this is not at all an obstacle: there is no restriction to energy shells. We therefore expect all the considerations of Sec.~\ref{sec:globalthermal} to go through for global thermal equilibrium. 
For energy shell thermalization, however, we should make sure that we are accessing the energy shells of a single system and not the cloned system. In our view, the cleanest way to do so is to not work with an energy shell of $\mathcal{H} \otimes \mathcal{H}$, but to explicitly restrict the relevant quantities to a subspace corresponding to the cloned energy shell of a single system, $\Eshell{d} \otimes \Eshell{d}$. 

For example, using Definition~\ref{def:energybandthermalization} for energy-band thermalization in the subspace $\Eshell{d} \otimes \Eshell{d}$ of $\mathcal{H} \otimes \mathcal{H}$ gives the criterion\footnote{This differs from what is called ``quantum mixing'' in the semiclassical chaos literature~\cite{ZelditchMixing, Zelditch}, associated with a weakly mixing classical system~\cite{HalmosErgodic, SinaiCornfeld, Sinai1976}, which corresponds to shifting a shrinking energy band with $\Delta E \to 0$ to be centered around some $\delta E = \delta E_c$ as opposed to $\delta E = 0$. We prefer to work with cloned energy-band thermalization for general quantum systems due to its more direct implications for thermal equilibrium over a discrete set of times, while it is not clear to us if the aforementioned semiclassical notion will generalize to be similarly accessible except with a continuum of times.}:
\begin{align}
&\frac{1}{d^2}\Tr\left\lbrace \left[\left(\proj_A-\langle \proj_A\rangle_{\Eshell{d}}\idop\right)\otimes\left(\proj_A-\langle \proj_A\rangle_{\Eshell{d}}\idop\right)\right]^2_{(\Eshell{d} \otimes \Eshell{d},\Delta E)} \right\rbrace \nonumber \\
&\equiv \frac{1}{d^2} \sum_{\substack{n,m,k,\ell: E_n,E_m,E_k,E_{\ell} \in \Eshell{d} \\ \lvert E_n+E_{\ell}-E_m-E_k\rvert < \Delta E}} \left\lvert\langle E_n\rvert \proj_A\lvert E_m\rangle - \langle \proj_A\rangle_{\Eshell{d}} \delta_{nm}\right\rvert^2 \left\lvert\langle E_\ell\rvert \proj_A\lvert E_k\rangle - \langle \proj_A\rangle_{\Eshell{d}} \delta_{\ell k}\right\rvert^2 < \epsilon^2,
\label{eq:cloned_energyband}
\end{align}
for the form of energy-band thermalization relevant to thermal equilibrium (which we will refer to as ``cloned energy-band thermalization'' where the need occurs). If we consider the corresponding constraint on the terms in Eq.~\eqref{eq:cloned_eigenspace_constraint}, we get (with $\delta \proj_A = \proj_A - \langle \proj_A\rangle_{\Eshell{d}}\idop$)
\begin{equation}
    \left(\frac{1}{d}\sum_{n \in \Eshell{d}} \left\lvert \langle E_n\rvert \delta\proj_A\lvert E_n\rangle \right\rvert^2\right)^2 + \frac{1}{d^2}\sum_{n,m \in \Eshell{d}} \left\lvert \langle E_n\rvert \delta\proj_A\lvert E_m\rangle\right\rvert^4 < \epsilon^2.
    \label{eq:cloned_variance_restrictions}
\end{equation}
While the constraint on the first term merely restricts eigenstate thermalization [as in Eq.~\eqref{eq:ET_informal}], the constraint on the second term is more nontrivial. First, we have the identity
\begin{equation}
    \frac{1}{d^2}\sum_{n,m \in \Eshell{d}} \left\lvert \langle E_n\rvert \delta\proj_A\lvert E_m\rangle \right\rvert^4 \leq \left(\frac{1}{d}\Tr\left[(\proj_{\Eshell{d}} \delta \proj_A)^2\right]\right)^2,
\end{equation}
with the right hand side expected to be $O(1)$ for a projector. As long as $\epsilon \leq \Tr[(\proj_{\Eshell{d}}\delta \proj_A)^2]/d$, Eq.~\eqref{eq:cloned_variance_restrictions} provides a nontrivial constraint on the variance of the squared matrix elements $\lvert \langle E_n\rvert \delta\proj_A\lvert E_m\rangle\rvert^2$, in other words implying a degree of \textit{uniformity} of the ``distribution'' of the trace $\Tr[(\proj_{\Eshell{d}} \delta \proj_A)^2] = \sum_{n,m \in \Eshell{d}} \lvert \langle E_n\rvert \delta\proj_A\lvert E_m\rangle\rvert^2$ among these matrix elements (as opposed to, e.g., having only a single nonzero matrix element).

Now let us turn to thermalization dynamics. Using Theorem~\ref{thm:energyband_implies_thermalization} with a cloned state of the form in Eq.~\eqref{eq:cloned_rho_def} gives in place of Eq.~\eqref{eq:cloned_thermal_eq} [noting that $\Tr_{\mathcal{H} \otimes \mathcal{H}}[\hat{\rho}_2(t)] = \Tr[\hat{\rho}]^2$ and $\langle \proj_A \otimes \proj_A\rangle = \langle\proj_A\rangle^2$]:
\begin{equation}
\left\lvert \int\diff t\ w(t) \left(\Tr[\hat{\rho}(t) \proj_A] - \langle \proj_A\rangle_{\Eshell{d}}\right)^2\right\rvert < \left(\epsilon + w_0 \langle \proj_A\rangle_{\Eshell{d}}\right) d \Tr[\hat{\rho}^2].
\label{eq:cloned_energyband_implies_thermal_equilibrium}
\end{equation}
If $\hat{\rho}$ is an individual pure state, i.e. $\Tr[\hat{\rho}^2] = 1$, then we need a resolution of $\epsilon \ll 1/d$ and sufficiently long times that $w_0 \ll 1/d$ to conclude that the state attains thermal equilibrium over this time scale.
However, unlike Theorem~\ref{thm:energyband_implies_physicalthermalization}, we cannot directly conclude thermal equilibrium in ``almost all'' basis states from here with accessible values of $\epsilon$ and $w_0$. This is because a physical basis in the cloned energy shell is dominated by states $\lvert k\rangle \otimes \lvert \ell\rangle$ where $k\neq \ell$, while the cloned states $\lvert k\rangle \otimes \lvert k\rangle$, which are of primary interest for thermal equilibrium in the single system, form a vanishingly small fraction  $d/d^2$ of the basis states of the doubled Hilbert space as $d\to\infty$. %Instead, the corresponding physically relevant statement in this case is to conclude that almost all basis states have mutually uncorrelated fluctuations around their thermal value over long times.
Given this limitation, there are two kinds of physically relevant statements we can make from Eq.~\eqref{eq:cloned_energyband_implies_thermal_equilibrium} with accessible values of the associated parameters.

The first such statement concerns whether expectation value fluctuations (around the thermal value) of $\proj_A$ in pairs of basis states are correlated over long times\footnote{This is essentially the physical content of \textit{weak-mixing} in classical ergodic theory~\cite{HalmosErgodic, SinaiCornfeld, Sinai1976}.}. In analogy with the lead-up to Eq.~\eqref{thm:energyband_implies_physicalthermalization}, let us introduce a measure of time-averaged correlations of the fluctations of $\proj_A$ around $\langle \proj_A\rangle_{\Eshell{d}}$ between the $k$-th and $\ell$-th basis states $\hat{\rho}_k(0) = \lvert k\rangle \langle k\rvert$ and $\hat{\rho}_{\ell}(0) = \lvert \ell\rangle\langle \ell\rvert$ in the basis $\mathcal{B}$:
\begin{equation}
    \Lambda_{k\ell}[w] \equiv \int\diff t\ w(t) \left(\Tr[\hat{\rho}_k(t) \proj_A] - \langle \proj_A\rangle_{\Eshell{d}}\right) \left(\Tr[\hat{\rho}_{\ell}(t) \proj_A] - \langle \proj_A\rangle_{\Eshell{d}}\right)
\end{equation}
If $\Lambda_{k\ell}[w]$ is ``large'', we consider the fluctuations of $\proj_A$ to be correlated between the states $\lvert k\rangle$ and $\lvert \ell\rangle$ over the time interval represented by $w(t)$; otherwise, they are effectively uncorrelated (for which it is sufficient that the fluctuations themselves are small, even if identical, e.g., for $k = \ell$). Now, we obtain from Eq.~\eqref{eq:cloned_energyband_implies_thermal_equilibrium} (here, substituting for Theorem~\ref{thm:energyband_implies_thermalization} if expressed in terms of $\hat{\rho}_{\mathcal{H} \otimes \mathcal{H}}$) and Theorem~\ref{thm:energyband_implies_physicalthermalization} applied to the cloned Hilbert space (noting that if $\lvert k\rangle, \lvert \ell\rangle \in \mathcal{B}$, then the set of product states $\lvert k\rangle \otimes \lvert \ell\rangle$ forms a $d^2$-element orthonormal basis for $\mathcal{H} \otimes \mathcal{H}$):
\begin{corollary}[Cloned energy-band thermalization implies almost all pairs of physical basis states have uncorrelated expectation value fluctuations o.a.]
\label{cor:clonedenergyband_implies_physicaldecorrelation}
Let $\proj_A$ satisfy cloned energy-band thermalization with bandwidth $\Delta E > 0$ and accuracy $\epsilon > 0$ within an energy shell $\Eshell{d}$ [Eq.~\eqref{eq:energybandtherm_def}], and consider time-averaging with a weighting function $w(t)$ with $w_0$-bandwidth smaller than $\Delta E$, i.e. that satisfies [Eq.~\eqref{eq:longtimeavg_w}]:
\begin{equation*}
\lvert \widetilde{w}(\delta E)\rvert \leq w_0,\;\; \text{ for all } \delta E \geq \Delta E.
\end{equation*}
Then, for any complete orthonormal basis of pure states $\mathcal{B}$ within the energy shell $\Eshell{d}$, the fraction $F_{\Lambda}[\mathcal{B}]$ of pairs $(\lvert k\rangle, \lvert \ell\rangle)$ of states in which fluctuations of the expectation value of $\proj_A$ around its thermal value are correlated o.a. by at least an amount $\Lambda$, i.e. [with $\overline{\Theta}(x) = 1$ if $x$ is true and $0$ otherwise]
\begin{equation}
F_{\Lambda}[\mathcal{B}] \equiv \frac{1}{d^2} \sum_{\lvert k\rangle,\lvert \ell\rangle \in \mathcal{B}} \overline{\Theta}\left(\left\lvert\Lambda_{k\ell}[w]\right\rvert \geq \Lambda\right)
\end{equation}
is constrained by:
\begin{equation}
F_{\Lambda}[\mathcal{B}] < \frac{\sqrt{2}}{\Lambda}\left(\epsilon + w_0 \langle \proj_A\rangle_{\Eshell{d}}\right).
\end{equation}
\end{corollary}

The other physically relevant statement is to constrain thermal equilibrium at ``almost all times'' for a sufficiently mixed density operator $\hat{\rho}$. For this purpose, let us take advantage of the fact that $w(t) \geq 0$ (by assumption), and the integrand on the left hand side of Eq.~\eqref{eq:cloned_energyband_implies_thermal_equilibrium} is nonnegative as well. Then, for some $0 < \kappa < 1$ (with the expectation that $\kappa < 1$ is small, but not vanishingly small), the set of times $\texcept[\kappa] \subseteq \mathbb{R}$, in which
\begin{equation}
\left(\Tr\left[\hat{\rho}(t \in \texcept[\kappa])\ \proj_A\right] - \langle \proj_A\rangle_{\Eshell{d}}\right)^2 \geq \frac{\epsilon + w_0 \langle \proj_A\rangle_{\Eshell{d}}}{\kappa} d \Tr[\hat{\rho}^2],
\label{eq:nonthermalequilibrium_constraint}
\end{equation}
i.e., the state $\hat{\rho}$ does not attain thermal equilibrium to a weaker accuracy i.e. larger by a constant $1/\kappa$, is constrained by
\begin{equation}
\int_{t\in \texcept[\kappa]}\diff t\ w(t) < \kappa,
\label{eq:texcept_small}
\end{equation}
due to Eq.~\eqref{eq:cloned_energyband_implies_thermal_equilibrium}, as the integrand at these times is at least given by the right hand side of \eqref{eq:nonthermalequilibrium_constraint}, and at the remaining times cannot be less than zero. At all other times, the mixed state $\hat{\rho}$ attains thermal equilibrium to accuracy:
\begin{equation}
\left\lvert\Tr\left[\hat{\rho}(t \notin \texcept[\kappa])\ \proj_A\right] - \langle \proj_A\rangle_{\Eshell{d}}\right\rvert < \sqrt{\frac{\epsilon + w_0 \langle \proj_A\rangle_{\Eshell{d}}}{\kappa}} \sqrt{d \Tr[\hat{\rho}^2]}.
\label{eq:thermalequilibrium_almostalltimes}
\end{equation}
We therefore see that to ensure that $\hat{\rho}$ attains thermal equilibrium to some small but finite accuracy for a large weighted fraction $(1-\kappa)$ of times, we must establish cloned energy-band thermalization to accuracy $\epsilon \ll \kappa$, and wait for sufficiently large times that $w_0 \ll \kappa/\langle \proj_A\rangle_{\Eshell{d}}$, as long as $\Tr[\hat{\rho}^2] = O(1/d)$.

Now, let us apply Eq.~\eqref{eq:thermalequilibrium_almostalltimes} to an orthonormal basis $\mathcal{B}$, by considering an initial ensemble of basis states with some probability distribution $(p_k \in [0,1])_{k \in \mathbb{Z}_d}$:
\begin{equation}
    \hat{\rho}(0) = \sum_k p_k \lvert k\rangle\langle k\rvert.
\end{equation}
Normalizing this distribution to $\sum_k p_k = 1$, the ``size'' of the ensemble may be measured by its participation fraction $\mu(\rho) = 1/(d\sum_k p_k^2) \in [1/d,1]$, which estimates what fraction of the $d$ basis states are significantly represented in this distribution. As per Eq.~\eqref{eq:entanglementequalsphasespacevolume}, $\mu(\rho)$ can be thought of as the size of a hypothetical ``phase space volume'' associated with this ensemble, if one wishes to draw an analogy with classical systems. Then we get thermal equilibration for sufficiently large ensembles:
%Previous version: Now, we can take Eq.~\eqref{eq:thermalequilibrium_almostalltimes} to be analogous to Eq.~\eqref{eq:energyband_implies_thermalization} for the single system with weighting function $w(t) \to \delta(t-t_1)$, where $t_1 \notin \texcept[\kappa]$, and directly apply the steps used to derive Theorem~\ref{thm:energyband_implies_physicalthermalization} from Theorem~\ref{thm:energyband_implies_thermalization}, to get the following:
%Why does the previous version not work? Because \ref{thm:energyband_implies_physicalthermalization} splits the density operator into ensembles with positive and negative deviations for a specific w. These ensembles independently have thermal equilibrium at almost all times, but it is not guaranteed that they attain thermal equilibrium at the same time for which they are the positive and negative deviation components for the original density operator.
\begin{corollary}[Cloned energy-band thermalization implies the thermal equilibration of sufficiently large ensembles of physical states at almost all times]
\label{cor:cloned_energyband_to_physical_thermalization}
Let $\proj_A$ satisfy cloned energy-band thermalization with bandwidth $\Delta E > 0$ and accuracy $\epsilon > 0$ within an energy shell $\Eshell{d}$ [Eq.~\eqref{eq:cloned_energyband}], and consider time-averaging with a weighting function $w(t)$ with $w_0$-bandwidth smaller than $\Delta E$, i.e. that satisfies [Eq.~\eqref{eq:longtimeavg_w}]:
\begin{equation*}
\lvert \widetilde{w}(\delta E)\rvert \leq w_0,\;\; \text{ for all } \delta E \geq \Delta E.
\end{equation*}
Then, for any constant $0 < \kappa < 1$, there exists a (possibly empty) set of times $\texcept[\kappa]$ spanning at most a small weighted fraction of available times [Eq.~\eqref{eq:texcept_small}]:
\begin{equation*}
\int_{t\in \texcept[\kappa]}\diff t\ w(t) < \kappa,
\end{equation*}
such that at any time $t\notin \texcept[\kappa]$ outside this set, for any complete orthonormal basis of pure states $\mathcal{B}$ within the energy shell $\Eshell{d}$, any ensemble $(p_k, \lvert k\rangle)$ of basis states $\lvert k\rangle$ with respective probabilities $p_k \in [0,1]$ such that $\sum_k p_k = 1$ attains thermal equilibrium to accuracy $\lambda$:
\begin{equation}
    \left\lvert\sum_k p_k \langle k(t)\rvert \proj_A\lvert k(t)\rangle - \langle \proj_A\rangle_{\Eshell{d}}\right\rvert < \lambda, \text{ for all } t \notin \texcept[\kappa],
\end{equation}
provided that the distribution has a sufficiently large participation fraction:
\begin{equation}
    \mu(\rho) \equiv \frac{1}{d\sum_kp_k^2} \geq \frac{\epsilon + w_0 \langle \proj_A\rangle_{\Eshell{d}}}{\kappa \lambda^2}.
    \label{eq:clonedequilibrium_participationfractionconstraint}
\end{equation}
\end{corollary}

We see once again that the only way to make this statement (equilibration at almost all times) apply to individual basis states (i.e., for $\mu(\rho) = 1/d$) is to have $\epsilon, w_0 \ll 1/d$. It appears unlikely that we can make stronger statements regarding thermal equilibrium for an individual state with accessible resources. For example, while it has been shown (even accounting for degeneracies, though in apparently inaccessible ways) that the equilibration of \textit{all} observables can occur at almost all times in special classes of initial states (with a large energy spread)~\cite{ShortDegenerate}, this typically requires a time scale of $T \gtrsim d \ln d$ to establish equilibration, which corresponds to inaccessible values of $w_0$ (e.g., for a uniform time average over an interval $T$, this corresponds to $w_0 \sim 1/(d\ln d)$, even smaller than what is required to constrain individual states in Eq.~\eqref{eq:clonedequilibrium_participationfractionconstraint}). We are not aware of stronger results on equilibration elsewhere in the literature, but it would be interesting to explore if additional assumptions in specific classes of systems can sharpen these results on equilibration.

It also appears that this limitation to ensembles of states (for accessible results) cannot be circumvented without losing some generality because we also expect our results to have a reasonable classical limit, and many natural observables in individual trajectories of classical systems can never equilibrate\footnote{While equilibration in almost all pure states may ``almost'' follow from Eq.~\eqref{eq:cloned_energyband_implies_thermal_equilibrium} for generic quantum systems along the lines of Theorem~\ref{thm:energyband_implies_physicalthermalization}, the obstacle for a rigorous approach is that when splitting basis states into those of positive and negative deviations at a given time $t_0$ as in the proof of this Theorem, Eq.~\eqref{eq:thermalequilibrium_almostalltimes} may apply to each such subset with a \textit{different} set of exceptional times, possibly including the original time $t_0$ of consideration, therefore ruling out thermal equilibrium at $t_0$ for these smaller subsets. This may not be the case for generic fully quantum systems (in which case it is possible that suitable assumptions that imply equilibration in almost all individual pure states may be identified), but we expect this to be the case if the dynamics is effectively classical (which would imply no equilibration in general in individual states). In particular, it would be interesting to consider if there are any assumptions that assume only an \textit{accessible} range of parameters that can distinguish between the classical case with statistical thermalization and the fully quantum case with thermalization in individual pure states; of course, with (inaccessible) sensitivity to the dimension $d$ of the energy shell, this may easily be done within our formalism as discussed after Eq.~\eqref{eq:cloned_energyband_implies_thermal_equilibrium} and Corollary~\ref{cor:cloned_energyband_to_physical_thermalization}.}. To see this explicitly, consider a classical (e.g. Ising) model of $N \to \infty$ two-level particles with a local ``spin'' observable $s_k \in \lbrace 0,1\rbrace$ (which may also be obtained by coarse-graining a continuous [or other] degree of freedom into two equal-sized regions in the phase space labeled by $0$ and $1$). The (microcanonical) thermal value of $s_k$ is $\langle s_k\rangle = 1/2$ (the phase space average of its two values), which can never be attained in an individual trajectory as it is not among the allowed values of $s_k$, making thermal equilibration impossible in a single state (but intensive observables such as $\sum_k s_k/N$ can equilibrate by being thermal in almost all classical states by default, irrespective of dynamics~\cite{GallavottiErgodic}). However, an ensemble of initial states that dynamically evolves to have an equal distribution of $s_k$ in the $0$ and $1$ states at some long time can equilibrate. Indeed, Eq.~\eqref{eq:clonedequilibrium_participationfractionconstraint} is analogous to the classical statement that mixing~\cite{HalmosErgodic, SinaiCornfeld, Sinai1976} (loosely, corresponding to thermal equilibrium at all long times) can only be shown for statistical ensembles of states $A$ with nonzero measure $\mu(A) \neq 0$, corresponding here to $\mu(\rho) = \Theta(1)$ as $d\to\infty$.
%As another example, in Sinai's billiard or Bunimovich's billiard systems, one can consider indicator observables onto one region of the billiard domain. Even though these systems are K-mixing, the particle is either in the region or not, so these indicator observables never attain thermal equilibrium. The point is mixing is far from being the same thing as thermal equilibrium, even though ergodicity does imply time-averaged equilibrium for almost all initial states, which is a funny distinction. Ergodicity is about thermalization in almost all states in a time average, (weak) mixing is about thermalization at almost all times in a state average.

In summary, given cloned energy-band thermalization, we can show that (1) almost all pairs of states have uncorrelated expectation value fluctuations around the thermal value over accessible timescales, and (2) \textit{all} sufficiently large ensembles of states attain thermal equilibrium at almost all accessible times. Now we turn to the question of how to establish cloned energy-band thermalization, and consequently the thermal equilibration of almost all physical states at almost all times, from echoes.

\subsection{Cloned energy-band thermalization from quantum dynamical echoes}

Following the strategy of Sec.~\ref{sec:shellthermalechoes}, we would like to access cloned energy-band thermalization using suitable quantum dynamical echoes. To restrict these echoes to energy shells of a system rather than the cloned Hilbert space, we choose two different echo averaging functions $v_{L+}(t)$ and $v_{R+}(t)$ for the cloned version $\hat{\Gamma}_A^{\text{shell}}$ of the quantum dynamical echo $\hat{\gamma}_A^{\text{shell}}$ in Eq.~\eqref{eq:gammashell_def} (where we have transformed to new time variables for convenience; in our original setting, their analogues are obtained by $t_1 = t+\delta t$ and $t_2 = t-\delta t$):
\begin{align}
\hat{\Gamma}_A^{\text{shell}} \equiv \frac{1}{\left(\Tr[\proj_A]\right)^2} &\int\diff t \int 2\ \diff \delta t_L \int 2\ \diff \delta t_R\ \left\lbrace w_+(t) v_{L+}(2 \delta t_L) v_{R+}(2 \delta t_R) \vphantom{e^{-i\hat{H}t_R} \left(\langle \proj_A\rangle_{\Eshell{d}}\idop\right)} e^{-2i E_c (\delta t_L + \delta t_R)}\right. \nonumber \\
&\left. \left[ e^{-i\hat{H} (t+\delta t_L)} \left(\proj_A - \langle \proj_A\rangle_{\Eshell{d}}\idop\right) e^{i\hat{H} (t-\delta t_L)}\right]\otimes \left[ e^{-i\hat{H} (t+\delta t_R)} \left(\proj_A - \langle \proj_A\rangle_{\Eshell{d}}\idop\right) e^{i\hat{H} (t-\delta t_R)}\right]\ \right\rbrace.
\end{align}
To see why this should work, we can write the expectation value of the cloned version of $(\proj_A - \langle \proj_A\rangle_{\Eshell{d}}\idop)$ in $\hat{\Gamma}_A^{\text{shell}}$ in the energy eigenbasis :
\begin{align}
\Tr_{\mathcal{H} \otimes \mathcal{H}} &\left[\hat{\Gamma}_A^{\text{shell}} \left(\proj_A - \langle \proj_A\rangle_{\Eshell{d}}\idop\right) \otimes \left(\proj_A - \langle \proj_A\rangle_{\Eshell{d}}\idop\right)\right] \nonumber \\ 
&= \frac{1}{(\Tr[\proj_A])^2}\sum_{n,m} \left\lbrace \widetilde{w}_+(E_n+E_{\ell} - E_m-E_k) \widetilde{v}_{L+}\left(\frac{E_n + E_m}{2}-E_c\right) \widetilde{v}_{R+}\left(\frac{E_k + E_\ell}{2}-E_c\right)\right. \nonumber \\
&\left. \vphantom{\left(\frac{E_k + E_\ell}{2}\right)} \left\lvert \langle E_m\rvert \left(\proj_A - \langle \proj_A\rangle_{\Eshell{d}}\idop\right)\lvert E_n\rangle\right\rvert^2 \left\lvert \langle E_k\rvert \left(\proj_A - \langle \proj_A\rangle_{\Eshell{d}}\idop\right)\lvert E_\ell\rangle\right\rvert^2 \right\rbrace.
\label{eq:cloned_echo_expr}
\end{align}
The matrix elements of the autocorrelators relevant for global thermal equilibrium can be recovered from this expression by setting $v_{L+}(t) = v_{L-}(t) = \delta(t)$. Otherwise, for energy shell thermal equilibrium, we can focus on energy shells in which both $\widetilde{v}_{L+}(E-E_c)$ and $\widetilde{v}_{R+}(E-E_c)$ are larger than some $V > 0$ (or, in the simplest case, chose $v_{L+} = v_{R+} = v_+$). Given that Eq.~\eqref{eq:cloned_echo_expr} is the echo of interest, it is straightforward to generalize Theorem~\ref{thm:energyband_shell_from_echoes} to its cloned variant. In particular, given that
\begin{equation}
\Tr_{\mathcal{H} \otimes \mathcal{H}} \left[\hat{\Gamma}_A^{\text{shell}} \left(\proj_A - \langle \proj_A\rangle_{\Eshell{d}}\idop\right) \otimes \left(\proj_A - \langle \proj_A\rangle_{\Eshell{d}}\idop\right)\right] < \epsilon_A,
\end{equation}
we get in place of Eq.~\eqref{eq:energyband_shell_from_echoes}:
\begin{equation}
\frac{1}{d^2}\Tr\left\lbrace \left[\left(\proj_A-\langle \proj_A\rangle_{\Eshell{d}}\idop\right)\otimes\left(\proj_A-\langle \proj_A\rangle_{\Eshell{d}}\idop\right)\right]^2_{(\Eshell{d} \otimes \Eshell{d},\Delta E)}\right\rbrace  < \epsilon^2 = \frac{\left(\Tr[\proj_A]\right)^2}{WV^2 d^2} \epsilon_A,
\end{equation}
corresponding to cloned energy-band thermalization as in Eq.~\eqref{eq:cloned_energyband}.

For immediate future reference, we also note that the echo in Eq.~\eqref{eq:cloned_echo_expr} can be constructed in terms of correlations between the same quantities $L_{AA}(t_1,t_2)$, $L_A(t)$, and $L_{\mathcal{H}}(t)$ that appeared for thermalization o.a. in Sec.~\ref{sec:echo_motivation}:
\begin{align}
\Tr_{\mathcal{H} \otimes \mathcal{H}} &\left[\hat{\Gamma}_A^{\text{shell}} \left(\proj_A - \langle \proj_A\rangle_{\Eshell{d}}\idop\right) \otimes \left(\proj_A - \langle \proj_A\rangle_{\Eshell{d}}\idop\right)\right] \nonumber \\ 
&= \int\diff t \int 2\ \diff \delta t_L \int 2\ \diff \delta t_R\ \left\lbrace w_+(t) v_{L+}(2 \delta t_L) v_{R+}(2 \delta t_R) e^{-2i E_c (\delta t_L + \delta t_R)} \vphantom{\frac{\langle\proj_A\rangle_{\Eshell{d}}^2}{\langle\proj_A\rangle}L_{\mathcal{H}}(2\delta t_R)}\right. \nonumber \\
&\left. \left[L_{AA}(t+\delta t_L,t-\delta t_L) - 2 \langle\proj_A\rangle_{\Eshell{d}} L_A(2\delta t_L) + \frac{\langle\proj_A\rangle_{\Eshell{d}}^2}{\langle\proj_A\rangle}L_{\mathcal{H}}(2\delta t_L)\right] \right. \nonumber \\
&\left. \left[L_{AA}(t+\delta t_R,t-\delta t_R) - 2 \langle\proj_A\rangle_{\Eshell{d}} L_A(2\delta t_R) + \frac{\langle\proj_A\rangle_{\Eshell{d}}^2}{\langle\proj_A\rangle}L_{\mathcal{H}}(2\delta t_R)\right] \right\rbrace
\end{align}
It follows that performing measurements of these quantities in a single copy of the system is sufficient to constrain thermal equilibrium as well; the only difference is in the more complicated integrals, which may in any case be evaluated classically. The question of how to perform these measurements will occupy our interest in the next section.

\section{Discussion: A sketch of experimental measurement protocols}
\label{sec:exp_protocols}

\begin{figure}[!ht]
\centering
\includegraphics[width=0.75\textwidth]{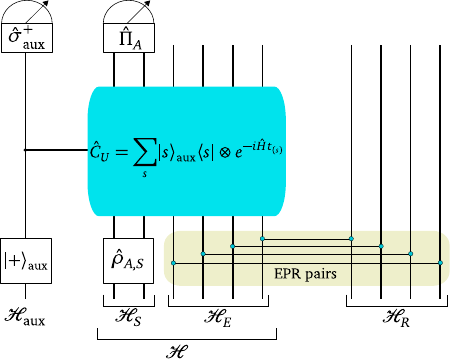}
\caption{Schematic depiction of the experimental protocol to determine quantum thermalization over accessible timescales described in Sec.~\ref{sec:exp_protocols}, via a quantum circuit diagram where each vertical line is the worldline of, e.g., a qubit, and time flows upward. While the depiction here assumes a measurement of interference effects relevant for quantum dynamical echoes in including the auxiliary qubit $\mathcal{H}_{\text{aux}}$, we note that the auxiliary qubit is not necessary for global thermalization which requires only autocorrelators. As all the relevant measurements $\proj_A$ and $\hat{\sigma}_{\text{aux}}^+$ are only carried out in the subsystem of interest $\mathcal{H}_S$ or the auxiliary qubit $\mathcal{H}_{\text{aux}}$, while only $\mathcal{H}_E$ and $\mathcal{H}_R$ are usually thermodynamically large systems, we expect all measurements to be completely accessible provided the dynamics of the Hamiltonian $\hat{H}$ can be implemented (for echoes, via the controlled operation $\hat{C}_U$). Our emphasis here is that a relatively simple, inexpensive protocol (in terms of the complexity of quantum measurements) within accessible timescales can conclusively determine the thermalization of a few-body observable, to within any experimental accuracy, in almost all physically relevant (or irrelevant) initial states.}
\label{fig:exp_protocols}
\end{figure}

In Secs.~\ref{sec:globalthermal}, \ref{sec:shellthermalechoes}, and \ref{sec:attackoftheclones}, we have shown that the thermalization (both on average, and for thermal equilibrium) of an observable $\proj_A$ in any set of physical basis states, over finite time scales in energy shells can be accessed through measurements of the following quantities (repeated here for convenience):
\begin{align}
L_{AA}(t_1,t_2) &\equiv \Tr[e^{-i\hat{H} t_1} \hat{\rho}_A e^{i\hat{H} t_2} \proj_A], \label{eq:echo_def2}\\
L_{A}(t) &\equiv \Tr[e^{-i\hat{H} t} \hat{\rho}_A], \label{eq:partial_echo_def2}\\
L_{\mathcal{H}}(t) &\equiv \frac{1}{D}\Tr[e^{-i\hat{H} t}]. \label{eq:spectralfunction_def2}
\end{align}
Other details such as the energy band and shells of interest (however, with widths constrained by the range of times over which the above quantities have been measured) and the corresponding thermal values may be determined by classical post-processing. These measurements correspond (in the first two cases) to the dynamics of $\proj_A$ in the single initial state:
\begin{equation}
\hat{\rho}_A \equiv \frac{\proj_A}{\Tr[\proj_A]}.
\label{eq:single_initial_state2}
\end{equation}
Our interest in this section is to demonstrate the theoretical feasibility of such measurements with resources that scale at most linearly in the system size $N$ (for few-body observables). We do not perform a detailed analysis of additional $O(1)$ overheads and classical computation costs that may be incurred by the time averaging procedure, and leave a concrete experimental proposal in specific platforms with a rigorous, quantitative analysis of the measurement budget for future work. A schematic of the protocol discussed here is illustrated in Fig.~\ref{fig:exp_protocols}.

\subsubsection*{Setup: Few-body observables in a many-body system}

For definiteness, let us consider a system of $N$ qubits, with Hilbert space $\mathcal{H}$ of dimension $D = 2^N$. The Hamiltonian $\hat{H}$ is implemented on this set of qubits. To define the observable $\proj_A$, say, to correspond to a few-body operator, it is convenient to select a subsystem $\mathcal{H}_S$ of $N_S$ qubits with dimension $D_S = 2^{N_S}$ (with the remaining $N_E = N-N_S$ qubits comprising $\mathcal{H}_E$ of dimension $D_E = 2^{N_E}$). We will take $\proj_A$ to project onto some state $\lvert a\rangle_S \in \mathcal{H}_S$:
\begin{equation}
\proj_A = \lvert a\rangle_S\langle a\rvert \otimes \idop_E.
\label{eq:fewbodyobs}
\end{equation}
This corresponds to a very general setting of a few-body observable in an interacting many-body system. For convenience, where questions of accessibility are concerned, we will assume that $N_S = O(1)$, so that $N_E = \Theta(N)$.

For example, given the computational basis states $\lvert \lbrace s_j\rbrace_{j=1}^{N}\rangle$ of the $N$ qubits (where $s_j \in \lbrace 0,1\rbrace$), $\lvert a\rangle_S$ could be any computational basis state of the chosen $N_S$-qubit subsystem, with the state of the remaining qubits being ignored. In this case, $\proj_A$ may be measured via projective measurements in the computational basis of $\mathcal{H}_S$, while not performing any measurements on $\mathcal{H}_E$.

\subsubsection*{State preparation}
The initial state in Eq.~\eqref{eq:single_initial_state2}, due to Eq.~\eqref{eq:fewbodyobs}, is given by:
\begin{equation}
\hat{\rho}_A = \lvert a\rangle_S\langle a\rvert \otimes \frac{\idop_E}{D_E}
\end{equation}
To prepare this state, one can initialize $\mathcal{H}_S$ e.g. in the computational basis state $\lvert a\rangle_S$, which we assume to be a straightforward operation in a many-qubit system that may be achieved with $O(1)$ costs. However, we must prepare $\mathcal{H}_E$ in the maximally mixed state.

A simple strategy to achieve the maximally mixed state~\cite{NielsenChuang} is to maximally entangle $\mathcal{H}_E$ with a duplicate subsystem $\mathcal{H}_R$, also of (at least) $N_E$ qubits\footnote{In this case, the setting is similar to one of relevance to the fast scrambling problem~\cite{HaydenPreskill}.}. This can be done e.g. by maximally entangling each qubit in $\mathcal{H}_E$ (say, indexed by $j$) with a corresponding qubit in $\mathcal{H}_R$, thereby generating an EPR pair~\cite{NielsenChuang} of these qubits; as entangling any pair of qubits requires only $O(1)$ steps (e.g., a Hadamard gate followed by a CNOT gate on each pair initialized to the $\lvert 00\rangle$ state), we expect the preparation of the maximally mixed state $\idop_E/D_E$ to require only $O(N_E)$ gates. In fact, we expect this step to be the most expensive system-independent step in our protocol in terms of the scaling of gate complexity with system size (except for implementing the actual dynamics, which depends on the Hamiltonian $\hat{H}$). As all of these maximally entangled pairs can be generated in parallel, the time complexity of this step remains $O(1)$.

We also note that one way to prepare the state $\lvert a\rangle_S$ in $\mathcal{H}_S$ is to prepare the maximally mixed state in $\mathcal{H}_S$ as well, by entanglement with $N_S$ additional qubits in a duplicate subsystem, and then perform the projective measurement $\proj_A$ and postselect on the outcome (which requires $O(1)$ measurements as $N_S = O(1)$).

\subsubsection*{Autocorrelator measurements}

If one is only concerned with global thermalization, the measurement of the autocorrelator
\begin{equation}
\Tr[\hat{\rho}_A(t) \proj_A] = \Tr[e^{-i\hat{H} t} \hat{\rho}_A e^{i\hat{H} t} \proj_A]  = L_{AA}(t,t),
\end{equation}
relevant for Sec.~\ref{sec:globalthermal}, is extremely straightforward. We prepare the system in the initial state $\proj_A$, evolve it with the Hamiltonian $\mathcal{H}$, and measure $\proj_A$ after a time $t$ --- for example, the probability of the outcome $\lvert a\rangle_S$ in the setting of computational basis states. Typically, we expect that the autocorrelator takes values of the order of $1/D_S$ or larger (e.g., this is comparable to the expectation value of the projector in the maximally mixed state). Thus, the number of measurements required to measure a subsystem return probability of this size scales with $D_S$, which we have assumed to be $O(1)$.

As entanglement with the external qubits in $\mathcal{H}_R$ ensures that our system $\mathcal{H}$ is genuinely in the mixed quantum state $\hat{\rho}_A$ rather than a classical ensemble of states amounting to $\hat{\rho}_A$, we do not have to worry about whether any finite classical ensemble is representative of the full ensemble (which may be a concern in the classical case, as discussed in Secs.~\ref{sec:intro} and \ref{sec:classical_ET}). Statistically, the expectation value of $\proj_A$ should automatically converge to that in $\hat{\rho}_A$ over a large number of experimental measurements.
%\textcolor{blue}{Law of large numbers}

\subsubsection*{Echo measurements}

Now, let us consider the measurement of echoes. The echo involves the \textit{trace} of a quantum mechanical operator, which is not as straightforward to measure as a return probability. We are aware of two measurement frameworks, which can be implemented in present-day experimental platforms, that can measure the squared magnitude of a trace --- specifically, the spectral form factor $K(t) = \lvert L_{\mathcal{H}}(t)\rvert^2$ [see also Eq.~\eqref{eq:sff}]. The first is by implementing controlled dynamics with an auxiliary qubit~\cite{SFFmeas}, and the second by using randomized measurements~\cite{pSFF}. As the former is conceptually simpler, and also related to phase measurement algorithms~\cite{NielsenChuang}, we will describe measurements of echoes using an auxiliary qubit, generalizing the discussion in Ref.~\cite{SFFmeas}. It may also be possible to adapt other measurement strategies free of auxiliary qubits~\cite{KastnerTwoTimeAmplitudes}, provided there are more intrinsic ways to identify a preferred ``zero''  of energy $E_c = 0$ in the system.

Let $\mathcal{H}_{\aux}$ be the Hilbert space of the auxiliary qubit, which we initialize in the state
\begin{equation}
\lvert +\rangle_{\aux} = \frac{1}{\sqrt{2}}\left(\lvert 0\rangle_{\aux} + \lvert 1\rangle_{\aux}\right).
\end{equation}
Now, consider applying a controlled operator to the combined system and auxiliary qubit $\mathcal{H} \otimes \mathcal{H}_{\aux}$, with the general form
\begin{equation}
\hat{C}_U = \hat{U}_0 \otimes \lvert 0\rangle_{\aux}\langle 0\rvert + \hat{U}_1 \otimes \lvert 1\rangle_{\aux}\langle 1\rvert,
\end{equation}
where $\hat{U}_0$ and $\hat{U}_1$ (which are not necessarily unitary) act on the system $\mathcal{H}$ alone. Then, if the system $\mathcal{H}$ is in some initial state $\hat{\rho}$, the final state of $\mathcal{H}_{\aux}$ is given by the reduced density operator:
\begin{equation}
\hat{\rho}_{\aux}[\hat{C}_U] = \Tr_{\mathcal{H}}\left[\hat{C}_U \hat{\rho} \otimes \lvert +\rangle_{\aux}\langle +\rvert \hat{C}_U^\dagger\right] = \frac{1}{2} \sum_{r,s \in \lbrace{0,1}\rbrace}\Tr_{\mathcal{H}}[\hat{U}_{r} \hat{\rho} \hat{U}_{s}^\dagger]\ \lvert r\rangle_{\aux}\langle s\rvert.
\label{eq:controlled_unitary}
\end{equation}
It follows that the expectation value of $\sigma_{\aux}^+ \equiv \lvert 1\rangle_{\aux}\langle 0\rvert$ in this final state measures the echo of $\hat{U}_0$ and $\hat{U}_1$ in the state $\hat{\rho}$:
\begin{equation}
2 \langle \sigma_{\aux}^+\rangle = 2 \Tr[ \hat{\rho}_{\aux}[\hat{C}_U]\sigma_{\aux}^+ ] = \Tr_{\mathcal{H}}[\hat{U}_0 \hat{\rho} \hat{U}_1^\dagger].
\end{equation}
As this is not a Hermitian operator, in practice, its expectation value can be generated using
\begin{equation}
2 \langle \sigma_{\aux}^+\rangle = \langle \sigma_{\aux}^x\rangle + i \langle \sigma_{\aux}^y\rangle,
\label{eq:sigma_plus_relation}
\end{equation}
where $\sigma_{\aux}^x = \lvert 0\rangle \langle 1 \rvert + \lvert 1\rangle \langle 0\rvert$ and $\sigma_{\aux}^y = -i \lvert 0\rangle \langle 1 \rvert + i\lvert 1\rangle \langle 0\rvert$ are the usual Hermitian Pauli spin operators, whose measurements we assume to be straightforward in the relevant experimental platforms.

In the above setting, the echoes of Eqs.~\eqref{eq:echo_def2}, \eqref{eq:partial_echo_def2}, and \eqref{eq:spectralfunction_def2} can be obtained as follows:
\begin{enumerate}
\item For $L_{AA}(t_1,t_2)$, we set $\hat{\rho} = \hat{\rho}_A$, and
\begin{equation}
\hat{U}_0 = \proj_A e^{-i\hat{H} t_1},
\end{equation}
corresponding to evolution for a time $t_1$ followed by the measurement $\proj_A$, and
\begin{equation}
\hat{U}_1 = \proj_A e^{-i\hat{H} t_2},
\end{equation}
corresponding to evolution for a time $t_2$ followed by the measurement $\proj_A$. While it is not strictly necessary for the measurement $\proj_A$ to be carried out in both the auxiliary $\lvert 0\rangle_{\text{aux}}$ and $\lvert 1\rangle_{\text{aux}}$ branches of Eq.~\eqref{eq:controlled_unitary}, doing so eliminates the need to implement a \textit{controlled} measurement. This is because $\hat{C}_U$ can then be written as:
\begin{equation}
\hat{C}_U = \proj_A \otimes \idop_{\aux} \left(e^{-i\hat{H}t_1} \otimes \lvert 0\rangle_{\aux}\langle 0\rvert + e^{-i\hat{H} t_2} \otimes \lvert 1\rangle_{\aux}\langle 1\rvert \right),
\label{eq:controlled_Gate}
\end{equation}
which only requires the implementation of controlled Hamiltonian dynamics (which implicitly sets a $0$ value of the energy $E$ as the one that causes no phase changes in the auxiliary qubit, relative to which we must choose the center $E_c$ of the energy shell of interest), with the measurement of $\proj_A$ being performed just on $\mathcal{H}$ subsequent to the evolution.
One way to think of such a controlled operation with different times is, e.g., if $t_2 > t_1 > 0$, then one could implement the unitary $\exp[-i\hat{H} t_1]$ on $\mathcal{H}$ without any control, then apply a controlled unitary $\exp[-i\hat{H} (t_2-t_1)] \otimes \lvert 1\rangle_{\aux}\langle 1\rvert$ only for the remaining duration $(t_2-t_1)$.
Additionally, if reverse time evolution is not feasible for a choice of $t_1 < 0$ or $t_2 < 0$, we may move the corresponding dynamics to the other term to ensure that strictly positive evolution; for example, we may also set (if $t_2 < 0$)
\begin{equation}
\hat{U}_0 = e^{-i\hat{H} (-t_2)} \proj_A e^{-i \hat{H} t_1},\ \hat{U}_1 = \proj_A.
\end{equation}
If both times have to be reversed, an easy option is to take the complex conjugate of the trace with $t_1, t_2 > 0$ to effectively obtain $t_1, t_2 < 0$ without additional measurements.
\item Similarly, for $L_A(t)$, we may set $\hat{\rho} = \hat{\rho}_A$, and
\begin{equation}
\hat{U}_0 = e^{-i\hat{H}t},\ \hat{U}_1 = \idop,
\end{equation}
for example. Alternatively, we may also prepare the initial state $\hat{\rho} = \idop/D$, the maximally mixed state in $\mathcal{H}$, and include projectors $\proj_A$ in either or both of $\hat{U}_{0,1}$. In addition to preparing the maximally mixed state in $\mathcal{H}_E$, this would require $O(N_S)$ additional steps with $N_S$ additional qubits to prepare $\mathcal{H}_S$ in a maximally mixed state in this alternative procedure.
\item For $L_{\mathcal{H}}(t)$, we prepare the maximally mixed state $\hat{\rho} = \idop /D$ in $\mathcal{H}$, and only implement the dynamics without any measurements in $\mathcal{H}$ (consequently, the only measurement performed is on the auxiliary system $\mathcal{H}_{\aux}$), for example with
\begin{equation}
\hat{U}_0 = e^{-i \hat{H} t},\ \hat{U}_1 = \idop.
\end{equation}
This is essentially the proposal for spectral form factors in Ref.~\cite{SFFmeas}, where $\lvert L_{\mathcal{H}}(t)\rvert^2$ requires a measurement of $\langle \sigma_{\aux}^x\rangle^2 + \langle \sigma_\aux^y\rangle^2$, but the spectral function $L_{\mathcal{H}}(t)$ itself is obtained by measuring $\langle \sigma_{\aux}^+\rangle$ via Eq.~\eqref{eq:sigma_plus_relation}.
\end{enumerate}

Additionally, to account for conserved charges through echoes such as in Eq.~\eqref{eq:charged_echo}, we should implement an additional controlled unitary (for each conserved charge $Q$) of the form:
\begin{equation}
\hat{C}_Q = e^{-i\hat{Q}s_1} \otimes \lvert 0\rangle_{\aux}\langle 0\rvert + e^{-i\hat{Q} s_2} \otimes \lvert 1\rangle_{\aux}\langle 1\rvert.
\end{equation}
The preceding discussion generalizes directly to this case. We recall from the discussion following Eq.~\eqref{eq:charged_echo} that it is sufficient for the experimentally accessible version of the charge $\hat{Q}$ to closely follow the dynamics of the actual conserved charge (that precisely commutes with $\hat{H}$) for accessible timescales.

We emphasize that except for the implementation of controlled dynamics, there is no specific aspect of these measurements that intrinsically scale with the system size, except for state preparation. The number of measurements required will be determined by the desired accuracy $\epsilon_A$ of the echoes, which we expect to scale with $N_S$ in most practical situations (assumed here to be $O(1)$) rather than $N_E$. For an example with a concrete proposal for implementing controlled Hamiltonian dynamics, specifically in Rydberg atoms with an auxiliary atomic clock qubit, see Ref.~\cite{SFFmeas}.

Additionally, in implementations where the qubits in $\mathcal{H}$ are each part of a larger local Hilbert space (e.g., $2$-level subspaces of $4$-level systems), it is possible to directly ``upgrade'' any isolated Hamiltonian evolution $e^{-i \hat{H} t}$ of the qubits to a variant $\hat{C}_U$ controlled by an external qubit, without any explicit knowledge of $\hat{H}$ or interfering with the isolated dynamics of qubits. This may be done by rapidly applying simpler controlled operations that swap each local qubit subspace with an orthogonal local subspace that is not subject to any time evolution, before and after the Hamiltonian dynamics of qubits occurs for the desired length of time over which it must be controlled~\cite{ControlledUnitaryUpgrade}. This may further allow the implementation of such echo measurements in experimental platforms where controlled variants of a given Hamiltonian $\hat{H}$ cannot be ``natively'' implemented.

\subsubsection*{Sampling discrete times}

In an experiment, one expects to sample only a discrete set of times $t_j$, with weight functions such as $v(t)$ [or $w(t)$] taking the form:
\begin{equation}
v(t) = \sum_j v(t_j) \delta(t-t_j).
\end{equation}
If the $t_j$ form a regular lattice of spacing $\delta t$, i.e. $t_j - t_k = \delta t (j-k)$, then there is a potential complication in the energy domain, because $\widetilde{v}(E)$ becomes periodic with period $2\pi/\delta t$:
\begin{equation}
\widetilde{v}\left(E + \frac{2\pi \ell}{\delta t}-E_c\right) = \widetilde{v}(E-E_c),\;\;\ \text{ for } \ell \in \mathbb{Z}.
\label{eq:energyfilterperiodicity}
\end{equation}
This periodicity may be an issue to the extent that it may include contributions from the matrix elements of observables outside an energy shell or energy band of interest as $\widetilde{v}(2\pi \ell/\delta t) = 1$, and therefore, one ends up selecting several energy shells centered around the different energies $E_c + 2\pi \ell/\delta t$. We particularly expect this to be an issue for energy shells via $v(t)$, due the thermal value $\langle \proj_A\rangle_{\Eshell{d}}$ potentially being significantly different between energy shells, but similar issues may often occur with $w(t)$ as well due to contributions from different energy bands, which can be accounted for in the below discussion by replacing $\widetilde{v}(E-E_c)$ with $w(\delta E)$.

For discrete-time systems whose time step is $\Delta t_{\text{step}} = \delta t$, no effective additional contributions exist because the (quasi-)energy spectrum itself is periodic with the same period. However, even in these systems, one may wish to sample the dynamics sparsely over widely spaced instants of time $\delta t \gg \Delta t_{\text{step}}$ to reduce the amount of data collected, where this periodicity would become an issue. In either case (continuous time systems or sparse sampling in discrete time systems), the effect of additional contributions from outside the energy band or shell of interest may be reduced by taking $\delta t \to 0$ (i.e. sampling more points), but we point out that there is a significantly more efficient way to reduce these contributions.

If the $t_j$ are sampled \textit{irregularly} with average spacing $\delta t$ according to some sufficiently random statistical (e.g. Poisson) process, then $\widetilde{v}(E)$ instead becomes quasiperiodic, with a significantly larger ``recurrence'' energy $E_{\text{rec}}$ (this is connected to quantum recurrence times~\cite{QuantumRecurrences, BrownSusskind2} for an irregular energy spectrum, except in our case time and energy have switched places). In particular, if one has $\tau$ such irregular samples of time and $\widetilde{v}(E_0) = 1$, one expects $\widetilde{v}(E) \ll 1$ between $E_c$ and $E_c \pm E_{\text{rec}}$, where
\begin{equation}
E_{\text{rec}} \sim \frac{2\pi}{\delta t} \exp(\tau).
\label{eq:recurrence_energy}
\end{equation}
This means that the number of ``extraneous'' energy bands or shells is reduced \textit{exponentially} with the number of samples $\tau$ if one adopts an irregular sampling of times. Often, the ``full width'' of the energy spectrum in a many-body system (e.g., the separation between the lowest and highest energy levels) scales~\cite{ShenkerThouless, BukovSelsPolkovnikov} as a small power of $N$:
\begin{equation}
E_{\max} - E_{\min} \sim N^\mu,\ \text{ for } \mu > 0.
\label{eq:spectrum_width_typical}
\end{equation}
This means that we can completely exclude contributions from other regions of the spectrum than the energy band or energy shell of interest, i.e., ensure that
\begin{equation}
E_{\text{rec}} \gg E_{\max} - E_{\min},
\end{equation}
if we take a number of samples at least logarithmic in $N$ (if $\delta t \sim 1$):
\begin{equation}
\tau \gg \mu \log N.
\label{eq:numberoftimesamples}
\end{equation}
We expect this sampling (and therefore the dynamics) to have the largest time complexity in our protocol, as all other steps can be executed over $O(1)$ times.

A remaining question is whether we can still generate \textit{completely positive} functions $v_+(t)$ of this type. This is indeed possible\footnote{For a related argument in a separate context that involves convolutions and applies to real-valued functions, due to Laura Shou, see Ref.~\cite[Appendix D 1]{dynamicalqfastscrambling}. We use a slightly different argument here to avoid the requirement of real-valuedness, i.e., we do not require $\sum_{k=1}^{\eta} e^{-i\zeta_k (E-E_c)} \in \mathbb{R}$, which allows a more unconstrained choice of the $\zeta_k$.}, and one example is the spectral form factor $K(t) = \lvert L_{\mathcal{H}}(t)\rvert^2$ of some energy spectrum, which remains non-negative $K(t)>0$, and whose Fourier transform is the probability distribution (and therefore, non-negative) of two-level spacings in the spectrum. In our case, where the samples are of time rather than energy levels, we should identify an analogue of the spectral form factor with the energy domain and the $2$-level distribution with the time domain. This suggests that, given a set of $\eta$ points $\zeta_1,\ldots,\zeta_{\eta}$, we can set $\tau = \eta^2$ and
\begin{equation}
v_+(E-E_c) = \frac{1}{\tau} \sum_{j,k = 1}^{\eta} e^{i(\zeta_j - \zeta_k)(E-E_c)} = \left\lvert \frac{1}{\eta}\sum_{k=1}^{\eta} e^{-i\zeta_k (E-E_c)}\right\rvert^2 \geq 0,
\end{equation}
in which case
\begin{equation}
v_+(t) = \frac{1}{\tau} \sum_{j,k = 1}^{\eta} \delta(t - \zeta_j+\zeta_k) \geq 0.
\end{equation}
Here, by sampling the $\zeta_k$ sufficiently irregularly, such as a Poisson sequence (or a uniformly random distribution of points in a fixed interval), we can ensure that $E_{\text{rec}}$ satisfies Eq.~\eqref{eq:recurrence_energy}, and therefore [with a choice of number of points $\tau$ such as Eq.~\eqref{eq:numberoftimesamples} that scales very mildly with $N$] focus on a single energy shell around $E_c$ (or, for $w(t)$, a single energy band around $\delta E = 0$) without contributions from other regions of the spectrum. We also emphasize that the choice of samples $\zeta_j$ and the resulting properties of $\widetilde{v}_+(E-E_c)$ (or $\widetilde{w}_+(\delta E)$) may be verified entirely via classical computation and do not require any specific quantum measurements of the system, as these functions are chosen externally.

\section{Conclusion}
\label{sec:conclusion}

\subsection*{Summary and outlook}

We have described an approach to quantum statistical mechanics that is based on the accessible dynamical properties of a single initial state corresponding to an observable instead of the detailed properties of the energy levels, mirroring classical approaches~\cite{KhinchinStatMech, GallavottiErgodic} that bypass the ergodic hypothesis (see also \cite{NavinderSinghStatMechReview} for a historical survey of related issues). Crucially, as described in Sec.~\ref{sec:exp_protocols}, this allows us to make a conclusive experimental determination of thermalization even in large systems of several qubits with tractable resources. While our results center around energy-band thermalization as a finite-resolution variant of eigenstate thermalization, which is aesthetically desirable given the viewpoint that the energy eigenvalues and eigenstates completely determine quantum dynamics, we found that it is possible, and perhaps even more convenient, to directly connect time-domain quantities to thermalization and bypass the energy domain\footnote{We view this as being analogous to the ``energy-time uncertainty principle'' for many-body systems in Refs.~\cite{dynamicalqspeedlimit, dynamicalqfastscrambling}, where instead of using the spectral form factor to determine the structure of the energy spectrum and then formulate a speed limit in terms of energy parameters, we directly formulate a time-domain speed limit in terms of the time-domain spectral form factor, which implicitly contains information about the energy domain but is best measured in the time domain.}.

From an analytical standpoint, our results generalize the observed connection between classical ergodicity and eigenstate thermalization in quantized classical systems~\cite{Shnirelman, CdV, ZelditchOG, ZelditchTransition, Sunada, ZelditchMixing, Zelditch, Anantharaman} to observable-dependent quantum thermalization in fully quantum systems. In doing so, we have introduced a framework to account for thermalization over finite energy and time scales, in particular showing that these finite-time features are sufficient to determine thermalization over arbitrarily long timescales due to interference effects (which, to our knowledge, has no obvious classical counterpart). In addition to thermalization, we used this approach to obtain a finite-time Mazur-Suzuki inequality for autocorrelators in terms of approximately conserved charges in App.~\ref{app:mazursuzuki}, which appears to be the first rigorous result of this nature and may be useful for quantum transport problems in finite but large systems. We have also developed intrinsically quantum methods to access thermalization in energy shells and conserved charges that again rely on interference effects with no classical counterpart.

For the case of global thermalization, where the thermal value is independent of energy or other conserved charges, it follows that an analytical computation of autocorrelators of an observable --- for which several methods are known --- over some finite timescale, even in a thermodynamic limit, is sufficient to establish its thermalization with a strong notion of ``almost all'' states. This is illustrated for dual-unitary quantum circuits in App.~\ref{app:dualunitarycircuits}. For the more general phenomenon of thermalization in an energy shell or with conserved charges, we believe that it will be valuable to develop a theoretical understanding of the ``quantum dynamical echoes'' of Sec.~\ref{sec:shellthermalechoes} in different systems in which mere projection to an energy shell may not be a straightforward operation. We also expect that such echoes may be useful in numerical simulations for sufficiently large systems that diagonalizing the Hamiltonian (to obtain the energy levels and project onto an energy shell) becomes computationally expensive. With sufficiently well-developed computational techniques for these echoes, we may hope to analytically or numerically establish the thermalization of suitable observables over (almost) all physical states in specific systems in a fully rigorous manner.

%Experimental accessibility and theoretical computability may have to go hand-in-hand, especially when our idea of an experiment here is a quantum simulation or computation (i.e., where we have a lot of control on the qubits, almost at the level of a theoretical analysis).

One physical scenario where we expect such fully time-domain results to be particularly relevant is in the case of weak decoherence or dissipation in an otherwise Hamiltonian system. For example, if the system loses coherence after a long time $T_{\text{dec}}$ due to weak interactions with the environment, then its energy eigenstates are usually not even definable in a formal sense, but we expect that our energy-independent thermalization results (such as Eq.~\eqref{eq:intro_timedomain_autocorrelator_implies_therm} or Summaries~\ref{sum:q_global_therm}, \ref{sum:q_shell_therm}) will generalize immediately to account for thermalization at times $\lvert t\rvert < T_{\text{dec}}$. Whether a similar simplification of statistical mechanics is possible once the effects of the environment (if not trivially Markovian) dominate is an interesting open question (for example, we expect the time-translation invariance associated with Hamiltonians to be crucial for finite time measurements to be able to guarantee thermalization over all time scales).

\subsection*{Thermalization in ``almost all'' vs. ``all'' physical states}

Finally, we address an important question that has been of some concern in the literature, which is the problem of the physical relevance of ``weak'' thermalization. In particular, our results show that thermalization can be established in an accessible manner for ``almost all'' (rather than precisely all) physical states (i.e., ``weak'' thermalization), such as computational basis states. While our methods can also show thermalization in every conceivable initial state (i.e., ``strong'' thermalization), as discussed in Sec.~\ref{sec:thermalization_finitetimes}, this requires a level of accuracy that we do not expect to be experimentally accessible (nor analytically tractable except perhaps in special cases~\cite{KawamotoStrongETH}) in the thermodynamic limit of many particles. Given that our focus has been on ``accessible'' aspects of quantum statistical mechanics, it is worth highlighting some common objections to weak thermalization and their relevance to our approach.

Let us then consider ``weak ETH'', which refers to the diagonal part of ETH [Eq.~\eqref{eq:ETH}] applying to almost all energy eigenstates rather than all eigenstates. Such a property is trivially true for \textit{local} observables in translation invariant systems in the limit of infinite volume~\cite{BiroliWeakETH, MoriWeakETH, NormalWeakETH}, for a translation invariant energy eigenbasis. Barring some trivial cases (such as noninteracting particles) where a highly degenerate spectrum ensures that (weak) eigenstate thermalization is not sufficient for thermalization in (almost) all physical states, this translation invariance property alone is sufficient for time-averaged thermalization in the absence of degeneracies, at least over infinitely long timescales.

This is true even in ``integrable'' systems with an extensive number of local conserved quantities~\cite{BiroliWeakETH} (again for local observables in the infinite volume limit), and weak ETH has therefore been criticized (e.g., \cite{deutsch2018eth}) as being of unclear relevance to thermalization. Here, we point out that this may be an effect of the infinite volume limit: the \textit{non-interacting} ideal gas, despite being a prototypical ``trivially'' integrable system, is one of the few classical systems that can be rigorously proven to be ergodic and mixing (loosely, thermalizing) in the infinite volume limit with fixed particle density~\cite{Sinai1976}. This can be understood as being due to the ``diffusion'' of information about the initial state to infinitely far away~\cite{GoldsteinLebowitzErgodicGas, GoldsteinSpacetimeErgodicGas}. In such systems, it may be the case that more nonlocal observables that still remain sensitive to the volume of the system (such as a current density of excitations e.g. average ``hopping'' velocity of qubit ``$1$'' states in a finite fraction of the total volume, coarse grained over a range of values to have only macroscopic accuracy) may be more nontrivial to consider, e.g., if one takes the thermodynamic limit by increasing the density of particles but keeping the volume fixed.

A related objection to concluding thermalization from weak ETH is relevant in ``quantum quench'' protocols, where the initial state is prepared as a (low energy) stationary state of one Hamiltonian $\hat{H}_0$, while the dynamics implemented is that of a different Hamiltonian $\hat{H}$. Here, it is typically the case that the low energy states of $\hat{H}_0$ form (part of) the small fraction of states in which an observable satisfying weak ETH may fail to thermalize under the dynamics of $\hat{H}$ if the latter is ``integrable''~\cite{RigolQuench}. For these quench protocols, we must develop a way to restrict the dynamics to an energy shell of $\hat{H}$ which has a large overlap with the low energy states of $\hat{H}_0$, though it is not presently clear how this can be achieved while maintaining rigorous accessibility\footnote{As $\hat{H}_0$ does not in general commute with $\hat{H}$ by some significant amount, we do not expect that the echo strategy for conserved charges in Eq.~\eqref{eq:charged_echo} can be applied in a rigorous manner for similar reasons as in Eq.~\eqref{eq:finite_temp_autocorrelator}.}. We also note that there are systems where the quench protocol leads to thermalization even with just weak ETH at higher energies~\cite{MoriThermalizationWithoutETH}.

Outside the context of quench protocols, however, such as in many-qubit systems where one expects to have the ability to prepare all computational basis states (for example), we expect that weak thermalization may be the strongest rigorous and accessible statement one can make in generic cases (e.g., an analysis in terms of Turing machines suggests that for translation-invariant Hamiltonians, thermalization in specific initial states is a computationally undecidable problem~\cite{ThermalizationUndecidability1, ThermalizationUndecidability2}, translating in our case to $\epsilon$ in Theorem~\ref{thm:energyband_implies_physicalthermalization} being insufficiently small for thermalization in all states).
%This is consistent with our results because the weak ETH fluctuation size $\epsilon$ is like $1/sqrt{N}$$ for an $N$-particle translation-invariant system, which means that the autocorrelators have large enough fluctuations to not imply strong ETH for these systems.
This is indeed the case in classical statistical mechanics as well: one can almost never rule out a ``measure'' zero set of points such as periodic orbits (which are quite generic in ``chaotic'' systems~\cite{Haake, HOdA}) from failing to thermalize~\cite{Sinai1976, SinaiCornfeld, KhinchinStatMech}. Our view is that in systems where this property trivially follows for a certain class of observables due to some symmetry but does not describe the behavior of initial states of interest, one should either focus on a different class of observables, or restrict the Hilbert space to a smaller subspace with the initial states of interest, corresponding to the two cases described above.

\subsection*{Acknowledgments}

{\noindent
We thank Andrew Lucas for comments on the manuscript, Marcos Rigol for useful discussions on the physical relevance of ``weak'' eigenstate thermalization, and R. Shankar for providing a historical perspective on Ref.~\cite{JensenShankarETH}. We also thank Peter Reimann for a suggestion to clarify the term ``physical states''. This work was supported by the Heising-Simons Foundation under Grant 2024-4848.}

\begin{appendices}
\numberwithin{equation}{section}
\addtocontents{toc}{\protect\setcounter{tocdepth}{1}} %keeps only the appendix section titles in the table of contents, without subsections

\section{Quantum Mazur-Suzuki inequality over finite times}
\label{app:mazursuzuki}

The Mazur-Suzuki inequality~\cite{MazurPhysica, SuzukiPhysica} states that for an observable $A$ in a classical or quantum system with exact ``orthogonal'' conserved quantities $Q_k$ and expectation values $\langle \cdot \rangle$ (so that $\langle Q_k Q_j\rangle = \langle Q_k^2\rangle \delta_{kj}$), the infinite time average of the autocorrelator is bounded by:
\begin{equation}
\lim_{T\to\infty}\frac{1}{2T} \int_{-T}^{T}\diff t\ \langle A(t) A(0)\rangle \geq \sum_k \frac{\langle A Q_k\rangle^2}{\langle Q_k^2\rangle}.
\label{eq:MazurSuzuki}
\end{equation}
In finite-dimensional quantum systems, this has a significant issue with accessibility~\cite{ProsenMazur, DharMazur}: as with Eq.~\eqref{eq:autocorrelator_infinitetimeavg}, this requires the infinite time average~\cite{SuzukiPhysica} to be over longer timescales than the energy level spacings, where each $Q_k = \sum_n q_{kn} \lvert E_n\rangle \langle E_n\rvert$ is a linear combination of a different subset of energy projectors (not necessarily spanning the full spectrum).

Here, we will show how completely positive averages of autocorrelators can be used to obtain a general analogue of Eq.~\eqref{eq:MazurSuzuki} over finite timescales and even with approximate conserved quantities $\hat{Q}_k$ for any finite-dimensional quantum system. To our knowledge, this type of inequality has not previously been obtained from first principles in this general setting, but only in the classical limit~\cite{MazurPhysica} or in a strict thermodynamic limit for local systems, in the (weaker) limit of infinite times~\cite{ProsenMazur}. Starting with a time average\footnote{For finite temperature averages with some density operator $\hat{\rho}(\hat{H})$, we note that all our considerations generalize if the inner product $\langle \hat{A}, \hat{B}\rangle = \Tr[\hat{A}^\dagger \hat{B}]$ is replaced by $\langle \hat{A}, \hat{B}\rangle_{\rho} = \Tr[\hat{\rho}(\hat{H}) \hat{A}^\dagger \hat{B}]$ throughout, including in Eq.~\eqref{eq:direct_autocorrelator_energyband}, provided that $\hat{\rho}(\hat{H})$ is exactly diagonal in the energy eigenbasis as per Eq.~\eqref{eq:finite_temp_autocorrelator}. In the context of the Mazur-Suzuki inequality, as our interest is not in rigorously determining the dynamics of \textit{other} states in specific energy shells of interest as in Sec.~\ref{sec:shellthermalechoes}, but only in the single state $\hat{\rho}(\hat{H})$ supported on the full system, it may be appropriate to regard the experimental inability to precisely prepare $\hat{\rho}(\hat{H})$ as a mere experimental error that changes the values of the measured correlators in the final inequality by some small $\epsilon$ compared to their ideal values.} of the autocorrelator of $\hat{A}$ weighted by $w_+(t)$, and some energy band $\Delta E$ of interest where $\widetilde{w}_+(\delta E < \Delta E) > W$ (with $W < 1$), we have from the derivation of Proposition~\ref{prop:autocorrelators_to_energyband} (see App.~\ref{proof:autocorrelators_to_energyband}):
\begin{equation}
\int\diff t\ w_+(t) \Tr[\hat{A}(t)\hat{A}(0)] > W \Tr\left[ [\hat{A}]_{\Delta E}^2\right].
\label{eq:direct_autocorrelator_energyband}
\end{equation}
We recall that $[\hat{A}]_{\Delta E}$ has been defined in Eq.~\eqref{eq:energyband_notation_def}; by definition\footnote{Here, we are making a subtle but unimportant switch to Hamiltonian evolution in the Heisenberg picture, in place of the Schr\"{o}dinger picture in the rest of the text.}, $\hat{A}(t) = e^{i\hat{H} t} \hat{A}(0) e^{-i\hat{H} t}$ and $\hat{A} \equiv \hat{A}(0)$.

Now, consider a set of $M_Q$ (nonzero) orthogonal Hermitian operators $\lbrace \hat{Q}_k\rbrace_{k=1}^{M_Q}$ under the trace inner product, i.e., $\Tr[\hat{Q}_k^\dagger \hat{Q}_j] = \Tr[\hat{Q}_k^2]\delta_{kj}$, $\hat{Q}_k^\dagger = \hat{Q}_k$ and $\Tr[\hat{Q}_k^2] > 0$. Orthogonality restricts $M_Q \leq D^2$, where $D^2$ is the dimension of the space of linear operators on $\mathcal{H}$; however, in practice, we expect $M_Q$ to be some much smaller accessible value such as $O(1)$. We can form a complete orthonormal basis for the linear space of operators by introducing $(D^2-M_Q)$ additional operators $\lbrace\hat{J}_k\rbrace_{k=1}^{D^2-M_Q}$ orthogonal to the $Q_k$ and to each other: $\Tr[\hat{J}_k^\dagger \hat{Q}_k] = 0$, and $\Tr[\hat{J}_k^\dagger \hat{J}_j] = \Tr[\hat{J}_k^\dagger \hat{J}_k] \delta_{kj}$ with $\Tr[\hat{J}_k^\dagger \hat{J}_k] > 0$. Then for any operator $\hat{O}$ acting on $\mathcal{H}$, we have by the completeness relation for this orthonormal basis:
\begin{align}
\Tr[\hat{O}^\dagger \hat{O}] &= \sum_{k=1}^{M_Q} \frac{\Tr[\hat{O}^\dagger \hat{Q}_k] \Tr[\hat{Q}_k^\dagger \hat{O}]}{\Tr[\hat{Q}_k^\dagger \hat{Q}_k]} + \sum_{k=1}^{D^2-M_Q} \frac{\Tr[\hat{O}^\dagger \hat{J}_k] \Tr[\hat{J}_k^\dagger \hat{O}]}{\Tr[\hat{J}_k^\dagger \hat{J}_k]} \nonumber \\
&\geq \sum_{k=1}^{M_Q} \frac{\Tr[\hat{O}^\dagger \hat{Q}_k] \Tr[\hat{Q}_k^\dagger \hat{O}]}{\Tr[\hat{Q}_k^\dagger \hat{Q}_k]},
\label{eq:mazursuzuki_completeness}
\end{align}
where the second line follows from the fact that each term in the first line is non-negative. Applying this inequality to $\hat{O} = [\hat{A}]_{\Delta E}$ and using the Hermiticity of $\hat{A}$ and the $\hat{Q}_k$ gives:
\begin{equation}
\Tr\left([\hat{A}]_{\Delta E}^2\right) \geq \sum_{k=1}^{M_Q} \frac{\left(\Tr\left[[\hat{A}]_{\Delta E} \hat{Q}_k\right]\right)^2}{\Tr[\hat{Q}_k^2]}.
\label{eq:completenessinequality}
\end{equation}
Further noting that $\Tr\left([\hat{A}]_{\Delta E} \hat{Q}_k\right) = \Tr\left[\hat{A} [\hat{Q}_k]_{\Delta E}\right]$, this gives for the autocorrelator [with Eq.~\eqref{eq:direct_autocorrelator_energyband}]:
\begin{equation}
\int\diff t\ w_+(t) \Tr[\hat{A}(t)\hat{A}(0)] > W \sum_{k=1}^{M_Q} \frac{\left(\Tr\left[\hat{A} [\hat{Q}_k]_{\Delta E}\right]\right)^2}{\Tr[\hat{Q}_k^2]}.
\label{eq:intermediate_quantum_MS_1}
\end{equation}

So far, the $\hat{Q}_k$ could have been any set of operators subject to Hermiticity and orthogonality. Now, we will require that each $\hat{Q}_k$ is an approximately conserved quantity. There is no physical loss of generality due to orthogonality: given a set of approximate conserved quantities, we can always form their linear combinations to generate an orthogonal set, which we identify with the $\hat{Q}_k$. However, let us pause for a moment to consider a potential issue of accessibility: ensuring the exact orthogonality of a set of operators is usually inaccessible because of the large dimension of the Hilbert space; the best one can usually ensure is:
\begin{equation}
\left\lvert \Tr[\hat{Q}_k^\dagger \hat{Q}_j] \right\rvert < \epsilon \sqrt{\Tr[\hat{Q}_k^\dagger \hat{Q}_k] \Tr[\hat{Q}_j^\dagger \hat{Q}_j]},
\label{eq:Qexp_approx_orthogonal}
\end{equation}
for some small $\epsilon  > 0$. However, it is easy enough to account for this in practice without any nontrivial physics: we can derive a result for exactly orthogonal operators, and then consider the $O(\epsilon)$ errors made if the exactly orthogonal $\hat{Q}_k$ are replaced by experimentally accessible operators that are sufficiently close to them and satisfy Eq.~\eqref{eq:Qexp_approx_orthogonal} instead; we will therefore continue to assume exact orthogonality.

Returning to our approximately conserved quantities, we identify such quantities by the defining requirement that $\hat{Q}_k(t) \approx \hat{Q}_k$ at least over short timescales associated with the energy band $t \lesssim 2\pi/\Delta E$. More quantitatively, let some completely positive weight function $w_{2+}(t)$ satisfy\footnote{This assumes continuous time for simplicity. For discrete times, if one wants to verify such a property, one must sample erratically so that the ``recurrence'' energy lies outside the width of the spectrum, as in Eq.~\eqref{eq:recurrence_energy}, and only require the condition on $\widetilde{w}_{2+}(\delta E)$ for $\delta E$ within the width of the spectrum.} $\widetilde{w}_{2+}(\delta E~>~\Delta E) < w_{20}$, where we expect that $0< w_{20} \ll 1$. We define a measure of the dynamics of $\hat{Q}_k(t)$ by the difference between its $t=0$ autocorrelator and a time averaged one weighted by $w_{2+}(t)$:
\begin{equation}
\delta Q_k^2[w_{2+}] \equiv \left\lvert \Tr[\hat{Q}_k^2] - \int\diff t\ w_{2+}(t) \Tr[\hat{Q}_k(t) \hat{Q}_k(0)]\right\rvert.
\label{eq:approx_conserved_def}
\end{equation}
For approximately conserved quantities, we will require that $\delta Q_k^2 [w_{2+}] < \epsilon_k$ for some small $\epsilon_k$ and a suitable choice of $w_{2+}(t)$. However, we do not need to \textit{formally} impose this requirement for our results, allowing us to analyze what happens even for large $\delta Q_k^2 [w_{2+}]$, e.g., when the conservation laws of a system break down completely, say under some perturbation.

Again, similar to the derivation of Proposition~\ref{prop:autocorrelators_to_energyband} (i.e., App.~\ref{proof:autocorrelators_to_energyband}), we get the implication:
\begin{align}
\sum_{n,m} \left\lbrace 1-\widetilde{w}_{2+}(E_n - E_m)\right\rbrace \left\lvert\langle E_n\rvert \hat{Q}_k\lvert E_m\rangle\right\rvert^2 &= \delta Q_k^2 [w_{2+}] \nonumber \\
\implies \Tr[\hat{Q}_k^2] - \Tr\left([\hat{Q}_k]_{\Delta E}^2\right) = \Tr\left[\left(\hat{Q}_k - [\hat{Q}_k]_{\Delta E}\right)^2\right] &< \frac{\delta Q_k^2 [w_{2+}]}{1-w_{20}}.
\end{align}
It follows that we can effectively replace $[\hat{Q}]_{\Delta E}$ in Eq.~\eqref{eq:intermediate_quantum_MS_1} with just $\hat{Q}_k$, making some small error. Specifically, we have by the Cauchy-Schwarz inequality,
\begin{equation}
\left\lvert \Tr\left[\hat{A} \hat{Q}_k\right] - \Tr\left[\hat{A}\ [\hat{Q}_k]_{\Delta E}\right]\right\rvert \leq \sqrt{\Tr[\hat{A}^2] \Tr\left[\left(\hat{Q}_k - [\hat{Q}_k]_{\Delta E}\right)^2\right]} < \sqrt{\frac{\delta Q_k^2 [w_{2+}]}{1-w_{20}} \ \Tr[\hat{A}^2]}.
\end{equation}
Using this in Eq.~\eqref{eq:intermediate_quantum_MS_1} with the triangle inequality [in the present context, the negative form $|x-y| \geq \left\lvert (|x|-|y|)\right\rvert$], which implies that
\begin{equation}
\left\lvert \Tr\left[\hat{A}\ [\hat{Q}_k]_{\Delta E}\right]\right\rvert \geq \left\lvert  \left\lvert\Tr\left[\hat{A} \hat{Q}_k \right]\right\rvert - \sqrt{\frac{\delta Q_k^2 [w_{2+}]}{1-w_{20}} \ \Tr[\hat{A}^2]} \right\rvert,
\end{equation}
we obtain a rigorous finite-time Mazur-Suzuki inequality that applies to any finite-dimensional quantum system and accounts for approximate conserved quantities $\hat{Q}_k$ via Eq.~\eqref{eq:approx_conserved_def}:
\begin{equation}
\int\diff t\ w_+(t) \Tr[\hat{A}(t)\hat{A}(0)] > W \sum_k \frac{1}{\Tr[\hat{Q}_k^2]}\left\lvert \left\lvert\Tr\left[\hat{A} \hat{Q}_k \right]\right\rvert - \sqrt{\frac{\delta Q_k^2 [w_{2+}]}{1-w_{20}} \ \Tr[\hat{A}^2]}\right\rvert^2.
\label{eq:QuantumMazurSuzukiFiniteTimes}
\end{equation}
This recovers Eq.~\eqref{eq:MazurSuzuki} for infinite times and exact conserved quantities as $W\to 1$, $\delta Q_k^2[w_{2+}] \to 0$ and $w_{20} \to 0$. But we emphasize that in this formulation, despite being for a general finite-dimensional quantum system, it is entirely possible to take the thermodynamic limit $N\to\infty$ first before the infinite time limit~\cite{ProsenMazur}.

Eq.~\eqref{eq:QuantumMazurSuzukiFiniteTimes} is nontrivial when the right hand side is larger than $A_{\therm}^2$, which is the minimum value of the autocorrelator and corresponds to (global) thermalization. However, it only constrains thermalization over the same timescale (determined by $\Delta E$) in which the charge remains approximately conserved. For a noticeably broken conservation law (say, after a long timescale), $\delta Q_k^2[w_{2+}]$ is large, and consequently the second term within the absolute value that depends on $\delta Q_k^2[w_{2+}]$ weakens the bound, allowing the autocorrelator to decay further than with a perfect conservation law and potentially thermalize (as expected for systems with fewer conservation laws). For the question~\cite{DharMazur} of when this inequality can be saturated over finite timescales with a finite number of accessible charges $Q_k$, which are not necessarily energy projectors as in eigenstate thermalization [Eq.~\eqref{eq:autocorrelator_infinitetimeavg}], we note that this is possible when $[\hat{A}]_{\Delta E}$ has most of its overlap with the $\hat{Q}_k$ rather than the $\hat{J}_k$ by Eq.~\eqref{eq:completenessinequality} and the completeness relation Eq.~\eqref{eq:mazursuzuki_completeness} with $\hat{O} = [\hat{A}]_{\Delta E}$. It remains to be seen if this criterion can be used more systematically to determine when the inequality may be saturated in different classes of systems. It would also be interesting to consider the implications of the exact inequality in Eq.~\eqref{eq:QuantumMazurSuzukiFiniteTimes} for quantum transport problems, e.g., in a finite but large number of qubits.

\section{Thermalization in dual-unitary quantum circuits}
\label{app:dualunitarycircuits}

%\comm{Update discussion of thermal equilibrium and *maybe* classical circuits, emphasize that eigenstate thermalization is implied.}

As a case study, mainly for illustrative purposes, let us consider the example of dual-unitary quantum circuits~\cite{ProsenErgodic, ClaeysErgodicCircuits, ArulCircuit}. These are brickwork quantum circuits of, say, $N$ qubits (with Hilbert space dimension $D = 2^N$), where the qubits are arranged in one dimension in a periodic chain and $2$-qubit unitary gates, each with the special property of ``dual-unitarity'', simultaneously act on alternate pairs of qubits\footnote{We consider qubits here for definiteness, but our conclusions should generalize to dual-unitary circuits for qu$d$its~\cite{ClaeysErgodicCircuits, ArulCircuit}.}. We will impose time translation invariance so that the resulting (Floquet) discrete-time system, with each step consisting of two alternating parallel applications of local gates, has well-defined (quasi-)energy levels $E_n$. But we do not require spatial translation invariance, i.e., each $2$-qubit dual-unitary gate at a given time slice may be entirely different. The latter excludes spatial translation invariance as one of the special symmetries under which weak eigenstate thermalization may be directly shown for local observables in the infinite volume limit (assuming nondegenerate levels)~\cite{BiroliWeakETH, MoriWeakETH}; therefore, it is not yet clear if weak eigenstate thermalization is satisfied in general in our case. Even for translation-invariant circuits, it is not clear if the spectrum is necessarily nondegenerate for every single circuit of interest, which is required for thermalization to follow from eigenstate thermalization.

Under these circumstances, it can be shown that all autocorrelators of traceless single-site observables vanish for all time steps $\lvert t\rvert < N$, provided each unitary $2$-qubit gate satisfies the property of ``dual unitarity'', for the details of which we refer to Ref.~\cite{ProsenErgodic}. While the dynamics of specific (matrix product) initial states have been computed exactly for such circuits~\cite{PiroliDualUnitary}, and $f(E_1,E_2)$ in the ETH ansatz for certain observables has been estimated from autocorrelators~\cite{FritzschProsen} assuming the form in Eq.~\eqref{eq:ETH}, our intention here is to show how more general conclusions on thermalization may be drawn in a much simpler way. Quantitatively, all traceless single-qubit observables $\widehat{\delta a}_i$ (with $i = 1$ to $N$ indexing the qubit) satisfy:
\begin{equation}
\Tr[\widehat{\delta a}_i(t) \widehat{\delta a}_i(0)] = 0, \text{ for all } t:  0 < |t| < N.
\label{eq:dualunitaryautocorrelator}
\end{equation}
For intuition, we note that in the $N\to\infty$ thermodynamic limit, all autocorrelators thermalize at all (nonzero) finite times. Then, our results [we will directly base our statements on Sec.~\ref{sec:summarythermalization} and \ref{sec:summaryautocorrelator}, which are informal versions of Summary~\ref{sum:q_global_therm} and the Theorems therein combined with Sec.~\ref{sec:attackoftheclones}] imply the following: Given any basis of initial states $\lvert \psi_k\rangle$ (which may be the computational basis states, or more nontrivially any family of states obtained by acting with an arbitrary unitary circuit on the computational basis states), and every single-qubit observable $\hat{a}_i$, for every $\epsilon>0$, there exists some $N_0 \in \mathbb{N}$ and $T_0 \in \mathbb{N}$ such that the following hold for any choice of $t_0 \in \mathbb{Z}$ for any $N$-qubit dual unitary circuit if $N>N_0$:
\begin{enumerate}
\item Time-averaged thermalization in almost all basis states [as in Eq.~\eqref{eq:intuitive_finitetime_therm}]:
    \begin{equation}
\left\lvert \frac{1}{2T}\int_{t_0-T}^{t_0+T}\diff t\ \langle \psi_k(t)\rvert\ \hat{a}_i\ \lvert \psi_k(t)\rangle - \frac{1}{2}\Tr_i[\hat{a}_i]\right\rvert < \epsilon, \text{ for almost all } k \text{ for any } T > T_0,
\label{eq:dualunitarythermalization_timeaverage}
\end{equation}
\item Eigenstate thermalization in almost all eigenstates [by Eq.~\eqref{eq:energyband_implies_ET_almostall}]:
\begin{equation}
    \left\lvert \langle E_n\rvert \hat{a}_i\lvert E_n\rangle - \frac{1}{2}\Tr_i[\hat{a}_i]\right\rvert < \epsilon, \text{ for almost all } n.
\end{equation}
We also have, as in Eq.~\eqref{eq:instantaneousthermalequilibrium}, that $\hat{a}_i$ is trivially thermal in almost all states in any basis of initial states with support on a narrow band of quasi-energies, and that off-diagonal fluctuations have low variance as in Eq.~\eqref{eq:clonedenergyband_implies_offdiagETH_almostall}.
\item Basis-state-averaged thermalization at almost all times [as in Eq.~\eqref{eq:intuitive_cloned_finitetime_therm}]:
 \begin{equation}
\left\lvert \sum_k p_k\langle \psi_k(t)\rvert\ \hat{a}_i\ \lvert \psi_k(t)\rangle - \frac{1}{2}\Tr_i[\hat{a}_i]\right\rvert < \epsilon, \text{ for almost all } t \in [t_0-T, t_0+T], \text{ for any } T > T_0,
\label{eq:dualunitarythermalization}
\end{equation}
for any set of weights $p_k$ provided that $2^N \sum_k p_k^2 \leq 1/\mu$ for some constant $0 < \mu \leq 1$  (see Eq.~\eqref{eq:intuitive_cloned_finitetime_therm} for the interpretation of ``almost all t'').
\item Thermalization of subsystems with a statistical bath [as in Eq.~\eqref{eq:intuitive_cloned_finitetime_thermal_bath}]: if one partitions the $N$ qubits into $N_C$ non-thermal qubits in \textit{any} pure state $\lvert \psi\rangle_C$ and $N_B = N-N_C$ bath qubits in a mixed (e.g. high-temperature) state $\hat{\rho}_B$, the observable $a_i$ (which may e.g. be chosen to be within the $N_C$ non-thermal qubits) is guaranteed to attain thermal equilibrium at almost all times:
\begin{equation}
\left\lvert \Tr\left[ \left(\lvert\psi\rangle_C\langle \psi\rvert \otimes \hat{\rho}_B\right)(t)\ \hat{a}_i\right]\  - \frac{1}{2}\Tr_i[\hat{a}_i]\right\rvert < \epsilon, \text{ for almost all } t \in [t_0-T, t_0+T], \text{ for any } T > T_0,
\label{eq:dualunitarythermalization_bath}
\end{equation}
as long as $N_C \lesssim (\log N)^\alpha$ for a suitable constant $\alpha \geq 0$ (so that $2^{N_C}$ is at most accessibly large as a function of $N$), and $2^{N_B} \Tr_B[\hat{\rho}_B^2] \leq c$ for any accessibly large constant $c$.
\end{enumerate}

Therefore, given a method to compute autocorrelators, our results enable a rigorous proof of different kinds of accessible thermalization processes in individual many-body systems in the thermodynamic limit, without any direct input required from the energy levels, which become inaccessible in this limit. At the same time, we can also make rigorous statements about eigenstate thermalization from these autocorrelators.

%We will not work out the technical details for brevity, but note for specificity that the above statement is a direct consequence of Theorem~\ref{thm:energyband_implies_physicalthermalization} and Proposition~\ref{prop:autocorrelators_to_energyband}, as well as the ``cloning'' strategy of Sec.~\ref{sec:attackoftheclones} via Corollary~\ref{cor:cloned_energyband_to_physical_thermalization}. Therefore, given a method to compute autocorrelators, our results enable a rigorous proof of thermalization in individual many-body systems in the thermodynamic limit, without any reference to energy levels, which become inaccessible in this limit.

While we have expressed Eq.~\eqref{eq:dualunitarythermalization} for single-qubit observables, larger local observables on, say, $N_S$ consecutive qubits must also have vanishing autocorrelators except at extremely short $t=O(N_S)$ and extremely long $t=\Omega(N)$ times, by our understanding of Ref.~\cite{ProsenErgodic}. We therefore expect an analogue of Eq.~\eqref{eq:dualunitarythermalization} to hold for any traceless local observable spanning $N_S = o(N)$ consecutive sites.
Despite the strong thermalization behavior (at a level comparable to classical weak-mixing~\cite{HalmosErgodic, SinaiCornfeld, Sinai1976}) indicated by the above statements, it is known that some of these circuits show spectral correlations corresponding to dynamical non-ergodicity, e.g., Poisson spectral statistics~\cite{ClaeysErgodicCircuits}. These examples therefore also directly illustrate the logical separation between ergodic dynamics and the statistical mechanics of observables.

Another interesting class of observables are ``comoving'' single-qubit observables $\hat{a}_i^{\text{co}}$, which are chosen to implicitly shift by two qubits after every Floquet time step (which corresponds to a ``wavefront'' of information propagation in any local brickwork circuit --- information cannot advance by more than $2$ qubits in a Floquet time step, which consists of two successive applications of alternating $2$-qubit gates). To every single-particle observable $\hat{a}_i$, we can associate a comoving analogue via (still keeping to the Schr\"{o}dinger picture where states have dynamics while operators do not; the $t$ below should be interpreted as a property of the relation between the observables at different times rather than explicit dynamics):
\begin{equation}
\hat{a}_i^{\text{co}}[t] = \hat{a}_{i+2t}.
\end{equation}
The dynamics of $\hat{a}_i^{\text{co}}$ in a dual unitary circuit is equivalent to the dynamics of the original $\hat{a}_i$ in a circuit formed by staggering each Floquet step of the dual unitary circuit with a (periodic) $2$-qubit shift/translation of the qubits in one dimension, in the opposite direction~\cite{ShiftUnitary1, ShiftUnitary2}. In this case, however, the analogue of Eq.~\eqref{eq:dualunitaryautocorrelator} is not satisfied for all dual unitary circuits, but only as an asymptotic equality (i.e. the autocorrelator is less than any given $\epsilon$) for special classes called mixing circuits at long but finite times (and for higher qu$d$its, Bernoulli circuits at small times as well)~\cite{ProsenErgodic, ClaeysErgodicCircuits, ArulCircuit}. Comoving single-particle observables can therefore thermalize in the sense of Eq.~\eqref{eq:dualunitarythermalization} or \eqref{eq:dualunitarythermalization_bath} for almost all initial states if they are mixing or Bernoulli.
%While the ``if'' implication follows as above, the ``only if'' implication is due to Eq.~\eqref{eq:physicalstates_to_autocorrelator} [combined with the cloning strategy of Sec.~\ref{sec:attackoftheclones}] which implies that if an observable thermalizes in almost all (computational) basis states, then its autocorrelator must thermalize.

As an aside, to emphasize the nontrivial role of comoving observables, it is worth considering the dynamics of a shift unitary itself, i.e. merely translating the qubits by (say) one unit in one direction in each time step e.g. $\hat{a}_i(t) = \hat{a}_{i-t}(0)$. While this is an entirely classical system, local observables also satisfy Eq.~\eqref{eq:dualunitaryautocorrelator} (as it takes a full period $t=N$ for the shift to return the original qubit to itself to have any autocorrelation), and the corresponding implications for thermalization (comparable to weak-mixing) follow directly in the $N \to \infty$ limit. From a classical standpoint, this is not surprising: when acting on classical bitstrings (the computational basis states), the shift unitary approaches a maximally chaotic Bernoulli shift with a binary alphabet $\lbrace 0,1\rbrace$ in the $N \to \infty$ limit, which has very strong ergodic and mixing properties (including K-mixing and Bernoulli)~\cite{SinaiCornfeld}, and therefore must thermalize almost all ensembles of states (as well as higher order correlators, though never thermalizing in any individual bitstring). Nevertheless, a comoving observable $\hat{a}_i^{\text{co}}[t] = \hat{a}_{i+t}$ strongly violates Eq.~\eqref{eq:dualunitaryautocorrelator} and trivially does not thermalize for shift unitaries. This shows that mixing and Bernoulli dual-unitary circuits (in which comoving observables also thermalize) even have stronger accessible thermalization properties than classical Bernoulli shifts implemented as quantum circuits.
%for some more context, nonzero measure indicator observables in the classical phase space of a $2$-sided Bernoulli shift correspond to few-body observables in the shift-unitary circuit.
%for more intuition, a typical computational basis state is a random bitstring, so a single-qubit observable under shift dynamics generically evolves through a random sequence of bits, which is classically the most random discrete-time dynamical system (Bernoulli shifts) one can think of.

%The shift unitary thermalizes for local observables but not comoving observables

To summarize our case study, our results allow the rigorous characterization of how local observables may thermalize or fail to thermalize (see Summary~\ref{sum:q_global_therm}) over accessible timescales entirely based on the behavior of autocorrelators for different classes of dual-unitary gates. This illustrates how quantum thermalization in many-body systems, at least in cases corresponding to ``global thermalization'' without explicit energy dependence, may be rigorously treated in a simple manner similar to the classical statistical mechanics of observables, as in Eq.~\eqref{eq:KhinchinAutocorrelator}.

\section{Energy-band thermalization \texorpdfstring{$\implies$}{implies} eigenspaces \textit{usually} thermalize}
\label{app:energybandeigenspace}

Here, we will constrain eigenspace thermalization given energy-band thermalization, deriving and expanding on Eq.~\eqref{eq:weakeigenspacecriterion}. In particular, given energy-band thermalization, we constrain the dimension of the subspace of the energy shell $\Eshell{d}$ in which eigenspace thermalization may be violated. To formalize this, let us write
\begin{equation}
\Eshell{d} = \Eshell{d_s} \oplus \Eshell{\delta d}
\label{eq:eigenspacesplit}
\end{equation}
such that $\proj_A$ satisfies eigenspace thermalization to some accuracy $\lambda > 0$ for all eigenspaces $\mathcal{H}_s(E) \subseteq \Eshell{d_s}$, and does not satisfy eigenspace thermalization to this accuracy for all eigenspaces $\mathcal{H}_\delta(E) \subseteq \Eshell{\delta d}$.
It is also convenient to use $n(\Eshell{d})$ to denote the number of eigenspaces contained in the energy shell $\Eshell{d}$ (and likewise for $\Eshell{d_s}$, $\Eshell{\delta d}$).

Then, Eq.~\eqref{eq:energybandtoeigenspaceavg1} implies that:
\begin{proposition}[Energy-band thermalization constrains the number of non-thermal eigenspaces]
\label{prop:energybandtoeigenspace}
Let $\proj_A$ satisfy energy-band thermalization to $\langle \proj_A\rangle_{\Eshell{d}}$ with some bandwidth $\Delta E > 0$ and accuracy $\epsilon > 0$ in $\Eshell{d}$. Then, any smaller energy shell $\Eshell{\delta d} \subseteq \Eshell{d}$ in which $\proj_A$ does not satisfy eigenspace thermalization to $\langle \proj_A\rangle_{\Eshell{d}}$ with a given accuracy $\lambda > 0$, can contain only $n(\Eshell{\delta d})$ eigenspaces, constrained by
\begin{equation}
n(\Eshell{\delta d}) \equiv \left(\sum_{\mathcal{H}_\delta(E) \in \Eshell{\delta d}} 1\right) < \frac{\epsilon^2}{\lambda^2} d.
\label{eq:eigenspaceviolationconstraint}
\end{equation}
\end{proposition}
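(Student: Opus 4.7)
The plan is to combine two ingredients that are essentially already in place: first, the ``averaged'' consequence of energy-band thermalization for eigenspaces, Eq.~\eqref{eq:energybandtoeigenspaceavg1}, which bounds the sum over eigenspaces of the squared deviation of $\proj_A$ from $\langle \proj_A\rangle_{\Eshell{d}}\idop$ within each eigenspace; and second, the definition of failure of eigenspace thermalization to accuracy $\lambda$, which lower-bounds this same deviation eigenspace by eigenspace. The main observation is that both bounds act on the same non-negative quantity $\Tr\bigl[\bigl\lbrace (\proj_A-\langle \proj_A\rangle_{\Eshell{d}}\idop)\proj(E)\bigr\rbrace^2\bigr]$, so they combine directly by restricting the sum.

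First I would recall that Eq.~\eqref{eq:energybandtoeigenspaceavg1} is obtained from Definition~\ref{def:energybandthermalization} by dropping from the energy-band sum all pairs $(E_n,E_m)$ with $E_n\neq E_m$, since those terms are non-negative and any eigenspace contribution (where $E_n=E_m$) already lies within the open energy band of width $\Delta E>0$. This yields
\begin{equation}
\sum_{\mathcal{H}(E)\subseteq\Eshell{d}} \Tr\!\left[\bigl\lbrace (\proj_A-\langle \proj_A\rangle_{\Eshell{d}}\idop)\proj(E)\bigr\rbrace^2\right] < \epsilon^2 d.
\label{eq:plan_ebt_avg}
\end{equation}
Since the summand is non-negative for every $E$, I can freely restrict the left-hand side to any subset of eigenspaces — in particular to those contained in the ``bad'' energy shell $\Eshell{\delta d}$ of Eq.~\eqref{eq:eigenspacesplit} — and only strengthen the inequality.

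Next, by hypothesis each eigenspace $\mathcal{H}_\delta(E)\subseteq \Eshell{\delta d}$ violates eigenspace thermalization to accuracy $\lambda$, which by Definition~\ref{def:eigenspace_thermalization} means that for every such $E$,
\begin{equation}
\Tr\!\left[\bigl\lbrace (\proj_A-\langle \proj_A\rangle_{\Eshell{d}}\idop)\proj(E)\bigr\rbrace^2\right] \geq \lambda^2.
\end{equation}
Summing this lower bound over the $n(\Eshell{\delta d})$ eigenspaces in $\Eshell{\delta d}$ and comparing to the restriction of Eq.~\eqref{eq:plan_ebt_avg} to the same subset of eigenspaces gives
\begin{equation}
n(\Eshell{\delta d})\,\lambda^2 \leq \sum_{\mathcal{H}_\delta(E)\subseteq\Eshell{\delta d}} \Tr[\cdots] \leq \sum_{\mathcal{H}(E)\subseteq\Eshell{d}}\Tr[\cdots] < \epsilon^2 d,
\end{equation}
which upon dividing by $\lambda^2$ yields Eq.~\eqref{eq:eigenspaceviolationconstraint}.

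There is no real obstacle here; the proof is essentially bookkeeping once one has Eq.~\eqref{eq:plan_ebt_avg} in hand. The only subtle point worth stating carefully in the write-up is that the energy-band selector in Definition~\ref{def:energybandthermalization} uses the \emph{strict} inequality $|E_n-E_m|<\Delta E$, so every eigenspace pair (with $E_n=E_m$) is automatically captured regardless of the specific value of $\Delta E>0$; this is what makes the reduction from the band sum to the eigenspace sum a valid one-sided inequality rather than an equality-up-to-choice-of-$\Delta E$ statement, and it is also the reason the bound is independent of $\Delta E$. A short remark interpreting the result — that $n(\Eshell{\delta d})/n(\Eshell{d}) \ll 1$ only when the total number of eigenspaces $n(\Eshell{d})$ is large compared to $\epsilon^2 d/\lambda^2$, motivating the discussion around Eq.~\eqref{eq:weakeigenspacecriterion} — can then be appended for physical context.
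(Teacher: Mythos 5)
Your proof is correct and follows essentially the same route as the paper: drop the off-diagonal ($E_n\neq E_m$) contributions from the energy-band sum to get Eq.~\eqref{eq:energybandtoeigenspaceavg1}, restrict the eigenspace sum to the non-thermalizing shell $\Eshell{\delta d}$, lower-bound each summand there by $\lambda^2$, and divide. The only cosmetic difference is that you carry the strict inequality through cleanly (the paper's final displayed line slips to a non-strict $\leq$), and your remark about the strict $|E_n-E_m|<\Delta E$ guaranteeing that every eigenspace pair is captured regardless of the value of $\Delta E>0$ is a helpful clarification worth keeping.
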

\begin{proof}
This follows from interpreting Eq.~\eqref{eq:energybandtoeigenspaceavg1} as an equal-weight average over different eigenspaces $\mathcal{H}(E)$; see App.~\ref{proof:energybandtoeigenspace}.
\end{proof}

Let us consider the implications of Eq.~\eqref{eq:eigenspaceviolationconstraint}. First, we note that for almost all eigenspaces to thermalize, the ratio $n(\Eshell{\delta d})/n(\Eshell{d})$ should become negligible:
\begin{equation}
\frac{n(\Eshell{\delta d})}{n(\Eshell{d})} \ll 1.
\label{eq:weakeigenspacethermalization_fraction}
\end{equation}
In analogy with ``weak eigenstate thermalization''~\cite{BiroliWeakETH, MoriWeakETH, MoriETHreview}, we will diagnose ``weak eigenspace thermalization'' if Eq.~\eqref{eq:weakeigenspacethermalization_fraction} is satisfied. By Proposition~\ref{prop:quantum_eigenspace}, this implies that thermalization o.a. occurs in arbitrary initial states supported on an energy shell $\Eshell{d_s}$ that contains most of the eigenspaces in $\Eshell{d}$. Determining weak eigenspace thermalization requires our accuracy for energy-band thermalization to be:
\begin{equation}
\epsilon_{\text{weak}} \ll \lambda\ \sqrt{\frac{n(\Eshell{d})}{d}}.
\end{equation}
This depends on the number of eigenspaces $n(\Eshell{d})$ in the energy shell, and is therefore no longer independent of the energy spectrum. In particular, for $\epsilon_{\text{weak}}$ to be an ``accessible'' quantity (i.e. is significantly larger than $1/d$), $n(\Eshell{d})$ must be a correspondingly large fraction of $d$ (which can still be $\ll 1$). For example, in a system of $N$ particles, if we want (for some $\alpha > 0$)
\begin{equation}
\epsilon_{\text{weak}} > \frac{1}{N^\alpha},
\end{equation}
then we must have the following constraint on the total number of eigenspaces in the energy spectrum:
\begin{equation}
n(\Eshell{d}) > \frac{d}{N^{2\alpha}}.
\label{eq:usually}
\end{equation}
That we cannot show weak eigenspace thermalization with an accessible accuracy of energy-band thermalization without some constraint on the spectrum is why the title of this Appendix emphasizes the qualifier ``usually''. Concretely, ``usually'' refers to systems satisfying Eq.~\eqref{eq:usually}; it implies that the spectrum must generally have a low degree of degeneracy. Nevertheless, we expect this condition to apply to a large class of systems, especially seeing that it allows $n(\Eshell{d}) \ll d$ (such as systems with a finite or at most weakly growing degree of degeneracy in each level, in the thermodynamic limit $d\to\infty$).

In contrast, to ensure that eigenspace thermalization is satisfied in $\Eshell{d}$ rather than just the smaller $\Eshell{d_s}$, i.e. every eigenspace in $\Eshell{d}$ satisfies eigenspace thermalization, we require
\begin{equation}
n(\Eshell{\delta d}) < 1,
\end{equation}
to diagnose which we need an accuracy of
\begin{equation}
\epsilon < \frac{\lambda}{\sqrt{d}}
\label{eq:energyband_epsilon_strong_eigenspace}
\end{equation}
in energy-band thermalization. This is consistent with Eq.~\eqref{eq:energybandtoeigenspace1}. This constraint is independent of the energy spectrum, but as $\lambda < 1$ for a nontrivial expression of eigenspace thermalization, Eq.~\eqref{eq:energyband_epsilon_strong_eigenspace} is not expected to be an accessible degree of accuracy (scaling as a power of $1/d$).

\section{Proofs}
\label{app:proofs}

\subsection{Classical ``eigenstate'' thermalization without ergodicity}
\subsubsection{Proposition~\ref{prop:classicalET}: Classical eigenstate thermalization implies thermalization o.a.}
\label{proof:classicalET}
We have
\begin{align}
\int\overline{\diff t}\ \cprob_A[\rho(x,t)] &= \sum_k \int\overline{\diff t}\ \cprob_{A \cap \mathcal{P}_k}[\rho(x,t)] \nonumber \\
 &= \sum_k \frac{\mu(A \cap \mathcal{P}_k)}{\mu(\mathcal{P}_k)} \cprob_{\mathcal{P}_k}[\rho].
\label{eq:ergodic_subset_decomposition}
\end{align}
In the second line, we have used the ergodicity of each subset $\mathcal{P}_k$ by applying Eq.~\eqref{eq:ergodicity_subset}. It now follows that if Eq.~\eqref{eq:classical_ET} holds, then we get the appearance of ergodicity on the full phase space as per Eq.~\eqref{eq:thermonavg}, as
\begin{equation}
\sum_k \cprob_{\mathcal{P}_k}[\rho] = \cprob_{\mathcal{P}}[\rho].
\end{equation}

\subsubsection{Proposition~\ref{prop:singlestate}: A single initial distribution determines classical eigenstate thermalization}
\label{proof:singlestate}
From Eq.~\eqref{eq:cl_singlestate_therm}, rewriting the left hand side in terms of the ergodic subsets $\mathcal{P}_k$ as in Eq.~\eqref{eq:ergodic_subset_decomposition}, we get
\begin{equation}
\sum_k \frac{\mu(A \cap \mathcal{P}_k)}{\mu(\mathcal{P}_k)} \cprob_{\mathcal{P}_k}[\rho_A] = \frac{\mu(A)}{\mu(\mathcal{P})}.
\label{eq:classicaletstep1}
\end{equation}
For the specific initial state in Eq.~\eqref{eq:classical_singlestate}, we have
\begin{equation}
\cprob_{\mathcal{P}_k}[\rho_A] = \frac{\mu(A \cap \mathcal{P}_k)}{\mu(A)},
\end{equation}
using which Eq.~\eqref{eq:classicaletstep1} becomes (together with some manipulations on the right hand side using $\sum_k \mu(\mathcal{P}_k) = \mu(\mathcal{P})$):
\begin{align}
\frac{1}{\mu(A)}\sum_k \mu(\mathcal{P}_k) \frac{\mu(A \cap \mathcal{P}_k)^2}{\mu(\mathcal{P}_k)^2} &= \frac{1}{\mu(A)}\left[2 \mu(A) \frac{\mu(A)}{\mu(\mathcal{P})} - \sum_k \mu (\mathcal{P}_k) \frac{\mu(A)^2}{\mu(\mathcal{P})^2}\right] \nonumber \\
\implies \frac{1}{\mu(A)}\sum_k \mu(\mathcal{P}_k) \frac{\mu(A \cap \mathcal{P}_k)^2}{\mu(\mathcal{P}_k)^2} &= \frac{1}{\mu(A)}\left[2 \sum_k \mu(\mathcal{P}_k)\frac{\mu(A \cap \mathcal{P}_k)}{\mu(\mathcal{P}_k)} \frac{\mu(A)}{\mu(\mathcal{P})} - \sum_k \mu(\mathcal{P}_k) \frac{\mu(A)^2}{\mu(\mathcal{P})^2} \right].
\label{eq:classicaletstep2}
\end{align}
Collecting all the terms in Eq.~\eqref{eq:classicaletstep2}, we get a weighted variance:
\begin{equation}
\frac{1}{\mu(A)}\sum_k \mu(\mathcal{P}_k) \left[\frac{\mu(A \cap \mathcal{P}_k)}{\mu(\mathcal{P}_k)} - \frac{\mu(A)}{\mu(\mathcal{P})}\right]^2 = 0.
\end{equation}
Each term is non-negative and $\mu(\mathcal{P}_k)>0$ by assumption, implying eigenstate thermalization as in Eq.~\eqref{eq:classical_ET}:
\begin{equation}
\frac{\mu(A \cap \mathcal{P}_k)}{\mu(\mathcal{P}_k)} - \frac{\mu(A)}{\mu(\mathcal{P})} = 0.
\end{equation}

\subsection{Quantum eigenspace thermalization with degeneracies}

\subsubsection{Proposition~\ref{prop:quantum_ET}: Quantum eigenstate thermalization implies thermalization o.a. given a nondegenerate spectrum~\cite{vonNeumannThermalization, srednicki1999eth}}
\label{proof:quantum_ET}
From Eq.~\eqref{eq:infytimeaverage}, we have by the triangle inequality:
\begin{equation}
\left\lvert \int\overline{\diff t}\ \Tr[\hat{\rho}(t) \proj_A] - \langle \proj_A\rangle_{\Eshell{d}} \right\rvert \leq \left(\sum_{n: \lvert E_n\rangle \in \Eshell{d}} \langle E_n\rvert \hat{\rho}\lvert E_n\rangle\right) \max_{n: \lvert E_n\rangle \in \Eshell{d}} \left\lvert\langle E_n\rvert \proj_A\lvert E_n\rangle - \langle \proj_A\rangle_{\Eshell{d}}\right\rvert
\end{equation}
As $\sum_n \langle E_n\rvert \hat{\rho}\lvert E_n\rangle = 1$, Eq.~\eqref{eq:quantum_thermalizationoa} follows from here, given Eq.~\eqref{eq:quantum_ET}.

\subsubsection{Proposition~\ref{prop:quantum_eigenspace}: Quantum eigenspace thermalization implies thermalization o.a.}
\label{proof:quantum_eigenspace}
Rewriting Eq.~\eqref{eq:infytimeaverage_degenerate} to include the $-\langle \proj_A\rangle_{\Eshell{d}}$ term, we get:
\begin{align}
\int\overline{\diff t}\ \Tr[\hat{\rho}(t) \proj_A]-\langle \proj_A\rangle_{\Eshell{d}} &= \sum_{E \in \mathcal{E}}\left[ \sum_{\lvert E_n\rangle, \lvert E_m\rangle \in \mathcal{H}(E)} \langle E_n\rvert \hat{\rho}\lvert E_m\rangle \langle E_m\rvert \left(\proj_A-\langle \proj_A\rangle_{\Eshell{d}}\idop\right) \lvert E_n\rangle\right] \nonumber \\
&=  \sum_{E \in \mathcal{E}} \Tr\left[\proj(E) \hat{\rho} \proj(E) \proj(E) \left(\proj_A-\langle \proj_A\rangle_{\Eshell{d}}\idop\right)\proj(E) \right], \label{eq:infytimeaverage_degenerate2}
\end{align}
where the second line follows from
\begin{equation}
\proj(E) = \proj(E)^2 = \sum_{n: \lvert E_n\rangle \in \mathcal{H}_E} \lvert E_n\rangle \langle E_n\rvert.
\end{equation}
Now applying the Cauchy-Schwarz inequality to each term in Eq.~\eqref{eq:infytimeaverage_degenerate2} with a given $E$, and the triangle inequality to consider each such term separately, we get
\begin{align}
\left\lvert \int\overline{\diff t}\ \Tr[\hat{\rho}(t) \proj_A] - \langle \proj_A\rangle_{\Eshell{d}} \right\rvert &\leq \sum_{E \in \mathcal{E}} \sqrt{\Tr\left[\left\lbrace \hat{\rho} \proj(E)\right\rbrace^2\right] \Tr\left[\left\lbrace \left(\proj_A - \langle \proj_A\rangle_{\Eshell{d}} \idop\right) \proj(E)\right\rbrace^2\right]} \\
&\leq \epsilon \sum_{E \in \mathcal{E}} \sqrt{\Tr\left[\left\lbrace \hat{\rho} \proj(E)\right\rbrace^2\right]}.
\label{eq:eigenspace_degenerate_step}
\end{align}
For any density operator $\hat{\rho}$, we have the inequality
\begin{equation}
\lvert\langle E_n\rvert \hat{\rho}\lvert E_m\rangle\rvert^2 \leq \langle E_n\rvert \hat{\rho}\lvert E_n\rangle \langle E_m\rvert \hat{\rho}\lvert E_m\rangle,
\end{equation}
which is the Cauchy-Schwarz inequality for the inner product $\langle \phi, \psi\rangle_{\rho} \equiv \langle \phi\rvert \hat{\rho}\lvert \psi\rangle$, noting that this is an admissible inner product because $\hat{\rho}$ is a positive linear operator ($\langle \psi\rvert \hat{\rho}\lvert \psi\rangle \geq 0$, with $\hat{\rho} \neq 0$ as $\Tr[\hat{\rho}] = 1$). Consequently,
\begin{align}
\Tr\left[\left\lbrace \hat{\rho} \proj(E)\right\rbrace^2\right] &= \sum_{\lvert E_n\rangle, \lvert E_m\rangle \in \mathcal{H}(E)} \lvert\langle E_n\rvert \hat{\rho}\lvert E_m\rangle\rvert^2 \nonumber \\
&\leq \sum_{\lvert E_n\rangle, \lvert E_m\rangle \in \mathcal{H}(E)} \langle E_n\rvert \hat{\rho}\lvert E_n\rangle \langle E_m\rvert \hat{\rho}\lvert E_m\rangle \nonumber \\
&= \Tr\left[\hat{\rho} \proj(E)\right]^2.
\end{align}
Inserting this inequality into Eq.~\eqref{eq:eigenspace_degenerate_step}, we get
\begin{equation}
\left\lvert \int\overline{\diff t}\ \Tr[\hat{\rho}(t) \proj_A] - \langle \proj_A\rangle_{\Eshell{d}} \right\rvert 
\leq \epsilon \sum_{E \in \mathcal{E}} \Tr\left[\hat{\rho} \proj(E)\right] = \epsilon \label{eq:eigenspace_thermalization_result_app},
\end{equation}
which proves Eq.~\eqref{eq:quantum_thermalizationoa_degenerate}, where we have used $\Tr\left[\hat{\rho} \proj(E)\right] \geq 0$ and
\begin{equation}
\sum_{E \in \mathcal{E}} \Tr\left[\hat{\rho} \proj(E)\right] = \Tr[\hat{\rho}] = 1.
\end{equation}

%\textcolor{blue}{Let $\hat{\gamma}$ be a positive operator in some Hilbert space, and let
%\begin{equation}
%\hat{\rho}_{\gamma} \equiv \frac{1}{\Tr[\hat{\gamma}]} \hat{\gamma}.
%\end{equation}
%This exists because $\Tr[\hat{\gamma}] > 0$ for a positive operator (i.e., as it is non-vanishing, even though its eigenvalues are only positive-semidefinite). Then, $\hat{\rho}_{\gamma}$ is an admissible density operator, whose purity satisfies
%\begin{equation}
%\Tr[\hat{\rho}^2] \leq 1.
%\end{equation}
%But this means that (as a trivially direct implication):
%\begin{equation}
%\Tr[\hat{\gamma}^2] \leq \Tr[\hat{\gamma}]^2.
%\end{equation}
%}

\subsection{Quantum energy-band thermalization for accessible time scales}

\subsubsection{Proposition \ref{prop:energybandtoeigenspace}: Energy-band thermalization constrains the number of non-thermal eigenspaces}
\label{proof:energybandtoeigenspace}

From Eq.~\eqref{eq:energybandtoeigenspaceavg1}, using the decomposition of eigenspaces into the two energy shells $\Eshell{d_s}$ and $\Eshell{\delta d}$ as in Eq.~\eqref{eq:eigenspacesplit}, with respective projectors $\proj_s(E)$ and $\proj_\delta(E)$, we get
\begin{equation}
\frac{1}{d}\sum_{\mathcal{H}_s(E) \subseteq \Eshell{d_s}} \Tr\left[\left\lbrace \left(\proj_A - \langle \proj_A\rangle_{\Eshell{d}} \idop\right) \proj_s(E)\right\rbrace^2\right]+\frac{1}{d}\sum_{\mathcal{H}_\delta(E) \subseteq \Eshell{\delta d}} \Tr\left[\left\lbrace \left(\proj_A - \langle \proj_A\rangle_{\Eshell{d}} \idop\right) \proj_\delta (E)\right\rbrace^2\right] < \epsilon^2.
\end{equation}
Noting that the first term is non-negative, we get an inequality only for $\Eshell{\delta d}$:
\begin{equation}
\frac{1}{d}\sum_{\mathcal{H}_\delta(E) \subseteq \Eshell{\delta d}} \Tr\left[\left\lbrace \left(\proj_A - \langle \proj_A\rangle_{\Eshell{d}} \idop\right) \proj_\delta (E)\right\rbrace^2\right] < \epsilon^2.
\label{eqs:eigenspaceviolatedinequality1}
\end{equation}
By assumption, $\proj_A$ satisfies eigenspace thermalization to accuracy $\lambda$ in $\Eshell{d_s}$:
\begin{equation}
\Tr\left[\left\lbrace \left(\proj_A - \langle \proj_A\rangle_{\Eshell{d}} \idop\right) \proj_s(E)\right\rbrace^2\right] < \lambda^2,
\label{eqs:eigenspacesatisfiedlambda}
\end{equation}
and violates it in $\Eshell{\delta d}$:
\begin{equation}
\Tr\left[\left\lbrace \left(\proj_A - \langle \proj_A\rangle_{\Eshell{d}} \idop\right) \proj_\delta(E)\right\rbrace^2\right] \geq \lambda^2.
\label{eqs:eigenspaceviolatedlambda}
\end{equation}
Eq.~\eqref{eqs:eigenspaceviolatedlambda} further implies
\begin{equation}
\frac{1}{d}\sum_{\mathcal{H}_\delta(E) \subseteq \Eshell{d}} \Tr\left[\left\lbrace \left(\proj_A - \langle \proj_A\rangle_{\Eshell{d}} \idop\right) \proj_\delta (E)\right\rbrace^2\right] \geq \frac{\lambda^2}{d} \left(\sum_{\mathcal{H}_\delta(E) \subseteq \Eshell{\delta d}} 1\right) .
\end{equation}
Combining this with Eq.~\eqref{eqs:eigenspaceviolatedinequality1}, we get
\begin{equation}
\left(\sum_{\mathcal{H}_\delta(E) \subseteq \Eshell{\delta d}} 1\right) \leq \frac{\epsilon^2}{\lambda^2} d,
\end{equation}
which is Eq.~\eqref{eq:eigenspaceviolationconstraint}.

\subsubsection{Theorem~\ref{thm:energyband_implies_thermalization}: Energy-band thermalization implies thermalization o.a. over accessible timescales}
\label{proof:energyband_implies_thermalization}
From Eq.~\eqref{eq:weightedtimeaverage}, we have
\begin{equation}
\left\lvert \int\diff t\ w(t) \Tr[\hat{\rho}(t) \proj_A] - \langle \proj_A\rangle_{\Eshell{d}}\right\rvert = \left\lvert \sum_{\substack{n,m:\\ \lvert E_n\rangle, \lvert E_m\rangle \in \Eshell{d}}} \widetilde{w}(E_n - E_m) \langle E_n\rvert \hat{\rho}\lvert E_m\rangle \langle E_m\rvert \left(\proj_A -\langle \proj_A\rangle_{\Eshell{d}}\idop \right) \lvert E_n\rangle\right\rvert.
\label{eqs:weightedtimeaverage_deviation}
\end{equation}
To simplify our notation, we will take $\lvert E_n\rangle, \lvert E_m\rangle \in \Eshell{d}$ as given in what follows, without explicitly writing out this condition in the sums. Separating Eq.~\eqref{eqs:weightedtimeaverage_deviation} into contributions from within and outside the bandwidth $\Delta E$ (noting that $\idop$ has no off-diagonal matrix elements) and using the triangle inequality, we get
\begin{align}
\left\lvert \int\diff t\ w(t) \Tr[\hat{\rho}(t) \proj_A] - \langle \proj_A\rangle_{\Eshell{d}}\right\rvert \leq& \left\lvert \sum_{\substack{n,m:\\ \lvert E_n-E_m\rvert < \Delta E}} \widetilde{w}(E_n - E_m) \langle E_n\rvert \hat{\rho}\lvert E_m\rangle \langle E_m\rvert \left(\proj_A -\langle \proj_A\rangle_{\Eshell{d}}\idop \right) \lvert E_n\rangle\right\rvert \nonumber \\
&+\left\lvert\sum_{\substack{n,m:\\ \lvert E_n-E_m\rvert \geq \Delta E}} \widetilde{w}(E_n - E_m) \langle E_n\rvert \hat{\rho}\lvert E_m\rangle \langle E_m\rvert \proj_A \lvert E_n\rangle\right\rvert.
\label{eqs:energyband_twoterms}
\end{align}
It is convenient to consider each term separately.

For the first term, we can use the Cauchy-Schwarz inequality between (schematically) $\widetilde{w} \hat{\rho}$ and $\proj_A$, and use $\lvert \widetilde{w}(\delta E)\rvert \leq 1$ due to $w(t) \geq 0$ [Eq.~\eqref{eq:fourier_positivity_ridge}] to get:
\begin{align}
\left\lvert \sum_{\substack{n,m:\\ \lvert E_n-E_m\rvert < \Delta E}}\right. &\left.\vphantom{\sum_{\substack{n,m:\\ \lvert E_n-E_m\rvert < \Delta E}}} \widetilde{w}(E_n - E_m) \langle E_n\rvert \hat{\rho}\lvert E_m\rangle \langle E_m\rvert \left(\proj_A -\langle \proj_A\rangle_{\Eshell{d}}\idop \right) \lvert E_n\rangle\right\rvert \nonumber \\
&\leq \sqrt{\left\lbrace\sum_{\substack{n,m:\\ \lvert E_n-E_m\rvert < \Delta E}} \left\lvert \widetilde{w}(E_n - E_m) \langle E_n\rvert \hat{\rho}\lvert E_m\rangle\right\rvert^2\right\rbrace \left\lbrace\sum_{\substack{n,m:\\ \lvert E_n-E_m\rvert < \Delta E}} \left\lvert \langle E_m\rvert \left(\proj_A -\langle \proj_A\rangle_{\Eshell{d}}\idop \right) \lvert E_n\rangle\right\rvert^2 \right\rbrace} \nonumber \\
 &\leq \sqrt{\left(\sum_{\substack{n,m:\\ \lvert E_n-E_m\rvert < \Delta E}} \left\lvert \langle E_n\rvert \hat{\rho}\lvert E_m\rangle\right\rvert^2\right) \Tr\left\lbrace \left[\proj_A-\langle \proj_A\rangle_{\Eshell{d}}\idop\right]^2_{(\Eshell{d},\Delta E)} \right\rbrace}
\end{align}
Here, it is convenient to use $\sum_{n,m \in S} \lvert \langle n\rvert \hat{A}\lvert m\rangle\rvert^2 \leq \Tr[\hat{A}^2]$ for any Hermitian $\hat{A}$ and some set of values $S$ indexing an orthonormal basis $\lvert n\rangle$ for $\hat{\rho}$, which implies (as $\hat{\rho}$ acts only on $\Eshell{d}$):
\begin{equation}
\sum_{\substack{n,m:\\ \lvert E_n-E_m\rvert < \Delta E}} \left\lvert \langle E_n\rvert \hat{\rho}\lvert E_m\rangle\right\rvert^2 \leq \Tr[\hat{\rho}^2].
\end{equation}
If $\proj_A$ satisfies energy-band thermalization with bandwidth $\Delta E$ and accuracy $\epsilon$, then by Eq.~\eqref{eq:energybandtherm_def}, we get:
\begin{equation}
\left\lvert \sum_{\substack{n,m:\\ \lvert E_n-E_m\rvert < \Delta E}} \widetilde{w}(E_n - E_m) \langle E_n\rvert \hat{\rho}\lvert E_m\rangle \langle E_m\rvert \left(\proj_A -\langle \proj_A\rangle_{\Eshell{d}}\idop \right) \lvert E_n\rangle\right\rvert < \epsilon \sqrt{d \Tr[\hat{\rho}^2]}.
\label{eqs:energyband_firstterm}
\end{equation}

For the second term, we again initially use the Cauchy-Schwarz inequality between $\widetilde{w}\hat{\rho}$ and $\proj_A$, and then $\lvert \widetilde{w}(\delta E)\rvert \leq w_0$ for $\delta E \geq \Delta E$ [Eq.~\eqref{eq:longtimeavg_w}]:
\begin{align}
\left\lvert\sum_{\substack{n,m:\\ \lvert E_n-E_m\rvert \geq \Delta E}}\right. &\left.\vphantom{\sum_{\substack{n,m:\\ \lvert E_n-E_m\rvert \geq \Delta E}}} \widetilde{w}(E_n - E_m) \langle E_n\rvert \hat{\rho}\lvert E_m\rangle \langle E_m\rvert \proj_A \lvert E_n\rangle\right\rvert \nonumber \\
 &\leq \sqrt{\left\lbrace\sum_{\substack{n,m:\\ \lvert E_n-E_m\rvert \geq \Delta E}} \left\lvert \widetilde{w}(E_n - E_m) \langle E_n\rvert \hat{\rho}\lvert E_m\rangle\right\rvert^2\right\rbrace \left\lbrace\sum_{\substack{n,m:\\ \lvert E_n-E_m\rvert \geq \Delta E}} \left\lvert \langle E_m\rvert \proj_A \lvert E_n\rangle\right\rvert^2 \right\rbrace} \nonumber \\
&\leq w_0 \sqrt{\left\lbrace\sum_{\substack{n,m:\\ \lvert E_n-E_m\rvert \geq \Delta E}} \left\lvert \langle E_n\rvert \hat{\rho}\lvert E_m\rangle\right\rvert^2\right\rbrace \left\lbrace\sum_{\substack{n,m:\\ \lvert E_n-E_m\rvert \geq \Delta E}} \left\lvert \langle E_m\rvert \proj_A \lvert E_n\rangle\right\rvert^2 \right\rbrace}
\end{align}
Here, it is again convenient to use $\sum_{n,m \in S} \lvert \langle n\rvert \hat{A}\lvert m\rangle\rvert^2 \leq 
\Tr[\hat{A}^2]$ (for Hermitian $\hat{A}$ and some set of values $S$), for both $\hat{\rho}$ and $\proj_{\Eshell{d}} \proj_A \proj_{\Eshell{d}}$ (where we have chosen to replace $\proj_A$ with its restriction to $\Eshell{d}$ for convenience, while $\hat{\rho}$ is already restricted to $\Eshell{d}$ by assumption):
\begin{align}
\sum_{\substack{n,m:\\ \lvert E_n-E_m\rvert \geq \Delta E}} \left\lvert \langle E_n\rvert \hat{\rho}\lvert E_m\rangle\right\rvert^2 &\leq \Tr[\hat{\rho}^2] \\
\sum_{\substack{n,m:\\ \lvert E_n-E_m\rvert \geq \Delta E}} \left\lvert \langle E_m\rvert \proj_A \lvert E_n\rangle\right\rvert^2 &\leq \Tr\left[\left\lbrace \proj_A \proj_{\Eshell{d}} \right\rbrace^2\right].
\end{align}
For $\proj_A$ in particular, as the operator $\proj_{\Eshell{d}} \proj_A \proj_{\Eshell{d}}$ has eigenvalues in $[0,1]$, the squared trace of this operator is bounded from above by its trace; this gives (using $\proj_{\Eshell{d}}^2 = \proj_{\Eshell{d}}$ and the cyclic property of the trace):
\begin{equation}
\Tr\left[\left\lbrace \proj_A \proj_{\Eshell{d}} \right\rbrace^2\right] \leq \Tr[\proj_A \proj_{\Eshell{d}}] = d \langle \proj_A\rangle_{\Eshell{d}},
\end{equation}
where we have recognized the expression for the thermal value $\langle \proj_A\rangle_{\Eshell{d}}$ of $\proj_A$ in the energy shell $\Eshell{d}$ [see Eq.~\eqref{eq:microcanonicalthermalvalue}]. Combining these observations, we obtain the overall inequality:
\begin{equation}
\left\lvert\sum_{\substack{n,m:\\ \lvert E_n-E_m\rvert \geq \Delta E}} \widetilde{w}(E_n - E_m) \langle E_n\rvert \hat{\rho}\lvert E_m\rangle \langle E_m\rvert \proj_A \lvert E_n\rangle\right\rvert \leq w_0 \sqrt{d \Tr[\hat{\rho}^2] \langle \proj_A\rangle_{\Eshell{d}}}.
\label{eqs:energyband_secondterm}
\end{equation}
Inserting Eqs.~\eqref{eqs:energyband_firstterm} and \eqref{eqs:energyband_secondterm} into Eq.~\eqref{eqs:energyband_twoterms}, we get Eq.~\eqref{eq:energyband_implies_thermalization}.

\subsubsection{Theorem~\ref{thm:energyband_implies_physicalthermalization}: Energy-band thermalization implies almost all physical basis states thermalize o.a.}
\label{proof:energyband_implies_physicalthermalization}
Noting that $\lambda_k[w] \in \mathbb{R}$ [defined in Eq.~\eqref{eq:lambda_kw_def}], it is convenient to separate the basis states in $\mathcal{B}$ into a set of states $\mathcal{B}_+$ in which $\lambda_k[w] \geq 0$, numbering $n_+$, and a set $\mathcal{B}_-$ in which $\lambda_k[w] < 0$, with $n_-$ states:
\begin{align}
\mathcal{B}_+ &= \lbrace \lvert k\rangle \in \mathcal{B}:\ \lambda_k[w] \geq 0\rbrace,\;\;\ \lvert \mathcal{B}_+\rvert = n_+, \\
\mathcal{B}_- &= \lbrace \lvert k\rangle \in \mathcal{B}:\ \lambda_k[w] < 0\rbrace,\;\;\ \lvert \mathcal{B}_+\rvert = n_-.
\end{align} 
We can now form the mixed states:
\begin{align}
\hat{\rho}_+ &\equiv \frac{1}{n_+} \sum_{\lvert k\rangle \in \mathcal{B}_+} \lvert k\rangle \langle k\rvert, \\
\hat{\rho}_- &\equiv \frac{1}{n_-} \sum_{\lvert k\rangle \in \mathcal{B}_-} \lvert k\rangle \langle k\rvert,
\end{align}
such that $\mu(\rho_{\pm}) = n_{\pm}/d$, where $\mu(\rho) = 1/(d \Tr[\hat{\rho}^2])$ as in Eq.~\eqref{eq:entanglementequalsphasespacevolume}.

Given that $\proj_A$ shows energy-band thermalization with accuracy $\epsilon$, Theorem~\ref{thm:energyband_implies_thermalization} applied to $\hat{\rho}_{\pm}$ implies that
\begin{equation}
\frac{1}{n_{\pm}}\sum_{k\in\mathcal{B}_{\pm}} \left\lvert \lambda_k[w]\right\rvert < \left(\epsilon + w_0 \sqrt{\langle \proj_A\rangle_{\Eshell{d}}}\right) \sqrt{\frac{d}{n_{\pm}}},
\end{equation}
from which we obtain (by adding both the $\rho_+$ and $\rho_-$ versions with the appropriate weights $n_+$ and $n_-$ multiplying them):
\begin{equation}
\frac{1}{d} \sum_{k\in\mathcal{B}_{\pm}} \left\lvert \lambda_k[w]\right\rvert < \left(\epsilon + w_0 \sqrt{\langle \proj_A\rangle_{\Eshell{d}}}\right) \frac{\sqrt{n_+} + \sqrt{n_-}}{\sqrt{d}} \leq \left(\epsilon + w_0 \sqrt{\langle \proj_A\rangle_{\Eshell{d}}}\right) \sqrt{2},
\label{eqs:physicalstates_fractionbound}
\end{equation}
where the second inequality follows from $\sqrt{n_+} + \sqrt{n_-} < \sqrt{2d}$ as $n_+ + n_- = d$.

Finally, if we assume that a fraction $f$ of the $\lvert \lambda_k[w]\rvert $ are at least $\lambda$ and the rest are smaller than $\lambda$, we have
\begin{equation}
f\lambda \leq \frac{1}{d} \sum_{k\in\mathcal{B}_{\pm}} \left\lvert \lambda_k[w]\right\rvert,
\end{equation}
which, on substitution into Eq.~\eqref{eqs:physicalstates_fractionbound}, gives Eq.~\eqref{eq:physicalstates_nonthermalfractionbound}.

\subsection{Global thermalization from a single initial state}

\subsubsection{Proposition~\ref{prop:autocorrelators_to_energyband}: Autocorrelators that thermalize o.a. imply global energy-band thermalization}
\label{proof:autocorrelators_to_energyband}

From Eq.~\eqref{eq:q_weighted_autocorrelator}, we have
\begin{align}
\int\diff t\ w_+(t) \Tr[\hat{\rho}_A(t) \proj_A]-\langle \proj_A\rangle =& \frac{1}{\Tr[\proj_A]}\sum_{\substack{n,m: \\ \lvert E_n - E_m\rvert < \Delta E}} \widetilde{w}_+(E_n - E_m) \left\lvert \langle E_n\rvert \left(\proj_A-\langle \proj_A\rangle\idop\right)\lvert E_m\rangle \right\rvert^2 \nonumber \\
&+ \frac{1}{\Tr[\proj_A]}\sum_{\substack{n,m: \\ \lvert E_n - E_m\rvert \geq \Delta E}} \widetilde{w}_+(E_n - E_m) \left\lvert \langle E_n\rvert \left(\proj_A-\langle \proj_A\rangle\idop\right)\lvert E_m\rangle \right\rvert^2
\end{align}
As $\widetilde{w}_+(E_n-E_m) \geq 0$, the second line is non-negative, and we can write:
\begin{equation}
\int\diff t\ w_+(t) \Tr[\hat{\rho}_A(t) \proj_A]-\langle \proj_A\rangle \geq \frac{1}{\Tr[\proj_A]}\sum_{\substack{n,m:\\ \lvert E_n - E_m\rvert < \Delta E}} \widetilde{w}_+(E_n - E_m) \left\lvert \langle E_n\rvert \left(\proj_A-\langle \proj_A\rangle\idop\right)\lvert E_m\rangle \right\rvert^2
\end{equation}
By assumption, $\widetilde{w}_+(E_n-E_m) > W$ for $\lvert E_n-E_m\rvert < \Delta E_W$, and $\Delta E < \Delta E_W$. We therefore obtain the inequality (swapping the two sides of the above equation):
\begin{equation}
\frac{1}{D} \sum_{\substack{n,m:\\ \lvert E_n - E_m\rvert < \Delta E}} \left\lvert \langle E_n\rvert \left(\proj_A-\langle \proj_A\rangle\idop\right)\lvert E_m\rangle \right\rvert^2 < \frac{\Tr[\proj_A]}{WD}\left[\int\diff t\ w_+(t) \Tr[\hat{\rho}_A(t) \proj_A]-\langle \proj_A\rangle\right].
\end{equation}
The left hand side contains the relevant quantity for energy-band thermalization in the full Hilbert space $\mathcal{H}$, with $D = \dim \mathcal{H}$; Proposition~\ref{prop:autocorrelators_to_energyband} follows from here.

\subsection{Energy shell thermalization from quantum dynamical echoes}

\subsubsection{Theorem~\ref{thm:energyband_shell_from_echoes}: Quantum dynamical echoes constrain energy-band thermalization in energy shells}
\label{proof:energyband_shell_from_echoes}
Starting from Eq.~\eqref{eq:gen_echo_expression} and restricting the matrix elements to pairs of energy levels $(E_n,E_m)$ where both are within the energy shell $\Eshell{d}$ and satisfy $\lvert E_n-E_m\rvert < \Delta E$, we obtain the inequality (again due to the non-negativity of the right hand side):
\begin{equation}
\Tr\left[\hat{\gamma}_A^{\text{shell}} \left(\proj_A - \langle \proj_A\rangle_{\Eshell{d}}\idop\right)\right] \geq \frac{1}{\Tr[\proj_A]}\sum_{\substack{n,m: \\ \lvert E_n\rangle, \lvert E_m\rangle \in \Eshell{d}, \\ \lvert E_n - E_m\rvert < \Delta E}} \widetilde{w}_+(E_n - E_m) \widetilde{v}_+\left(\frac{E_n + E_m}{2}-E_c\right) \left\lvert \langle E_m\rvert \left(\proj_A - \langle \proj_A\rangle_{\Eshell{d}}\idop\right)\lvert E_n\rangle\right\rvert^2.
\end{equation}
Using the constraints on $\widetilde{v}_+$ and $\widetilde{w}_+$ within this band, we have $\widetilde{w}_+(E_n-E_m) > W$ by Eq.~\eqref{eq:W_width_energy}, and also $\widetilde{v}_+((E_n+E_m)/2-E_c) \geq V$ by Eq.~\eqref{eq:V_shell_energy}, noting that $(E_n+E_m)/2 \in \mathcal{E}_V$ if $\lvert E_n\rangle, \lvert E_m\rangle \in \Eshell{d} \subseteq \Eshell{}(\mathcal{E}_V)$. This leads to the inequality (after transposing the left and right hand sides):
\begin{equation}
\frac{1}{d}\sum_{\substack{n,m: \\ \lvert E_n\rangle, \lvert E_m\rangle \in \Eshell{d}, \\ \lvert E_n - E_m\rvert < \Delta E}} \left\lvert \langle E_m\rvert \left(\proj_A - \langle \proj_A\rangle_{\Eshell{d}}\idop\right)\lvert E_n\rangle\right\rvert^2 < \frac{\Tr[\proj_A]}{WVd} \Tr\left[\hat{\gamma}_A^{\text{shell}} \left(\proj_A - \langle \proj_A\rangle_{\Eshell{d}}\idop\right)\right],
\end{equation}
which implies Eq.~\eqref{eq:energyband_shell_from_echoes}.

\subsubsection{Theorem~\ref{thm:echoes_from_energyband}: Energy-band thermalization in energy shells constrains quantum dynamical echoes}
\label{proof:echoes_from_energyband}

From Eqs.~\eqref{eq:arbitrary_expectation} and \eqref{eq:gammashell_def}, we have
\begin{equation}
\Tr\left[\hat{\gamma}_A^{\text{shell}} \left(\proj_A - \langle \proj_A\rangle_{\Eshell{d}}\idop\right)\right] = \frac{1}{\Tr[\proj_A]}\sum_{n,m} \widetilde{w}(E_n - E_m) \widetilde{v}\left(\frac{E_n + E_m}{2}-E_c\right) \left\lvert \langle E_m\rvert \left(\proj_A - \langle \proj_A\rangle_{\Eshell{d}}\idop\right)\lvert E_n\rangle\right\rvert^2
\end{equation} 
Schematically, recalling the definition of $\mathcal{E}_{\Delta E}$ as the range of energies of the energy shell further away than $\Delta E$ from its boundaries, we can split this as follows:
\begin{align}
(\text{Full expression})=& \nonumber \\
&(\text{Terms with } \lvert E_n-E_m\rvert < \Delta E \text{ and } E_n,E_m \in \Eshell{d}) \label{eqs:echo:line:yy}\\
+ &(\text{Terms with } \lvert E_n-E_m\rvert < \Delta E \text{ and } [(E_n \notin \Eshell{d}) \text{ or } (E_m \notin \Eshell{d})]) \label{eqs:echo:line:yn}\\
+ &(\text{Terms with } \lvert E_n-E_m\rvert \geq \Delta E \text{ and } (E_n+E_m)/2 \notin \mathcal{E}_{\Delta E} ) \label{eqs:echo:line:nn}\\
+ &(\text{Terms with } \lvert E_n-E_m\rvert \geq \Delta E \text{ and } (E_n+E_m)/2 \in \mathcal{E}_{\Delta E}) \label{eqs:echo:line:ny}
\end{align}
%Laid out in a Karnaugh map like structure

For the first kind of terms in Line~\eqref{eqs:echo:line:yy}, using the triangle inequality and $\lvert \widetilde{v}(E)\rvert \leq 1$ and $\lvert \widetilde{w}(\delta E)\rvert \leq 1$ (from positivity, but not necessarily complete positivity), we have:
\begin{equation}
\left\lvert(\text{Line~\eqref{eqs:echo:line:yy}})\right\rvert \leq \frac{1}{\Tr[\proj_A]} \Tr\left\lbrace \left[\proj_A - \langle \proj_A\rangle_{\Eshell{d}}\idop\right]^2_{\Eshell{d}, \Delta E}\right\rbrace.
\end{equation}

For the second kind of terms in Line~\eqref{eqs:echo:line:yn}, we use the triangle inequality with $\lvert \widetilde{v}(E)\rvert < v_0$ and $\lvert \widetilde{w}(\delta E)\rvert \leq 1$; the former condition on $v$ follows by assumption, due to the fact that if either of $E_n, E_m$ are not in $\Eshell{d}$ but are subject to $\lvert E_n - E_m\rvert < \Delta E$, then the interval $[(E_n+E_m)/2-\Delta E, (E_n+E_m)/2+\Delta E]$ is not contained in $\mathcal{E}$, and therefore $(E_n+E_m)/2 \notin \mathcal{E}_{\Delta E}$.
In addition, we can add nonnegative terms with $v_0/\Tr[\proj_A]$ multiplying the squared magnitude of the matrix elements of $(\proj_A - \langle\proj_A\rangle_{\Eshell{d}}\idop)$ within the energy shell to obtain the trace of its square (similar to ``completing the square''), and the result must be greater than or equal to the original expression due to adding nonnegative terms. This gives the inequality:
\begin{equation}
\left\lvert(\text{Line~\eqref{eqs:echo:line:yn}})\right\rvert < \frac{v_0}{\Tr[\proj_A]} \Tr\left[\left(\proj_A - \langle\proj_A\rangle_{\Eshell{d}}\idop\right)^2\right] \leq v_0,
\end{equation}
where we have noted that
\begin{equation}
\Tr\left[\left(\proj_A - \langle\proj_A\rangle_{\Eshell{d}}\idop\right)^2\right] \leq \Tr[\proj_A^2] = \Tr[\proj_A].
\end{equation}

Applying a similar strategy to Line~\eqref{eqs:echo:line:nn}, where in addition to $\lvert \widetilde{v}(E)\rvert < v_0$ as $(E_n+E_m)/2 \notin \mathcal{E}_{\Delta E}$, we use $\lvert \widetilde{w}(\delta E)\rvert \leq w_0$ as $\lvert E_n-E_m\rvert < \Delta E$ and add the remaining absolute squared matrix elements of $(\proj_A - \langle\proj_A\rangle_{\Eshell{d}}\idop)$ multiplied by $w_0 v_0 / \Tr[\proj_A]$ to get:
\begin{equation}
\left\lvert(\text{Line~\eqref{eqs:echo:line:nn}})\right\rvert < v_0 w_0.
\end{equation}
Finally, for Line~\eqref{eqs:echo:line:ny}, we use $\lvert \widetilde{w}(\delta E)\rvert \leq w_0$ and $\lvert \widetilde{v}(E)\rvert \leq 1$ and apply a similar strategy as above to obtain the trace of the squared operator to get
\begin{equation}
\left\lvert(\text{Line~\eqref{eqs:echo:line:ny}})\right\rvert \leq w_0.
\end{equation}

On the whole, we are left with the inequality:
\begin{equation}
\Tr\left[\hat{\gamma}_A^{\text{shell}} \left(\proj_A - \langle \proj_A\rangle_{\Eshell{d}}\idop\right)\right] < \frac{1}{\Tr[\proj_A]} \Tr\left\lbrace \left[\proj_A - \langle \proj_A\rangle_{\Eshell{d}}\idop\right]^2_{\Eshell{d}, \Delta E}\right\rbrace + v_0 + v_0 w_0 + w_0,
\end{equation}
which gives Eq.~\eqref{eq:echoes_from_energyband}.

\end{appendices}

%\bibliographystyle{unsrturl}
%\bibliography{FiniteTimeBibliography}

\printbibliography

\end{document}